  \def\Cref#1{<#1>}%
  \def\({}
  \def\){}
\newtheorem{lemma}{Lemma}[section]
\newtheorem{remark}{Remark}[section]
\newtheorem{theorem}{Theorem}[section]
\newtheorem{corollary}{Corollary}[section]
\newtheorem{definition}{Definition}[section]
\newtheorem{claim}{Claim}[section]
\newcommand{\zeros}{\ensuremath{\mathcal{Z}}\xspace}
\newcommand{\indicator}{\ensuremath{\mathbbm{1}}\xspace}
\newcommand{\functionMaximizationOracle}[1]{\ensuremath{{#1}\text{-}\mathtt{max}\text{-}\mathtt{wc}\text{-}\mathtt{Oracle}}\xspace}
\newcommand{\functionMaximizationOracleStrongCover}[1]{\ensuremath{{#1}\text{-}\mathtt{max}\text{-}\mathtt{sc}\text{-}\mathtt{Oracle}}\xspace}
\newcommand{\functionMaximizationEmptyOracle}[1]
{\ensuremath{{#1}\text{-}\mathtt{max}\text{-}\mathtt{Oracle}}\xspace}
\newcommand{\maximalTightSetFamily}[1]{\ensuremath{\calT_{#1}}\xspace}
\newcommand{\tightSetFamily}[1]{\ensuremath{T_{#1}}\xspace}
\newcommand{\cumulativeMaximalTightSetFamily}[1]{\ensuremath{\calT_{\leq #1}}}
\newcommand{\projCumulativeMaximalTightSetFamily}[2]{\ensuremath{\calT_{\leq #1}|_{V_{#2}}}}
\newcommand{\maximalTightSetFamilyAfterMerge}[1]{\ensuremath{\calT_{#1}^M}\xspace}
\newcommand{\maximalTightSetFamilyAfterReduce}[1]{\ensuremath{\calT_{#1}^R}\xspace}
\newcommand{\tightSetFamilyAfterMerge}[1]{\ensuremath{T_{#1}^M}\xspace}
\newcommand{\tightSetFamilyAfterReduce}[1]{\ensuremath{T_{#1}^R}\xspace}
\newcommand{\cumulativeZeros}[1]{\ensuremath{\calZ_{\leq #1}}\xspace}
\newcommand{\shubhang}[1]{{\color{blue}[{\tiny Shubhang: \bf #1}]\marginpar{\color{blue}*}}}
\newcommand{\knote}[1]{{\color{red}[{\tiny Karthik: \bf #1}]\marginpar{\color{red}*}}}
\newcommand{\functionContract}[2]{\ensuremath{{#1}/_{#2}}\xspace}
\newcommand{\maximalTightSetContainingHyperedge}[1]{\ensuremath{\tau_{#1}}\xspace}
\newcommand{\merge}{\ensuremath{\textsc{Merge}}\xspace}
\newcommand{\reduce}{\ensuremath{\textsc{Reduce}}\xspace}
\newcommand{\head}{\ensuremath{\texttt{head}}\xspace}
\title{Splitting-off in Hypergraphs}
\author{
Krist\'{o}f B\'{e}rczi\thanks{MTA-ELTE Matroid Optimization Research Group and HUN-REN-ELTE Egerv\'{a}ry Research Group, Department of Operations Research, E\"{o}tv\"{o}s Lor\'{a}nd
University, Budapest, Hungary, Email: \{kristof.berczi, tamas.kiraly\}@ttk.elte.hu.}
\and Karthekeyan Chandrasekaran\thanks{University of Illinois, Urbana-Champaign, USA, Email: \{karthe, smkulka2\}@illinois.edu. 
Part of this work was done while Karthekeyan and Shubhang were visiting E\"{o}tv\"{o}s Lor\'{a}nd
University.}
\and Tam\'{a}s Kir\'{a}ly\footnotemark[1]
\and Shubhang Kulkarni\footnotemark[2]
}
\date{}
\begin{document}
\maketitle
\begin{abstract}
The splitting-off operation in undirected graphs is a fundamental reduction operation that detaches all edges incident to a given vertex and adds new edges between the neighbors of that vertex while preserving their degrees. Lov\'{a}sz \cite{Lov74, Lovasz-problems-book} and Mader \cite{Mad78} showed the existence of this operation while preserving global and local connectivities respectively in graphs under certain conditions. These results have far-reaching applications in graph algorithms literature \cite{Lov76, Mad78, Frank93:Applications-of-submod-functions, FK02, Kiraly-Lau, Fra92, GB93, Fra94, JFJ95, Frank-book, nagamochi_ibaraki_2008_book, NNI97, HW96, Goe01, Jor03, Kri03, JMS03, CFLY11, BHTP03, Lau07, CS08, NZ20, BN23}. 
In this work, we introduce a splitting-off operation in hypergraphs. We show that there exists a local connectivity preserving complete splitting-off in hypergraphs and give a strongly polynomial-time algorithm to compute it in weighted hypergraphs. We illustrate the usefulness of our splitting-off operation in hypergraphs by showing two applications: (1) we give a constructive characterization of $k$-hyperedge-connected hypergraphs and (2) we give an alternate proof of an approximate min-max relation for max Steiner rooted-connected orientation of graphs and hypergraphs (due to Kir\'{a}ly and Lau \cite{Kiraly-Lau}). Our proof of the approximate min-max relation for graphs circumvents the Nash-Williams' strong orientation theorem and uses tools developed for hypergraphs.
\end{abstract}

\newpage
\tableofcontents
\newpage

\setcounter{page}{1}
\section{Introduction}\label{sec:introduction}
The splitting-off operation in undirected graphs is a simple yet powerful operation in graph theory.  
It is a reduction operation that detaches all edges incident to a given vertex and adds new edges between the neighbors of that vertex while preserving their degrees. 
Equivalently, it enables a vertex to exit the network by informing its neighbors  
how to reconfigure the lost links 
among themselves in order to preserve their degrees.
Lov\'{a}sz \cite{Lov74, Lovasz-problems-book} introduced the splitting-off operation and showed the existence of the operation to preserve global edge-connectivity under certain conditions. Mader \cite{Mad78} showed the existence of the splitting-off operation to preserve local edge-connectivity (i.e., all pairwise edge-connectivities) under certain conditions. 
Both Lov\'{a}sz's and Mader's results also admit strongly polynomial-time algorithms \cite{Frank-book, nagamochi_ibaraki_2008_book, Gab94}. 
Owing to the inductive nature of the splitting-off operation, Lov\'{a}sz's and Mader's results have enabled fundamental results in graph theory as well as efficient algorithms and min-max relations for numerous graph optimization problems. 
In fact, Mader \cite{Mad78} illustrated the power of his local edge-connectivity preserving splitting-off result by deriving Nash-Williams' strong orientation theorem \cite{NW60} (also see Frank's exposition of this derivation \cite{Frank93:Applications-of-submod-functions}).  
Subsequently, the splitting-off operation 
has been used 
to give a constructive characterization of $k$-edge-connected graphs \cite{Frank-book} 
 and 
to address problems in edge-connectivity augmentation \cite{Fra92, Fra94, JFJ95, Frank-book, nagamochi_ibaraki_2008_book}, graph orientation \cite{FK02, Kiraly-Lau}, minimum cuts enumeration \cite{NNI97, HW96, Goe01}, network design \cite{Lov76, GB93, Jor03, CS08, CFLY11}, tree packing \cite{BHTP03, Kri03, JMS03, Lau07}, congruency-constrained cuts \cite{NZ20}, and approximation algorithms for TSP \cite{GB93, BN23}. 
Designing fast algorithms for global/local edge-connectivity preserving splitting-off remains an active area of research (e.g., see recent works \cite{LY13, CLP22-soda, CLP22-stoc, CHLP23}) due to these far-reaching applications. 
In this work, we introduce a splitting-off operation in \emph{hypergraphs}, show the existence of local-connectivity preserving splitting-off operation, design a strongly polynomial-time algorithm to compute it in weighted hypergraphs, and illustrate its usefulness by showing two applications. 

\paragraph{Hypergraphs.}
Hypergraphs offer a richer and more accurate model than graphs for several applications. 
Consequently, hypergraphs have found applications in several modern areas (e.g., see \cite{VeldtBK22,Schlagetal23,LiM17, Ornes21}) and these applications have, in turn, driven exciting recent progress in algorithms for hypergraph optimization problems \cite{KK15, GKP17, CXY19-j, CX18, CC22, CKN20, SY19, BST19, KKTY21-FOCS, KKTY21-STOC, Lee22, JLS22, Quanrud22, GLSTW22, CC23, AGL23, FPZ23}. 
A hypergraph $G=(V, E)$ consists of a finite set $V$ of vertices and a set $E$ of hyperedges, where every hyperedge $e\in E$ is a subset of $V$. We will denote a hypergraph $G=(V, E)$ with hyperedge weights $w: E\rightarrow \Z_+$ by the tuple $(G, w)$ and use $G_w$ to denote the unweighted multi-hypergraph over vertex set $V$ containing $w(e)$ copies of every hyperedge $e\in E$.  
Throughout this work, we will be interested only in hypergraphs with positive integral weights and for algorithmic problems where the input/output is a hypergraph, we will require that the weights are represented in binary. 
If all hyperedges have size at most $2$, then the hyperedges are known as edges and we call such a hypergraph as a graph. 
We emphasize a subtle but important distinction between hypergraphs and graphs: the number of hyperedges in a hypergraph could be exponential in the number of vertices. This is in sharp contrast to graphs where the number of edges is at most the square of the number of vertices. Consequently, in hypergraph network design problems where the goal is to construct a hypergraph with certain properties,  
one needs to be mindful of the number of hyperedges in the hypergraph returned by the algorithm. 
Recent works in hypergraph algorithms literature have 
focused on this 
issue in the context of cut/spectral sparsification of hypergraphs \cite{KK15, CX18, SY19, BST19, CKN20, KKTY21-FOCS, KKTY21-STOC, Lee22, JLS22, Quanrud22}.

\paragraph{Notation.} Let $(G=(V, E), w: E\rightarrow \Z_+)$ be a hypergraph. 
For $X\subseteq V$, let $\delta_G(X)\coloneqq \{e\in E: e\cap X\neq \emptyset, e\setminus X\neq \emptyset\}$.  
We define the cut function $d_{(G, w)}: 2^V\rightarrow \Z_{\ge 0}$ by $d_{(G, w)}(X) \coloneqq \sum_{e\in \delta_G(X)}w(e)$ for every $X\subseteq V$. 
For a vertex $v\in V$, we use $d_{(G,w)}(v)$ to denote $d_{(G,w)}(\{v\})$. 
We define the degree of a vertex $v$ to be the sum of the weights of hyperedges containing $v$---we note that the degree of a vertex is not necessarily equal to $d_{(G, w)}(v)$ since we could have $\{v\}$ itself as a hyperedge (i.e., a singleton hyperedge that contains only the vertex $v$). 
For distinct vertices $u, v\in V$, let $\lambda_{(G, w)}(u, v)\coloneqq \min\{d_{(G, w)}(X): u\in X\subseteq V\setminus \{v\}\}$ -- i.e., $\lambda_{(G, w)}(u, v)$ is the value of a minimum $\{u, v\}$-cut in the hypergraph. If all hyperedge weights are unit, then we drop $w$ from the subscript and simply use $d_G$ and $\lambda_G$. 

\paragraph{Hypergraph Splitting-off.} 
We now introduce our definition of splitting-off in hypergraphs. 
To compare and contrast our definition of splitting-off for hypergraphs with the classical definition of splitting-off for graphs, we include both our definition and the classical definition and distinguish them by identifying them as h-splitting-off and g-splitting-off. 
In the definitions below, we encourage the reader to consider the input hypergraph $(G, w)$ to be a graph while considering g-splitting-off terminology and to be a hypergraph while considering h-splitting-off terminology. We encourage the reader to also assume unit weights during first read. See Figure \ref{figure:hypergraph-splitting-off-example-1} for an example. 

\begin{definition}\label{defn:splitting-off}
    Let $(G=(V, E), w: E\rightarrow \Z_+)$ be a hypergraph and $s\in V$. 
    \begin{enumerate}
            
        \item 
        In \emph{merge almost-disjoint hyperedges}, we pick a pair of hyperedges $e, f\in \delta_G(s)$ such that $e\cap f=\{s\}$, pick a positive integer $\alpha \in \Z_+$ such that $\alpha\le \min\{w(e), w(f)\}$, reduce the weights of hyperedges $e$ and $f$ by $\alpha$, and increase the weight of a hyperedge $g$ by $\alpha$. Here, 
        \begin{enumerate}
            \item if we choose $g\coloneqq e \cup f$, then the associated operation will be called as \emph{h-merge almost-disjoint hyperedges operation}.
            \item if we choose $g\coloneqq (e\cup f)\setminus \{s\}$, then the associated operation will be called as \emph{g-merge almost-disjoint hyperedges operation}.
        \end{enumerate}
        In the above, if $\alpha=w(e)$ (resp. if $\alpha = w(f)$), then we discard the hyperedge $e$ (resp. hyperedge $f$) from the hypergraph obtained after the operation; if the hyperedge $g\not\in E$, then we introduce $g$ as a new hyperedge with weight $w(g) := 0$ before performing the weight increase on the hyperedge $g$.
    
        \item 
        In \emph{trim hyperedge operation}, we pick a hyperedge $e\in \delta_G(s)$, pick a positive integer $\alpha \in \Z_+$,
        reduce the weight of the hyperedge $e$ and 
        increase the weight of the hyperedge $g\coloneqq e\setminus \{s\}$. Here, 
        \begin{enumerate}
            \item if we choose $\alpha \le w(e)$, reduce the weight of the hyperedge $e$ by $\alpha$,  and increase the weight of the hyperedge $g$ by $\alpha$, then the associated operation will be called as \emph{h-trim operation} (if $\alpha=w(e)$, then we discard $e$ from the hypergraph obtained after the operation; if $g\not\in E$, then we add $g$ as a new hyperedge with weight $w(g) := 0$ before performing the weight increase on the hyperedge $g$). 
            
            \item if we choose $\alpha \le w(e)/2$, reduce the weight of the hyperedge $e$ by $2\alpha$, and increase the weight of the hyperedge $g$ by $2\alpha$, then the associated operation will be called as \emph{g-trim operation} (if $\alpha=w(e)/2$, then we discard $e$ from the hypergraph obtained after the operation; if $g\not\in E$, then we add $g$ as a new hyperedge with weight $w(g) := 0$ before performing the weight increase on the hyperedge $g$).  
        \end{enumerate}

\item We say that a hypergraph $(H=(V, E_H), w_H: E_H\rightarrow \Z_+)$ is obtained by applying a 
        \begin{enumerate}
            \item \emph{h-splitting-off operation at $s$ from $(G, w)$} if $(H, w_H)$ is obtained from $(G, w)$ by either the h-merge almost-disjoint hyperedges operation or the h-trim hyperedge operation. 
            \item \emph{g-splitting-off operation at $s$ from $(G, w)$} if $(H, w_H)$ is obtained from $(G, w)$ by either the g-merge almost-disjoint hyperedges operation or the g-trim hyperedge operation. 
        \end{enumerate}

\end{enumerate}
\end{definition}
Certain remarks regarding the definitions are in order. 
Firstly, the trim operation is valuable and unique to hypergraphs. It has been used in the hypergraph literature to obtain small-sized certificates for hypergraph connectivity \cite{CX18} and for certain notions of directed hypergraph connectivity \cite{Frank-book}. 
We note that the trim operation has limited value in graphs---trimming an edge leads to a singleton edge and singleton edges contribute only to the degree but not to the cut value of any set. 
Secondly, all operations mentioned above are degree preserving for vertices $u\in V\setminus \{s\}$: both h-trim and g-trim operations preserve degrees by definition; both h-merge and g-merge \emph{almost-disjoint} hyperedges operations preserve degrees due to the almost-disjoint property of the chosen hyperedges. 
Thirdly, all operations mentioned above 
\emph{do not} increase the 
cut values of subsets $X\subseteq V\setminus\{s\}$. Thus, the relevant goal with these operations is ensuring that the cut values do not decrease too much---i.e., preserving global/local connectivity. 
We will be interested in repeated application of h-splitting-off operations at a vertex from a given hypergraph to isolate that vertex while preserving global/local connectivity. We define these formally next. 
\begin{definition}
    Let $(G=(V, E), w: E\rightarrow \Z_+)$ be a hypergraph and $s\in V$. 
    \begin{enumerate}
        \item We say that a hypergraph $(G^*=(V, E^*), w^*: E^*\rightarrow \Z_+)$ is a 
        \begin{enumerate}
            \item \emph{complete h-splitting-off at $s$ from $(G, w)$} if $d_{(G^*, w^*)}(s) = 0$ and $(G^*, w^*)$ is obtained from $(G, w)$ by repeatedly applying h-splitting-off operations at $s$ from the current hypergraph. 
            \item \emph{complete g-splitting-off at $s$ from $(G, w)$} if $d_{(G^*, w^*)}(s) = 0$ and $(G^*, w^*)$ is obtained from $(G, w)$ by repeatedly applying g-splitting-off operations at $s$ from the current hypergraph. 
        \end{enumerate}

        \item Let $(G^*, w^*)$ be a complete h-splitting-off/g-splitting-off at $s$ from $(G, w)$. We say that $(G^*, w^*)$ 
        \begin{enumerate}
            \item 
            preserves local connectivity if $\lambda_{(G^*, w^*)}(u, v)=\lambda_{(G, w)}(u, v)$ for every distinct $u, v\in V\setminus\{s\}$ and 
            
            \item preserves global connectivity if \\
            $\min\{\lambda_{(G^*, w^*)}(u, v): u, v\in V\setminus \{s\}, u \neq v\}=\min\{\lambda_{(G, w)}(u, v): u, v\in V\setminus \{s\}, u \neq v\}$. 
            
        \end{enumerate}
    \end{enumerate}
\end{definition}

Our first contribution in this work is the definition of h-splitting-off operations at a vertex from a hypergraph. 
To the best of our knowledge, this definition 
has not appeared in the literature before. 
A notion of hypergraph splitting-off motivated by hypergraph connectivity augmentation applications has been studied in the literature before \cite{BJ99, Cosh-thesis, BK12}. 
These works have explored local connectivity preserving complete g-splitting-off at a vertex $s$ from a hypergraph under the assumption that \emph{all hyperedges incident to the vertex $s$ are edges (i.e., have size at most $2$)}. 
In contrast, our focus is on local connectivity preserving complete h-splitting-off at a vertex $s$ from a hypergraph without any assumption on the size of the hyperedges incident to the vertex $s$ (i.e., the vertex $s$ could have arbitrary-sized hyperedges incident to it). We refer the reader to Figure \ref{figure:hypergraph-splitting-off-example-1} for an example of complete h-splitting-off at a vertex from a hypergraph. 

\begin{figure}[thb]
\centering
\includegraphics[width=\textwidth]{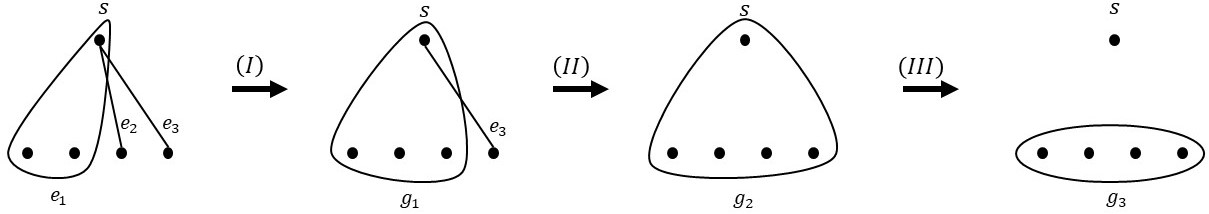}
\caption{Example of (local connectivity preserving) complete h-splitting-off at a vertex $s$ from a hypergraph. Consider the leftmost hypergraph where all hyperedge weights are one and the vertex $s$ is as labeled. Operations (I) and (II) correspond to h-merge almost-disjoint hyperedges operations and Operation (III) corresponds to an h-trim hyperedge operation.}
\label{figure:hypergraph-splitting-off-example-1}
\end{figure}

We will primarily be concerned with complete \emph{h-splitting-off} at a vertex from a \emph{hypergraph} and complete \emph{g-splitting-off} at a vertex from a \emph{graph}.  
Complete g-splitting-off at a vertex from a \emph{graph} is equivalent to the classical and well-studied notion of complete splitting-off at a vertex from a graph (for the definition of the classical notion in graphs, see \cite{Frank-book, nagamochi_ibaraki_2008_book}). 
We cast the results of Lov\'{a}sz \cite{Lov74, Lovasz-problems-book} and Mader \cite{Mad78} in the framework of our definitions now. Let $(G, w)$ be a \emph{graph} and let $s$ be a vertex in $G$. 
Lov\'{a}sz \cite{Lov74, Lovasz-problems-book} showed that if $d_{(G,w)}(\{s\})$ is even and $\min\{\lambda_{(G, w)}(u,v): u, v\in V\setminus \{s\}\}\ge K$ for some $K\ge 2$, then there exists a \emph{global} connectivity preserving complete g-splitting-off at the vertex $s$ from $(G, w)$. 
Mader \cite{Mad78} showed that if $d_{(G,w)}(\{s\})$ is even, there is no cut-edge\footnote{Equivalently, for every edge $e\in \delta_G(s)$ with $w(e)=1$, the removal of that edge does not disconnect the graph.} incident to $s$,  
and $(G, w)$ is connected, then there exists a \emph{local} connectivity preserving complete g-splitting-off at the vertex $s$ from $(G, w)$.

We compare and contrast complete h-splitting-off at a vertex from a hypergraph and complete g-splitting-off at a vertex from a graph.  
Complete h-splitting-off at a vertex $s$ from a hypergraph enables the vertex $s$ to exit the hypergraph by informing its incident hyperedges about how to merge and trim themselves in  order to preserve degrees. In this sense, the definition of complete h-splitting-off at a vertex from a hypergraph serves the same role as complete g-splitting-off at a vertex from a graph. On the other hand, there are important differences between the two notions. Firstly, complete h-splitting-off at a vertex from a \emph{graph} may not necessarily be a graph (owing to the creation of hyperedges of size at least $3$) while it is an easy exercise to show that complete g-splitting-off at a vertex from a \emph{graph} will necessarily be a graph. 
Secondly, local/global connectivity preserving complete g-splitting-off at a vertex from a graph may not exist---see Figure \ref{figure:hypergraph-splitting-off-example-2}. 

\begin{figure}[thb]
\centering
\includegraphics[scale=0.5]{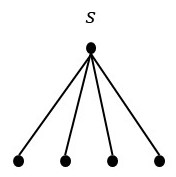}
\caption{An example showing that global connectivity preserving complete g-splitting-off at a vertex from a graph may not exist. All edge weights are one and the vertex $s$ is as labeled.}
\label{figure:hypergraph-splitting-off-example-2}
\end{figure}

As our second main contribution, we show that local connectivity preserving complete h-splitting-off at a vertex from a hypergraph always exists and can be computed in strongly polynomial time (the rightmost hypergraph in Figure \ref{figure:hypergraph-splitting-off-example-1} is a local connectivity preserving complete h-splitting-off at the vertex $s$ from the hypergraph in Figure \ref{figure:hypergraph-splitting-off-example-2}). 

\begin{theorem}\label{thm:complete-splitting-off}
    Given a hypergraph $(G=(V, E), w_G: E\rightarrow \Z_+)$ and a vertex $s\in V$, there exists a strongly polynomial-time algorithm to find a local connectivity preserving complete h-splitting-off at $s$ from $(G, w_G)$. 
\end{theorem}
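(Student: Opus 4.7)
The plan is Mader-style: establish a \textbf{local existence lemma}---at every weighted hypergraph $(G,w)$ with $d_{(G,w)}(s) > 0$, some local-connectivity-preserving h-splitting-off operation at $s$ exists---and then wrap this into a strongly polynomial iterative algorithm by performing each admissible operation with the largest integer $\alpha$ possible. A direct case analysis on whether a subset $X \subseteq V \setminus \{s\}$ intersects the involved hyperedges yields the key cut calculations: an h-merge on an almost-disjoint pair $e, f \in \delta_G(s)$ with parameter $\alpha$ decreases $d_{(G,w)}(X)$ by exactly $\alpha$ when both $e, f \in \delta_G(X)$ and leaves it unchanged otherwise, while an h-trim of $e \in \delta_G(s)$ with parameter $\alpha$ decreases $d_{(G,w)}(X)$ by $\alpha$ when $\emptyset \neq e \setminus \{s\} \subseteq X$ and leaves it unchanged otherwise. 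Calling $X \subseteq V \setminus \{s\}$ \emph{tight} if $d_{(G,w)}(X) = \lambda_{(G,w)}(u,v)$ for some $u \in X$ and $v \in V \setminus (X \cup \{s\})$, a unit h-merge on $(e,f)$ is feasible exactly when no tight set cuts both $e$ and $f$, and a unit h-trim on $e$ is feasible exactly when no tight set contains $e \setminus \{s\}$.

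For the local existence lemma, I would assume $d_{(G,w)}(s) > 0$ and, for contradiction, that no h-merge and no h-trim at $s$ is feasible. Then every $e \in \delta_G(s)$ is trapped inside some tight set $T_e \supseteq e \setminus \{s\}$, and every almost-disjoint pair $e, f \in \delta_G(s)$ is cut by some tight set $T_{e,f}$. Using the submodularity of the hypergraph cut function, I would uncross the chosen tight sets into a laminar family of maximal tight sets and then derive a contradiction by a weighted partitioning of $\sum_{e \in \delta_G(s)} w(e) = d_{(G,w)}(s)$ across this laminar family: the trim-obstruction forces the weight of each hyperedge $e$ to be ``absorbed'' by the maximal tight member containing $e \setminus \{s\}$, while the merge-obstruction constrains how two hyperedges anchored in the same tight set can relate, and double counting against the tight cut values should force $d_{(G,w)}(s) = 0$.

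For the algorithmic realization, I would maintain the pairwise minimum-cut values $\lambda_{(G,w)}(u,v)$ via a Gomory--Hu-style structure for hypergraphs and, in each iteration, locate a feasible operation by the characterization above and perform it with the largest integer $\alpha$ preserving every $d_{(G,w)}(X) \ge \lambda_{(G,w)}(u,v)$. Each iteration then either strictly shrinks the support of $\delta_G(s)$ or tightens a new subset of $V \setminus \{s\}$; combined with a polynomial bound on the total number of relevant tight sets and the strongly polynomial runtime of hypergraph min-cut, this yields a strongly polynomial algorithm. The \textbf{main obstacle} is the local existence lemma: the h-trim operation, absent in graphs, introduces a containment-type obstruction alongside the classical cut-crossing obstruction, and these must be simultaneously contradicted via a unified uncrossing argument that also respects the fact that an h-merge retains $s$ in the newly created hyperedge and that hyperedges at $s$ may have arbitrary size, preventing a direct reduction to Mader's theorem.
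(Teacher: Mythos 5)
Your cut calculations and the feasibility characterizations for h-merge and h-trim are correct, and your ``local existence lemma'' is essentially Theorem \ref{thm:splitting-off-or-trim-to-preserve-connectivity} of the paper; a Mader-style uncrossing argument along the lines you sketch is plausible (though your double-counting step is only waved at, and the interaction between the containment-type trim obstruction and the crossing-type merge obstruction is exactly where the work lies). However, this is not how the paper proceeds, and more importantly your algorithmic wrap-up has a genuine gap. The paper proves Theorem \ref{thm:complete-splitting-off} by reducing to an abstract ``weak cover to strong cover by merging hyperedges'' theorem for symmetric skew-supermodular functions (Theorem \ref{thm:WeakToStrongCover:main}, strengthening Bern\'{a}th--Kir\'{a}ly): one sets $p(X)=R(X)-d_{(J,w_J)}(X)$, where $R$ records pairwise min-cut demands and $J$ consists of the hyperedges not incident to $s$, and shows that the link hypergraph at $s$ weakly covers $p$; the merges then realize the h-splitting-off.

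The gap in your proposal is the claim that greedily performing each feasible operation with maximal $\alpha$ terminates in a strongly polynomial number of iterations because ``each iteration either strictly shrinks the support of $\delta_G(s)$ or tightens a new subset,'' together with ``a polynomial bound on the total number of relevant tight sets.'' This is precisely the crux, and it does not follow from what you wrote. Unlike the graph case, an h-merge keeps $s$ in the new hyperedge $e\cup f$, so merged hyperedges re-enter $\delta_G(s)$ and can be merged again; the support of $\delta_G(s)$ need not shrink, and the number of distinct hyperedges created along the way can blow up (Remark \ref{remark:exp-sized-conn-preserving-splitting-off} gives a star with weights $2^{n-1}-1$ where a valid complete h-splitting-off has exponentially many hyperedges, and such executions can actually arise from a naive greedy/Bern\'{a}th--Kir\'{a}ly-style process). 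Moreover, tight sets encountered across iterations live on changing instances and do not automatically assemble into a single polynomial-size laminar family: the paper needs extra invariants (every hyperedge inside a tight set, every vertex covered by a hyperedge), an \emph{immediate} reduce step after each maximal merge, projection arguments when vertices become isolated, and a lookahead potential to bound the recursion depth by $|E|+O(|V|)$, which is what bounds both the running time and the number of additional hyperedges by $O(|V|)$. Without an argument of this kind, your iteration count (and hence the ``strongly polynomial'' claim, and even the polynomial size of the output hypergraph) is unsupported.
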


A distinction between Theorem \ref{thm:complete-splitting-off} and the graph splitting-off results of Lov\'{a}sz and Mader is that Theorem \ref{thm:complete-splitting-off} shows the existence of a local connectivity preserving complete h-splitting-off at a vertex from a hypergraph \emph{without any assumptions on the hypergraph} whereas Lov\'{a}sz's and Mader's results hold only under certain technical assumptions on the graph. In several applications of their results, additional arguments are needed to address cases where those technical assumptions do not hold. For this reason, we believe that Theorem \ref{thm:complete-splitting-off} could be useful in simplifying the arguments involved in some of the applications of Lov\'{a}sz's and Mader's graph splitting-off results (e.g., we will later see that the edge version of Menger's theorem in undirected \emph{graphs} follows in a straightforward fashion from Theorem \ref{thm:complete-splitting-off}). 

A crude run-time of our algorithm that proves Theorem \ref{thm:complete-splitting-off} is $O(|V|^6(|V|+|E|)^3)$.  
We understand that this run-time is impractical for applications. Nevertheless, we mention it here explicitly for the sake of completeness and as a potential starting point for future work: it would be interesting to design a near-linear time algorithm to find a local connectivity preserving complete h-splitting-off at a vertex from a weighted hypergraph.

\begin{remark}\label{remark:exp-sized-conn-preserving-splitting-off}
We note that existence of a local/global connectivity preserving complete h-splitting-off at a vertex from a hypergraph does not necessarily imply a polynomial-time algorithm to find it. 
This is because, 
a local/global connectivity preserving complete h-splitting-off at a vertex from a hypergraph $(G, w_G)$ could contain exponential number of hyperedges although $G$ contains only polynomial number of hyperedges. 
We give an example to illustrate this issue. 
Consider the graph $(G=(V, E), w_G)$ where $G$ is the star graph on $n+1$ vertices with $s$ being the center of the star and all edge weights are $2^{n-1}-1$. 
Consider the hypergraph $(H=(V, E_H), w_H)$ such that 
$E_H:=\{e: e\subseteq V\setminus \{s\} \text{ and } |e|\ge 2\}$ 
with all hyperedge weights being one. 
The hypergraph $(H, w_H)$ is a local connectivity preserving complete h-splitting-off at $s$ from $(G, w_G)$, but $(H, w_H)$ has exponential number of hyperedges although $(G, w_G)$ has only $n$ edges. 
In order to design a polynomial-time algorithm to find a local connectivity preserving complete h-splitting-off at a vertex from a hypergraph, a necessary step is to show the existence of a local connectivity preserving complete h-splitting-off at a vertex from a hypergraph that contains only polynomially many \emph{additional} hyperedges. 
For the star graph $(G,w_G)$ with edge weights $2^{n-1}-1$ mentioned above, the hypergraph $(H'=(V, E_{H'}), w_{H'})$ containing only one hyperedge, namely $E_{H'}:=\{V\setminus \{s\}\}$ with the weight of that hyperedge being $2^{n-1}-1$ is also a local connectivity preserving complete h-splitting-off at $s$ from $(G,w_G)$. 
One of the features of our algorithmic proof of Theorem \ref{thm:complete-splitting-off} is the existence of a local connectivity preserving complete h-splitting-off at a vertex from a hypergraph that contains only polynomially many additional hyperedges. 
\end{remark}

As our third main contribution, 
we present two applications of Theorem \ref{thm:complete-splitting-off}. 

\paragraph{Application 1: Constructive characterization of $k$-hyperedge-connected hypergraphs.}
For the purposes of this application, 
graphs and hypergraphs will refer to multi-graphs and multi-hypergraphs, respectively.
Let $k$ be a positive integer. A graph $G=(V, E)$ is \emph{$k$-edge-connected} if $d_{G}(X)\ge k$ for every non-empty proper subset $X\subsetneq V$. 
Constructive characterization of $k$-edge-connected graphs is a central problem in graph theory. 
It is well-known that a graph is $1$-edge-connected if and only if it admits a spanning tree. 
Robbins' \cite{Rob39} showed that a graph is $2$-edge-connected if and only if it admits an ear decomposition (see \cite{Frank-book} for definition of ear decomposition). 
Generalizing Robbins' result, 
Lov\'{a}sz \cite{Lov74, Lovasz-problems-book} gave a constructive characterization of $k$-edge-connected graphs for \emph{even} $k$ using his result on global connectivity preserving complete g-splitting-off at a vertex from a graph. 
Mader \cite{Mad78} gave a constructive characterization of $k$-edge-connected graphs for \emph{odd} $k$ using his result on local connectivity preserving complete g-splitting-off at a vertex from a graph. 
Motivated by these results, we present a constructive characterization of $k$-hyperedge-connected hypergraphs using our splitting-off result in Theorem \ref{thm:complete-splitting-off}.
A hypergraph $G=(V, E)$ is defined to be \emph{$k$-hyperedge-connected} if $d_{G}(X)\ge k$ for every non-empty proper subset $X\subsetneq V$.

Both Lov\'{a}sz's and Mader's constructive characterizations of $k$-edge-connected graphs are based on a pinching operation in graphs. Our constructive characterization of $k$-hyperedge-connected hypergraphs is also based on a pinching operation, but our pinching operation is defined for hypergraphs. We define this operation now (see \Cref{fig:hypergraph-pinching-example} for an example). 

\begin{definition}\label{def:pinching-operation}
    Let $G = (V, E)$ be a hypergraph and $p,k \in \Z_+$ be positive integers such that $p \leq k$. In \emph{$(k, p)$-pinching} hyperedges of $G$, we obtain a new hypergraph by performing the following sequence of operations:
    \begin{enumerate}
        \item pick $p$ distinct hyperedges $e_1, \ldots, e_p \in E$,
        \item pick $p$ positive integers $t_1, \ldots, t_p \in \Z_+$ such that $\sum_{i=1}^p t_i = k$,
        \item for each $i \in [p]$, choose a partition of the hyperedge $e_i$ into $t_i$ non-empty parts $e_i = \uplus_{j \in [t_i]} f_i^j$,
        \item remove the hyperedges $e_1, \ldots, e_p$ from the hypergraph $G$,
        \item add a new vertex $s$ and hyperedges $\{f_i^j \cup \{s\} : j \in [t_i], i \in [p]\}$ to the hypergraph $G$.
    \end{enumerate}
\end{definition}

\begin{figure}[thb]
    \centering
    \includegraphics[width=0.7\textwidth]{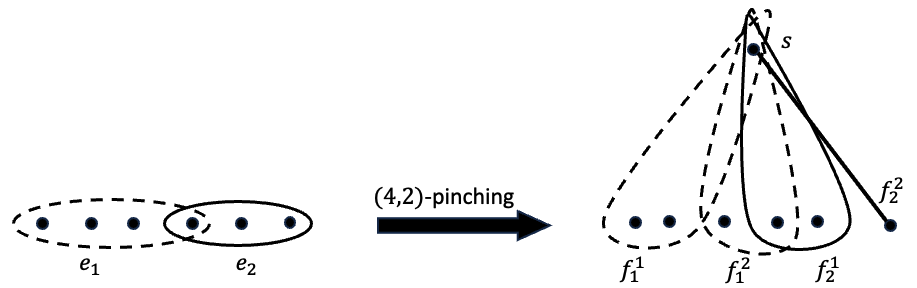}
    \caption{An example of a $(4,2)$-pinching operation. Here, $t_1 = t_2 = 2$.}
    \label{fig:hypergraph-pinching-example}
\end{figure}
With this definition of pinching, we show the following constructive characterization of $k$-hyperedge-connected hypergraphs. 
\begin{restatable}{theorem}{thmConstructiveCharacterization}\label{thm:k-EC-hypergraph-characterization}
    Let $k \in \Z_+$ be a positive integer. A hypergraph $G = (V, E)$ is $k$-hyperedge-connected if and only if $G$ can be obtained by starting from the single vertex hypergraph with no hyperedges and repeatedly applying one of the following two operations:
    \begin{enumerate}
        \item add a new hyperedge over a subset of vertices of the existing hypergraph, and
        \item $(k, p)$-pinching hyperedges of the existing hypergraph for some positive integer $p \leq k$.
    \end{enumerate}
\end{restatable}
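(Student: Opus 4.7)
The plan is to prove both directions of the equivalence separately. For the \emph{sufficiency} direction, I would induct on the number of construction steps and verify that each operation preserves $k$-hyperedge-connectedness, starting from the single-vertex hypergraph (vacuously $k$-hyperedge-connected). Operation~1 (adding a hyperedge) only increases cut values and hence preserves $k$-hyperedge-connectedness. For a $(k,p)$-pinching producing $G'$ on $V \cup \{s\}$, a short case analysis suffices: the cut $\{s\}$ has value $\sum_i t_i = k$ by construction; and for any cut $X \subseteq V$ with $s \notin X$, each $e_i$ crossing $X$ in $G$ yields at least one part $f_i^j$ intersecting $X$, while each $e_i \subseteq X$ contributes all $t_i$ parts, so $d_{G'}(X) \ge d_G(X) \ge k$. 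Cuts containing $s$ reduce to the previous case by complementation.

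For the \emph{necessity} direction, I would induct on $|V| + \sum_{e \in E} w(e)$. The base case $|V| = 1$ is handled by adding each singleton hyperedge via Operation~1. For $|V| \ge 2$, I split into two cases. In Case~A, some hyperedge $e$ is \emph{non-critical}, i.e., $G - e$ is still $k$-hyperedge-connected; then I apply induction to $G - e$ and recover $G$ by a final Operation~1. In Case~B, every hyperedge is critical, and here I would establish that some vertex $s$ has $d_G(\{s\}) = k$ (the sub-claim discussed below). Given the sub-claim, Theorem~\ref{thm:complete-splitting-off} yields a local-connectivity-preserving complete h-splitting-off $G^*$ at $s$ on $V \setminus \{s\}$, and $G^*$ is $k$-hyperedge-connected since $\lambda_{G^*}(u,v) = \lambda_G(u,v) \ge k$ for all distinct $u,v \in V \setminus \{s\}$. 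Inductively, $G^*$ is constructible. To recover $G$, I reverse the splitting-off as a $(k,p)$-pinching: each new hyperedge $g_i \in E(G^*)$ was obtained by merging a group $\{e_{i,1}, \ldots, e_{i,t_i}\} \subseteq \delta_G(s)$ of hyperedges with pairwise intersection $\{s\}$ and then trimming $s$, so $g_i = \uplus_{j=1}^{t_i} (e_{i,j} \setminus \{s\})$ is a valid partition into non-empty parts; attaching $s$ to each part recovers $G$, and $\sum_i t_i = d_G(\{s\}) = k$ makes this a valid $(k,p)$-pinching.

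The main obstacle is the sub-claim underlying Case~B: if $G$ is $k$-hyperedge-connected with $|V| \ge 2$ and every hyperedge is critical, then some vertex has $d_G(\{v\}) = k$. I plan to argue by contradiction: assume every $v$ has $d_G(\{v\}) > k$, and without loss of generality assume no singleton hyperedges (which are never critical and can be handled by Case~A). Let $X$ be a $k$-tight cut of minimum cardinality; by assumption $|X| \ge 2$. If some hyperedge $e \subseteq X$ exists, then by criticality $e$ is crossed by some $k$-tight cut $Y$, and $X \setminus Y \supseteq e \setminus Y \ne \emptyset$, while $Y \setminus X \ne \emptyset$ (else $Y \subsetneq X$ is a smaller $k$-tight cut, contradicting minimality). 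Posimodularity of the hypergraph cut function gives $d_G(X) + d_G(Y) \ge d_G(X \setminus Y) + d_G(Y \setminus X)$, and combined with $d_G(X \setminus Y), d_G(Y \setminus X) \ge k$ this forces $d_G(X \setminus Y) = k$, again contradicting the minimality of $X$. Hence no hyperedge lies entirely inside $X$, so every hyperedge at any $v \in X$ must cross $X$, giving $d_G(\{v\}) \le d_G(X) = k$ and contradicting $d_G(\{v\}) > k$. This establishes the sub-claim and closes the induction.
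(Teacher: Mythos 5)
Your proposal is correct and takes essentially the same route as the paper: sufficiency by checking that both operations preserve $k$-hyperedge-connectivity, and necessity by reducing to the minimally $k$-hyperedge-connected case, extracting a vertex of degree exactly $k$ (the paper's Lemma~\ref{lem:minimally-k-EC-hypergraph-has-degree-k-vertex}), applying the local-connectivity-preserving complete h-splitting-off of Theorem~\ref{thm:complete-splitting-off}, and reversing it as a $(k,p)$-pinching. The only (harmless) variation is in the degree-$k$-vertex lemma, where you take a minimum-cardinality tight set and uncross with posimodularity via $X\setminus Y$ and $Y\setminus X$, while the paper takes an inclusion-minimal tight set and uncrosses with submodularity via $X\cap Y$ and $X\cup Y$, which forces it to additionally verify $V-(X\cup Y)\neq\emptyset$.
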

Our proof of Theorem \ref{thm:k-EC-hypergraph-characterization} is constructive: i.e., given a $k$-hyperedge-connected hypergraph $G$, our proof gives a polynomial-time algorithm to construct a sequence of hypergraphs $G_0, G_1, G_2, \ldots, G_t$, where $G_0$ is the single vertex hypergraph with no hyperedges, $G_t=G$ and for each $i\in [t]$, the hypergraph $G_i$ is obtained from $G_{i-1}$ by either adding a new hyperedge over a subset of vertices in $G_{i-1}$ or by $(k,p)$-pinching hyperedges in $G_{i-1}$ for some positive integer $p\le k$. 

\begin{remark}
Robbins' constructive characterization of $2$-edge-connected graphs and Lov\'{a}sz's constructive characterization of $2k$-edge-connected graphs find applications in graph orientation problems---e.g., 
Robbins' result leads to an algorithm to find a strongly connected orientation of $2$-edge-connected graphs and 
Lov\'{a}sz's result leads to an algorithm to find a strongly $k$-arc-connected orientation of $2k$-edge-connected graphs. In fact, the latter leads to an alternative proof of Nash-Williams' \emph{weak orientation theorem} \cite{NW60}. 
Along the same vein, we hope that our above characterization of $k$-edge-connected hypergraphs might find applications in hypergraph orientation problems. 
\end{remark}

\paragraph{Application 2.1: Steiner Rooted $k$-arc-connected Orientation of Graphs.}
Orienting a graph to achieve high connectivity is a fundamental area in graph theory, combinatorial optimization, and algorithms. Let $G=(V, E)$ be an undirected graph.  An \emph{orientation} $\overrightarrow{G}$ of $G$ is a directed graph obtained by assigning a direction to each edge of $G$. 
Let $G=(V, E)$ be an undirected graph, $T\subseteq V$ be a set of terminals, $r\in T$ be a root vertex, and $k$ be a positive integer. 
An orientation $\overrightarrow{G}$ of $G$ is defined to be \emph{Steiner rooted $k$-arc-connected} if there exist $k$ arc-disjoint paths in $\overrightarrow{G}$ 
from $t$ to $r$ 
for every terminal $t\in T\setminus \{r\}$. 
In \textsc{Max Steiner Rooted-Connected Orientation} problem, the goal is to find the maximum integer $k$ and an orientation $\overrightarrow{G}$ of $G$ such that $\overrightarrow{G}$ is Steiner rooted $k$-arc-connected. 
\textsc{Max Steiner Rooted-Connected Orientation} generalizes two classic problems in graph theory: The case of $|T|=2$ is the max edge-disjoint $\{r,v\}$-paths problem and is solved via Menger's theorem. The case of $T=V$ is the max edge-disjoint spanning trees problem and is solved via Tutte and Nash-Williams' theorem \cite{Tu61, NW61}. We mention that both these problems are also generalized by the Steiner Tree Packing problem and the associated Kriesell's conjecture \cite{Kri03, JMS03, Lau07}, but we will not focus on that generalization. 

Kir\'{a}ly and Lau \cite{Kiraly-Lau} introduced the \textsc{Max Steiner Rooted-Connected Orientation}, showed that it is NP-hard, and gave a $2$-approximation via an approximate min-max relation. We state their approximate min-max relation now. 
An undirected graph $G$ is \emph{Steiner $k$-edge-connected} if $\lambda_G(u,v)\ge k$ 
for every pair of distinct terminals $u, v\in T$. 
It is clear that if the graph $G$ has a Steiner rooted $k$-arc-connected orientation, then $G$ should be Steiner $k$-edge-connected. However, the converse is not necessarily true. Kir\'{a}ly and Lau observed that if the graph is Steiner $2k$-edge-connected, then it has a Steiner rooted $k$-arc-connected orientation. 

\begin{theorem}[Kir\'{a}ly and Lau \cite{Kiraly-Lau}]\label{thm:rooted-steiner-orientation-in-graphs}
Let $G=(V, E)$ be an undirected graph, $T\subseteq V$ be a subset of terminals, $r\in T$ be the root vertex, and $k$ be a positive integer. If $G$ is Steiner $2k$-edge-connected, then it has a Steiner rooted $k$-arc-connected orientation. 
\end{theorem}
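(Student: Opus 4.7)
The plan is to use Theorem~\ref{thm:complete-splitting-off} to reduce the claim to a rooted-orientation question for a hypergraph on the terminal set $T$, and then lift the orientation back to $G$. Starting from $(G_0, w_0) := (G, \mathbf{1})$, I would iteratively apply Theorem~\ref{thm:complete-splitting-off} at each non-terminal vertex $s \in V \setminus T$, one at a time, producing a sequence of hypergraphs $(G_0, w_0), (G_1, w_1), \ldots, (G_p, w_p)$ with $p = |V \setminus T|$, where each $(G_i, w_i)$ is a local-connectivity preserving complete h-splitting-off from $(G_{i-1}, w_{i-1})$ at some non-terminal. After all non-terminals are processed, $d_{(G_p, w_p)}(s) = 0$ for every $s \in V \setminus T$, so the non-trivial hyperedges of $(G_p, w_p)$ lie inside $T$. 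Because local connectivity is preserved at every step, $\lambda_{(G_p, w_p)}(u, v) = \lambda_G(u, v) \geq 2k$ for all distinct $u, v \in T$, so the hypergraph $H$ induced on $T$ by $(G_p, w_p)$ is $2k$-hyperedge-connected.

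Next, I would appeal to the weak orientation theorem for hypergraphs (due to Frank), which asserts that every $2k$-hyperedge-connected hypergraph admits a rooted $k$-arc-connected orientation at any designated root. Applying this to $H$ with root $r$ yields an orientation $\overrightarrow{H}$ that is rooted $k$-arc-connected at $r$. Now lift $\overrightarrow{H}$ back through the sequence of h-splitting-off operations in reverse order. For a reversed h-trim turning an oriented hyperedge $g$ (with head $v$) back into $e = g \cup \{s\}$, orient $e$ with head $v$, making $s$ a tail. For a reversed h-merge that replaces an oriented hyperedge $g = e \cup f$ (with head $v$) by the pair $e, f \in \delta(s)$ satisfying $e \cap f = \{s\}$: if $v \in e \setminus \{s\}$, orient $e$ with head $v$ and $f$ with head $s$; symmetrically if $v \in f \setminus \{s\}$; and if $v = s$, orient both $e$ and $f$ with head $s$. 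At the end of the reversal process, each original edge of $G$ has been assigned a head, yielding an orientation $\overrightarrow{G}$.

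The main obstacle is verifying that each lifting step preserves rooted $k$-arc-connectivity at $r$. For every subset $X \subseteq V$ with $r \notin X$ and $X \cap T \neq \emptyset$, one must show that at least $k$ hyperarcs leave $X$ (head outside $X$ and at least one non-head vertex inside $X$). Two observations handle this. First, the ``head-at-$s$'' assignments introduced by the lift contribute to leaving $X$ only when $s \notin X$; in that regime, each such contribution is matched by the parent hyperedge $g$ contributing to leaving $X$ through a non-head vertex inside $X$. Second, whenever $s \in X$, the pair $e, f$ collectively spans at least as much as $g$ on both sides of $X$, so the out-count never decreases. Combining these with the preservation of $\lambda_{(G_i, w_i)}(u, v)$ for all $u, v \in T$ provided by Theorem~\ref{thm:complete-splitting-off}, a Menger-style argument delivers the required $k$ arc-disjoint paths from every $t \in T \setminus \{r\}$ to $r$ in $\overrightarrow{G}$. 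This is the step that replaces the classical appeal to Nash-Williams' strong orientation theorem and is where the hypergraph splitting-off tool does its real work: because h-splitting-off is unconditional (unlike Mader's), no side cases about cut edges at $s$ or parity of $d(s)$ need to be handled separately.
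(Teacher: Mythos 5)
Your overall architecture is exactly the paper's: split off all Steiner vertices via \Cref{thm:complete-splitting-off} to get a $2k$-hyperedge-connected hypergraph on $T$, orient that hypergraph rooted at $r$, and lift the orientation back one splitting-off operation at a time; your head-assignment rules for the reversed trim and merge coincide with those in \Cref{lem:lifting-Steiner-orientation-through-splitting-off}. Your verification of the lift is a legitimate variant: the paper reroutes an explicit system of $k$ hyperarc-disjoint paths (replacing the arc on $g=e\cup f$ by either $(e,\head(g))$ or the subpath through $s$), whereas you count out-crossing hyperarcs of every set $X$ with $r\notin X$, $X\cap T\neq\emptyset$, and invoke Menger for directed hypergraphs; the inequality you need (the out-count of $X$ does not decrease under the lift) does hold in all cases, using that $s\in e\cap f$, so this route works, though as written your first ``matching'' observation is stated in the wrong direction and would need to be tightened.

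The one genuine gap is the second step. There is no off-the-shelf ``weak orientation theorem for hypergraphs'' asserting that every $2k$-hyperedge-connected hypergraph has a rooted $k$-hyperarc-connected orientation. What Frank's book provides (\Cref{thm:frank:weak-pc-iff-rooted-hyparc-conn-orientation}, Theorem~9.4.13) is a characterization: such an orientation exists if and only if the hypergraph is $k$-weakly-partition-connected. To use it you must still prove that $2k$-hyperedge-connectivity implies $k$-weak-partition-connectivity, which is not a formality for hypergraphs (hyperedges can meet many parts of a partition, and this comparison is precisely the source of the factor $2$ in the min-max relation); the paper supplies this as \Cref{lem:2k-conn-implies-k-weak-pc} via a counting argument over the partition after contracting its parts. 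Your proposal black-boxes exactly this step, so as stated it cites a theorem that does not exist in that form; adding the short partition-counting lemma closes the gap and makes your argument essentially identical to the paper's.
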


Kir\'{a}ly and Lau observed that Theorem \ref{thm:rooted-steiner-orientation-in-graphs} follows immediately from Nash-Williams' \emph{strong orientation} theorem. 
Nash-Williams' \emph{strong orientation} theorem \cite{NW60} states that every undirected graph $G=(V, E)$ admits an orientation $\overrightarrow{G}$ such that $\lambda_{\overrightarrow{G}}(u,v)\ge \lfloor \lambda_G(u,v)/2\rfloor$ for every distinct $u, v\in V$, where $\lambda_{\overrightarrow{G}}(u,v)$ is the maximum number of arc-disjoint directed paths from $u$ to $v$ in $\overrightarrow{G}$. 
In this work, we give an alternative proof of Theorem \ref{thm:rooted-steiner-orientation-in-graphs} that does not rely on Nash-Williams' strong orientation theorem. Instead, we use Theorem \ref{thm:complete-splitting-off}. 
Our proof strategy is unique since it proves an orientation result for \emph{graphs} using tools developed for \emph{hypergraphs}. 

\begin{remark}
Nash-Williams' proof of the strong orientation theorem \cite{NW60} is a sophisticated inductive argument. Giving a simple and more insightful proof of the strong orientation theorem has been a central topic of interest in graph theory and combinatorial optimization (see \cite{Frank93:Applications-of-submod-functions}). Mader \cite{Mad78} gave a different proof of the strong orientation theorem using his local connectivity preserving splitting-off theorem, but his proof also involved sophisticated technical arguments. Frank \cite{Frank93:Applications-of-submod-functions} condensed the ideas of both Nash-Williams and Mader to present a proof of the strong orientation theorem using Mader's local connectivity preserving splitting-off, but it is still technically complicated. The technical complication in using Mader's local connectivity preserving splitting-off result arises from the assumptions that need to be satisfied by the vertex to be split-off. In contrast, our splitting-off result for hypergraphs (namely, Theorem \ref{thm:complete-splitting-off}) does not need any assumptions on the vertex to be split-off. In light of these considerations, our proof of Theorem \ref{thm:rooted-steiner-orientation-in-graphs} using Theorem \ref{thm:complete-splitting-off} provides hope that Theorem \ref{thm:complete-splitting-off} (or the ideas therein) could be used to give a conceptually simpler proof of Nash-Williams' strong orientation theorem. 
\end{remark}

\paragraph{Application 2.2: Steiner Rooted $k$-hyperarc-connected Orientation of Hypergraphs.}
Orienting \emph{hypergraphs} is also a fundamental area in graph theory and combinatorial optimization (see Frank's book \cite{Frank-book}) 
with far reaching implications. For example, Woodall's conjecture can be reformulated as a hypergraph orientation problem \cite{egres-open-problem-on-head-disjoint-sco}; moreover,  
hypergraph orientation results have recently been used in coding theory \cite{AGL23}.
Kir\'{a}ly and Lau \cite{Kiraly-Lau} showed that the approximate min-max relation in Theorem \ref{thm:rooted-steiner-orientation-in-graphs} also holds for hypergraphs. To state their result, we need some terminology in hypergraph orientations. 

Let $G=(V, E)$ be a hypergraph. An \emph{orientation} $\overrightarrow{G}=(V, E, \head:E\rightarrow V)$ of $G$ is a directed hypergraph obtained by assigning a unique head vertex $\head(e)\in e$ for each $e\in E$. 
A pair $(e, \head(e))$ is denoted as a \emph{hyperarc} with the head of the hyperarc being $\head(e)$ and the tails of the hyperarc being $e\setminus \head(e)$. 
Let $G=(V, E)$ be a hypergraph, $T\subseteq V$ be a set of terminals, $r\in T$ be a root vertex, and $k$ be a positive integer. An orientation $\overrightarrow{G}$ of $G$ is defined to be \emph{Steiner rooted $k$-hyperarc-connected} if there exist $k$ hyperarc-disjoint paths in $\overrightarrow{G}$ 
from $t$ to $r$ 
for every terminal $t\in T\setminus \{r\}$. Here, a path from $t$ to $r$ in a directed hypergraph is an alternating sequence of distinct vertices and hyperarcs $t=v_1, a_1, v_2, a_2, ..., a_{\ell-1}, v_{\ell}=r$ such that 
$v_i$ is a tail of $a_i$ and $v_{i+1}$ is the head of $a_i$ 
for every $i\in [\ell-1]$. We say that a hypergraph $G$ is \emph{Steiner $k$-hyperedge-connected} if 
$\lambda_G(u,v)\ge k$ for 
every pair of distinct terminals $u, v\in T$. 
It is clear that if the hypergraph $G$ has a Steiner rooted $k$-hyperarc-connected orientation, then $G$ should be Steiner $k$-hyperedge-connected. However, the converse is not necessarily true. Kir\'{a}ly and Lau \cite{Kiraly-Lau} showed that if the hypergraph is Steiner $2k$-hyperedge-connected, then it has a Steiner rooted $k$-hyperarc-connected orientation. 

\begin{restatable}[Kir\'{a}ly and Lau \cite{Kiraly-Lau}]{theorem}{thmRootedSteinerArcConnectedOrientationsHypergraphs}\label{thm:rooted-steiner-orientation-in-hypergraphs}
Let $G=(V, E)$ be a hypergraph, $T\subseteq V$ be a subset of terminals, $r\in T$ be the root vertex, and $k$ be a positive integer. If $G$ is Steiner $2k$-hyperedge-connected, then it has a Steiner rooted $k$-hyperarc-connected orientation. 
\end{restatable}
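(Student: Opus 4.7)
The plan is to prove Theorem~\ref{thm:rooted-steiner-orientation-in-hypergraphs} by induction on $|V|$, invoking Theorem~\ref{thm:complete-splitting-off} in the inductive step. The base case $|V|=1$ is immediate. For the inductive step, pick any vertex $s \in V \setminus \{r\}$, preferring a non-terminal $s \in V \setminus T$ when one exists, and apply Theorem~\ref{thm:complete-splitting-off} to obtain a local-connectivity-preserving complete h-splitting-off $G^*$ at $s$. Viewing $G^*$ as a hypergraph on $V \setminus \{s\}$ and noting that $\lambda_{G^*}(u,v) = \lambda_G(u,v) \geq 2k$ for all $u, v \in T \setminus \{s\}$, the hypergraph $G^*$ is Steiner $2k$-hyperedge-connected with terminal set $T \setminus \{s\}$ and root $r$. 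By the induction hypothesis, $G^*$ admits a Steiner rooted $k$-hyperarc-connected orientation $\overrightarrow{G^*}$.

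I would then lift $\overrightarrow{G^*}$ back to an orientation $\overrightarrow{G}$ of $G$ by reversing the h-splitting-off operations. The lifting rule is canonical: when undoing an h-trim of $e$ to $e \setminus \{s\}$ whose $G^*$-copy is oriented with head $h$, orient the corresponding copy of $e$ in $G$ with head $h$, so $s$ becomes a tail; when undoing an h-merge of almost-disjoint $e, f$ into $e \cup f$ oriented with head $h$, observe that $h \neq s$ since $s$ is isolated in $G^*$, so $h$ lies in exactly one of $e, f$, say $h \in e$, and orient the copy of $e$ in $G$ with head $h$ and the copy of $f$ with head $s$. Verifying that $\overrightarrow{G}$ is Steiner rooted $k$-hyperarc-connected reduces to the Menger-type cut criterion: for every $S \subseteq V$ with $r \notin S$ and $S \cap T \neq \emptyset$, the lifted orientation has at least $k$ outgoing hyperarcs from $S$. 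When $s \notin S$, a case analysis on whether each outgoing hyperarc from $S$ in $\overrightarrow{G^*}$ came from an unmodified hyperedge, an h-trim, or an h-merge shows it maps to a distinct outgoing hyperarc from $S$ in $\overrightarrow{G}$. When $s \in S$ and $S \setminus \{s\}$ still meets $T \setminus \{s\}$, the analogous bijective argument applied to $S' = S \setminus \{s\}$ in $\overrightarrow{G^*}$ delivers the required outgoing hyperarcs from $S$ in $\overrightarrow{G}$.

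The only subcase not covered above is $V = T$ and $S = \{s\}$, where $S \setminus \{s\} = \emptyset$ and the inductive hypothesis cannot be invoked on $S'$. In this subcase, a direct count of the lifting rule shows that the number of outgoing hyperarcs at $s$ in $\overrightarrow{G}$ equals the total number of h-trim operation copies performed by the splitting-off algorithm at $s$: each h-merge copy contributes exactly one outgoing and one incoming hyperarc at $s$, whereas each h-trim copy contributes only one outgoing hyperarc at $s$. The hard part will therefore be establishing that any local-connectivity-preserving complete h-splitting-off at a vertex $s$ of a $2k$-hyperedge-connected hypergraph must use at least $k$ h-trim operation copies. The intuition is that aggressively h-merging into few large hyperedges would force some cut $d_{G^*}(X)$ for $X \subseteq V \setminus \{s\}$ to fall below its $\lambda_G$-value and violate local-connectivity preservation; making this precise, likely through a careful argument on the cut structure combined with the almost-disjointness constraint of the h-merge operation, is the delicate technical ingredient that completes the proof.
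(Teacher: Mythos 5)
Your inductive step is fine as long as a Steiner vertex is available to split off: in that regime your lifting rule and cut-criterion bookkeeping essentially re-derive Lemma~\ref{lem:lifting-Steiner-orientation-through-splitting-off}, and this matches the paper's use of Theorem~\ref{thm:complete-splitting-off}. The genuine gap is the case $V=T$, where you split off a \emph{terminal} $s$ and must certify $s$'s own connectivity to $r$ in the lifted orientation via the cut $S=\{s\}$. The claim you defer to --- that \emph{any} local-connectivity-preserving complete h-splitting-off at a vertex $s$ of a $2k$-hyperedge-connected hypergraph uses at least $k$ h-trim copies --- is false. Take $k=2$ and $G$ on $V=\{s,a,b,c,d\}$ consisting of the complete graph on $\{a,b,c,d\}$ with unit edge weights together with the four unit edges $\{s,a\},\{s,b\},\{s,c\},\{s,d\}$. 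Every pairwise connectivity among $\{a,b,c,d\}$ equals $4=2k$. Merging the four edges at $s$ (pairwise almost-disjoint) into $\{s,a,b,c,d\}$ and trimming once yields $G^*$ equal to $K_4$ plus one hyperedge $\{a,b,c,d\}$ of weight $1$; all pairwise connectivities among $\{a,b,c,d\}$ are still exactly $4$, so this is a valid local-connectivity-preserving complete h-splitting-off, yet it uses a single trim. Under your lifting rule only one hyperedge incident to $s$ receives a head different from $s$, so $\overrightarrow{G}$ has only one hyperarc with $s$ as a tail and there cannot be $k=2$ hyperarc-disjoint paths from $s$ to $r$. Since Theorem~\ref{thm:complete-splitting-off} gives you no control over which splitting-off is produced, your induction breaks exactly here. (Separately, your trim/merge accounting is internally inconsistent as stated --- if each merge copy contributed an outgoing hyperarc at $s$, the outgoing count would be merges plus trims --- but the correct count under your rule is indeed ``outgoing $=$ number of trims,'' which is what the counterexample defeats.)

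This is not a repairable detail but the heart of the theorem: when all vertices are terminals, splitting-off cannot reduce the problem any further, and some independent orientation principle is needed. The paper therefore never splits off terminals; it splits off only the Steiner vertices to obtain a $2k$-hyperedge-connected hypergraph on vertex set $T$, and then handles the all-terminal case by a different mechanism: $2k$-hyperedge-connectivity implies $k$-weak-partition-connectivity (Lemma~\ref{lem:2k-conn-implies-k-weak-pc}), which by Frank's characterization (Theorem~\ref{thm:frank:weak-pc-iff-rooted-hyparc-conn-orientation}) yields a rooted $k$-hyperarc-connected orientation; this is where the factor $2$ is absorbed. That orientation is then lifted back through the splitting-off operations exactly as in your non-terminal case. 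To complete your proof you would need to replace your counting argument in the $V=T$ case by such an orientation theorem (or an equivalent submodular-flow/partition argument); splitting off a terminal and counting trims cannot work.
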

Kir\'{a}ly and Lau's proof of Theorem \ref{thm:rooted-steiner-orientation-in-hypergraphs} was based on careful uncrossing and contractions. In this work, we give an alternative proof of Theorem \ref{thm:rooted-steiner-orientation-in-hypergraphs} using Theorem \ref{thm:complete-splitting-off}. 
Our proof of \Cref{thm:rooted-steiner-orientation-in-hypergraphs} reveals the source of the $2$-factor gap in the approximate min-max relation of Kir\'{a}ly and Lau for \textsc{Max Steiner Rooted-Connected Orientation Problem}: it arises from the $2$-factor gap between connectivity and weak-partition-connectivity of hypergraphs (see Definition \ref{defn:weak-partition-connectivity} for the definition of weak-partition-connectivity and Lemma \ref{lem:2k-conn-implies-k-weak-pc}). 

\begin{remark}
Our proof technique for Theorems \ref{thm:rooted-steiner-orientation-in-graphs} and \ref{thm:rooted-steiner-orientation-in-hypergraphs} using Theorem \ref{thm:complete-splitting-off}---i.e., via the local-connectivity preserving splitting-off operation in hypergraphs---also leads to an alternate proof of Menger's theorem in \emph{undirected} graphs and hypergraphs (edge-disjoint version). For details, we refer the reader to Section \ref{sec:rooted-steiner-connected-orientations} where we discuss a proof of Menger's theorem using \Cref{thm:complete-splitting-off} as a warm-up towards a proof of Theorems \ref{thm:rooted-steiner-orientation-in-graphs} and \ref{thm:rooted-steiner-orientation-in-hypergraphs}. 
\end{remark}

Both Theorems \ref{thm:rooted-steiner-orientation-in-graphs} and \ref{thm:rooted-steiner-orientation-in-hypergraphs} can be extended to weighted graphs/hypergraphs (by considering parallel copies of edges/hyperedges). The weighted version of Theorems \ref{thm:rooted-steiner-orientation-in-graphs} and \ref{thm:rooted-steiner-orientation-in-hypergraphs} can also be shown to admit strongly polynomial-time algorithms using our proof strategy as well as the proof strategy of Kir\'{a}ly and Lau. We avoid stating the weighted versions in the interests of brevity.



\subsection{Proof Technique for Theorem \ref{thm:k-EC-hypergraph-characterization}}
We outline the proof technique for Theorem \ref{thm:k-EC-hypergraph-characterization}. The reverse direction follows by observing that if a hypergraph is $k$-hyperedge-connected, then both operations in the statement of the theorem preserve $k$-hyperedge-connectivity. We sketch a proof of the forward direction. 
The proof is by induction on the number of hyperedges plus the number of vertices. 
First, suppose that there exists a hyperedge $e \in E$ such that $G - e$ is still $k$‐hyperedge‐connected. We note that deleting the hyperedge $e$ is the inverse of operation (1). Consequently, the proof follows by deleting the hyperedge $e$, using the induction hypothesis on the resulting hypergraph $H$, and then noting that the hypergraph $G$ is obtained from $H$ by operation (1). 
Next, suppose that there does not exist a hyperedge $e \in E$ such that $G-e$ is $k$-hyperedge-connected. We call such a hypergraph to be minimally $k$-hyperedge-connected. In \Cref{lem:minimally-k-EC-hypergraph-has-degree-k-vertex}, we show that a minimally $k$-hyperedge-connected hypergraph contains a vertex $u$ with degree exactly $k$. By \Cref{thm:complete-splitting-off}, there exists a global-connectivity preserving complete h-splitting-off at the vertex $u$ from the hypergraph $G$. Let $H$ be a global-connectivity preserving complete h-splitting-off at the vertex $u$ from the hypergraph $G$. 
We note that complete h-splitting-off at $u$ followed by deletion of the vertex $u$ is the inverse of operation (2) at $u$. 
Consequently, the proof follows by using the induction hypothesis on the hypergraph $H-u$ and then noting that the hypergraph $G$ is obtained from $H-u$ by operation (2).


\subsection{Proof Technique for Theorems \ref{thm:rooted-steiner-orientation-in-graphs} and \ref{thm:rooted-steiner-orientation-in-hypergraphs}}
We outline the proof technique for Theorem \ref{thm:rooted-steiner-orientation-in-hypergraphs} and will remark after the proof about how it also implies a proof for Theorem \ref{thm:rooted-steiner-orientation-in-graphs}. 
Our proof of Theorem \ref{thm:rooted-steiner-orientation-in-hypergraphs} will be in three steps. 
Let us denote the set of non-terminals as \emph{Steiner vertices}. 
Our first step is to obtain a hypergraph $H = (T, E_H)$ by applying our local connectivity preserving complete h-splitting-off at each Steiner vertex of $G$ (sequentially, in arbitrary order of the Steiner vertices) and deleting the isolated  vertices. 
We note that deleting the isolated vertices ensures that the vertex set of $H$ is the set of terminals $T$. 
Moreover, our local connectivity preserving complete h-splitting-off ensures that the hypergraph $H$ is $2k$-hyperedge-connected since the hypergraph $G$ is Steiner $2k$-hyperedge-connected. 
Our second step is to show that this hypergraph $H=(T, E_H)$ admits a rooted $k$-hyperarc-connected orientation. 
A known characterization for the existence of a rooted $k$-hyperarc-connected orientation of a hypergraph is that the hypergraph is \emph{$k$-weak-partition-connected} (see Definition \ref{defn:weak-partition-connectivity} for the definition of weak-partition-connectivity and Theorem \ref{thm:frank:weak-pc-iff-rooted-hyparc-conn-orientation} for the characterization). We mention that the notion of weak-partition-connectivity in hypergraphs has been used recently in the  context of coding theory \cite{GLSTW22, AGL23}. 
In order to use the characterization for the existence of a rooted $k$-hyperarc-connected orientation of a hypergraph, we relate the connectivity of a hypergraph to its \emph{weak-partition-connectivity} and conclude that if $H$ is $2k$-hyperedge-connected, then it is $k$-weak-partition-connected (see Lemma \ref{lem:2k-conn-implies-k-weak-pc}). 
Consequently, the hypergraph $H$ admits a rooted $k$-hyperarc-connected orientation. 
We note that such an orientation of $H$ is equivalent to a Steiner rooted $k$-hyperarc-connected orientation of $H$ since the vertex set of $H$ is the set $T$ of terminals. 
Our third step is to use this Steiner rooted $k$-hyperarc-connected orientation of $H$ to obtain a Steiner rooted $k$-hyperarc-connected orientation of the hypergraph $G$: we will see that there is a natural way to extend the orientation of hyperedges while reversing the h-splitting-off operations to preserve Steiner rooted $k$-hyperarc-connected property (see Lemma \ref{lem:lifting-Steiner-orientation-through-splitting-off}). This would complete the proof of \Cref{thm:rooted-steiner-orientation-in-hypergraphs}. 

We note that if the hypergraph $G$ is a graph, then the same proof above obtains the required graph orientation, thus proving \Cref{thm:rooted-steiner-orientation-in-graphs}.
In particular, to prove Theorem \ref{thm:rooted-steiner-orientation-in-graphs}, we start from a graph $G=(V, E)$ that is Steiner $2k$-edge-connected, but our local connectivity preserving complete h-splitting-off operations at Steiner vertices results in a hypergraph $H=(T, E_H)$ that is $2k$-hyperedge-connected; by Lemma \ref{lem:2k-conn-implies-k-weak-pc}, the hypergraph $H$ is $k$-weak-partition-connected; now Theorem \ref{thm:frank:weak-pc-iff-rooted-hyparc-conn-orientation} gives a rooted $k$-hyperarc-connected orientation of the resulting hypergraph. Such an orientation is extended to a  Steiner rooted $k$-hyperarc-connected orientation of the \emph{graph} $G=(V, E)$ using Lemma \ref{lem:lifting-Steiner-orientation-through-splitting-off}. Essentially, the proof starts from the given graph, obtains a related hypergraph, orients that hypergraph, and extends that orientation of the hypergraph back into a desired orientation of the given graph. 

Our proof technique for Theorems \ref{thm:rooted-steiner-orientation-in-graphs} and \ref{thm:rooted-steiner-orientation-in-hypergraphs} also leads to an alternate proof of Menger's theorem in \emph{undirected} graphs and hypergraphs (edge-disjoint version)---see Section \ref{sec:rooted-steiner-connected-orientations}. 

\subsection{Proof Technique for Theorem \ref{thm:complete-splitting-off}}
We prove a more general statement that implies Theorem \ref{thm:complete-splitting-off}. We begin with the definitions needed for the more general statement. 

\begin{definition}
    Let $V$ be a finite set, $p: 2^V\rightarrow \Z$ be a set function,  and $(H=(V, E), w:E\rightarrow \Z_+)$ be a hypergraph. 
\begin{enumerate}
    \item The set function $p$
    \begin{enumerate}
    \item is \emph{symmetric} if $p(X)=p(V-X)$ for every $X\subseteq V$, and 
    \item is \emph{skew-supermodular} if for every $X, Y\subseteq V$, at least one of the following inequalities hold: 
    \begin{enumerate}
    \item $p(X) + p(Y) \leq p(X\cap Y) + p(X\cup Y)$. 
    \item $p(X) + p(Y) \leq p(X - Y) + p(Y - X)$. 
\end{enumerate}
\end{enumerate}
    \item The coverage function $b_{(H, w)}:2^V\rightarrow \Z_{\ge 0}$ is defined by $b_{(H,w)}(X):=\sum_{e\in B_H(X)}w(e)$ for every $X\subseteq V$, where $B_H(X):=\{e\in E: e\cap X\neq \emptyset\}$ for every $X\subseteq V$. 
    \item The hypergraph $(H, w)$ \emph{weakly covers} the function $p$ if $b_{(H, w)}(X)\ge p(X)$ for every $X\subseteq V$. 
    \item The hypergraph $(H, w)$ \emph{strongly covers} the function $p$ if $d_{(H, w)}(X)\ge p(X)$ for every $X\subseteq V$. 
\end{enumerate}
\end{definition}

If a hypergraph $(H, w)$ strongly covers a function $p:2^V\rightarrow \Z$, then it also weakly covers the function $p$. However, the converse is false -- i.e., a weak cover is not necessarily a strong cover\footnote{For example, consider the function $p: 2^V\rightarrow \Z$ defined by $p(X)\coloneqq 1$ for every non-empty proper subset $X\subsetneq V$ and $p(\emptyset)\coloneqq p(V)\coloneqq 0$, and the hypergraph $(H=(V, E\coloneqq \{\{u\}: u\in V\}), w: E\rightarrow \{1\})$.}.  
Bern\'{a}th and Kir\'{a}ly \cite{Bernath-Kiraly} showed 
that a weak cover of a \emph{symmetric skew-supermodular} function can be converted to a strong cover of the same function by repeated \emph{merging of disjoint hyperedges}. 
We recall their definition of the merging operation, discuss their result, and its significance now. 
\begin{restatable}{definition}{defHyperedgeMerge}\label{def:WeakToStrongCover:hyperedge-merge}
Let $(H=(V, E), w: E\rightarrow \Z_+)$ be a hypergraph. 
We use $H_w$ to denote the unweighted multi-hypergraph over vertex set $V$ containing $w(e)$ copies of every hyperedge $e\in E$.  
By \emph{merging two disjoint hyperedges} of $H_w$, we refer to the operation of replacing them by their union in $H_w$. 
We will say that a hypergraph $(G=(V, E_G), c: E_G\rightarrow \Z_+)$ is obtained from $(H, w)$ by \emph{merging hyperedges} if the multi-hypergraph $G_c$ is obtained from the multi-hypergraph $H_w$ by repeatedly merging two disjoint hyperedges in the current hypergraph. 
\end{restatable}
Bern\'{a}th and Kir\'{a}ly showed the following result: 
\begin{theorem}[Bern\'{a}th and Kir\'{a}ly \cite{Bernath-Kiraly}]\label{thm:bernath-kiraly}
    Let $(H=(V, E), w: E\rightarrow \Z_+)$ be a hypergraph and $p:2^V\rightarrow \Z$ be a symmetric skew-supermodular function such that 
$b_{(H, w)}(X) \geq p(X)$ for every $X \subseteq V$. 
Then, there exists a hypergraph $\left(H^* = (V, E^*), w^*:E^* \rightarrow\Z_+\right)$ 
such that
\begin{enumerate}[label=$(\arabic*)$]
    \item $d_{(H^*, w^*)}(X) \geq p(X)$ for every $X\subseteq V$ and 
    \item the hypergraph $(H^*,w^*)$ is obtained by merging hyperedges of the hypergraph $(H, w)$.
\end{enumerate}
\end{theorem}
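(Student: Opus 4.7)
The plan is to proceed by induction on the total hyperedge count $N \coloneqq \sum_{e \in E} w(e)$ of the multi-hypergraph $H_w$. The base case is when $(H,w)$ already strongly covers $p$, in which case we set $(H^*, w^*) \coloneqq (H, w)$. For the inductive step, I establish the following \emph{merging lemma}: whenever $(H,w)$ weakly but not strongly covers $p$, there exist two disjoint hyperedges $e, f$ of $H_w$ such that merging them produces a multi-hypergraph which still weakly covers $p$. Granting this, the merged hypergraph has $N-1$ hyperedges and remains a weak cover of $p$, so the induction hypothesis yields the desired $(H^*, w^*)$; by construction it is obtained from $(H, w)$ by a sequence of hyperedge merges, fulfilling condition $(2)$.

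To prove the merging lemma, observe that since $(H,w)$ is not a strong cover, there is a set $X \subseteq V$ with $d_{(H,w)}(X) < p(X) \leq b_{(H,w)}(X)$. The strict inequality $b_{(H,w)}(X) > d_{(H,w)}(X)$ forces at least one hyperedge $e$ of $H_w$ to lie entirely inside $X$. By symmetry of $p$, we also have $p(V \setminus X) = p(X) > d_{(H,w)}(V \setminus X)$, so by the same reasoning some hyperedge $f$ of $H_w$ lies entirely inside $V \setminus X$. These $e$ and $f$ are automatically disjoint. Merging two disjoint hyperedges $e, f$ decreases $b_{(H,w)}(Y)$ by exactly one whenever $Y$ meets both $e$ and $f$, and leaves $b_{(H,w)}$ unchanged on all other sets. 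So the merge preserves weak cover if and only if no \emph{tight} set $Y$ (meaning $b_{(H,w)}(Y) = p(Y)$) meets both $e$ and $f$.

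The main obstacle is selecting $e$ and $f$ to satisfy this compatibility condition. The key structural tool is that $b_{(H,w)}$ is both submodular and posimodular (a one-line per-hyperedge verification), while $p$ is skew-supermodular, so the family of tight sets is closed under uncrossing: for tight $Y_1, Y_2$, either both $Y_1 \cap Y_2$ and $Y_1 \cup Y_2$ are tight, or both $Y_1 \setminus Y_2$ and $Y_2 \setminus Y_1$ are tight. Using this, the plan is to refine the selection: choose $e$ inside an inclusion-minimal tight subset $X_1$ of $X$ (and fall back to $X$ itself if no tight proper subset of $X$ exists), and analogously pick $f$ inside an inclusion-minimal tight subset of $V \setminus X$. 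Any tight set $Y$ meeting both $e$ and $f$ would then properly cross $X_1$, since $Y$ meets $X_1$ through $e$ and meets $V \setminus X_1$ through $f$; uncrossing $Y$ with $X_1$ would produce a strictly smaller nonempty tight subset of $X_1$, contradicting its minimality.

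I expect the most delicate part of the proof to be in formalizing this minimal-tight-set selection, particularly handling the skew-supermodular branch of uncrossing (where the difference sets become tight rather than the intersection and union) and taking care of corner cases such as when no tight proper subset of $X$ or $V \setminus X$ exists. Once the merging lemma is in place, termination is automatic since each merge strictly decreases $N$, and the induction unwinds to produce a hypergraph $(H^*, w^*)$ that strongly covers $p$ and is obtained from $(H, w)$ by hyperedge merges, as required.
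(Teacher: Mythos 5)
Your induction skeleton (induct on the total hyperedge count of $H_w$ and repeatedly find a weak-cover-preserving merge) matches the Bern\'ath--Kir\'aly strategy in outline, and several of your ingredients are correct: $b_{(H,w)}(X)-d_{(H,w)}(X)$ is exactly the weight of hyperedges inside $X$, the symmetry argument producing $f\subseteq V-X$, the criterion that a unit merge of disjoint $e,f$ preserves weak cover iff no tight set meets both $e$ and $f$, and the uncrossing property of tight sets. The gap is in your selection rule and its justification. A tight set $Y$ meeting both $e$ and $f$ need \emph{not} properly cross $X_1$: it can contain $X_1$ (and even both of your minimal tight sets), and then the minimality-based uncrossing argument gives nothing. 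Concretely, take $V=\{a,b,c\}$, hyperedges $\{a\}$ and $\{c\}$ of weight $1$ and $\{b\}$ of weight $2$, and the symmetric function with $p(\{a\})=p(\{c\})=p(\{a,b\})=p(\{b,c\})=1$, $p(\{b\})=p(\{a,c\})=2$, $p(\emptyset)=p(V)=0$ (skew-supermodularity can be checked case by case). This is a weak but not strong cover, and the tight sets are $\{a\},\{b\},\{c\},\{a,c\}$. With the violated set $X=\{a\}$ your rule gives $X_1=\{a\}$, $e=\{a\}$, and permits $X_2=\{c\}$, $f=\{c\}$ (a minimal tight subset of $V-X$ containing a hyperedge); but the tight set $\{a,c\}$ meets both, and after the merge $b(\{a,c\})=1<2=p(\{a,c\})$, so weak cover is destroyed. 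The correct mechanism is maximality, not minimality: one takes the unique \emph{maximal} tight set $\tau(e)$ containing $e$ (\Cref{cor:weak-to-strong:uncrossing:maximal-tight-set-containing-hyperedge-unique}), uses symmetry to find $f\subseteq V-\tau(e)$ (as in \Cref{cor:weak-to-strong:uncrossing:exist-two-hyperedges-contained-in-disjoint-maximal-tight-sets}), and kills any dangerous tight set $Y$ via the refined uncrossing lemma (\Cref{lem:weak-to-strong:uncrossing:uncrossing-lemma}): the position of $e$ ($e\subseteq\tau(e)$, $e\cap(Y-\tau(e))=\emptyset$) rules out the difference branch, so $\tau(e)\cup Y$ is tight, maximality forces $Y\subseteq\tau(e)$, contradicting $Y\cap f\neq\emptyset$. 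In the example this forces $f=\{b\}$, which indeed works.

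Two further problems remain even after switching to maximal tight sets. First, a minimal tight subset of $X$ need not contain any hyperedge at all, and your fallback $X_1=X$ is unusable because the violated set $X$ need not be tight, so it cannot be uncrossed against $Y$. Second, and more importantly, your pure-merge induction silently asserts that whenever the cover is weak but not strong there \emph{always} exists some valid merge; this is precisely the hard content and your argument does not establish it, since the maximality route only applies when some hyperedge is contained in a tight set. Bern\'ath and Kir\'aly (and the paper's \Cref{alg:WeakToStrongCover}) handle the remaining case with a second operation: if a hyperedge lies in no tight set, a unit of it is peeled off unchanged into the output and $p$ is replaced by $p$ minus the cut function of that unit (which is again symmetric skew-supermodular), and only hyperedges lying in maximal tight sets are merged. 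You would need either to prove your stronger ``always a valid merge'' claim (nontrivial, and not done here) or to incorporate such a reduce step, which is still compatible with conclusion $(2)$ since a merging sequence may leave hyperedges untouched.
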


We observe that Theorem \ref{thm:bernath-kiraly} can be used to prove the existential version of Theorem \ref{thm:complete-splitting-off}: namely, for every hypergraph $(G=(V, E), w_G: E\rightarrow \Z_+)$ and a vertex $s\in V$, \emph{there exists} a local connectivity preserving complete h-splitting-off at $s$ from $(G,w_G)$. This can be shown by setting up the hypergraph $(H,w)$ and the function $p$ suitably based on $(G,w_G)$ and using Theorem \ref{thm:bernath-kiraly} (see the first two paragraphs of the proof of Theorem \ref{thm:hypergraphs-splitting-off-stronger-form} in Section \ref{sec:hypergraph-splitting-off-proof}). 
We emphasize that this conclusion regarding hypergraph splitting-off from Bern\'{a}th and Kir\'{a}ly's result was not known before in the literature and is one of our contributions. 
\begin{remark}
    We were also able to prove the existential version of Theorem \ref{thm:complete-splitting-off} using \emph{element-connectivity preserving reduction operations} \cite{BCCKM-private} (see \cite{CK14} for the definition of element-connectivity and the notion of element-connectivity preserving reduction operations) -- we omit the details of this alternate proof in the interests of brevity. The alternate proof does not seem to be helpful for the purposes of a strongly polynomial time algorithm. In fact, it remains open to design a strongly polynomial-time algorithm to perform \emph{complete} element-connectivity preserving reduction operations in the weighted setting \cite{CK14}. 
\end{remark}

We recall that existence of a local connectivity preserving complete h-splitting-off at a vertex from a hypergraph does not immediately imply a polynomial-time algorithm---see the example in Remark \ref{remark:exp-sized-conn-preserving-splitting-off}. 
However, the above-mentioned proof of 
existence of a local-connectivity preserving complete splitting-off at an arbitrary vertex from a hypergraph (i.e., existential version of Theorem \ref{thm:complete-splitting-off}) 
via Theorem \ref{thm:bernath-kiraly} suggests a natural approach towards designing a strongly polynomial time algorithm to find a local-connectivity preserving complete splitting-off at a given vertex from a given hypergraph: it suffices to prove a constructive version of Theorem \ref{thm:bernath-kiraly} via a strongly polynomial-time algorithm. Towards this end, the example in Remark \ref{remark:exp-sized-conn-preserving-splitting-off} suggests a necessary structural step towards a strongly polynomial-time algorithmic version of Theorem \ref{thm:bernath-kiraly}: we need to show Theorem \ref{thm:bernath-kiraly} with the extra conclusion that the number of additional hyperedges in $H^*$ is polynomial in the number of hyperedges and vertices in $H$. 

Bern\'{a}th and Kir\'{a}ly proved Theorem \ref{thm:bernath-kiraly} in the context of a reduction between certain hypergraph connectivity augmentation problems. For that reduction, the existential version of Theorem \ref{thm:bernath-kiraly} is sufficient. However, for the purposes of our application to hypergraph splitting-off, we need an algorithmic version of Theorem \ref{thm:bernath-kiraly}. 
Bern\'{a}th and Kir\'{a}ly's 
proof of Theorem \ref{thm:bernath-kiraly} is in fact algorithmic, but the run-time of the associated algorithm is not necessarily polynomial.  
Their proof implies that the number of additional hyperedges in the hypergraph $(H^*, c^*)$ returned by their algorithm is at most  $\sum_{e\in E}w(e)$ (i.e., $|E^*|-|E|\le \sum_{e\in E}w(e)$) and the run-time of the algorithm is $O(\sum_{e\in E}(|e|+w(e))$. In particular, their run-time is polynomial only if the input weights are given in unary. Moreover, the exponential-sized hypergraph given in the example in Remark \ref{remark:exp-sized-conn-preserving-splitting-off} could indeed arise as a consequence of their algorithm. 

We address both the structural and the algorithmic issues mentioned above by proving a stronger algorithmic 
version of Theorem \ref{thm:bernath-kiraly}. 
In order to phrase an algorithmic version of Theorem \ref{thm:bernath-kiraly}, we need suitable access to the function $p$. Bern\'{a}th and Kir\'{a}ly \cite{Bernath-Kiraly} suggested access to a certain function maximization oracle associated with the function $p$ that we describe below. 

\begin{definition}
    Let $p: 2^V\rightarrow \Z$ be a set function.    $\functionMaximizationOracleStrongCover{p}\left(\left(G_0, c_0\right), S_0, T_0\right)$ takes as input a hypergraph $(G_0=(V, E_0), c_0: E_0\rightarrow \mathbb{Z}_{+})$ and disjoint sets $S_0, T_0\subseteq V$, and returns a tuple $(Z, p(Z))$, where $Z$ is an optimum solution to the following problem:
        \begin{align*}
        \max & \left\{p(Z)-d_{(G_0, c_0)}(Z): S_0\subseteq Z\subseteq V-T_0\right\}. \tag{\functionMaximizationOracleStrongCover{p}}\label{tag:maximization-oracle-strong-cover}
        \end{align*}
\end{definition}
For the purposes of our application (namely local connectivity preserving complete h-splitting-off at a vertex from a hypergraph), the above-mentioned function maximization oracle can be implemented to run in strongly polynomial time (see Lemma \ref{lemma:helper-for-p-max-oracle}). Using the above mentioned oracle, we prove the following algorithmic version of Theorem \ref{thm:bernath-kiraly}.

\begin{restatable}{theorem}{thmWeakToStrongCover}\label{thm:WeakToStrongCover:main}
Let $(H=(V, E), w: E\rightarrow \Z_+)$ be a hypergraph and $p:2^V\rightarrow \Z$ be a symmetric skew-supermodular function such that 
$b_{(H, w)}(X) \geq p(X)$ for every $X \subseteq V$. 
Then, there exists a hypergraph \linebreak $\left(H^* = (V, E^*), w^*:E^* \rightarrow\Z_+\right)$ 
such that
\begin{enumerate}[label=$(\arabic*)$, ref=(\arabic*)]
    \item \label{thm:WeakToStrongCover:main:(1)} $d_{(H^*, w^*)}(X) \geq p(X)$ for every $X\subseteq V$,
    \item \label{thm:WeakToStrongCover:main:(2)} the hypergraph $(H^*,w^*)$ is obtained by merging hyperedges of the hypergraph $(H, w)$, and
    \item \label{thm:WeakToStrongCover:main:(3)} $|E^*| - |E| = O(|V|)$.
\end{enumerate}
Furthermore, given a hypergraph $(H=(V, E), w: E\rightarrow \Z_+)$ and access to \functionMaximizationOracleStrongCover{p} of a symmetric skew-supermodular function $p: 2^V\rightarrow \Z$ where $b_{(H, w)}(X)\ge p(X)$ for every $X\subseteq V$, there exists an algorithm that runs in $O(|V|^3(|V| + |E|)^2)$ time using $O(|V|^3(|V|+|E|))$ queries to \functionMaximizationOracleStrongCover{p} and returns a hypergraph $\left(H^* = (V, E^*), w^*:E^* \rightarrow\Z_+\right)$ satisfying the above three properties. The run-time includes the time to construct the hypergraphs used as input to the queries to \functionMaximizationOracleStrongCover{p}.
Moreover, for each query to \functionMaximizationOracleStrongCover{p}, the hypergraph $(G_0, c_0)$ used as input to the query has $O(|V|)$ vertices and $O(|V|+|E|)$ hyperedges.
\end{restatable}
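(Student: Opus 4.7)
The plan is to prove Theorem~\ref{thm:WeakToStrongCover:main} via a recursive algorithm that maintains a current hypergraph $(H_\mathrm{cur}, w_\mathrm{cur})$, initialised to $(H, w)$, and repeatedly merges disjoint hyperedges until strong cover of $p$ is achieved. The weak cover invariant $b_{(H_\mathrm{cur}, w_\mathrm{cur})} \ge p$ holds automatically because merging disjoint hyperedges does not change the coverage function $b$. The central object is the maximum slack $K := \max_{Z\subseteq V}(p(Z) - d_{(H_\mathrm{cur}, w_\mathrm{cur})}(Z))$ and its family of maximizers; the algorithm halts exactly when $K \le 0$, which is the strong cover condition. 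Combining symmetric skew-supermodularity of $p$ with submodularity of $d$, an uncrossing argument shows that the maximal elements of the maximizer family form a cross-free structure of total size $O(|V|)$. This family can be recovered using $O(|V|^2)$ calls to \functionMaximizationOracleStrongCover{p} on the pairs $(S_0,T_0) = (\{u\},\{v\})$, together with an $O(|V|)$-call reduction from the more general maximization oracle to \functionMaximizationOracleStrongCover{p}, for a total of $O(|V|^3)$ strong-cover oracle queries per recursive call.

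The recursion alternates between a \createstep and a \contractionstep. In the \createstep, one picks a violating maximal tight set $X$; since $K > 0$ forces $d(X) < p(X) \le b(X)$, some hyperedge $e$ lies entirely in $X$, and a symmetric argument applied to $V \setminus X$ produces a disjoint hyperedge $f \subseteq V \setminus X$. We merge the largest integer $\alpha$ copies of $e,f$ into $e \cup f$ for which the weak cover is preserved on every set---this $\alpha$ is read off from the outputs of \functionMaximizationOracleStrongCover{p}. The \contractionstep detects a maximal tight set whose internal hyperedges have been exhausted, contracts it into a single vertex, and recurses on the induced instance with the induced function, which a short check shows to be again symmetric skew-supermodular with the weak cover preserved. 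Each \createstep introduces at most one new hyperedge (and may delete others whose weight vanishes); each \contractionstep strictly decreases $|V|$.

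The main obstacle is proving $|E^*| - |E| = O(|V|)$, without which both the third conclusion of the theorem and the recursion-depth bound fail. The plan is to amortize each newly created hyperedge against a permanent structural change in the cross-free family of maximal tight sets, leveraging the $O(|V|)$ bound on that family's size to conclude both $|E^*|-|E| = O(|V|)$ and a recursion depth of $|E| + O(|V|)$. Multiplying through: $O(|V|^3)$ oracle queries per call, $O(|V|+|E|)$ work per query to assemble the input hypergraph, and recursion depth $|E| + O(|V|)$ give $O(|V|^3(|V|+|E|))$ total strong-cover oracle queries and $O(|V|^3(|V|+|E|)^2)$ total running time. An inductive invariant that $(H_\mathrm{cur}, w_\mathrm{cur})$ always has $O(|V|)$ vertices and $O(|V|+|E|)$ hyperedges throughout ensures that every query input has the size claimed in the theorem.
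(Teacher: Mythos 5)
Your plan reproduces the easy parts of the argument (merge until strong cover, cap the merge amount using oracle queries, multiply per-call costs by recursion depth), but the two places where the theorem is actually hard are not supplied, and one of your supporting claims is false. First, the assertion that ``merging disjoint hyperedges does not change the coverage function $b$'' is wrong: merging one copy of $e$ and $f$ decreases $b_{(H,w)}(X)$ by one for every set $X$ intersecting both $e$ and $f$ (while possibly increasing $d(X)$ for sets with $e\subseteq X$, $f\subseteq V-X$); this is precisely why the merge amount must be capped by $\min\{b_{(H,w)}(X)-p(X)\}$ over such $X$, and it is what makes the bookkeeping of tight sets delicate rather than automatic. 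Second, and centrally, the bound $|E^*|-|E|=O(|V|)$ and the recursion depth $|E|+O(|V|)$ are exactly the novel content of the theorem, and your proposal only gestures at them (``amortize each newly created hyperedge against a permanent structural change in the cross-free family''). The family you propose to charge against --- the maximal maximizers of $p-d_{(H_{\mathrm{cur}},w_{\mathrm{cur}})}$ --- is a moving target, since every merge changes $d$, and skew-supermodularity alone does not make its maximal members cross-free: when the inequality $q(X)+q(Y)\le q(X-Y)+q(Y-X)$ is the one that holds for two crossing maximal maximizers, no contradiction arises, so the asserted $O(|V|)$ cross-free structure is unproven. The paper instead works with sets that are tight for the \emph{coverage} function $b$, and disjointness of the maximal $b$-tight sets holds only under the extra conditions that every hyperedge lies in a tight set and every vertex is covered; maintaining the first condition forces an immediate \reduce step after each maximal \merge (discarding excess weight of $e\cup f$ and replacing $p$ by $p-d_{(H_0,w_0)}$), an ingredient your algorithm does not have. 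Even granted a well-behaved family, the charging argument needs the laminarity of the cumulative \emph{projected} tight-set families across recursive calls, a potential of the form $|\mathcal{T}_{\le i}|+3|\mathcal{Z}_{\le i-1}|$ to absorb the losses a laminar family suffers under projection, and a separate lookahead potential for the recursive calls in which the merged hyperedge does not survive --- which is where the additive $|E|$ in the depth bound comes from. None of this is replaced by an alternative argument in your proposal, and without it the claimed bounds (hence also the overall query and time bounds) do not follow.

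The \contractionstep is a further gap. As you yourself argue, a deficient maximizer $X$ (with $d(X)<p(X)\le b(X)$) always contains a hyperedge entirely inside $X$, so the trigger ``internal hyperedges have been exhausted'' cannot occur while $X$ is still deficient, and it is unclear when or why the step fires. More importantly, contracting $X$ to a single vertex discards every set that crosses $X$, and you never show that the strong cover produced for the contracted instance lifts to a cover of those crossing sets in the original instance; this lifting is the classical sticking point of contraction-based splitting-off arguments and requires its own lemma. The paper sidesteps it entirely: no vertex set is ever contracted in the hypergraph --- only zero-coverage vertices are deleted and the function is replaced by the contraction $\functionContract{p}{\mathcal{Z}}$ --- so tight sets persist across recursive calls and can be projected, which is what the laminarity argument rests on. To salvage your route you would need (i) a correct structural theorem for your maximizer family, (ii) persistence of that structure under merges and contractions, (iii) a lifting lemma for contraction, and (iv) an explicit potential-function argument yielding the $|E|+O(|V|)$ depth; as written, all four are missing.
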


\Cref{thm:WeakToStrongCover:main} is a strengthening of Theorem \ref{thm:bernath-kiraly} in two ways. Firstly, our theorem shows the existence of a hypergraph that not only satisfies properties \ref{thm:WeakToStrongCover:main:(1)} and \ref{thm:WeakToStrongCover:main:(2)}, but also satisfies property \ref{thm:WeakToStrongCover:main:(3)} -- i.e., the number of additional hyperedges in the returned hypergraph is \emph{linear} in the size of the vertex set. Secondly, our \Cref{thm:WeakToStrongCover:main} shows the existence of a strongly polynomial-time algorithm that returns a hypergraph satisfying the three properties. 
Our main contribution is modifying Bern\'{a}th and Kir\'{a}ly's algorithm and analyzing the modified algorithm to bound the number of additional hyperedges and the run-time. 
We mention that property \ref{thm:WeakToStrongCover:main:(3)} cannot be tightened to guarantee that $|E^*-E|=O(|V|)$ -- we were able to construct an example where 
$|E^*-E|=\Omega(|V|^2)$ (see \Cref{sec:appendix:tight-example-weak-to-strong}). 


\Cref{thm:WeakToStrongCover:main} immediately leads to  a proof of \Cref{thm:complete-splitting-off} (see Theorem \ref{thm:hypergraphs-splitting-off-stronger-form} and its proof in Section \ref{sec:hypergraph-splitting-off-proof}). Instead of using \Cref{thm:WeakToStrongCover:main} as a black-box, if we delve into the proof of it in the context of the proof of \Cref{thm:complete-splitting-off}, we obtain the following theorem: 

\begin{theorem}\label{thm:splitting-off-or-trim-to-preserve-connectivity}
    Let $(G=(V, E), w: E\rightarrow \Z_+)$ be a hypergraph and $s\in V$. Then, there exists a hypergraph $(G'=(V, E'), w': E'\rightarrow \Z_+)$ obtained 
    by applying a h-splitting-off operation at $s$ from $(G, w)$ 
    such that $\lambda_{(G', w')}(u, v)=\lambda_{(G, w)}(u, v)$ for every distinct $u, v\in V\setminus \{s\}$. 
\end{theorem}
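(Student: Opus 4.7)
The plan is to extract the first step from the proof machinery behind Theorem \ref{thm:WeakToStrongCover:main}. Let $V' := V\setminus\{s\}$ and define the \emph{link hypergraph} $(H=(V',E_H), w_H)$ whose hyperedges are $\{e\setminus\{s\} : e\in\delta_G(s)\}$ with $w_H(e\setminus\{s\}) := w(e)$, together with the set function $p: 2^{V'}\to\Z$ given by
\[
p(X) := \max\{\lambda_{(G,w)}(u,v) : u\in X,\; v\in V'\setminus X\} \;-\; d^{\mathrm{nons}}(X)
\]
for $\emptyset\neq X\subsetneq V'$ (and $p(\emptyset)=p(V')=0$), where $d^{\mathrm{nons}}(X)$ is the weight of non-$s$-incident hyperedges of $G$ crossing $X$. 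First I would verify that $p$ is symmetric and skew-supermodular (the $\max\lambda$-term is a classical symmetric skew-supermodular requirement function, and since $d^{\mathrm{nons}}$ is both submodular and posimodular, subtracting it preserves skew-supermodularity), and that $b_{(H,w_H)}$ weakly covers $p$ (immediate from the identity $b_{(H,w_H)}(X) + d^{\mathrm{nons}}(X) = d_{(G,w)}(X)$ combined with $d_{(G,w)}(X) \ge \max\{\lambda_{(G,w)}(u,v):u\in X, v\in V'\setminus X\}$).

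Assuming $d_{(G,w)}(s) > 0$ (otherwise take $G'=G$), I would split into two cases. If $(H,w_H)$ does not strongly cover $p$, then Theorem \ref{thm:WeakToStrongCover:main} produces a sequence of merges from $(H,w_H)$ reaching a strongly covering $(H^*,w^*)$. Since the coverage function $b$ is monotone non-increasing under merges of disjoint hyperedges, the hypergraph obtained after the first merge in that sequence still weakly covers $p$; in particular, the two disjoint hyperedges $e\setminus\{s\},f\setminus\{s\}\in E_H$ chosen for this first merge satisfy $b_{(H,w_H)}(X)\ge p(X)+1$ for every $X\subseteq V'$ with $(e\setminus\{s\})\cap X\neq\emptyset$ and $(f\setminus\{s\})\cap X\neq\emptyset$. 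Because $s\in e\cap f$ and the projections are disjoint, $e\cap f=\{s\}$, so I would apply an h-merge on $e,f$ with $\alpha=1$. A direct case analysis shows $d_{(G,w)}(X)$ changes only at such $X$ (decreasing by $1$), and for those $X$ the dictionary identity $b_{(H,w_H)} + d^{\mathrm{nons}} = d_{(G,w)}$ turns $b_{(H,w_H)}(X)\ge p(X)+1$ into $d_{(G,w)}(X)\ge \max\lambda + 1$, so local connectivity survives.

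If $(H,w_H)$ already strongly covers $p$, I would pick any $e\in\delta_G(s)$ and apply an h-trim on $e$ with $\alpha=1$. A short computation shows $d_{(G,w)}(X)$ decreases by $1$ precisely when $e\setminus\{s\}\subseteq X$ and is unchanged otherwise. For any such $X\subsetneq V'$, writing $d_{(G,w)}(X) = d^{\mathrm{nons}}(X) + d_{(H,w_H)}(X) + T(X)$ with $T(X) := \sum_{f\in\delta_G(s):\, f\setminus\{s\}\subseteq X} w(f)$ and noting $T(X)\ge w(e)\ge 1$, the strong-cover inequality $d_{(H,w_H)}(X)\ge p(X)$ gives $d_{(G,w)}(X)\ge\max\lambda + 1$, absorbing the decrement. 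The main obstacle I foresee is the $H$-to-$G$ translation---verifying that safe merges in $H$ correspond exactly to local-connectivity-preserving h-merges in $G$, and that the ``already strong-cover'' case leaves enough slack for a unit h-trim at \emph{any} $s$-incident hyperedge; both reduce to bookkeeping using the dictionary identity $b_{(H,w_H)} + d^{\mathrm{nons}} = d_{(G,w)}$.
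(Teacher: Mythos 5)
Your proposal is correct, and it is worth noting that the paper itself does not actually write out a proof of this statement: it only remarks that the theorem is obtained ``by delving into the proof of Theorem \ref{thm:WeakToStrongCover:main} in the context of the proof of Theorem \ref{thm:complete-splitting-off}'' and omits the details. What you do is fill in exactly that gap, but in a slightly different way than the remark suggests: you set up the same pair $(H,w_H)$, $p=R-d_{(J,w_J)}$ as in the proof of Theorem \ref{thm:hypergraphs-splitting-off-stronger-form} (so symmetry/skew-supermodularity and the weak cover come from Lemma \ref{lemma:helper-for-p-max-oracle} and the identity $b_{(H,w_H)}+d_{(J,w_J)}=d_{(G,w)}$ on subsets of $V\setminus\{s\}$), and then, rather than opening up the weak-to-strong algorithm, you use Theorem \ref{thm:WeakToStrongCover:main} purely as an existence black box via a clean dichotomy: if $(H,w_H)$ is not already a strong cover, the guaranteed merge sequence is nonempty, its first merge involves two disjoint hyperedges of the original support, the post-merge hypergraph still weakly covers $p$ (since $b$ is non-increasing under merges and $b\ge d\ge p$ at the end), and the resulting slack $b_{(H,w_H)}(X)\ge p(X)+1$ on the sets whose cut value drops is exactly what the corresponding unit h-merge in $G$ needs; if $(H,w_H)$ is already a strong cover, the decomposition $d_{(G,w)}(X)=d_{(J,w_J)}(X)+d_{(H,w_H)}(X)+T(X)$ with $T(X)\ge w(e)\ge 1$ for the sets $X\supseteq e\setminus\{s\}$ shows a unit h-trim of any $e\in\delta_G(s)$ is safe. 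I checked the two cut-change computations and the dictionary translations; they are right, and the black-box route is arguably more economical than re-running the algorithmic argument, since it needs only the existential conclusion of Theorem \ref{thm:WeakToStrongCover:main} (equivalently Theorem \ref{thm:bernath-kiraly}).

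Two small points you should make explicit in a write-up: (i) the reduction of local-connectivity preservation to the single family of inequalities $d_{(G',w')}(X)\ge R(X)$ for $X\subseteq V\setminus\{s\}$ uses symmetry of the cut function to restrict attention to minimizers avoiding $s$, plus the observation that h-operations never increase such cut values (both are spelled out in the proof of Theorem \ref{thm:hypergraphs-splitting-off-stronger-form} and should be repeated here); and (ii) if $\delta_G(s)=\emptyset$ no h-splitting-off operation at $s$ exists at all, so the statement must be read as vacuous (or with $\delta_G(s)\neq\emptyset$ assumed) rather than handled by ``take $G'=G$'', which is not literally the result of an operation. Neither affects the substance of your argument.
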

We omit the proof of \Cref{thm:splitting-off-or-trim-to-preserve-connectivity} in the interests of brevity. 
\Cref{thm:splitting-off-or-trim-to-preserve-connectivity} closely resembles the \emph{existential edge splitting-off} results of Lov\'{a}sz \cite{Lov74, Lovasz-problems-book} and Mader \cite{Mad78} for graphs. Lov\'{a}sz's and Mader's  existential edge splitting-off results for graphs are important since they have been used to simplify the proofs of fundamental results in graph theory---e.g., Nash-Williams' Strong Orientation Theorem. 
On the other hand, \Cref{thm:splitting-off-or-trim-to-preserve-connectivity} does not immediately imply a strongly polynomial-time algorithm for finding a local connectivity preserving complete h-splitting off at a vertex from a given weighted hypergraph. So, \Cref{thm:complete-splitting-off} may be useful in algorithmic contexts while \Cref{thm:splitting-off-or-trim-to-preserve-connectivity} may be useful in graph-theoretical contexts. 

\subsection{Proof Technique for \Cref{thm:WeakToStrongCover:main}}
In this section, we describe our proof technique for the existential result in \Cref{thm:WeakToStrongCover:main}. 
The algorithmic results in that theorem follow from the existential result using standard algorithmic tools for submodular functions, so we focus only on outlining a proof of the existential result. 
Let $(H=(V, E), w: E\rightarrow \Z_+)$ be a hypergraph and $p: 2^V\rightarrow \Z_+$ be a symmetric skew-supermodular function such that $(H, w)$ weakly covers the function $p$. Our goal is to show that there exists a hypergraph $(H^*=(V, E^*), w^*: E\rightarrow \Z_+)$ such that 
\begin{enumerate}[label=(\arabic*)]
    \item \hypertarget{WeakToStrong(1)}{$(H^*, w^*)$ strongly covers the function $p$},
    \item \hypertarget{WeakToStrong(2)}{$(H^*, w^*)$ is obtained by merging hyperedges of the hypergraph $(H, w)$}, and
    \item \hypertarget{WeakToStrong(3)}{$|E^*|-|E|=O(|V|)$}. 
\end{enumerate}

\paragraph{Preliminaries.} We define a set $X\subseteq V$ to be $(p, H, w)$-tight if $b_{(H, w)}(X) = p(X)$. For a function $p$ and hypergraph $(H, w)$, let $T_{p, H, w}$ denote the family of $(p,H, w)$-tight sets and let $\mathcal{T}_{p, H, w}$ be the family of inclusionwise maximal sets in $T_{p, H, w}$. We will need the following two operations: 
\begin{enumerate}[label=(\roman*)]
    \item For hyperedges $e, f\in E$ and a positive integer  $\alpha \le \min\{w(e), w(f)\}$, the operation \merge$((H, w), e, f, \alpha))$ returns the hypergraph obtained from $(H, w)$ by decreasing the weight of hyperedges $e$ and $f$ by $\alpha$ and increasing the weight of the hyperedge $e\cup f$ by $\alpha$. All hyperedges with zero weight are discarded.
    \item For a hyperedge $e\in E$ and a positive integer $\alpha\le w(e)$, the operation \reduce$((H, w), e, \alpha)$ returns the hypergraph obtained by decreasing the weight of the hyperedge $e$ by $\alpha$. All hyperedges with zero weight are discarded.
\end{enumerate}
 


\paragraph{Algorithm of \cite{Bernath-Kiraly}.} Our proof of the existential result builds on the techniques of Bern\'{a}th and Kir\'{a}ly \cite{Bernath-Kiraly} who proved the existence of a hypergraph $(H^*=(V, E^*), w^*: E^*\rightarrow \Z_+)$ satisfying properties \hyperlink{WeakToStrong(1)}{(1)} and \hyperlink{WeakToStrong(2)}{(2)}, so we briefly recall their techniques. 
We present the algorithmic version of their proof since it will be useful for our purposes. 

The proof in \cite{Bernath-Kiraly} is inductive, and consequently, the algorithm implicit in their proof is recursive. The algorithm takes as input a hypergraph $((H=(V, E), w: E\rightarrow \Z_+)$ and a symmetric skew-supermodular function $p: 2^V\rightarrow \Z$ such that the hypergraph $(H,w)$ weakly covers the function $p$. 
If $w(E) = 0$, then the algorithm is in its base case and returns the empty hypergraph. 
Otherwise, $w(E) > 0$; the algorithm chooses an arbitrary hyperedge $e \in E$ and defines hypergraphs $(H_0, w_0)$ and $(H', w')$ and the function $p'$ by considering two cases.
First, suppose that the hyperedge $e$ is not contained in any set of the family $\maximalTightSetFamily{p, H, w}$. 
In this case, the algorithm defines $(H_0, w_0)$ to be the hypergraph on vertex set $V$ consisting of a single hyperedge $e$ with $w_0(e) = 1$, constructs the hypergraph $(H', w'):=\reduce((H, w), e, 1)$, and defines the function $p' := p - d_{(H_0, w_0)}$. Second, suppose that the hyperedge $e$ is contained in some set $X \in \mathcal{T}_{p, H, w}$. It can be shown that there exists a hyperedge $f \in E$ such that $f \subseteq V - X$. In this case, the algorithm defines $(H_0, w_0)$ to be the empty hypergraph on vertex set $V$, constructs the hypergraph $(H', w'):=\merge((H, w), e, f, 1)$, and defines the function $p' := p$. In both cases, the algorithm recurses on the inputs $(H',w')$ and $p'$ to obtain a hypergraph $(H^*_0, w^*_0)$ and returns the hypergraph $(H^*, w^*) = (H^*_0 + H_0, w^*_0 + w_0)$. 
Here, the hyperedges of $H^*$ are the union of the hyperedges of $H^*_0$ and $H_0$ with the weight $w^*(e)$ of a hyperedge $e$ being the sum of the weights $w^*_0(e)+w_0(e)$ if the hyperedge $e$ is present in both $H^*_0$ and $H_0$, being $w^*_0(e)$ if the hyperedge $e$ is present only in $H^*_0$, and being $w_0(e)$ if the hyperedge $e$ is present only in $H_0$. 

 We note that $w'(E') = w(E) - 1$. Furthermore, it can be shown that the function $p'$ is symmetric skew-supermodular and the hypergraph $(H', w')$ weakly covers the function $p'$. Consequently, by induction on $w(E)$, the algorithm can be shown to terminate in $w(E)$ recursive calls and returns a hypergraph satisfying properties \hyperlink{WeakToStrong(1)}{(1)} and \hyperlink{WeakToStrong(2)}{(2)}. Moreover, the number of additional hyperedges in the returned hypergraph is at most the
number of recursive calls where the \merge operation is performed, which is also at most $w(E)$. 
Thus, in order
to reduce the number of additional hyperedges and to design a strongly polynomial-time algorithm, our goal is to
reduce the recursion depth of the algorithm. We emphasize that the recursion depth of Bern\'{a}th and Kir\'{a}ly's algorithm could indeed be exponential (the exponential sized example mentioned in Remark \ref{remark:exp-sized-conn-preserving-splitting-off} could arise in the execution of their algorithm), so we need to necessarily modify their algorithm. 

\paragraph{Preprocessing for Additional Structure.}
Similar to Bern\'{a}th and Kir\'{a}ly's algorithm, our algorithm also takes as input a hypergraph $(H=(V, E), w: E\rightarrow \Z_+)$ and a symmetric skew-supermodular function $p: 2^V\rightarrow \Z$ such that the hypergraph $(H,w)$ weakly covers the function $p$. 
However, unlike Bern\'{a}th and Kir\'{a}ly's algorithm, our algorithm performs a preprocessing step so that the inputs $(H,w)$ and $p$ satisfy the following two additional conditions: 
\begin{enumerate}[label=(\alph*)]
    \item \hypertarget{WeakToStrong(a)}{every hyperedge $e\in E$ is contained in some set of $\mathcal{T}_{p, H, w}$} and 
    \item \hypertarget{WeakToStrong(b)}{the degree of every vertex in $(H, w)$ is non-zero.}
\end{enumerate}
As a consequence of these additional conditions, the family $\maximalTightSetFamily{p, H, w}$ will be a \emph{disjoint family},  
a property that we leverage heavily throughout our analysis.
Furthermore, we modify Bern\'{a}th and Kir\'{a}ly's algorithm to ensure that these conditions hold during every recursive call.

\paragraph{Our Algorithm.}
We now describe our modification of the above-mentioned Bern\'{a}th and Kir\'{a}ly's algorithm to reduce the recursion depth. Our algorithm is also recursive and its 
base case is the same as that of Bern\'{a}th and Kir\'{a}ly's algorithm (i.e., $w(E)=0$). During recursive cases (i.e., if $w(E) > 0$), instead of performing one of the two (i.e., \merge or  \reduce) operations, our algorithm performs both operations in a sequential fashion. In particular, we find a pair of disjoint hyperedges $e, f\in E$ contained in distinct sets of $\maximalTightSetFamily{p, H, w}$ (such a pair exists by condition \hyperlink{WeakToStrong(a)}{(a)} and the arguments of Bern\'{a}th and Kir\'{a}ly mentioned above). Next, 
instead of performing one \merge operation (as was done by Bern\'{a}th and Kir\'{a}ly's algorithm), we perform as many  \merge operations as possible 
using the hyperedges $e$ and $f$. 
Formally, let $$\alpha^M := \max\left\{\alpha \in \Z_+ : \text{hypergraph returned by } \merge((H, w), e, f, \alpha) \text{ weakly covers } p\right\}.$$
We let $(H^M, w^M):=\textsc{Merge}((H, w), e, f, \alpha^M)$ and $p^M:=p$.
Next, instead of recursing on $((H^M, w^M), p^M)$ (as was done by Bern\'{a}th and Kir\'{a}ly's algorithm), we perform as many \reduce operations as possible 
on the newly created hyperedge $e\cup f$.
Formally, let $$\alpha^R := \max\left\{\alpha \in \Z_{\ge 0}: \text{hypergraph returned by } \reduce((H^M, w^M), e \cup f, \alpha) \text{ weakly covers } (p^M - d_{(H_0^\alpha, w_0^\alpha)})\right\},$$
where $(H_0^{\alpha}, w_0^{\alpha})$ denotes
the hypergraph on vertex set $V$ consisting of a single hyperedge $e\cup f$ with $w_0^{\alpha}(e\cup f) = \alpha$. 
We construct the hypergraph 
$(H_0, w_0):=(H_0^{\alpha^R}, w_0^{\alpha^R})$, the hypergraph $(H^R, w^R):=\reduce((H^M, w^M), e\cup f, \alpha^R)$ and define the function $p^R:=p^M - d_{(H_0^{\alpha^R}, w_0^{\alpha^R})}$.
This immediate reduce step ensures that the hypergraph $(H^R, w^R)$ and the function $p^R$ satisfy condition \hyperlink{WeakToStrong(a)}{(a)} -- i.e., every hyperedge in $(H^R, w^R)$ is contained in some set of $\mathcal{T}_{p^R, H^R, w^R}$. 
Finally, we compute sets $\zeros:=\{u\in V: b_{(H^R, w^R)}(u)=0\}$ and $V':=V-\zeros$, hypergraph $(H':=(V', E':=E^R), w':=w^R)$, and define the function $p':2^{V'}\rightarrow \Z$ by $p'(X):=\max\{p(X\cup R): R\subseteq \zeros\}$ for every $X\subseteq V'$ -- this final step can be viewed as a clean up step since it gets rid of vertices that are not incident to any hyperedges (and revises the function $p$ appropriately). 
This clean up step ensures that the hypergraph $(H', w')$ satisfies condition \hyperlink{WeakToStrong(b)}{(b)} -- i.e., the degree of every vertex in $(H', w')$ is non-zero. 
It can be shown that the function $p'$ is symmetric skew-supermodular and the hypergraph $(H', w')$ weakly covers the function $p'$. Furthermore, the function $p'$ and hypergraph $(H', w')$ satisfy conditions \hyperlink{WeakToStrong(a)}{(a)} and \hyperlink{WeakToStrong(b)}{(b)}. 
We recursively call the algorithm on input $((H', w'), p')$ to obtain a hypergraph $(H_0^*, w_0^*)$. We obtain the hypergraph $(G, c)$ from $(H_0^*, w_0^*)$ by adding the vertices $\zeros$ and return the hypergraph $(G+H_0, c+w_0)$.  
By induction on the total weight of hyperedges in the input hypergraph, it can be shown that our algorithm returns a hypergraph satisfying properties \hyperlink{WeakToStrong(1)}{(1)} and \hyperlink{WeakToStrong(2)}{(2)} and also terminates within a finite number of recursive calls. 

\paragraph{Recursion Depth and Potential Functions.} We now sketch our proof to show that the recursion depth of our algorithm is $|E|+O(|V|)$. We note that this also bounds the number of additional hyperedges in the hypergraph returned by the algorithm, and consequently this hypergraph also satisfies property \hyperlink{WeakToStrong(3)}{(3)}. Let $\ell$ be the number of recursive calls made by the algorithm on the input instance $((H, w), p)$. 
We partition 
the set $[\ell]$ of recursive calls into two parts: let $P_1\subseteq [\ell]$ be the set of recursive calls during which 
the merged hyperedge $e\cup f$ survives in the hypergraph $(H', w')$ that is input to the subsequent recursive call and 
$P_2\subseteq [\ell]$ be the recursive calls during which 
the merged hyperedge $e\cup f$ does not survive in the hypergraph $(H', w')$ that is input to the subsequent recursive call. 
We note that $[\ell]=P_1\uplus P_2$. We bound $|P_1|$ and $|P_2|$ separately using certain carefully designed potential functions.

First, we show that $|P_1|=O(|V|)$ as follows: for $i\in [\ell]$, consider the maximal tight set family $\mathcal{T}_i=\mathcal{T}_{p_i, H_i, w_i}$ where $((H_i, w_i), p_i)$ is the input to the $i^{th}$ recursive call. Also, let $\mathcal{T}_{\le 1}:=\mathcal{T}_1$ and $\mathcal{T}_{\le i}:=\mathcal{T}_{i}\cup \{X\cap V_i: X\in \mathcal{T}_{\le i-1}\}$ for integers $i$ where $2 \le i\le \ell$ and $V_i$ is the ground set of the input to the $i^{th}$ recursive call. Thus, $\calT_{\leq i}$ is the \emph{projection} of all the maximal tight sets encountered in the first $i$ recursive calls of the algorithm onto the ground set of the inputs at the $i^{th}$ recursive call. We show that $\mathcal{T}_{\le i}$ is laminar for every $i\in [\ell]$ (\Cref{lem:WeakToStrong:properties-of-cumulative-tight-set-families}). 
However, $|\mathcal{T}_{\le i}|$ is not necessarily non-decreasing with $i$ since projection of a set family to a subset could result in the loss of sets from the family. Consequently, $|\mathcal{T}_{\le i}|$ is not suitable as a potential function to measure progress. Instead, we use the potential function $\phi(i):=|\mathcal{T}_{\le i}|+3|\zeros_{\le i-1}|$, where $\zeros_{\le i}$ is the union of the sets $\zeros$ computed up to the $i^{th}$ recursive call. We show that $\phi(i)$ is non-decreasing and strictly increases if $i\in P_1$ (\Cref{claim:WeakToStrong:hyperedge-support-size:alpha^R=weight-of-added-hyperedge} in \Cref{lem:WeakToStrongCover:NumberOfHyperedges}). Consequently, $|P_1|=O(|V|)$.

Secondly, we bound $|P_2|$ as follows. We use a lookahead-potential function: let $\Phi_1(i)$ be the number of recursive calls between $i$ and $\ell$
during which the merged hyperedge $e\cup f$ survives in the hypergraph $(H', w')$ that is input to the subsequent recursive call
and let $\Phi(i):=|E_i|+\Phi_1(i)$, where $E_i$ is the set of hyperedges in the hypergraph $(H_i, w_i)$ input to the $i^{th}$ recursive call. We show that $\Phi(i)$ is non-increasing and strictly decreases if 
$i\in P_2$ 
(\Cref{claim:WeakToStrong:hyperedge-support-size:alpha^R<weight-of-added-hyperedge} in \Cref{lem:WeakToStrongCover:NumberOfHyperedges}). Hence, $|P_2|\le \Phi(1)-\Phi(\ell)\le \Phi(1) \le |E_1|+|P_1|=|E|+O(|V))$, where the last equality is because of the bound on $|P_1|$ from the previous paragraph.  

\begin{remark}
    Our key technical contributions are twofold. Our first key technical contribution is identifying conditions \hyperlink{WeakToStrong(a)}{(a)} and \hyperlink{WeakToStrong(b)}{(b)} under which $\maximalTightSetFamily{p, H, w}$ becomes a disjoint family. We ensure that conditions \hyperlink{WeakToStrong(a)}{(a)} and \hyperlink{WeakToStrong(b)}{(b)} hold in every recursive call by performing \emph{immediate} reduction and clean-up steps in the algorithm. 
    Our second key technical contribution is identifying appropriate potential functions to measure progress of the algorithm. The disjointness of $\maximalTightSetFamily{p, H, w}$ was crucial for identifying the laminar structure of the family of projected maximal tight sets across recursive calls, which was subsequently helpful in bounding the number of recursive calls corresponding to $P_1$. 
    Moreover, we bound the number of recursive calls corresponding to $P_2$ using a lookahead-potential function that relates $|P_2|$ to $|P_1|$. As discussed above, the additive $|E|$ in the recursion depth comes from the bound on $|P_2|$.
\end{remark}
\section{Preliminaries}\label{sec:prelims}


We denote the sets of 
\emph{integers}, \emph{non-negative integers}, and \emph{positive integers} by 
$\mathbb{Z}$, $\mathbb{Z}_{\geq 0}$, and $\mathbb{Z}_+$ respectively. For a positive integer $n$, we use $[n]$ to denote the set $\{1, 2, \ldots, n\}$. 
Let $V$ be a finite set. We use $2^{V}$ to denote all subsets of $V$. 
Let $X, Y\subseteq V$. We will denote $X\setminus Y$ by $X-Y$ and $X\cup Y$ by $X+Y$. If $Y$ consists of a single element $y$, then $X-\{y\}$ and $X+\{y\}$ are abbreviated as $X-y$ and $X+y$, respectively. 
Let $\calF\subseteq 2^V$ be a family of subsets of $V$. 
The family $\calF$ is 
a \emph{disjoint family} if $X\cap Y=\emptyset$ for every $X, Y\in \calF$, 
a \emph{chain family} if either $X\subseteq Y$ or $Y\subseteq X$ for every $X, Y\in \calF$, and a \emph{laminar family} if either $X\cap Y=\emptyset$ or $X\subseteq Y$ or $Y\subseteq X$ for every $X, Y\in \calF$. 
We recall that the size of a laminar family $\mathcal{F}\subseteq 2^V$ is at most $2|V|$. 
For a subset $U\subseteq V$, we define the projection of $\mathcal{F}$ to $U$ as   $\mathcal{F}|_U:=\{X\cap U: X\in \mathcal{F}\}-\{\emptyset\}$ (we will use the convention that $\mathcal{F}|_U$ is a set and not a multi-set). 
The next lemma shows that the projection of a laminar family is also laminar and the size of the projection is comparable to that of the original family. We give a proof in \Cref{appendix:sec:uncrossing-properties:projection-of-laminar-families}.  

\begin{restatable}{lemma}{projectionLaminarFamily}\label{lem:CoveringAlgorithm:projection-laminar-family}
    Let $\calL \subseteq 2^V$ be a laminar family and $\calZ \subseteq V$ be a subset of elements. 
    Let $\calL':=\calL|_{V-\zeros}$. 
    Then, the family $\calL'$ is a laminar family and $|\calL| \leq |\calL'| + 3|\calZ|$.
\end{restatable}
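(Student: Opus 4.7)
The plan is to prove the two claims of the lemma in sequence.

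For the laminarity of $\calL'$, I would observe that the projection map $X \mapsto X \cap (V - \calZ)$ commutes with intersection and preserves containment. Hence for any $X, Y \in \calL$, whichever of the three laminar alternatives holds between $X$ and $Y$ --- disjointness, $X \subseteq Y$, or $Y \subseteq X$ --- the same alternative holds for their projections $X \cap (V - \calZ)$ and $Y \cap (V - \calZ)$. Thus every pair of sets in $\calL'$ satisfies the laminar condition.

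For the size bound I would proceed by induction on $|\calZ|$. The base case $|\calZ| = 0$ is immediate since $\calL' = \calL$. For the inductive step, I pick any $z \in \calZ$ and set $\calL_z := \calL|_{V - z}$, which is laminar on $V - z$ by the first part and satisfies $(\calL_z)|_{(V - z) - (\calZ - z)} = \calL'$. Applying the induction hypothesis to $\calL_z$ with residual set $\calZ - z$ yields $|\calL_z| \leq |\calL'| + 3(|\calZ| - 1)$, so it suffices to prove the single-element loss bound $|\calL| - |\calL_z| \leq 3$. For this, consider the map $\pi : \calL \to \calL_z \cup \{\emptyset\}$ with $\pi(X) := X - z$: then $|\calL| - |\calL_z|$ equals the number of $X \in \calL$ with $\pi(X) = \emptyset$ (at most $1$, since only $\{z\}$ can project to $\emptyset$) plus the number of collisions, $\sum_{Y \in \calL_z}\bigl(|\pi^{-1}(Y)| - 1\bigr)$. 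By laminarity, the sets in $\calL$ containing $z$ form a chain $X_1 \subsetneq \cdots \subsetneq X_m$ whose projections $X_1 - z, \ldots, X_m - z$ remain strictly nested and thus pairwise distinct; so every collision must pair a chain member $X_i$ with $|X_i| \geq 2$ to some $Y \in \calL$ with $z \notin Y$ and $X_i = Y \cup \{z\}$.

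The crux of the argument is bounding the number of such cross-chain collisions by $1$. I would argue by contradiction: suppose $X_i = Y_1 \cup \{z\}$ and $X_j = Y_2 \cup \{z\}$ are two distinct collision pairs with $i < j$, so $\emptyset \neq Y_1 \subsetneq Y_2$ and $z \notin Y_1 \cup Y_2$. Then $X_i \cap Y_2 \supseteq Y_1 \neq \emptyset$, and laminarity applied to $\{X_i, Y_2\}$ forces $X_i \subseteq Y_2$ or $Y_2 \subseteq X_i$. The former is impossible since $z \in X_i$ yet $z \notin Y_2$; for the latter, $|Y_2| = |X_j| - 1 \geq |X_i|$ combined with $Y_2 \subseteq X_i$ forces $Y_2 = X_i$, again contradicting $z \notin Y_2$. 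Thus at most one cross-chain collision occurs, so the single-step loss is at most $1 + 1 = 2 \leq 3$, closing the induction.
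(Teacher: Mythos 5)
Your proof is correct, but it takes a different route from the paper's. The paper argues in one shot: it splits $\calL$ into the sets contained in $\calZ$ (a laminar family on $\calZ$, so at most $2|\calZ|-1$ of them) and the rest, groups the rest into fibers $S_X=\{A\in\calL : A-\calZ=X\}$ over $X\in\calL'$, shows each fiber is a chain, bounds $|S_X|\le |A_X - B_X|+1$ via the largest and smallest chain members, and shows the sets $A_X-B_X$ are pairwise disjoint subsets of $\calZ$, summing everything to $|\calL'|+3|\calZ|-1$. You instead induct on $|\calZ|$, removing one element $z$ at a time, and bound the per-element loss $|\calL|-|\calL|_{V-z}|$ by analyzing the fibers of the map $X\mapsto X-z$: at most one set ($\{z\}$) dies, and your chain-plus-laminarity contradiction shows at most one collision $Y, Y\cup\{z\}\in\calL$ can occur, so the loss is at most $2$. (Both your argument and the paper's implicitly use the standard convention that laminar families consist of nonempty sets; otherwise even the base case $\calZ=\emptyset$ fails under the projection convention that discards $\emptyset$.) Your approach is more local and actually proves the stronger bound $|\calL|\le|\calL'|+2|\calZ|$, at the cost of an induction and of needing the composition identity $(\calL|_{V-z})|_{V-\calZ}=\calL|_{V-\calZ}$, which you correctly rely on; the paper's approach buys a single non-inductive counting argument that exposes the global chain structure of the fibers over $\calL'$, which is the structure it reuses in spirit elsewhere in the analysis, but with a slightly weaker constant.
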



Let $p: 2^{V} \rightarrow \mathbb{R}$ be a set function over a finite ground set $V$. 
The function $p$ is \emph{monotone} if  $p(X)\le p(Y)$ for every $X\subseteq Y\subseteq V$. 
If $p(X)+p(Y)\le p(X\cap Y)+p(X\cup Y)$
for every $X,Y\subseteq V$, then the function $p$ is said to be \emph{supermodular}. 
If $-p$ is supermodular, then the function $p$ is said to be \emph{submodular}. 
We recall that for a hypergraph $(H=(V, E), w: E\rightarrow \Z_+)$, the cut function $d_{(H, w)}: 2^V\rightarrow \Z_{\ge 0}$ is symmetric submodular and the coverage function $b_{(H, w)}: 2^V\rightarrow \Z_{\ge 0}$ is monotone submodular \cite{Frank-book}. 
For a subset $Z \subseteq V$, the \emph{contracted function} $\functionContract{p}{Z} : 2^{V - Z}\rightarrow\Z$ is defined as $\functionContract{p}{Z}(X) \coloneqq  \max\{p(X\cup R) : R \subseteq Z\}$. We note that if $p$ is a skew-supermodular function and $Z\subseteq V$, then $\functionContract{p}{Z}$ is skew-supermodular. Moreover, if $p$ is a symmetric function and $Z\subseteq V$, then $\functionContract{p}{Z}$ is symmetric. 

We recall that in our algorithmic problems, we have access to the input set function $p$ via \functionMaximizationOracleStrongCover{p}. 
We describe an additional oracle that will simplify our proofs. We will show that the oracle defined below can be implemented in strongly polynomial time using \functionMaximizationOracleStrongCover{p}. 
\begin{definition}[$p$-maximization oracle]
    Let $p: 2^V\rightarrow \Z$ be a set function. 
    \functionMaximizationOracle{p}$((G_0, c_0), S_0, T_0)$ takes as input a hypergraph $(G_0=(V, E_0), c_0: E_0\rightarrow \mathbb{Z}_{+})$  
    and disjoint sets $S_0, T_0\subseteq V$, and returns a tuple $(Z, p(Z))$, where $Z$ is an optimum solution to the following problem:
    \begin{align*}
        \max & \left\{p(Z)-b_{(G_0, c_0)}(Z): S_0\subseteq Z\subseteq V-T_0\right\}. \tag{\functionMaximizationOracle{p}}\label{tag:oracle-main}
        \end{align*} 
\end{definition}
The evaluation oracle for a function $p$ can be implemented using one query to  
\functionMaximizationOracle{p} where the input hypergraph $(G_0, c_0)$ is the empty hypergraph and $S_0 := Z$, $T_0 := V - Z$. The next lemma shows that \functionMaximizationOracle{p} can be implemented using at most $|V|+1$ queries to \functionMaximizationOracleStrongCover{p} where the hypergraphs used as input to  \functionMaximizationOracleStrongCover{p} have size of the order of the size of the hypergraph input to \functionMaximizationOracle{p}. We give a proof in \Cref{sec:wc-oracle-from-sc-oracle}.  

\begin{restatable}{lemma}{lemmaWCoraclefromSCoracle}\label{lem:Preliminaries:wc-oracle-from-sc-oracle}
Let $p: 2^V\rightarrow \Z$ be a set function, $(G = (V, E), w:E\rightarrow\Z_+)$ be a hypergraph, and $S, T \subseteq V$ be disjoint sets. Then, $\functionMaximizationOracle{p}((G, w), S, T)$ can be implemented to run in $O(|V|(|V|+|E|))$ time using at most $|V|+1$ queries to $\functionMaximizationOracleStrongCover{p}$.
The run-time includes the time to construct the hypergraphs used as input to the queries to \functionMaximizationOracleStrongCover{p}. Moreover, each query to \functionMaximizationOracleStrongCover{p} is on an input hypergraph $(G_0, c_0)$ that has at most $|V|$ vertices and $|E|$ hyperedges.
\end{restatable}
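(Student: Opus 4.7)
The plan is to convert the coverage function $b_{(G,w)}$ into a cut function by ``adjoining an apex vertex'' to every hyperedge, and then iterate over the choice of which vertex plays the role of apex so as to cover every feasible $Z$. The key identity is the following. For any $v \in V$, define a hypergraph $(G^v, w^v)$ on the same vertex set $V$ by replacing each hyperedge $e \in E$ with the hyperedge $e \cup \{v\}$ of weight $w(e)$. Then for every $Z \subseteq V$ with $v \notin Z$, each hyperedge $e \cup \{v\}$ of $G^v$ satisfies $v \in (e \cup \{v\}) \setminus Z$, so it lies in $\delta_{G^v}(Z)$ if and only if $(e \cup \{v\}) \cap Z \neq \emptyset$, which (since $v \notin Z$) is equivalent to $e \cap Z \neq \emptyset$. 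Consequently,
\[
d_{(G^v, w^v)}(Z) \;=\; b_{(G, w)}(Z) \qquad \text{for every } Z \text{ with } v \notin Z.
\]

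Given this identity, I would proceed as follows. For each vertex $v \in V \setminus S$, query $\functionMaximizationOracleStrongCover{p}((G^v, w^v), S, T \cup \{v\})$ to obtain a candidate set $Z_v$ that maximizes $p(Z) - d_{(G^v, w^v)}(Z) = p(Z) - b_{(G, w)}(Z)$ over $S \subseteq Z \subseteq V - T - v$. Every feasible $Z$ for $\functionMaximizationOracle{p}((G, w), S, T)$ with $Z \neq V$ has some $v \in V \setminus Z$; since $S \subseteq Z$ forces $v \in V \setminus S$, such a $Z$ is feasible for the query associated with that $v$, so the overall optimum (restricted to $Z \neq V$) is achieved by some $Z_v$. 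The only remaining case is $Z = V$, which is feasible for the original problem exactly when $T = \emptyset$; this case is handled by one extra query to $\functionMaximizationOracleStrongCover{p}$ on the empty hypergraph with $S_0 := V$, $T_0 := \emptyset$, which returns $(V, p(V))$, from which we compute the objective value $p(V) - b_{(G, w)}(V) = p(V) - \sum_{e \in E} w(e)$. The algorithm returns the candidate $Z$ achieving the largest value of $p(Z) - b_{(G,w)}(Z)$, using at most $|V \setminus S| + 1 \leq |V| + 1$ queries in total.

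For the resource bounds, each hypergraph $(G^v, w^v)$ has exactly $|V|$ vertices and $|E|$ hyperedges, matching the claimed size restriction on query inputs; the empty hypergraph trivially satisfies it. Constructing $(G^v, w^v)$ from $(G, w)$ amounts to inserting the element $v$ into each of the $|E|$ hyperedges, which takes $O(|V| + |E|)$ time, and there are $O(|V|)$ such constructions, yielding $O(|V|(|V| + |E|))$ total time. Picking the best candidate and evaluating $b_{(G,w)}(V) = \sum_{e \in E} w(e)$ take only $O(|V| + |E|)$ time, which is absorbed into the bound. I do not foresee a real obstacle here: the entire argument hinges on the apex identity above together with the observation that ``$Z \neq V$'' is equivalent to ``some $v \in V \setminus S$ is excluded from $Z$,'' so the $|V|+1$ queries collectively exhaust the feasible region.
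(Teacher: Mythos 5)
Your proposal is correct and follows essentially the same route as the paper: the apex identity $d_{(G^v,w^v)}(Z)=b_{(G,w)}(Z)$ for $Z\not\ni v$, one strong-cover query per candidate excluded vertex, and a separate query to handle the candidate $Z=V$. The paper's only difference is a minor shortcut (when $T\neq\emptyset$ it uses a single query with an arbitrary $t\in T$ as the apex), which does not change the argument or the stated bounds.
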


For two hypergraphs $(G=(V, E_G), c_G:E_G\rightarrow \Z_+)$ and $(H=(V, E_H), w_H: E_H\rightarrow \Z_+)$ on the same vertex set $V$, we recall that the hypergraph $(G+H=(V, E_{G+H}), c_G+w_H)$ is the hypergraph with vertex set $V$ and hyperedge set $E_{G+H}\coloneqq E_G\cup E_H$ with the weight of every hyperedge $e\in E_G\cap E_H$ being $c_G(e) + w_H(e)$, the weight of every hyperedge $e\in E_G\setminus E_H$ being $c_G(e)$,  and the weight of every hyperedge $e\in E_H\setminus E_G$ being $w_H(e)$. 
\section{Local Connectivity Preserving Complete h-Splitting-Off}
\label{sec:hypergraph-splitting-off-proof}
We prove Theorem \ref{thm:complete-splitting-off} using Theorem \ref{thm:WeakToStrongCover:main} in this section. 
We need the following lemma showing that a certain set function $p: 2^V\rightarrow \Z$ is symmetric skew-supermodular and admits an efficient algorithm to implement \functionMaximizationOracleStrongCover{p}. This lemma has been observed in the literature before. We refer the reader to \Cref{sec:helper-for-p-max-oracle} for its proof. 

\begin{restatable}{lemma}{lempmaxoracle}\label{lemma:helper-for-p-max-oracle}
    Let $(G=(V, E), c: E\rightarrow \Z_+)$ be a hypergraph. Let $r:\binom{V}{2}\rightarrow \Z_{\ge 0}$ be a function on pairs of elements of $V$ and $R, p_{(G, c, r)}:2^V\rightarrow \Z$ be functions defined by $R(X)\coloneqq \max\{r(\{u, v\}): u\in X, v\in V\setminus X\}$ for every $X\subseteq V$, $R(\emptyset)\coloneqq 0$, $R(V)\coloneqq 0$, and $p_{(G, c, r)}(X)\coloneqq R(X)-d_{(G, c)}(X)$ for every $X\subseteq V$. 
    \begin{enumerate}
        \item The function $p_{(G, c, r)}$ is symmetric skew-supermodular. 
        \item 
        Given hypergraph $(G, c)$, function $r$, hypergraph $(G_0=(V, E_0), c_0: E_0\rightarrow \Z_{+})$, and disjoint sets $S_0, T_0\subseteq V$, the oracle   \functionMaximizationOracleStrongCover{p_{(G, c, r)}}$((G_0, c_0), S_0, T_0)$ can be computed in $O(|V|^3(|V|+|E_0|+|E|)(|E_0|+|E|))$ time.
    \end{enumerate}
\end{restatable}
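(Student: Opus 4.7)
The plan is to handle the two parts separately. For part 1, symmetry of $p_{(G,c,r)}$ is immediate: $R(V - X) = R(X)$ since a pair $\{u, v\}$ is separated by $X$ iff it is separated by $V - X$, and the hypergraph cut function $d_{(G,c)}$ is symmetric (with the convention $R(\emptyset) = R(V) = 0$). For skew-supermodularity of $p_{(G,c,r)} = R - d_{(G,c)}$, recall that $d_{(G,c)}$ is symmetric submodular, hence satisfies \emph{both} the submodular inequality $d(X) + d(Y) \ge d(X \cap Y) + d(X \cup Y)$ and the posimodular inequality $d(X) + d(Y) \ge d(X - Y) + d(Y - X)$ for every $X, Y \subseteq V$. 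So it suffices to prove that $R$ alone is skew-supermodular, since then one can combine the appropriate side with the matching cut inequality to deduce the corresponding inequality for $p_{(G,c,r)}$.

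To show $R$ is skew-supermodular, fix $X, Y \subseteq V$ and let $\{u_1, v_1\}$ and $\{u_2, v_2\}$ realize $R(X) = r_1$ and $R(Y) = r_2$ respectively, with WLOG $r_1 \ge r_2$. The two endpoints of any pair separated by $X$ lie in exactly one of four placements among the four regions $X \cap Y$, $X - Y$, $Y - X$, $V - X - Y$, and direct inspection shows that this pair is then separated by exactly one of the four projection-combinations $\{X \cap Y, X \cup Y\}$, $\{X - Y, Y - X\}$, $\{X \cap Y, Y - X\}$, $\{X - Y, X \cup Y\}$; the analogous statement holds for pairs separated by $Y$. If $\{u_1, v_1\}$ is of the first type, then $R(X \cap Y) + R(X \cup Y) \ge 2 r_1 \ge r_1 + r_2$. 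If it is of the second type, then $R(X - Y) + R(Y - X) \ge 2 r_1 \ge r_1 + r_2$. Otherwise $\{u_1, v_1\}$ is of one of the two \emph{crossing} types, and a $2 \times 4$ sub-case analysis on the type of $\{u_2, v_2\}$ shows that at least one of the two skew-supermodular inequalities for $R$ can always be certified by combining the single-projection bound from $\{u_1, v_1\}$ with a matching bound from $\{u_2, v_2\}$.

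For part 2, I would unfold the query. Writing $(G', c') := (G + G_0, c + c_0)$, the oracle must compute $\max\{R(Z) - d_{(G', c')}(Z) : S_0 \subseteq Z \subseteq V - T_0\}$. Substituting the definition of $R(Z)$ and interchanging the two maxima reduces this to
\[
\max \bigl\{\, r(\{u,v\}) - \min\{d_{(G', c')}(Z) : S_0 \cup \{u\} \subseteq Z \subseteq V - (T_0 \cup \{v\})\} \,\bigr\}
\]
over unordered pairs $\{u, v\}$ with $u \notin T_0$ and $v \notin S_0$, together with the trivial options $Z \in \{\emptyset, V\}$ (which contribute value $0$ when feasible). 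For each valid pair, the inner minimization is a minimum $(S_0 \cup \{u\}, T_0 \cup \{v\})$-cut in the weighted hypergraph $(G', c')$, computable by a single hypergraph max-flow; using the crude $O(|V|(|V| + |E| + |E_0|)(|E| + |E_0|))$ bound per min-cut (e.g.\ via a standard star-expansion to a graph followed by push-relabel), iterating over the $O(|V|^2)$ valid pairs yields the claimed $O(|V|^3(|V| + |E| + |E_0|)(|E| + |E_0|))$ runtime.

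The main obstacle is the skew-supermodularity of $R$; the algorithmic half is essentially mechanical once the reduction to a per-pair hypergraph min-cut is spotted. Within that case analysis I expect the two crossing types to be the delicate step, since there the appropriate side of the skew-supermodular inequality depends on the type of $\{u_2, v_2\}$ rather than on $\{u_1, v_1\}$ alone, so one must systematically verify each of the $2 \times 4$ remaining sub-cases.
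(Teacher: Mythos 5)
Your proposal is correct and follows essentially the same route as the paper: part 1 via the decomposition $p_{(G,c,r)} = R - d_{(G,c)}$ (a skew-supermodular function minus a symmetric submodular one, pairing the submodular and posimodular inequalities of the cut function with whichever inequality $R$ satisfies), and part 2 via exchanging the two maximizations and computing, for each of the $O(|V|^2)$ pairs $\{u,v\}$, one minimum cut separating $S_0\cup\{u\}$ from $T_0\cup\{v\}$ in $(G+G_0, c+c_0)$. The only difference is that the paper cites the known fact that $R$ is symmetric skew-supermodular (Lemma 8.1.9 of Frank's book) while you re-derive it through the case analysis on the pairs realizing $R(X)$ and $R(Y)$, which does go through; your explicit treatment of the feasible trivial sets $Z \in \{\emptyset, V\}$ in part 2 is, if anything, slightly more careful than the paper's.
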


We now prove the following stronger form of Theorem \ref{thm:complete-splitting-off}.
\begin{theorem}\label{thm:hypergraphs-splitting-off-stronger-form}
    Let $(G=(V, E), w_G: E\rightarrow \Z_+)$ be a hypergraph and $s\in V$. Then, there exists a hypergraph $(G^*=(V, E^*), c^*)$ such that 
    \begin{enumerate}
        \item $(G^*, c^*)$ is a local connectivity preserving complete h-splitting-off at $s$ from $(G, w_G)$ and 
        \item $|E^*|-|E|=O(|V|)$. 
    \end{enumerate}
    Moreover, given the hypergraph $(G, w_G)$, a hypergraph $(G^*, c^*)$ satisfying the above two properties can be obtained in $O(|V|^6(|V|+|E|)^3)$ time. 
\end{theorem}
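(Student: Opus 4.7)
The plan is to reduce \Cref{thm:hypergraphs-splitting-off-stronger-form} to an application of \Cref{thm:WeakToStrongCover:main} via a ``shadow'' construction, and then verify that the merge sequence produced can be lifted to a sequence of h-splitting-off operations at $s$ in $G$. Let $V' := V \setminus \{s\}$. First I would split $(G, w_G)$ into two pieces: let $(G_0, w_{G_0})$ consist of the hyperedges $e \in E$ with $s \notin e$, and let $(H_0 = (V', E_{H_0}), w_{H_0})$ be the ``shadow'' hypergraph with $E_{H_0} := \{e \setminus \{s\} : e \in \delta_G(s)\}$ and $w_{H_0}(e \setminus \{s\}) := w_G(e)$. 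Define $r(\{u, v\}) := \lambda_{(G, w_G)}(u, v)$ for $u, v \in V'$ and let $p := p_{(G_0, w_{G_0}, r)}$ as in \Cref{lemma:helper-for-p-max-oracle}; that lemma guarantees that $p$ is symmetric skew-supermodular and that \functionMaximizationOracleStrongCover{p} admits a strongly polynomial implementation. The key observation is that for every $X \subseteq V'$, a hyperedge $e \in \delta_G(s)$ crosses $X$ in $G$ if and only if its shadow $e \setminus \{s\}$ meets $X$ in $H_0$, because $s \in e \setminus X$ automatically. Hence $d_{(G, w_G)}(X) = d_{(G_0, w_{G_0})}(X) + b_{(H_0, w_{H_0})}(X)$, and combining with $d_{(G, w_G)}(X) \geq R(X)$ gives $b_{(H_0, w_{H_0})}(X) \geq p(X)$, so $(H_0, w_{H_0})$ weakly covers $p$.

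Applying \Cref{thm:WeakToStrongCover:main} to $(H_0, w_{H_0})$ and $p$ then yields a hypergraph $(H^*, w^*)$ on $V'$ with $d_{(H^*, w^*)}(X) \geq p(X)$ for every $X \subseteq V'$, obtained by merging hyperedges of $(H_0, w_{H_0})$, and satisfying $|E_{H^*}| - |E_{H_0}| = O(|V|)$. The candidate for the theorem is $(G^*, c^*) := (G_0 + H^*, w_{G_0} + w^*)$ on vertex set $V$ (with $s$ isolated). For local connectivity preservation, any $\{u, v\}$-cut $X$ with $u, v \in V'$ may be taken to avoid $s$, so $d_{(G^*, c^*)}(X) = d_{(G_0, w_{G_0})}(X) + d_{(H^*, w^*)}(X) \geq d_{(G_0, w_{G_0})}(X) + p(X) = R(X) \geq \lambda_{(G, w_G)}(u, v)$; the reverse inequality $\lambda_{(G^*, c^*)}(u, v) \leq \lambda_{(G, w_G)}(u, v)$ is automatic once $(G^*, c^*)$ is realized as a complete h-splitting-off of $(G, w_G)$ at $s$, since a short case analysis shows that both h-merge and h-trim are cut-nonincreasing on subsets of $V'$. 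The bound $|E^*| - |E| = O(|V|)$ follows from the analogous bound on $|E_{H^*}| - |E_{H_0}|$, since $|E_{H_0}| = |\delta_G(s)|$ and the hyperedges not touching $s$ are identical in $G$ and $G^*$.

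The main technical step I expect to require care is the lifting itself. I would show that each merge of two disjoint hyperedges $a, b$ in the multi-hypergraph form of the current shadow (as in \Cref{def:WeakToStrongCover:hyperedge-merge}) can be realized by an h-merge-almost-disjoint-hyperedges operation of amount one on the corresponding $\delta(s)$-hyperedges $a \cup \{s\}$ and $b \cup \{s\}$ in the current hypergraph. Disjointness of $a$ and $b$ forces $(a \cup \{s\}) \cap (b \cup \{s\}) = \{s\}$, so the h-merge is valid and produces the hyperedge $a \cup b \cup \{s\}$, whose shadow is exactly the merged hyperedge $a \cup b$. After all merges are performed, each remaining $\delta(s)$-hyperedge corresponds to a hyperedge of $H^*$ with $s$ added; applying an h-trim to each of these by its full weight removes $s$, yielding $(G^*, c^*)$ and forcing $d_{(G^*, c^*)}(s) = 0$. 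Beyond this correspondence, the remainder is careful bookkeeping of multiplicities across the merge sequence.

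For the running time, I would implement \functionMaximizationOracleStrongCover{p} on each query (whose input hypergraph has $O(|V|)$ vertices and $O(|V|+|E|)$ hyperedges by \Cref{thm:WeakToStrongCover:main}) in $O(|V|^3(|V|+|E|)^2)$ time via \Cref{lemma:helper-for-p-max-oracle}. With $O(|V|^3(|V|+|E|))$ queries needed, the total runtime is $O(|V|^6 (|V|+|E|)^3)$, matching the stated bound, and the $O(|V|^3(|V|+|E|)^2)$ overhead of the algorithm of \Cref{thm:WeakToStrongCover:main} outside the oracle calls is absorbed.
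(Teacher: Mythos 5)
Your proposal is correct and follows essentially the same route as the paper's proof: the same decomposition of $(G,w_G)$ into the $s$-free part and the shadow hypergraph of $\delta_G(s)$, the same choice of $p(X)=R(X)-d$ of the $s$-free part, the same invocation of \Cref{thm:WeakToStrongCover:main}, the same lifting of disjoint-hyperedge merges to h-merge operations at $s$ followed by final h-trims, and the same runtime accounting via \Cref{lemma:helper-for-p-max-oracle}.
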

\begin{proof}
    We begin by proving the existence of a local connectivity preserving complete h-splitting-off at $s$ from $(G=(V, E), w_G: E\rightarrow \Z_+)$ such that the number of additional hyperedges is $O(|V|)$. 
    We will use \Cref{thm:WeakToStrongCover:main} with an appropriate setting of the symmetric skew-supermodular function $p$ and the hypergraph $(H, w)$. 
    Let $V'\coloneqq V\setminus \{s\}$. 
    Let the hypergraph $(H:=(V', E_H), w_H: E_H\rightarrow \Z_+)$ be defined by $E_H\coloneqq \{e-\{s\}: e\in \delta_G(s)\}$ with $w_H(e-\{s\})\coloneqq w_G(e)$ for every $e\in \delta_G(s)$. 
    In order to define the function $p: 2^{V'}\rightarrow \Z$, we consider another hypergraph $(J:=({V'}, E_J), w_J: E_J\rightarrow \Z_+)$ defined by $E_J\coloneqq E\setminus B_G(s)$ with $w_J(e)\coloneqq w_G(e)$ for every $e\in E\setminus B_G(s)$. 
    We consider the function $R:2^{V'}\rightarrow \Z_{\ge 0}$ defined by $R(X)\coloneqq \max\{\lambda_{(G, w_G)}(u, v): u\in X, v\in V'\setminus X\}$ for every non-empty proper subset $X\subsetneq  V'$, $R(\emptyset)\coloneqq 0$, and $R(V')\coloneqq 0$. 
    Now, consider the function $p: 2^{V'}\rightarrow \Z$ defined by $p(X)\coloneqq R(X)-d_{(J, w_J)}(X)$ for every $X\subseteq {V'}$. 
    The function $p$ is symmetric skew-supermodular by \Cref{lemma:helper-for-p-max-oracle}. 
    We also note that $b_{(H, w_H)}(X)\ge p(X)$ for every $X\subseteq {V'}$: for an arbitrary $X\subseteq {V'}$, we have that $b_{(H, w_H)}(X)+d_{(J, w_J)}(X)=d_{(G, w_G)}(X)\ge R(X)$ where the last inequality is by definition of the function $R$ and hence, $b_{(H, w_H)}(X)\ge R(X)-d_{(J, w_J)}(X)=p(X)$. 
    Hence, $(H,w_H)$ weakly covers the symmetric skew-supermodular function $p$. 
    Applying \Cref{thm:WeakToStrongCover:main} to the hypergraph $(H, w_H)$ and the function $p$, we obtain that there exists a hypergraph $(H^* = (V', E_{H^*}), w_{H^*}: E_{H^*}\rightarrow \Z_+)$ satisfying the following three properties: (i) $d_{(H^*, w_{H^*})}\ge p(X)$ for every $X\subseteq V'$,  (ii) the hypergraph $(H^*, w_{H^*})$ is obtained by merging hyperedges of the hypergraph $(H, w_H)$, and (iii) $|E_{H^*}|-|E_H|=O(|V|)$. 
    Consider the hypergraph $(G^*=(V, E^*), c^*)$ obtained from the hypergraph $(J+H^*, w_J+w_{H^*})$ by adding the vertex $s$. 
    We have that 
    \begin{align*}
    |E^*|
    &=|E_J\cup E_{H^*}| &\text{(by definition of $E^*$)}\\
    &\le |E_J|+|E_{H^*}| &\\
    &=|E_J|+|E_H|+O(|V|) & \text{(by property (iii))}\\
    &=|E_J\cup E_H|+O(|V|) & \text{(by definitions of $E_J$ and $E_H$)}\\
    &=|E|+O(|V|), & \text{(by definitions of $E_J$ and $E_H$)}
    \end{align*}
    and hence, $|E^*|-|E|=O(|V|)$. 
    
    We now show that $(G^*, c^*)$ is a complete h-splitting-off at $s$ from $(G, w_G)$ and $(G^*, c^*)$ preserves local connectivity. 
    We note that property (ii) implies that $(G^*, c^*)$ is a complete h-splitting-off at $s$ from $(G, c)$: this is because, 
    hyperedges of $E_{H^*}\setminus E_H$ that are obtained by merging hyperedges of $E_H$ are equivalent to being obtained by a sequence of h-merge almost-disjoint hyperedges operation starting from the hypergraph $(G, w_G)$ and a final h-trim operation; 
    moreover, hyperedges of $E_{H^*}\cap E_H$ are equivalent to being obtained by the h-trim operation from the hypergraph $(G, w_G)$;  
    finally, we have that $d_{(G^*, c^*)}(s)=0$ by definition of $(G^*, c^*)$. 
    Next, we note that property (i) implies that $(G^*, c^*)$ preserves local connectivity: consider an arbitrary pair of distinct vertices $u, v\in V'$. We recall that h-merge almost-disjoint hyperedges as well as h-trim operations do not increase the cut value of any subset $X\subseteq V'$ and hence, $\lambda_{(G^*, c^*)}(u, v) \le d_{(G^*, c^*)}(X)\le d_{(G, w_G)}(X)$ for every $X\subseteq V'$ such that $u\in X, v\in V'\setminus X$ and in particular, $\lambda_{(G^*, c^*)}(u, v) \le \lambda_{(G, w_G)}(u, v)$. Hence, it suffices to show that $d_{(G^*, c^*)}(X)\ge \lambda_{(G, w_G)}(u, v)$ for every $X\subseteq V'$ such that $u\in X, v\in V'\setminus X$. Let $X\subseteq V'$ such that $u\in X, v\in V'\setminus X$. We have that 
    \begin{align*}
    d_{(G^*, c^*)}(X)
    &=d_{(J, w_J)}(X) + d_{(H^*, w_{H^*})}(X) & \text{(by definition of $(G^*, c^*)$)}\\
    &\ge d_{(J, w_J)}(X) +p(X) & \text{(by property (i) of $(H^*, w_{H^*})$)}\\
    &=d_{(J, w_J)}(X) +R(X)-d_{(J, w_J)}(X) & \text{(by definition of the function $p$)}\\
    &=R(X) \\
    &\ge \lambda_{(G, w_G)}(u,v). & \text{(by definition of the function $R$)}
    \end{align*}
    This completes the proof that $(G^*, c^*)$ is a complete splitting off at $s$ from $(G, w_G)$ and preserves local connectivity. 
    
   We now analyze the run-time of the algorithm to obtain such a hypergraph $(G^*, c^*)$. Given $(G=(V, E), w_G: E\rightarrow \Z_+)$ as input, we can construct $(H, w_H)$ and $(J, w_G)$ as above in time $O(|V| + |E|)$. We note that both $(H, w_H)$ and $(J, w_G)$ have vertex set $V-\{s\}$ and at most $|E|$ hyperedges. For the function $p$ defined as above, we consider the time to implement \functionMaximizationOracleStrongCover{p}: by Lemma \ref{lemma:helper-for-p-max-oracle}, for a given hypergraph $(G_0=(V-\{s\}, E_0), c_0: E_0\rightarrow \Z_+)$, sets $S_0, T_0\subseteq V$, and a vector $y_0\in \R^V$, the oracle \functionMaximizationOracleStrongCover{p}$((G_0, c_0), S_0, T_0, y_0)$ can be implemented to run in $O(|V|^3(|V|+|E_0|+|E|)(|E_0|+|E|))$ time. By the algorithmic conclusion of \Cref{thm:WeakToStrongCover:main}, it follows that the hypergraph $(H^*, w_{H^*})$ mentioned in the previous paragraph can be obtained in $O(|V|^6(|V|+|E|)^3)$ time. Finally, we can construct the hypergraph $(G^*, c^*)$ using the hypergraph $(H^*, w_{H^*})$ and the hypergraph $(J, w_J)$ in $O(|V|+|E_{H^*}|+|E|)=O(|V|+|E|)$ time. Thus, the overall run-time is $O(|V|^6(|V|+|E|)^3)$.
\end{proof}

\section{Constructive Characterization of \(k\)-Hyperedge-Connected Hypergraphs}
\label{sec:characterizing-k-edge-connected-hypergraphs}
In this section, we prove Theorem \ref{thm:k-EC-hypergraph-characterization} using Theorem \ref{thm:complete-splitting-off}. 
Throughout this section, we use the term \emph{hypergraph} to refer to a \emph{multi-hypergraph}.
We define a hypergraph $H = (V, E)$ to be \emph{minimally} $k$-hyperedge-connected if $H-e$ is not $k$-hyperedge-connected for every $e\in E$. 
We need the following observation about minimally $k$-hyperedge-connected hypergraphs. 
\begin{lemma}\label{lem:minimally-k-EC-hypergraph-has-degree-k-vertex}
    Let $k \in \Z_+$ be a positive integer and $G = (V, E)$ be a minimally $k$-hyperedge-connected hypergraph with $|V| \geq 2$. Then, there exists a vertex $u \in V$ such that $d_G(u) = k$.
\end{lemma}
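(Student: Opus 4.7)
The plan is to find a vertex $u \in V$ with $d_G(u) \leq k$; since $k$-hyperedge-connectivity forces $d_G(u) \geq k$, this yields $d_G(u) = k$. First I would exploit minimality: for each hyperedge $e \in E$, the hypergraph $G - e$ admits a non-empty proper subset with cut value at most $k-1$, which gives a set $X_e \subsetneq V$ with $d_G(X_e) = k$ and $e \in \delta_G(X_e)$. In particular, the family $\mathcal{T}$ of \emph{tight} sets (non-empty proper subsets of $V$ with cut value exactly $k$) is non-empty, so I can pick an inclusion-minimal $X \in \mathcal{T}$. If $|X| = 1$, say $X = \{u\}$, then $d_G(u) = k$ and we are done.

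The main case is $|X| \geq 2$, where the goal becomes the key claim that \emph{no hyperedge $e \in E$ is contained in $X$}. I would prove this by contradiction: assume some $e \subseteq X$ and invoke minimality of $G$ to obtain a tight $X_e$ with $e \in \delta_G(X_e)$. Because $e \subseteq X$ while $e$ meets both $X_e$ and its complement, both $X \cap X_e$ and $X - X_e$ are non-empty, so $X \cap X_e$ is a non-empty proper subset of $X$. I would then split on whether $V - (X \cup X_e)$ is empty. If $V - (X \cup X_e) \neq \emptyset$, submodularity of the hypergraph cut function gives
\[
2k \;=\; d_G(X) + d_G(X_e) \;\geq\; d_G(X \cap X_e) + d_G(X \cup X_e) \;\geq\; 2k,
\]
forcing $d_G(X \cap X_e) = k$ and contradicting minimality of $X$. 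Otherwise $V = X \cup X_e$, and by the symmetry $d_G(V - X_e) = d_G(X_e) = k$, while $V - X_e = X - X_e$ is a non-empty proper subset of $X$, again contradicting minimality of $X$.

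With the claim in hand, fix any $u \in X$. Every hyperedge contributing to $d_G(u)$ contains $u$ along with at least one other vertex; by the claim no such hyperedge lies entirely inside $X$, so each must also meet $V - X$ and therefore lies in $\delta_G(X)$. Thus $d_G(u) \leq |\delta_G(X)| = d_G(X) = k$, and combined with $d_G(u) \geq k$ this gives $d_G(u) = k$.

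I expect the uncrossing argument to be the main technical point. The subtlety is that the hypergraph cut function is both symmetric and submodular, but the uncrossed sets must be non-empty proper subsets of $V$ to witness tightness; the case split on whether $V = X \cup X_e$, combined with the symmetry $d_G(V - X_e) = d_G(X_e)$, handles both situations uniformly and produces the required strictly smaller tight subset of $X$.
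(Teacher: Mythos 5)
Your proof is correct and rests on essentially the same ingredients as the paper's: an inclusion-minimal tight set $X$, the consequence of minimality of $G$ that every hyperedge lies in $\delta_G(X_e)$ for some tight set $X_e$, submodular uncrossing played against the minimality of $X$, and the observation that if no hyperedge is contained in $X$ then $\delta_G(u) \subseteq \delta_G(X)$ for $u \in X$, giving $d_G(u) = k$. The only difference is organizational: the paper shows that when $|X| \geq 2$ some hyperedge must lie inside $X$ and then derives a contradiction, whereas you show none can and conclude directly, and your case split on $X \cup X_e = V$ (handled via symmetry of the cut function) plays exactly the role of the paper's argument that all four uncrossed sets are non-empty.
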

\begin{proof}
    Since $G$ is minimally $k$-hyperedge-connected, there exists a set $X \subseteq V$ such that $d_G(X) = k$. Let $X\subseteq V$ be an inclusion-wise minimal set such that $d_G(X)=k$. We note that if $|X| = 1$, then the claim holds. Suppose that $|X| \geq 2$.

    We first show that there exists a hyperedge $e\in E$ contained in the set $X$. By way of contradiction, suppose this is false. Let $u \in X$ be an arbitrary vertex. We note $X - \{u\} \not = \emptyset$ since $|X| \geq 2$. Furthermore, every hyperedge incident to $u$ is not contained in the set $X$, i.e. $\delta_G(u) \subseteq \delta_G(X)$. Thus, we have that $k \leq d_G(u) \leq d_G(X) = k$, where the first inequality is because $G$ is $k$-hyperedge-connected. Thus, all inequalities are equations, and we have that $d_G(u) = k$, contradicting minimality of the set $X$.

    Let $e\in E$ be a hyperedge contained in the set $X$. 
    Next, we show that 
    there exists a set $Y \subseteq V$ such that $d_G(Y) = k$ and $e \in \delta_G(Y)$. 
    By way of contradiction, suppose this is false. Then, for every $Y \subseteq V$ such that $e \in \delta_G(Y)$, we have that $d_G(Y) > k$. Thus, the hypergraph $G - e$ is $k$-hyperedge-connected, contradicting the minimally $k$-hyperedge-connected property of $G$.

    We now complete the proof of the lemma. By minimality of the set $X$, we have that $Y-X\neq \emptyset$ and $(V-Y)-X\neq \emptyset$. 
    Furthermore, since $e \subseteq X$ but $e \in \delta_G(Y)$, we have that $X-Y \not = \emptyset$ 
    and $X \cap Y \not = \emptyset$. 
    Thus, all four sets $Y-X, X-Y, X\cap Y, V-(X\cup Y)$ are non-empty.
    Hence, we have the following: $2k = d_{G}(X) + d_G(Y) \geq d_G(X\cap Y) + d_G(X\cup Y) \geq 2k$, where the first inequality is by submodularity of the cut function and the second inequality is because $G$ is $k$-hyperedge-connected. 
    Thus, all inequalities are equations and we have that $d_{G}(X\cap Y) = k$, contradicting minimality of the set $X$.
\end{proof}

We now restate and prove \Cref{thm:k-EC-hypergraph-characterization}. 

\thmConstructiveCharacterization*
\begin{proof}
    If $H$ is a $k$-hyperedge-connected hypergraph, then performing both operations mentioned in the theorem gives a $k$-hyperedge-connected hypergraph. Consequently, the reverse direction follows by induction on $|V| + |E|$. We prove the forward direction.  By way of contradiction, suppose the forward direction of the theorem is false. Let $G = (V, E_G)$ be a counter-example that minimizes $|V| + |E_G|$, i.e. $G$ is $k$-hyperedge-connected but cannot be obtained by applying operations (1) and (2). We note that $G$ is minimally $k$-hyperedge-connected since otherwise, there exists a hyperedge $e \in E_G$ such that $G - e$ is $k$-hyperedge-connected, and consequently, the hypergraph $G - e$ is also a counter-example to the forward direction of the theorem contradicting minimality of the counter-example $G$. Thus, by \Cref{lem:minimally-k-EC-hypergraph-has-degree-k-vertex}, there exists a vertex $u \in V$ such that $d_G(u) = k$. Let $H = (V - u, E_H)$ be the hypergraph obtained as a local connectivity preserving complete h-splitting-off at the vertex $u$ from $G$ as guaranteed to exist by \Cref{thm:complete-splitting-off}. Then, the hypergraph $H$ is $k$-hyperedge-connected. Moreover, $|V-u| + |E_H| < |V| + |E_G|$ because complete h-splitting-off at a vertex from a multi-hypergraph does not increase the total number of hyperedges. 
    By minimality of the counter-example $G$, the hypergraph $H$ can be obtained by starting from the single vertex hypergraph with no hyperedges and repeatedly applying operations (1) and (2). 
    We note that $G$ can be obtained from $H$ by applying operation (2). This is because $H$ was obtained by complete h-splitting-off at vertex $u$ from $G$ where $d_G(u)=k$ and operation (2) is the inverse of a complete h-splitting-off operation at a vertex with degree exactly $k$. 
    Thus, $G$ is not a counter-example to the forward direction of the theorem, a contradiction.
\end{proof}

We remark that our proof of Theorem \ref{thm:k-EC-hypergraph-characterization} is constructive since it is equivalent to a proof by induction on $|V|+|E|$. 
Using the polynomial-time algorithm in Theorem \ref{thm:complete-splitting-off}, we get the following conclusion: given a $k$-hyperedge-connected hypergraph $H$, our proof gives a polynomial-time algorithm to construct a sequence of hypergraphs $H_0, H_1, H_2, \ldots, H_t$, where $H_0$ is the single vertex hypergraph with no hyperedges, $H_t=H$ and for each $i\in [t]$, the hypergraph $H_i$ is obtained from $H_{i-1}$ by either adding a new hyperedge over a subset of vertices in $H_{i-1}$ or by $(k,p)$-pinching hyperedges in $H_{i-1}$ for some positive integer $p\le k$. 
\section{Steiner Rooted Connected Orientation of Graphs and Hypergraphs}
\label{sec:rooted-steiner-connected-orientations}
In this section, we use  \Cref{thm:complete-splitting-off} to
prove Theorems \ref{thm:rooted-steiner-orientation-in-graphs} and \ref{thm:rooted-steiner-orientation-in-hypergraphs}. 
We recall the \textsc{Max Steiner Rooted-Connected Orientation} problem: the input here is a hypergraph $G=(V, E)$, a subset $T\subseteq V$ of terminals, and a root vertex $r\in T$. The goal is to find a maximum integer $k$ and an orientation $\overrightarrow{G}$ of $G$ such that $\overrightarrow{G}$ is Steiner rooted $k$-hyperarc-connected. 
Although \textsc{Max Steiner Rooted-Connected Orientation} is NP-hard, two extreme cases of the problem are polynomial-time solvable: the number of terminals $|T|=2$ and the number of terminals $|T|=|V|$. We discuss these polynomial-time solvable cases as warm-ups towards Theorems \ref{thm:rooted-steiner-orientation-in-graphs} and \ref{thm:rooted-steiner-orientation-in-hypergraphs}. The ideas underlying these cases will be useful in proving Theorems \ref{thm:rooted-steiner-orientation-in-graphs} and \ref{thm:rooted-steiner-orientation-in-hypergraphs}. 
We recall that non-terminals are denoted as Steiner vertices. 

\paragraph{The case of $|T|=2$.} \textsc{Max Steiner Rooted-Connected Orientation} problem where we have exactly $2$ terminals is polynomial-time solvable owing to Menger's theorem in hypergraphs. 
A path between $u$ and $v$ in a hypergraph is an alternative sequence of distinct vertices and hyperedges $v_1=u, e_1, v_2, e_2, ..., e_{\ell-1}, v_{\ell}=v$ such that $v_i, v_{i+1}\in e_i$ for every $i\in [\ell-1]$. 
We recall Menger's theorem: for a hypergraph $G=(V, E)$ and a pair of distinct vertices $r, v\in V$, there exist $k$ hyperedge-disjoint paths between $r$ and $v$ in $G$ if and only if $\lambda_G(r,v)\ge k$. We present a proof of this fundamental graph-theoretical result using \Cref{thm:complete-splitting-off}. Our proof also holds for undirected graphs. 
Parts of our proof will be useful in the proof of Theorems \ref{thm:rooted-steiner-orientation-in-graphs} and \ref{thm:rooted-steiner-orientation-in-hypergraphs}. 

Let $G=(V, E)$ be a hypergraph, $r,v\in V$ be distinct vertices, and $k$ be a positive integer. It is easy to see that if there exist $k$ hyperedge-disjoint paths between $r$ and $v$ in $G$, then $\lambda_G(r,v)\ge k$. We prove the converse. Suppose that $\lambda_G(r, v)\ge k$. We would like to show that there exist $k$ hyperedge-disjoint paths between $r$ and $v$ in $G$. 
Consider performing local connectivity preserving complete h-splitting-off operations at Steiner vertices in the current hypergraph and then deleting them. This will result in a $2$-vertex hypergraph $H=(\{r,v\}, E_H)$ such that $\lambda_H(r,v)\ge k$. Since $H$ contains only $2$ vertices, it immediately follows that all hyperedges of $H$ are in fact edges, and they can be oriented to obtain $k$ hyperarc-disjoint paths 
from $v$ to $r$. 
Let $\overrightarrow{H}$ be such an orientation. 
We now extend this orientation $\overrightarrow{H}$ to an orientation $\overrightarrow{G}$ of the original hypergraph $G$ that still contains $k$ hyperarc-disjoint paths 
from $v$ to $r$. 
We use Lemma \ref{lem:lifting-Steiner-orientation-through-splitting-off} below for this extension. 
Lemma \ref{lem:lifting-Steiner-orientation-through-splitting-off} below proves a more general statement for arbitrary number of terminals and for a single h-splitting-off operation. We recall that a complete h-splitting-off at a vertex from a hypergraph is a sequence of h-splitting-off operations. 
Lemma \ref{lem:lifting-Steiner-orientation-through-splitting-off} can be applied inductively starting from $\overrightarrow{H}$ to arrive at an orientation $\overrightarrow{G}$ of the hypergraph $G$ such that there exist $k$ hyperarc-disjoint paths 
from $v$ to $r$ 
in $\overrightarrow{G}$. Such an orientation immediately gives the $k$ hyperedge-disjoint paths between $r$ and $v$ in $G$. This concludes our proof of Menger's theorem in hypergraphs and graphs (undirected). 

\begin{lemma}\label{lem:lifting-Steiner-orientation-through-splitting-off}
    Let $G=(V, E_G)$ be a hypergraph, $T\subseteq V$ be a set of terminals, $r\in T$ be a specified root vertex, and $s\in V-T$. 
    Let $H = (V, E_H)$ be a hypergraph obtained by applying a h-splitting-off operation at the vertex $s$ from $G$. 
    Suppose there exists an orientation $\overrightarrow{H}$  
    of $H$ such that $\overrightarrow{H}$ is Steiner rooted $k$-hyperarc-connected. 
    Then, there exists an orientation 
    $\overrightarrow{G}$ 
    of $G$ such that $\overrightarrow{G}$ is Steiner rooted $k$-hyperarc-connected. 
\end{lemma}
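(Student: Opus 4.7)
The plan is to do a case analysis on which type of h-splitting-off operation produces $H$ from $G$, define an orientation $\overrightarrow{G}$ of $G$ that agrees with $\overrightarrow{H}$ on every hyperedge common to both hypergraphs, and then lift the $k$ hyperarc-disjoint paths from each non-root terminal $t\in T-\{r\}$ to $r$ in $\overrightarrow{H}$ to $k$ hyperarc-disjoint walks from $t$ to $r$ in $\overrightarrow{G}$; these walks will then be converted into paths via a standard shortcutting step.

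Suppose first that $H$ is obtained from $G$ by an h-trim that reduces the copy of some $e\in\delta_G(s)$ and introduces $g:=e-\{s\}$, so (in the multi-hypergraph reading) one copy of $e$ is replaced by one copy of $g$. I would define $\overrightarrow{G}$ by copying the head assignment of $\overrightarrow{H}$ on every hyperedge in $E_G\cap E_H$ and by setting the head of the new copy of $e$ in $\overrightarrow{G}$ equal to the head of the corresponding copy of $g$ in $\overrightarrow{H}$. Because $g\subseteq e$, any single use of the hyperarc $g$ in a path of $\overrightarrow{H}$ lifts directly to a use of $e$ with the same tail and same head, introducing no new vertex, so the lifted sequences are already paths. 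Suppose next that $H$ is obtained from $G$ by an h-merge of $e,f\in\delta_G(s)$ with $e\cap f=\{s\}$, producing $g:=e\cup f$. Let $v^{*}$ denote the head of the affected copy of $g$ in $\overrightarrow{H}$, and by the symmetry between $e$ and $f$ assume $v^{*}\in e$. I would define $\overrightarrow{G}$ by copying $\overrightarrow{H}$ on $E_G\cap E_H$, orienting the new copy of $e$ toward $v^{*}$, and orienting the new copy of $f$ toward $s$. For the (at most one) path $P$ in $\overrightarrow{H}$ from $t$ to $r$ that uses this copy of $g$ with tail $v\in g-\{v^{*}\}$ and head $v^{*}$, I would replace the step through $g$ by the hyperarc of $e$ (tail $v$, head $v^{*}$) whenever $v\in e$ (which includes $v=s$), and by the two-hyperarc sequence $f$ (tail $v$, head $s$) followed by $e$ (tail $s$, head $v^{*}$) whenever $v\in f-\{s\}$; in the latter situation one has $v^{*}\neq s$, so $e$ is a legitimate second hyperarc.

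Applying these substitutions yields $k$ hyperarc-disjoint walks from $t$ to $r$ in $\overrightarrow{G}$: the walks coming from paths of $\overrightarrow{H}$ that do not use $g$ lift verbatim using only hyperarcs of $E_G\cap E_H$, while the walk coming from the path that uses $g$ introduces $e$ and possibly $f$, neither of which appears in any of the other lifted walks. A standard shortcutting argument---iteratively excising the segment between two occurrences of a repeated vertex, which only removes hyperarcs---converts $k$ hyperarc-disjoint walks from $t$ to $r$ into $k$ hyperarc-disjoint paths from $t$ to $r$, giving the desired Steiner rooted $k$-hyperarc-connected orientation of $G$. The main obstacle I anticipate is precisely the vertex repetition that can arise in the two-hyperarc sub-case of the h-merge case, when $s$ is already an internal vertex of the original path $P$ in $\overrightarrow{H}$; the shortcutting step is what cleanly resolves this and makes the construction oblivious to whether $s$ is isolated in $H$ after the single h-splitting-off operation.
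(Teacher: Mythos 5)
Your construction is essentially the paper's own proof: the same case analysis on trim versus merge, the same head assignments (copy $\overrightarrow{H}$ on the shared hyperedges, route the merged pieces through $s$), and the same rerouting of the single path that uses the new hyperarc. One small slip: in the merge case with tail $v \in f-\{s\}$ you assert $v^{*}\neq s$, but nothing rules out $\head_{H}(e\cup f)=s$; in that corner case your two-hyperarc detour degenerates (it would require a hyperarc of $e$ whose tail and head are both $s$), and the correct replacement is simply the single hyperarc $f$ with head $s$ --- which is exactly how the paper's case split (both endpoints of the traversal lying in $f$) handles it. Apart from this easily repaired corner case, your explicit shortcutting step is a welcome addition: when $s$ is inserted as a new intermediate vertex it may already occur on the path in $\overrightarrow{H}$, so the lifted object is a priori only a walk; the paper's write-up glosses over this point, whereas your shortcutting argument disposes of it cleanly without affecting hyperarc-disjointness.
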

\begin{proof}
 Let $\overrightarrow{H}= (V, E_H, \head_H:E_H\rightarrow V)$ be an orientation of $H$ that is Steiner rooted $k$-hyperarc-connected.  
  We will use the orientation $\overrightarrow{H}$ to define an orientation $\overrightarrow{G}=(V, E_G, \head_G: E_G\rightarrow V)$ for the hypergraph $G$. 
  We recall that $H$ is obtained by applying a h-splitting-off operation at $s$ from $G$. We have two cases: 
\begin{enumerate}
    \item Suppose that $H$ is obtained 
    from $G$ by a h-merge almost disjoint hyperedges operation. Let $e, f \in \delta_{G}(s)$ be the hyperedges with $e\cap f = \{s\}$ that are h-merged during this operation. For notational convenience, we denote $g:= e\cup f$ as the hyperedge formed after the h-merge operation. Then, we define the orientation $\head_{G}:E_G\rightarrow V$ as follows: for each $e' \in E_G$,
$$\head_{G}(e') := \begin{cases}
    s& \text{ if } e' \in \{e, f\} \text{ and } \head_{H}(g) \not \in e'\\
    
    \head_{H}(g) &\text{ if } e' \in \{e, f\} \text{ and } \head_{H}(g) \in e'\\
    
    \head_{H}(e')& \text{ otherwise (i.e., $e'\in E_G-\{e, f\}$).}
\end{cases}$$

    \item Suppose that $H$ is obtained from $G$ 
    by a h-trim hyperedge operation. Let $e \in \delta_{G}(s)$ be the hyperedge that is h-trimmed during this operation. For notational convenience, we denote $g:= e - \{s\}$ as the hyperedge formed after the h-trim operation. We define an orientation $\head_{G}:E_G\rightarrow V$ as follows: for each $e' \in E_G$,
$$\head_{G}(e') := \begin{cases}
    \head_{H}(g)& \text{ if } e' = e\\
    \head_{H}(e')& \text{ otherwise (i.e., $e'\in E_G-\{e\}$).}
\end{cases}$$  
\end{enumerate}
        We now show that the orientation $\overrightarrow{G} = (V, E_G, \head_G)$ is Steiner rooted $k$-hyperarc-connected. We will prove this only for the first case above, i.e. $H$ is obtained from $G$ by a h-merge almost disjoint hyperedges operation, and remark that the proof for the second case is along similar lines. Let $v \in T - r$ be a terminal. Since $\overrightarrow{H}$ is a Steiner rooted $k$-hyperarc-connected orientation of $H$, there exist $k$ hyperarc-disjoint paths $\calP := \{P_1, \ldots, P_k\}$ 
        from $v$ to $r$ 
        in $\overrightarrow{H}$. We will use the set of paths $\calP$ to show that there are also $k$ hyperarc-disjoint paths 
        from $v$ to $r$ 
        in $\overrightarrow{G}$. We recall that $e, f \in \delta_G(s)$ are the hyperedges that are h-merged and $g = e \cup f $ is the hyperedge formed during the splitting-off almost disjoint hyperedges operation. First, suppose that the hyperarc $(g, \head_H(g))$ is not present in any of the paths of $\calP$. Then, $\calP$ is a set of $k$ hyperarc-disjoint paths 
        from $v$ to $r$ 
        in $\overrightarrow{G}$ and the claim holds. Next, suppose that the hyperarc $(g, \head_H(g))$ is present in the path $P_j \in \calP$ for some $j \in [k]$. We note that since the paths in $\calP$ are hyperarc-disjoint, $P_j$ is the unique path containing the hyperarc $(g, \head_H(g))$ and the paths in $\calP - \{P_j\}$ are hyperarc-disjoint in $\overrightarrow{G}$. We now define a new path $\Tilde{P_j}$ 
        from $v$ to $r$ 
        in $\overrightarrow{G}$ that is hyperarc-disjoint from the paths in $\calP - \{P_j\}$. Let $\ell \in \Z_+$ denote the length of the path $P_j$ and let the vertices and hyperarcs along the path $P_j$ be 
        $$P_j = \left(v, \left(e_0, u_1\right), u_1, \left(e_1, u_2\right), u_2,  \ldots, u_{i-1}, \left(e_{i-1}, u_{i}\right), u_i, \left(g, u_{i+1}\right), u_{i+1}, \left(e_{i+1}, u_{i+2}\right), u_{i+1}, \ldots, u_{\ell}, \left(e_\ell, r\right), r\right).$$ 
        If the vertices $u_i, u_{i+1} \in e$, then we define the path $\Tilde{P}_j$ by replacing the hyperarc $(g, \head_H(g))=(g, u_{i+1})$ with the hyperarc $(e, \head_G(e))=(e, u_{i+1})$ in $P_j$ (the case where $u_i, u_{i+1} \in f$ can be shown similarly). 
        Suppose that $u_i \in e-f$ and $u_{i+1} \in f-e$: then, we define the path $\Tilde{P}_j$ by replacing the hyperarc $(g, \head_H(g))=(g,u_{i+1})$ with 
        the sub-path $(e, s), s, (f, u_{i+1})$ in $P_j$; we note that $\head_G(e)=s$, $s$ is a tail of $f$, and $\head_G(f)=u_{i+1}$ (the case where $u_i \in f-e$ and $u_{i+1} \in e-f$ can be shown similarly). 
        In both cases, the path $\Tilde{P}_j$ is a $v$ to $r$ path in the directed hypergraph $\overrightarrow{H}$. Moreover, the $k$ paths in $\calP - \{P_j\} + \{\Tilde{P}_j\}$ are hyperarc-disjoint. Consequently, $\overrightarrow{G}$ is a Steiner rooted $k$-hyperarc-connected orientation of $G$.
\end{proof}

\paragraph{The case of $|T|=|V|$. } \textsc{Max Steiner Rooted-Connected Orientation} problem where all vertices are terminals is polynomial-time solvable owing to a characterization for the existence of rooted $k$-hyperarc-connected orientation via weak-partition-connectivity. We define weak-partition-connectivity now. 
\begin{definition}\label{defn:weak-partition-connectivity}
    Let $k \in \Z_{\geq 0}$ be a non-negative integer. A hypergraph $H = (V, E)$ is \emph{$k$-weakly-partition-connected} if for every partition $\calP$ of $V$, we have that $$\sum_{e \in E} \left(\calP\left(e\right) - 1\right) \geq k (|\calP| - 1),$$
    where for each $e \in E$, $\calP(e)$ denotes the number of parts of $\calP$ that have non-empty intersection with the hyperedge $e$. 
\end{definition}

For a hypergraph $H=(V, E)$ and a root vertex $r\in V$, we recall that an orientation $\overrightarrow{H}=(V, E, \head: E\rightarrow V)$ of $H$ is rooted $k$-hyperarc-connected if for each $u\in V$, there exist $k$ hyperarc-disjoint paths 
from the vertex $u$ to the root vertex $r$. 
The next theorem is a consequence of Theorem 9.4.13 in \cite{Frank-book} and shows that weak-partition-connectivity characterizes the existence of rooted hyperarc-connected orientations. 

\begin{theorem}[Theorem 9.4.13 in \cite{Frank-book}]\label{thm:frank:weak-pc-iff-rooted-hyparc-conn-orientation}
    Let $H = (V, E)$ be an undirected hypergraph and $r \in V$ be a root vertex. There exists a rooted $k$-hyperarc-connected orientation of $H$ if and only if $H$ is $k$-weakly-partition-connected.
\end{theorem}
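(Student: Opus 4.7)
For the forward direction, I would use a counting argument built on directed Menger's theorem. Suppose $\overrightarrow{H} = (V, E, \head)$ is a rooted $k$-hyperarc-connected orientation of $H$ and let $\mathcal{P} = \{V_1, \ldots, V_t\}$ be an arbitrary partition of $V$ with $r \in V_1$. For each $i \geq 2$, directed Menger applied to an arbitrary vertex of $V_i$ and the root $r$ shows that at least $k$ hyperarcs of $\overrightarrow{H}$ must leave $V_i$ (i.e., have their head outside $V_i$ and at least one tail in $V_i$). I would then count from the hyperedge side: if a hyperedge $e$ is oriented with $\head(e) \in V_j$, then for any other part $V_i$ that $e$ meets, the element of $e \cap V_i$ is necessarily a tail (being distinct from $\head(e)$) and the head lies outside $V_i$, so $e$ leaves $V_i$. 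Hence each oriented $e$ contributes exactly $\mathcal{P}(e) - 1$ to the total leaves-count $\sum_{i = 1}^{t} |\{e \in E : e \text{ leaves } V_i\}|$. Summing the $k$-lower bound over $i = 2, \ldots, t$ and bounding it above by this total yields $\sum_{e \in E}(\mathcal{P}(e) - 1) \geq k(|\mathcal{P}| - 1)$, which is the $k$-weak-partition-connectivity condition.

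\textbf{Sufficiency (Backward Direction).}
For the reverse direction, I would invoke Frank's general hypergraph orientation theorem. Define the demand function $h: 2^V \to \mathbb{Z}_{\geq 0}$ by $h(X) = k$ for every nonempty $X \subseteq V - r$ and $h(X) = 0$ otherwise; this function is (trivially) intersecting supermodular. By Menger's theorem, a rooted $k$-hyperarc-connected orientation of $H$ is precisely an orientation in which the number of hyperarcs leaving every nonempty $X \subseteq V - r$ is at least $h(X) = k$. Frank's orientation theorem for hypergraphs with intersecting supermodular demands characterizes the existence of such an orientation by a partition inequality: $H$ admits an orientation with at least $h(X)$ hyperarcs leaving every $X$ iff $\sum_{e \in E}(\mathcal{P}(e) - 1) \geq \sum_{X \in \mathcal{P}} h(X)$ for every partition $\mathcal{P}$ of $V$. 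With our specific $h$, the right-hand side simplifies to $k(|\mathcal{P}| - 1)$, exactly one contribution per non-root part, and the condition collapses to $k$-weak-partition-connectivity of $H$, finishing the proof.

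\textbf{Main Obstacle and Alternatives.}
The genuine difficulty lies in the sufficiency direction, specifically the invocation of Frank's hypergraph orientation theorem as a black box; its standard proof uses matroid union or submodular flow theory and is substantial. Within the framework of the present paper, a more self-contained alternative would be to prove sufficiency by induction on $|V|$ using Theorem \ref{thm:complete-splitting-off}: pick a non-root vertex $s$, apply a local-connectivity preserving complete h-splitting-off at $s$ to produce a smaller hypergraph $H'$, verify that $H'$ inherits $k$-weak-partition-connectivity via the splitting-off structure, invoke the induction hypothesis to orient $H'$ as rooted $k$-hyperarc-connected, and lift this orientation back to $H$ through an analog of Lemma \ref{lem:lifting-Steiner-orientation-through-splitting-off} applied at $s$. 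The delicate part of this alternative route is showing that $k$-weak-partition-connectivity is preserved under local-connectivity preserving h-splitting-off; this requires relating the partition inequality in $H$ to that in $H'$, using that leaving-a-part contributions of the split-off hyperedges can be traced through the merge/trim operations without loss.
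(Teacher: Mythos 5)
The paper does not actually prove this statement: it is imported verbatim as a known result, stated as a consequence of Theorem 9.4.13 in Frank's book, and used as a black box. Your proposal is correct and is essentially the standard derivation of that consequence. The necessity argument is fine (and you do not even need Menger there: the $k$ hyperarc-disjoint $u$--$r$ paths for $u\in V_i$ already give $k$ distinct hyperarcs with a tail in $V_i$ and head outside, and each oriented hyperedge $e$ leaves exactly $\mathcal{P}(e)-1$ parts, namely all parts it meets except the one containing $\head(e)$). The sufficiency step is the same appeal to Frank's hypergraph orientation theorem for intersecting supermodular demand functions that the paper implicitly makes by citing Theorem 9.4.13; your specialization $h(X)=k$ for $\emptyset\neq X\subseteq V-r$ is intersecting supermodular and the partition inequality collapses to $k$-weak-partition-connectivity, so the heavy lifting sits in exactly the same black box in both treatments, with your write-up supplying the routine reduction that the paper leaves to the reference.

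One caution about your proposed ``self-contained'' alternative via Theorem \ref{thm:complete-splitting-off}: the delicate step you flag is not merely delicate but currently unsupported. Local-connectivity preserving complete h-splitting-off is not shown anywhere in the paper to preserve $k$-weak-partition-connectivity, and the paper's own bridge between these notions (Lemma \ref{lem:2k-conn-implies-k-weak-pc}) loses a factor of $2$; indeed the paper's architecture deliberately goes in the other direction, using the orientation theorem as a black box after splitting-off rather than proving it by splitting-off. So that route would require a genuinely new argument (e.g., tracking how merge and trim operations interact with the quantity $\sum_e(\mathcal{P}(e)-1)$ for every partition, not just with pairwise cut values), and as sketched it should not be regarded as a proof.
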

Theorem \ref{thm:frank:weak-pc-iff-rooted-hyparc-conn-orientation} leads to a polynomial-time algorithm to solve \textsc{Max Steiner Rooted-Connected Orientation} problem where all vertices are terminals. 

\paragraph{Proof of Theorems \ref{thm:rooted-steiner-orientation-in-hypergraphs} and \ref{thm:rooted-steiner-orientation-in-graphs}.}
We will now use Lemma \ref{lem:lifting-Steiner-orientation-through-splitting-off} and Theorems \ref{thm:frank:weak-pc-iff-rooted-hyparc-conn-orientation} and \ref{thm:complete-splitting-off} to prove Theorems \ref{thm:rooted-steiner-orientation-in-graphs} and \ref{thm:rooted-steiner-orientation-in-hypergraphs}. 
We will prove Theorem \ref{thm:rooted-steiner-orientation-in-hypergraphs} and remark after the proof about how our proof also implies Theorem \ref{thm:rooted-steiner-orientation-in-graphs}. 
To prove Theorem \ref{thm:rooted-steiner-orientation-in-hypergraphs}, we relate connectivity and weak-partition-connectivity. We recall that a hypergraph $H=(V, E)$ is defined to be $k$-hyperedge-connected if $d_H(X)\ge k$ for every non-empty proper subset $X\subseteq V$. 
It is easy to see that if a hypergraph is $k$-weakly-partition-connected, then it is also $k$-hyperedge-connected, but the converse is not necessarily true (counterexamples exist even for graphs). We show an approximate converse: if a hypergraph is $2k$-hyperedge-connected, then it is $k$-weak-partition-connected.
\begin{lemma}\label{lem:2k-conn-implies-k-weak-pc}
    Let $H = (V, E)$ be an undirected hypergraph that is $2k$-hyperedge-connected for some non-negative integer $k$. Then, $H$ is $k$-weak-partition-connected.
\end{lemma}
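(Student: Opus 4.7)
The plan is to prove the contrapositive-free direct statement: fix an arbitrary partition $\calP = \{V_1, \ldots, V_q\}$ of $V$ and show that $\sum_{e \in E}(\calP(e) - 1) \geq k(q-1)$. If $q = 1$, the right-hand side is $0$ and the claim is immediate, so from now on assume $q \geq 2$, in which case every part $V_i$ is a non-empty proper subset of $V$ and $2k$-hyperedge-connectivity applies.

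The first step is to lower bound the sum of cut values of the parts. Since $H$ is $2k$-hyperedge-connected, $d_H(V_i) \geq 2k$ for every $i \in [q]$, and summing over $i$ yields
\[
\sum_{i=1}^{q} d_H(V_i) \;\geq\; 2kq.
\]
The second step is to re-express the same sum by double-counting the contribution of each hyperedge. For a hyperedge $e \in E$ with $\calP(e) = t$, a part $V_i$ lies in $\delta_H(V_i)$'s preimage of $e$ precisely when $e$ meets $V_i$ but is not contained in $V_i$; if $t = 1$, no such part exists, and if $t \geq 2$, exactly $t$ parts qualify. Consequently
\[
\sum_{i=1}^{q} d_H(V_i) \;=\; \sum_{e \in E : \calP(e) \geq 2} \calP(e).
\]

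The third step is the elementary inequality $\calP(e) \leq 2(\calP(e) - 1)$, which holds for every $e$ with $\calP(e) \geq 2$ (and trivially both sides are zero when $\calP(e) = 1$). Summing this over all hyperedges gives
\[
\sum_{e \in E : \calP(e) \geq 2} \calP(e) \;\leq\; 2 \sum_{e \in E}(\calP(e) - 1).
\]
Combining the three displays yields $2 \sum_{e \in E}(\calP(e) - 1) \geq 2kq \geq 2k(q-1)$, which is the desired bound.

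There is no significant obstacle here; the proof is essentially a two-line double-counting argument. The only point worth highlighting is that the ``$2$'' in the hypothesis ``$2k$-hyperedge-connected'' is used exactly once, in the inequality $\calP(e) \leq 2(\calP(e)-1)$ for $\calP(e) \geq 2$, which is tight when $\calP(e) = 2$ (i.e., for hyperedges that straddle precisely two parts, such as ordinary graph edges crossing the partition). This is the source of the factor-of-two gap between hyperedge-connectivity and weak-partition-connectivity, consistent with the remark following the statement of Theorem~\ref{thm:rooted-steiner-orientation-in-hypergraphs}.
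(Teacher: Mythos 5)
Your proof is correct and is essentially the same argument as the paper's: both lower-bound the total cut contribution $\sum_i d_H(V_i)$ of the parts by $2k$ times the number of parts and then exploit that every crossing hyperedge meets at least two parts, which is exactly your inequality $\calP(e)\le 2(\calP(e)-1)$ (the paper packages this by contracting the parts and comparing minimum to average degree in the contracted hypergraph). Your direct double count is, if anything, a slight streamlining and even gives the marginally stronger bound $\sum_{e\in E}(\calP(e)-1)\ge kq$; the only nitpick is the parenthetical claim that ``both sides are zero when $\calP(e)=1$'' (the left side is $1$), which is harmless since such hyperedges are excluded from the left-hand sum and contribute $0$ to the right.
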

\begin{proof}
    Let $\calP = (P_1, P_2, \ldots, P_t)$ 
    be a partition of $V$ such that $t > 1$.
    Let $H' = (V', E')$ be the hypergraph obtained after contracting the parts of $\calP$ and discarding the singleton hyperedges. We note that $|V'| = t$ and $|e| \geq 2$ for each $e \in E'$. 
    The following claim gives a useful intermediate step.
    \begin{claim} \label{claim:weak-partition-conn-vs-min-cut}
        $\frac{\sum_{e \in E'}|e|}{t} \leq \frac{2\sum_{e \in E'}(|e| - 1)}{t - 1}.$
    \end{claim}
    \begin{proof}
    First, we observe that $2t|E'| \leq \sum_{e\in E'}|e|$ because $|e| \geq 2$ for each $e \in E'$.  
    Adding $(t-1)\sum_{e \in E'}|e| - 2t|E'|$ to the LHS of the previous equation, and $(2t-1)\sum_{e \in E'}|e| - 2t|E'|$ to the RHS of the previous equation, we obtain the following inequality: $(t-1)\sum_{e \in E'}|e| \leq 2t\sum_{e \in E'}|e| - 2t|E'|$.
    This inequality implies the claim since $t>1$. 
    \end{proof}
    With the above claim, we have the following:
$$2k \leq \min_{i \in [t]}\{d_{H}(P_i)\} 
            = \min_{u \in V'}\{d_{H'}(u)\}
            \leq \frac{\sum_{u \in V'}d_{H'}(u)}{t}
            = \frac{\sum_{e \in E'}|e|}{t}
            \leq \frac{2\sum_{e \in E'}(|e| - 1)}{t - 1}
            =  \frac{2\sum_{e \in E'}(\calP(e) - 1)}{|P| - 1}.$$
    Here, the first inequality is because $H$ is $2k$-hyperedge-connected and the third inequality is by \Cref{claim:weak-partition-conn-vs-min-cut}. Thus, $H$ is $k$-weak-partition-connected.
\end{proof}

We now restate and prove \Cref{thm:rooted-steiner-orientation-in-hypergraphs}.

\thmRootedSteinerArcConnectedOrientationsHypergraphs*
\begin{proof}
    Let $u_1, \ldots, u_p$ denote an arbitrary ordering of the Steiner vertices. By repeated application of \Cref{thm:complete-splitting-off}, there exists a hypergraph $H = (V, E_H)$ that can be obtained by complete h-splitting-off vertices $u_1, \ldots, u_p$ in order while preserving local connectivity. Let $H' = (T, E_H)$ be the hypergraph obtained from $H$ by deleting the isolated vertices. We note that $H$ is Steiner $2k$-hyperedge-connected, and consequently, the hypergraph $H' = (T, E_H)$ obtained from $H$ by deleting the isolated Steiner vertices is $2k$-hyperedge-connected. By \Cref{lem:2k-conn-implies-k-weak-pc}, the hypergraph $H'$ is $k$-weak-partition-connected. By \Cref{thm:frank:weak-pc-iff-rooted-hyparc-conn-orientation}, there exists a rooted $k$-hyperarc-connected orientation $\overrightarrow{H'} = (T, E_{H}, \head_{H'}:E_H \rightarrow T)$ of the hypergraph $H'$. Consequently, the orientation $\overrightarrow{H} = (V, E_H, \head_H:E_H \rightarrow V)$, where $\head_H(e) := \head_{H'}(e)$ for each $e \in E_H$, is a Steiner rooted $k$-hyperarc-connected orientation of $H$. By \Cref{lem:lifting-Steiner-orientation-through-splitting-off} and induction on the number of h-splitting-off operations needed to obtain $H$ from $G$, we have that $G$ has a Steiner rooted $k$-hyperarc-connected orientation.
\end{proof}

\begin{remark}
Certain remarks regarding our proof of Theorem \ref{thm:rooted-steiner-orientation-in-hypergraphs} are in order.
\begin{enumerate}
\item Our proof of \Cref{thm:rooted-steiner-orientation-in-hypergraphs} is constructive and leads to a polynomial-time algorithm to find a Steiner rooted $k$-hyperarc-connected orientation of a $2k$-hyperedge-connected-hypergraph. 
\item Our proof of \Cref{thm:rooted-steiner-orientation-in-hypergraphs} also implies a proof of Theorem \ref{thm:rooted-steiner-orientation-in-graphs}: to prove Theorem \ref{thm:rooted-steiner-orientation-in-graphs}, we start from a graph $G=(V, E)$, but our local connectivity preserving complete h-splitting-off operations at Steiner vertices results in a hypergraph $H=(T, E_H)$ that is $2k$-hyperedge-connected; by Lemma \ref{lem:2k-conn-implies-k-weak-pc}, the hypergraph $H$ is $k$-weak-partition-connected;   
now Theorem \ref{thm:frank:weak-pc-iff-rooted-hyparc-conn-orientation} gives a rooted $k$-hyperarc-connected orientation of the resulting hypergraph. Such an orientation is extended to a Steiner rooted $k$-arc-connected orientation of the graph $G=(V, E)$ using Lemma \ref{lem:lifting-Steiner-orientation-through-splitting-off}. 
\item Our proof of \Cref{thm:rooted-steiner-orientation-in-hypergraphs} reveals the source of the $2$-factor gap in the approximate min-max relation of Kir\'{a}ly and Lau for \textsc{Max Steiner Rooted-Connected Orientation Problem}: it arises from the $2$-factor gap between connectivity and weak-partition-connectivity of hypergraphs. This insight immediately leads to a minimal example that exhibits tightness of the $2$-factor in Theorem 
\ref{thm:rooted-steiner-orientation-in-graphs}: consider the cycle graph on $3$ vertices with all vertices being terminals and an arbitrary vertex chosen as the root. This graph is $2$-edge-connected, $1$-weak-partition-connected but not $2$-weak-partition-connected, and admits a Steiner rooted $2$-arc-connected orientation but does not admit a Steiner rooted $2$-arc-connected orientation. 
\end{enumerate}
\end{remark}

\section{Weak to Strong Cover in Strongly Polynomial Time}\label{sec:WeakToStrongCover}
In this section, we 
prove Theorem \ref{thm:WeakToStrongCover:main}. We restate it below. 
\thmWeakToStrongCover*

Our proof of \Cref{thm:WeakToStrongCover:main} will require the following definitions. 
\begin{definition}\label{def:(pHw)-tight-sets}
    Let $p:2^V\rightarrow\Z$ be a set function and $(H=(V,E), w)$ be a hypergraph. We will say that the hypergraph $H$ is \emph{non-empty} if $E\neq \emptyset$. 
    A set $X \subseteq V$ is said to be \emph{$(p, H, w)$-tight} if $b_{(H,w)}(X)=p(X)$. We let $\tightSetFamily{p,H,w}$ denote the family of $(p, H, w)$-tight sets and $\maximalTightSetFamily{p, H, w}$  denote the family of \emph{inclusion-wise maximal} sets of $\tightSetFamily{p,H,w}$. 
    For a hyperedge $e\in E$, we let $\maximalTightSetContainingHyperedge{p, H, w}(e)$ denote an arbitrary set in $\tightSetFamily{p, H, w}$ that contains $e$; if there is no such set, then we will use the convention that $\maximalTightSetContainingHyperedge{p, H, w}(e)$ is undefined.  
    Additionally, we define two operations as follows:
    \begin{enumerate}
        \item For hyperedges $e, f\in E$ and an integer  $\alpha\le \min\{w(e), w(f)\}$, the operation $\merge((H, w), e, f, \alpha)$ returns the hypergraph obtained from $(H, w)$ by decreasing the weight of hyperedges $e$ and $f$ by $\alpha$ and increasing the weight of the hyperedge $e\cup f$ by $\alpha$ (if $\alpha=w(e)$, then we discard $e$; if $\alpha=w(f)$, then we discard $f$; if $e\cup f\not\in E$, then we add the hyperedge $e\cup f$ and set its weight to be $\alpha$). 
        \item For a hyperedge $e\in E$ and an integer $\alpha \le w(e)$, the operation $\reduce((H, w), e, \alpha)$ returns the hypergraph obtained by decreasing the weight of the hyperedge $e$ by $\alpha$ (if $\alpha=w(e)$, then we discard $e$). 
    \end{enumerate}
\end{definition}

We first describe how to prove \Cref{thm:WeakToStrongCover:main} under the assumptions that the input function $p$ and hypergraph $\left(H = (V,E),w\right)$ satisfy the following two technical conditions:
\begin{enumerate}[label=(\alph*), ref=(\alph*)]
    \item \label{cond:WeakToStrong:(a)} 
    every hyperedge $e \in E$ is contained in some $(p, H, w)$-tight set, and
    \item \label{cond:WeakToStrong:(b)} $b_{(H,w)}(u) \not = 0$ for every $u \in V$, i.e. every vertex is contained in some hyperedge.
\end{enumerate}  
Under these assumptions, we will prove \Cref{thm:WeakToStrongCover:main} using the algorithm that we describe in \Cref{sec:WeakToStrongCover:Algorithm} (see \Cref{alg:WeakToStrongCover}). In \Cref{sec:WeakToStrongCover:UncrossingProperties}, we show certain properties of set families that will be used in the description and analysis of our algorithm. In \Cref{sec:WeakToStrongCover:PartialCorrectness}, we show that our algorithm terminates, and the hypergraph returned by the algorithm satisfies properties \ref{thm:WeakToStrongCover:main:(1)} and \ref{thm:WeakToStrongCover:main:(2)} (see \Cref{lem:weak-to-strong:correctness:main:strong-cover} and \Cref{lem:weak-to-strong:correctness:main:hyperedge-merge}). In \Cref{sec:WeakToStrongCover:NumberOfHyperedges}, we show that the hypergraph returned by the algorithm additionally satisfies property \ref{thm:WeakToStrongCover:main:(3)} (see \Cref{lem:WeakToStrongCover:NumberOfHyperedges}). In \Cref{sec:WeakToStrongCover:strongly-poly-runtime:main}, we show that our algorithm can be implemented to run in the claimed strongly polynomial time bound given oracle access to \functionMaximizationOracleStrongCover{p} (see \Cref{lem:WeakToStrong:strongly-polytime:main}).
\Cref{alg:WeakToStrongCover}, \Cref{lem:weak-to-strong:correctness:main:strong-cover}, \Cref{lem:weak-to-strong:correctness:main:hyperedge-merge}, \Cref{lem:WeakToStrongCover:NumberOfHyperedges}, and \Cref{lem:WeakToStrong:strongly-polytime:main} together complete the proof of \Cref{thm:WeakToStrongCover:main} under the assumptions that conditions \ref{cond:WeakToStrong:(a)} and \ref{cond:WeakToStrong:(b)} hold.

We now describe how to circumvent assumptions \ref{cond:WeakToStrong:(a)} and \ref{cond:WeakToStrong:(b)} on the input function $p$ and hypergraph $(H,w)$. Let $e \in E$ be a hyperedge not contained in any $(p, H, w)$-tight set. Let $\alpha_e^R \coloneqq  \min\{w(e), \beta^R(e)\}$, where $$\beta^R(e) \coloneqq  \min\{b_{(H,w)}(X) - p(X) : e \subseteq X \subseteq V\}.$$
We construct the hypergraph $(H_0^e:=(V, E_0^e:=\{e\}), w_0^e:E_0\rightarrow \{\alpha^R_e\})$. Now, consider the function $p' \coloneqq  p - d_{(H_0^e, w_0^e)}$ and the  hypergraph $(H', w'):=\reduce((H, w), e, \alpha^R_e)$; we term this operation as the reduce operation with respect to the hyperedge $e$. 
Iteratively performing this reduce operation with respect to every hyperedge $e \in E$ results in a function $p_1$ and a hypergraph $(H_1 = (V, E_1), w_1)$ such that $b_{(H_1, w_1)}(X)\ge p_1(X)$ for every $X\subseteq V$ and also satisfies condition \ref{cond:WeakToStrong:(a)}. Furthermore, suppose that $\zeros \coloneqq  \{u \in V : b_{H_1, w_1}(u) = 0\}$. Then, the function $p_2 \coloneqq  \functionContract{p_1}{\zeros}$ and hypergraph $(H_2 = (V - \zeros, E_1), w_1)$ is an instance with $b_{(H_2, w_1)}(X)\ge p_2(X)$ for every $X\subseteq V$ and also satisfies conditions \ref{cond:WeakToStrong:(a)} and \ref{cond:WeakToStrong:(b)}. Finally, we observe that if $(H^*, w^*)$ is a hypergraph that strongly covers the function $p_2$, then the hypergraph $(H^*+\sum_{e \in E}H_0^e, w^*+w_0^e)$ strongly covers the original function $p$. Moreover,  if $(H^*, w^*)$ is obtained by merging hyperedges of $(H_2, w_1)$, then the  hypergraph $(H^*+\sum_{e \in E}H_0^e, w^*+w_0^e)$ is obtained by merging hyperedges of $(H, w)$. Thus, our final algorithm is to first obtain the instance $(H_2, w_1)$ and an oracle for \functionMaximizationOracleStrongCover{p_2} using \functionMaximizationOracleStrongCover{p} such that the instance $(p_2, H_2, w_1)$ is such that $(H_2, w_1)$ weakly covers $p_2$ and satisfies the technical conditions \ref{cond:WeakToStrong:(a)} and \ref{cond:WeakToStrong:(b)}, obtain a hypergraph $(H^*, w^*)$ by the procedure described in the previous paragraph, and finally return $(H^*+\sum_{e \in E}H_0^e, w^*+w_0^e)$. 

\subsection{Weak to Strong Cover Algorithm}\label{sec:WeakToStrongCover:Algorithm}
Our algorithm is recursive and takes two inputs: (1) a hypergraph $(H = (V, E), w)$, and (2) a symmetric skew-supermodular function $p:2^V \rightarrow\Z$. We note that in contrast to the statement of \Cref{thm:WeakToStrongCover:main}, the inputs $(H,w)$ and $p$ to the algorithm are required to satisfy the two technical conditions 
\ref{cond:WeakToStrong:(a)} and \ref{cond:WeakToStrong:(b)} specified in the previous section. The algorithm returns a hypergraph $\left(H^*= \left(V, E^*\right), w^*\right)$. 

We now give an informal description of the algorithm (refer to \Cref{alg:WeakToStrongCover} for a formal description). If $E=\emptyset$, then the algorithm returns the empty hypergraph. Otherwise, the algorithm chooses two hyperedges $e, f \in E$  such that $\maximalTightSetContainingHyperedge{p, H, w}(e)\neq \maximalTightSetContainingHyperedge{p, H, w}(f)$ ($\maximalTightSetContainingHyperedge{p, H, w}(e)$ is well-defined for every hyperedge $e\in E$ owing to condition \ref{cond:WeakToStrong:(a)}; in the next section, we show that $\maximalTightSetContainingHyperedge{p, H, w}(e)$ is unique for every hyperedge $e\in E$ under condition \ref{cond:WeakToStrong:(a)} and also that if $(H, w)$ is not a strong cover of $p$, then there exists a pair of hyperedges $e, f\in E$ such that $\maximalTightSetContainingHyperedge{p, H, w}(e)\neq \maximalTightSetContainingHyperedge{p, H, w}(f)$). 
Next, 
the algorithm computes the value 
 $$\beta^M \coloneqq  \min\left\{b_{(H,w)}(X) - p(X) : X \subseteq V \text{ and } e \cap X, f \cap X \not = \emptyset\right\},$$ sets $\alpha^M \coloneqq  \min\{\beta^M, w(e), w(f)\}$, and defines the hypergraph $(H^M, w^M) := \merge((H, w), e, f, \alpha^M)$.
Next, the algorithm computes the value $$\beta^R \coloneqq  \min\left\{b_{(H^M, w^M)}\left(X\right) - p\left(X\right) : e \cup f \subseteq X \subseteq V\right\},$$ 
    sets $\alpha^R \coloneqq  \min\{\beta^R, w^M(e\cup f)\}$, and defines the hypergraph $(H^R, w^R) := \reduce((H^M, w^M), e \cup f, \alpha^R)$.
It computes hypergraph $(H_0=(V, E_0:=\{e\cup f\}), w_0: E_0\rightarrow \{\alpha^R\})$ and defines $p^R:=p-d_{(H_0, w_0)}$. 
Next, the algorithm computes the set $\zeros \coloneqq  \{u \in V : b_{(H^R, w^R)}(u) = 0\}$, and recursively calls itself with the inputs 
    $(H' = (V':=V-\zeros, E':=E^R),w':=w^R) $ and $p' \coloneqq  \functionContract{p^R}{\zeros}$ to obtain a hypergraph $\left(H_0^*, w_0^*\right)$. Finally, the algorithm obtains the hypergraph $(G, c)$ from $(H_0^*, w_0^*)$ by adding the set $\zeros$ of vertices and returns the hypergraph $(G+H_0, c + w_0)$.

\begin{algorithm}[!h]
\caption{Weak to Strong Cover}\label{alg:WeakToStrongCover}
$\mathbf{\textsc{Algorithm}}\left(H = (V, E), w, p\right):$
\begin{algorithmic}[1]
    \If{$E = \emptyset$} \textsc{Base Case:}
    \State{\Return{$(H, w)$.}}
    \Else:
    \State{Pick a pair of hyperedges $e, f \in E $ s.t. $\maximalTightSetContainingHyperedge{p, H, w}(e) \not = \maximalTightSetContainingHyperedge{p, H, w}(f)$}\label{alg:WeakToStrongCover:(4)}
    \State{$\alpha^M \coloneqq  \min\begin{cases}
    \beta^M \coloneqq  \min\left\{b_{(H,w)}(X) - p(X) : X \subseteq V \text{ and } e \cap X, f \cap X \not = \emptyset\right\} & \\
    \min\left\{w(e), w(f)\right\}&
    \end{cases}
    $}\label{alg:WeakToStrongCover:(5)}
    \State{$(H^M, w^M):=\merge\left(\left(H, w\right), e, f, \alpha^M\right)$}\label{alg:WeakToStrongCover:(6)}
    \State{$\alpha^R \coloneqq  \min \begin{cases}
        \beta^R \coloneqq  \min\left\{b_{(H, w^M)}\left(X\right) - p\left(X\right) : e \cup f \subseteq X \subseteq V\right\} &\\
        w^M(e \cup f) &
    \end{cases}$}\label{alg:WeakToStrongCover:(7)}
    \State{$(H^R, w^R) \coloneqq \reduce\left(\left(H^M, w^M\right), e\cup f, \alpha^R\right)$}\label{alg:WeakToStrongCover:(8)}
    \State{Construct $\left(H_0:=\left(V, E_0:=\left\{e\cup f\right\}\right), w_0: E_0\rightarrow \left\{\alpha^R\right\}\right)$}\label{alg:WeakToStrongCover:(9)}
    \State{$p^R\coloneqq p - d_{(H_0, w_0)}$}\label{alg:WeakToStrongCover:(10)}
    
    \State{$\zeros \coloneqq  \left\{v \in V : b_{(H^R, w^R)}(v) = 0\right\}$}\label{alg:WeakToStrongCover:(11)}
    
    \State{$V' \coloneqq  V - \zeros$}\label{alg:WeakToStrongCover:(12)}
    \State{$p' \coloneqq  \functionContract{p^R}{\zeros}$}\label{alg:WeakToStrongCover:(13)}
    
    \State{$(H_0^*, w_0^*):=\textsc{Algorithm}\left(H' = \left(V', E' := E^R\right), w' := w^R , p'\right)$}\label{alg:WeakToStrongCover:(14)}
    
    \State{Obtain $(G, c)$ from $(H_0^*, w_0^*)$ by adding vertices \zeros}\label{alg:WeakToStrongCover:(15)}
    
    \State{\Return{$(G+H_0, c+w_0 )$}}\label{alg:WeakToStrongCover:(16)}
    \EndIf
\end{algorithmic}
\end{algorithm}

\subsection{Properties of Tight Sets}\label{sec:WeakToStrongCover:UncrossingProperties}
In this section, we show properties of the set families $\tightSetFamily{p, H, w}$ and $\maximalTightSetFamily{p, H, w}$ which will be useful to show correctness and convergence of \Cref{alg:WeakToStrongCover} in subsequent sections.
In particular, \Cref{lem:weak-to-strong:uncrossing:uncrossing-lemma} below shows an important property of the family of tight sets and \Cref{cor:weak-to-strong:uncrossing:maximal-tight-set-containing-hyperedge-unique} shows that for every hyperedge $e\in E$, there exists a unique maximal $(p, H, w)$-tight set containing $e$ under condition \ref{cond:WeakToStrong:(a)}. Furthermore, \Cref{lem:weak-to-strong:uncrossing:maximal-tight-set-family-disjoint} shows that the family $\maximalTightSetFamily{p, H, w}$ is disjoint under condition \ref{cond:WeakToStrong:(b)}, and consequently, \Cref{cor:weak-to-strong:uncrossing:exist-two-hyperedges-contained-in-disjoint-maximal-tight-sets} implies  that \Cref{alg:WeakToStrongCover} \Cref{alg:WeakToStrongCover:(4)} is well-defined.

\begin{lemma}\label{lem:weak-to-strong:uncrossing:uncrossing-lemma}
Let $p:2^V\rightarrow\Z$ be a skew-supermodular function and let $\left(H = (V, E), w:E \rightarrow\Z_+\right)$ be a hypergraph such that $b_{(H,w)}(Z) \geq p(Z)$ for every $Z \subseteq V$. Let $X, Y \in \tightSetFamily{p, H, w}$ be sets such that $X\cap Y \not = \emptyset$.
If there exists a hyperedge $e \in E$ such that (1) $e$ intersects the set $X\cap Y$ and 
(2) $e$ intersects at most one of the sets $X - Y$ and $Y - X$, then $X\cup Y, X \cap Y \in \tightSetFamily{p, H, w}$.
\end{lemma}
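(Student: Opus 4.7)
The plan is a standard uncrossing argument that exploits the hyperedge $e$ to rule out the ``difference'' side of skew-supermodularity. Recall two facts about the coverage function: (a) $b_{(H,w)}$ is submodular, so $b_{(H,w)}(X)+b_{(H,w)}(Y)\ge b_{(H,w)}(X\cap Y)+b_{(H,w)}(X\cup Y)$, and (b) $b_{(H,w)}(X)+b_{(H,w)}(Y)\ge b_{(H,w)}(X-Y)+b_{(H,w)}(Y-X)$. Both can be checked hyperedge-by-hyperedge from the definition $b_{(H,w)}(Z)=\sum_{e'\colon e'\cap Z\ne\emptyset}w(e')$.

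First, I would use the hypothesized hyperedge $e$ to show the \emph{strict} inequality
\[
b_{(H,w)}(X-Y)+b_{(H,w)}(Y-X)\;<\;b_{(H,w)}(X)+b_{(H,w)}(Y).
\]
Per-hyperedge, the contribution of any $e'\in E$ to $b(X)+b(Y)-b(X-Y)-b(Y-X)$ is non-negative by (b). For the specific hyperedge $e$ from the hypothesis, $e$ meets $X\cap Y$, hence contributes $1$ to each of $\indicator[e\cap X\ne\emptyset]$ and $\indicator[e\cap Y\ne\emptyset]$, while it meets \emph{at most one} of $X-Y,Y-X$, so the subtracted terms contribute at most $1$. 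Thus $e$ contributes at least $w(e)\ge 1$, and the strict inequality above follows. Combined with $p\le b$ pointwise, this yields
\[
p(X-Y)+p(Y-X)\;<\;p(X)+p(Y).
\]

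Next, I would apply skew-supermodularity of $p$ at the pair $X,Y$. Since the ``difference'' inequality $p(X)+p(Y)\le p(X-Y)+p(Y-X)$ is ruled out by the display above, the ``union/intersection'' inequality must hold:
\[
p(X)+p(Y)\;\le\;p(X\cap Y)+p(X\cup Y).
\]
Finally, using $p\le b$ pointwise, submodularity of $b$, and tightness of $X$ and $Y$, I chain
\[
p(X)+p(Y)\;\le\;p(X\cap Y)+p(X\cup Y)\;\le\;b_{(H,w)}(X\cap Y)+b_{(H,w)}(X\cup Y)\;\le\;b_{(H,w)}(X)+b_{(H,w)}(Y)\;=\;p(X)+p(Y).
\]
Equality throughout forces $p(X\cap Y)=b_{(H,w)}(X\cap Y)$ and $p(X\cup Y)=b_{(H,w)}(X\cup Y)$, i.e.\ $X\cap Y,\, X\cup Y\in\tightSetFamily{p,H,w}$, as desired.

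There is no substantial obstacle: the only delicate point is the strict inequality in the first step, which is exactly what the structural assumption on $e$ provides. A minor thing to double-check is fact (b) for the coverage function (straightforward case analysis on whether $e'$ meets $X\cap Y$, $X-Y$, $Y-X$, or none), and the fact that skew-supermodularity truly is an ``or'' that lets us force the union/intersection inequality once the difference inequality is excluded.
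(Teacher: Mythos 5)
Your proposal is correct and follows essentially the same route as the paper: use the hyperedge $e$ to get a strict inequality that rules out the difference-side option of skew-supermodularity, then chain $p\le b$, submodularity of $b$, and tightness of $X,Y$ to force equality and conclude $X\cap Y, X\cup Y$ are tight. The only cosmetic nit is that your first displayed deduction ($p(X-Y)+p(Y-X)<p(X)+p(Y)$) uses the tightness $b(X)=p(X)$, $b(Y)=p(Y)$ in addition to $p\le b$, which you clearly have and invoke explicitly later.
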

\begin{proof}
We note that the claim holds if $X\subseteq Y$ or $Y \subseteq X$. Suppose that $X - Y, Y - X \not= \emptyset$. First, suppose that the function $p$ satisfies $p(X) + p(Y) \leq p(X - Y) + p(Y - X)$. Then, the following gives us a contradiction:
\begin{align*}
    b_{(H,w)}(X) + b_{(H,w)}(Y) 
    &= p(X) + p(Y) &\\
    & \leq p(X - Y) + p(Y - X) & \\
    & \leq b_{(H,w)}(X - Y) + b_{(H,w)}(Y - X) & \\
    & < b_{(H,w)}(X) + b_{(H,w)}(Y).
\end{align*}
Here, the first equality is because $X, Y \in \tightSetFamily{p, H, w}$. The second inequality is because $b_{(H,w)}(Z) \geq p(Z)$ for every $Z\subseteq V$. The final inequality is because the hyperedge $e$ intersects at most one of the sets $X - Y$ and $Y - X$.
Thus, the function $p$ satisfies $p(X) + p(Y) \leq p(X\cap Y) + p(X \cup Y)$. Consequently, we have that
\begin{align*}
    b_{(H,w)}(X) + b_{(H,w)}(Y) 
    &= p(X) + p(Y) &\\
    & \leq p(X \cup Y) + p(X \cap Y) &\\
    & \leq b_{(H,w)}(X\cup Y) + b_{(H,w)}(X\cap Y) &\\
    & \leq b_{(H,w)}(X) + b_{(H,w)}(Y).&
\end{align*}
Here, the first equality is because $X, Y \in \tightSetFamily{p, H, w}$. The second inequality is because $b_{(H,w)}(Z) \geq p(Z)$ for every $Z\subseteq V$. The final inequality is because the coverage function $b_{(H, w)}$ is submodular. Thus, all inequalities in the above sequence are in fact equalities, and consequently we have that $X\cup Y, X\cap Y \in \tightSetFamily{p, H, w}$.
\end{proof}

\begin{corollary}\label{cor:weak-to-strong:uncrossing:maximal-tight-set-containing-hyperedge-unique}
Let $p:2^V\rightarrow\Z$ be a skew-supermodular function and let $\left(H = (V, E), w:E \rightarrow\Z_+\right)$ be a hypergraph such that  
every hyperedge $e \in E$ is contained in some $(p, H, w)$-tight set
and $b_{(H,w)}(Z) \geq p(Z)$ for every $Z \subseteq V$. 
Then, for every hyperedge $e \in E$, there is a unique set from the family $\maximalTightSetFamily{p, H, w}$ containing the hyperedge $e$.
\end{corollary}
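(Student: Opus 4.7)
The plan is to derive the corollary directly from Lemma \ref{lem:weak-to-strong:uncrossing:uncrossing-lemma} by a standard uncrossing argument. Fix a hyperedge $e \in E$. By hypothesis, there exists at least one $(p,H,w)$-tight set containing $e$, and hence (since $V$ is finite) at least one inclusion-wise maximal $(p,H,w)$-tight set containing $e$. So the only thing to show is uniqueness.

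Suppose for contradiction that there exist two distinct maximal $(p,H,w)$-tight sets $X, Y \in \maximalTightSetFamily{p,H,w}$ with $e \subseteq X$ and $e \subseteq Y$. Since $e \subseteq X \cap Y$ and $e \neq \emptyset$, the intersection $X \cap Y$ is non-empty and the hyperedge $e$ intersects $X \cap Y$. Moreover, because $e \subseteq X$ we have $e \cap (Y - X) = \emptyset$, and because $e \subseteq Y$ we have $e \cap (X - Y) = \emptyset$; in particular, $e$ intersects at most one of $X - Y$ and $Y - X$ (in fact neither). Therefore both hypotheses of Lemma \ref{lem:weak-to-strong:uncrossing:uncrossing-lemma} are satisfied, and we conclude that $X \cup Y \in \tightSetFamily{p,H,w}$.

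Since $X$ and $Y$ are distinct, at least one of them is a proper subset of $X \cup Y$, which contradicts the maximality of $X$ (or of $Y$) in $\tightSetFamily{p,H,w}$. Hence no two distinct maximal $(p,H,w)$-tight sets can contain $e$, establishing uniqueness. The only step that requires any work is the verification that Lemma \ref{lem:weak-to-strong:uncrossing:uncrossing-lemma} applies, and that verification is immediate once one notes the containment $e \subseteq X \cap Y$. There is no substantive obstacle in this proof; it is a one-line consequence of the uncrossing lemma.
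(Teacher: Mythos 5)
Your proof is correct and follows essentially the same route as the paper: apply the uncrossing lemma to two distinct maximal tight sets $X,Y$ containing $e$, using $e\subseteq X\cap Y$ to verify its hypotheses, and contradict maximality via $X\cup Y\in \tightSetFamily{p,H,w}$. No issues.
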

\begin{proof}
    Let $e \in E$ be a hyperedge. We note that there exists a $(p, H, w)$-tight set containing the hyperedge $e$. Consequently, there exists a set in the family \maximalTightSetFamily{p, H, w} containing the hyperedge $e$. By way of contradiction, suppose that $X, Y \in \maximalTightSetFamily{p, H, w}$ are distinct sets containing the hyperedge $e$. We note that $X - Y, Y - X \not = \emptyset$ by the maximality of the sets $X, Y \in \maximalTightSetFamily{p, H, w}$.
    Also, $X\cap Y\neq \emptyset$ since $e$ is contained in $X\cap Y$. Moreover, $e$ is disjoint from both $X-Y$ and $Y-X$. Hence, by \Cref{lem:weak-to-strong:uncrossing:uncrossing-lemma}, we have that $X\cup Y \in \tightSetFamily{p, H, w}$, contradicting maximality of the sets $X, Y \in \maximalTightSetFamily{p, H, w}$.
\end{proof}

The next lemma shows that for a \emph{skew-supermodular} function $p$ and hypergraph $(H, w)$ satisfying conditions  \ref{cond:WeakToStrong:(a)} and \ref{cond:WeakToStrong:(b)}, the family $\maximalTightSetFamily{p, H, w}$ is disjoint. The corollary following the lemma says that if the function $p$ is also \emph{symmetric} and the hypergraph $H$ is non-empty, then there exist two hyperedges that are each contained in distinct maximal $(p, H, w)$-tight sets w.r.t. $p$ and $(H, w)$.  We note that this makes \Cref{alg:WeakToStrongCover} \Cref{alg:WeakToStrongCover:(4)} well-defined.

\begin{lemma}\label{lem:weak-to-strong:uncrossing:maximal-tight-set-family-disjoint}
    Let $p:2^V\rightarrow\Z$ be a skew-supermodular function and let $\left(H = (V, E), w:E \rightarrow\Z_+\right)$ be a hypergraph 
    such that every hyperedge $e \in E$ is contained in some $(p, H, w)$-tight set, $b_{(H,w)}(u) \not = 0$ for every $u \in V$, and $b_{(H,w)}(X) \geq p(X)$ for every $X \subseteq V$. 
    Then, the family $\maximalTightSetFamily{p, H, w}$ is a disjoint family.
\end{lemma}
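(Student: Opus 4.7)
The plan is to argue by contradiction using \Cref{lem:weak-to-strong:uncrossing:uncrossing-lemma} twice. Suppose towards contradiction that there exist two \emph{distinct} sets $X, Y \in \maximalTightSetFamily{p, H, w}$ with $X \cap Y \neq \emptyset$. The goal is to produce a hyperedge $e \in E$ that is entirely contained in $X \cap Y$, since then applying the uncrossing lemma to $X$ and $Y$ with this hyperedge would yield that $X \cup Y \in \tightSetFamily{p, H, w}$, which together with the maximality of $X$ and $Y$ would force $X = Y$, a contradiction.

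To locate such a hyperedge, first pick any vertex $u \in X \cap Y$. By condition \ref{cond:WeakToStrong:(b)}, we have $b_{(H,w)}(u) \neq 0$, so there exists a hyperedge $e \in E$ containing $u$. By condition \ref{cond:WeakToStrong:(a)}, the hyperedge $e$ is contained in some $(p, H, w)$-tight set, and hence there exists $Z \in \maximalTightSetFamily{p, H, w}$ with $e \subseteq Z$. I would then apply \Cref{lem:weak-to-strong:uncrossing:uncrossing-lemma} to the pair $X$ and $Z$ using the hyperedge $e$: the intersection $X \cap Z$ is non-empty because $u \in X \cap Z$; the hyperedge $e$ meets $X \cap Z$ at the vertex $u$; and since $e \subseteq Z$, we have $e \cap (X - Z) = \emptyset$, so $e$ intersects at most one of $X - Z$ and $Z - X$. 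Hence $X \cup Z \in \tightSetFamily{p, H, w}$, and the maximality of $X$ implies $Z \subseteq X$. By the same argument applied to $Y$ and $Z$ with the hyperedge $e$, we obtain $Z \subseteq Y$, and therefore $e \subseteq Z \subseteq X \cap Y$.

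Now that I have a hyperedge $e \subseteq X \cap Y$, I apply \Cref{lem:weak-to-strong:uncrossing:uncrossing-lemma} a second time, this time to the pair $X$ and $Y$ with the hyperedge $e$. The intersection $X \cap Y$ is non-empty, the hyperedge $e$ meets $X \cap Y$ (in fact $e$ is contained in it), and $e$ intersects neither $X - Y$ nor $Y - X$, so the hypotheses are satisfied. Thus $X \cup Y \in \tightSetFamily{p, H, w}$; since $X \subseteq X \cup Y$ and $X$ is maximal, $X = X \cup Y$, and similarly $Y = X \cup Y$, yielding $X = Y$ and contradicting distinctness. This contradiction proves that $\maximalTightSetFamily{p, H, w}$ is a disjoint family.

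The main conceptual step, and the only step that is not immediate from the uncrossing lemma, is the observation that one can always find a hyperedge inside $X \cap Y$ via an intermediate maximal tight set $Z$: conditions \ref{cond:WeakToStrong:(a)} and \ref{cond:WeakToStrong:(b)} together guarantee that the hyperedge witnessing the non-emptiness of a vertex in $X \cap Y$ is itself contained in a maximal tight set, and two applications of uncrossing then force that tight set to lie inside both $X$ and $Y$. I do not anticipate any real obstacle beyond correctly checking the hypotheses of the uncrossing lemma in each of the two applications.
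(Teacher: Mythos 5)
Your proof is correct and takes essentially the same route as the paper's: both pick a vertex $u \in X \cap Y$, use $b_{(H,w)}(u) \neq 0$ to get a hyperedge $e \ni u$, bring in the maximal tight set containing $e$, and finish with \Cref{lem:weak-to-strong:uncrossing:uncrossing-lemma}. The only difference is organizational --- the paper splits into cases according to whether $e$ meets both $X - Y$ and $Y - X$ and contradicts maximality of $X$ after at most two uncrossings, whereas you first deduce $Z \subseteq X$ and $Z \subseteq Y$ and then uncross $X$ with $Y$ a third time; both arguments are valid.
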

\begin{proof}
By way of contradiction, let $X, Y \in \maximalTightSetFamily{p, H, w}$ be distinct sets such that $X\cap Y \not = \emptyset$. By maximality of the sets $X, Y$, we also have that $X - Y, Y - X \not = \emptyset$. Let $u \in X\cap Y$. Since $b_{(H,w)}(u) > 0$, there exists a hyperedge $e \in E$ such that $u \in e$. We note that if either $e\cap (X - Y) = \emptyset$ or $e\cap (Y - X) = \emptyset$, then the set $X \cup Y \in \tightSetFamily{p, H, w}$ by \Cref{lem:weak-to-strong:uncrossing:uncrossing-lemma}, contradicting maximality of $X, Y \in \maximalTightSetFamily{p, H, w}$. Thus, we have that $e \cap (X - Y), e \cap (Y - X) \not= \emptyset$. Furthermore, since the hyperedge $e$ is contained in some $(p, H, w)$-tight set, we have that  $\maximalTightSetContainingHyperedge{p, H, w}(e) \not = \emptyset$. Consequently, we have that $\maximalTightSetContainingHyperedge{p, H, w}(e)\cap (X - Y), \maximalTightSetContainingHyperedge{p, H, w}(e) \cap (Y - X) \not = \emptyset$. However, $e \subseteq \maximalTightSetContainingHyperedge{p, H, w}(e)$ and thus, $e \cap (X - \maximalTightSetContainingHyperedge{p, H, w}(e)) = \emptyset$. By \Cref{lem:weak-to-strong:uncrossing:uncrossing-lemma}, we have that the set $X \cup \maximalTightSetContainingHyperedge{p, H, w}(e) \in \tightSetFamily{p, H, w}$, contradicting maximality of the set $X \in \maximalTightSetFamily{p, H, w}$.
\end{proof}

We recall that a hypergraph is non-empty if it contains at least one hyperedge. 
\begin{corollary}[\hspace{-1sp}\cite{Bernath-Kiraly}]
\label{cor:weak-to-strong:uncrossing:exist-two-hyperedges-contained-in-disjoint-maximal-tight-sets}
Let $p:2^V\rightarrow\Z$ be a symmetric skew-supermodular function and $\left(H = (V, E), w:E \rightarrow\Z_+\right)$ be a non-empty hypergraph 
such that every hyperedge $e \in E$ is contained in some $(p, H, w)$-tight set, $b_{(H,w)}(u) \not = 0$ for every $u \in V$, and $b_{(H,w)}(X) \geq p(X)$ for every $X \subseteq V$. 
Then, there exist distinct hyperedges $e,f \in E$ such that $\maximalTightSetContainingHyperedge{p, H, w}(e) \cap \maximalTightSetContainingHyperedge{p, H, w}(f) = \emptyset$.
\end{corollary}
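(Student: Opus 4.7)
The plan is to argue by contradiction, leveraging the disjointness of the maximal tight set family together with the symmetry of $p$. By \Cref{cor:weak-to-strong:uncrossing:maximal-tight-set-containing-hyperedge-unique}, condition \ref{cond:WeakToStrong:(a)} ensures that $\maximalTightSetContainingHyperedge{p, H, w}(e)$ is a well-defined (unique) maximal $(p, H, w)$-tight set for every $e \in E$. By \Cref{lem:weak-to-strong:uncrossing:maximal-tight-set-family-disjoint}, the family $\maximalTightSetFamily{p, H, w}$ is disjoint. Hence, the conclusion of the corollary will fail only if there exists a single set $X \in \maximalTightSetFamily{p, H, w}$ such that $\maximalTightSetContainingHyperedge{p, H, w}(e) = X$ for every $e \in E$.

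Suppose for contradiction that such an $X$ exists. First, I will show that $X = V$. Indeed, by condition \ref{cond:WeakToStrong:(b)}, for every vertex $u \in V$ we have $b_{(H,w)}(u) > 0$, so there exists some hyperedge $e \in E$ with $u \in e$. Since $e \subseteq \maximalTightSetContainingHyperedge{p, H, w}(e) = X$, we have $u \in X$. Thus $V \subseteq X$, and so $X = V$.

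Next, since $X = V$ is $(p, H, w)$-tight, we have $p(V) = b_{(H, w)}(V) = \sum_{e \in E} w(e) = w(E) > 0$, where the strict positivity uses that $H$ is non-empty and $w: E \to \Z_+$. On the other hand, symmetry of $p$ gives $p(V) = p(\emptyset)$, and the weak cover hypothesis $b_{(H, w)} \geq p$ evaluated at the empty set yields $p(\emptyset) \leq b_{(H, w)}(\emptyset) = 0$. Combining, $0 < w(E) = p(V) = p(\emptyset) \leq 0$, a contradiction. Hence there must exist distinct hyperedges $e, f \in E$ with $\maximalTightSetContainingHyperedge{p, H, w}(e) \neq \maximalTightSetContainingHyperedge{p, H, w}(f)$, and by the disjointness of $\maximalTightSetFamily{p, H, w}$, these two maximal tight sets are disjoint, completing the proof. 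There is no substantive obstacle here; the only subtlety is invoking symmetry together with weak coverage at $\emptyset$ to force a contradiction with non-emptiness of $E$.
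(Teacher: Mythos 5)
Your proof is correct, but it is organized differently from the paper's. The paper fixes an arbitrary hyperedge $e$ and shows directly that some hyperedge $f$ must lie entirely inside $V - \maximalTightSetContainingHyperedge{p,H,w}(e)$: if not, every hyperedge meeting $V - \maximalTightSetContainingHyperedge{p,H,w}(e)$ also meets $\maximalTightSetContainingHyperedge{p,H,w}(e)$, so symmetry and tightness give $p(V-\maximalTightSetContainingHyperedge{p,H,w}(e)) = p(\maximalTightSetContainingHyperedge{p,H,w}(e)) = b_{(H,w)}(\maximalTightSetContainingHyperedge{p,H,w}(e)) > b_{(H,w)}(V-\maximalTightSetContainingHyperedge{p,H,w}(e))$, violating the weak cover; disjointness of $\maximalTightSetFamily{p,H,w}$ is only invoked at the very end to conclude $\maximalTightSetContainingHyperedge{p,H,w}(e)\cap\maximalTightSetContainingHyperedge{p,H,w}(f)=\emptyset$. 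You instead negate the conclusion globally, use \Cref{lem:weak-to-strong:uncrossing:maximal-tight-set-family-disjoint} up front to collapse all the sets $\maximalTightSetContainingHyperedge{p,H,w}(e)$ into a single maximal tight set $X$, use condition \ref{cond:WeakToStrong:(b)} to force $X=V$, and then derive the contradiction $0 < w(E) = p(V) = p(\emptyset) \leq b_{(H,w)}(\emptyset)=0$ --- which is exactly the paper's inequality specialized to the pair $(V,\emptyset)$. Both arguments hinge on symmetry of $p$ plus the weak-cover inequality, so the difference is one of decomposition rather than of core idea; what the paper's local version buys is the slightly stronger statement that \emph{for every} hyperedge $e$ there is an $f$ with $\maximalTightSetContainingHyperedge{p,H,w}(f)$ disjoint from $\maximalTightSetContainingHyperedge{p,H,w}(e)$ (and hence $f \subseteq V - \maximalTightSetContainingHyperedge{p,H,w}(e)$), a form that is convenient in the algorithm, while your version is a clean global argument that leans more heavily on the disjointness lemma and on condition \ref{cond:WeakToStrong:(b)} (the latter is not used directly in the paper's corollary proof, only inside the disjointness lemma). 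The only cosmetic caveat in your write-up is the implicit (and standard) assumption that hyperedges are non-empty when writing $b_{(H,w)}(V)=w(E)$; this does not affect correctness.
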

\begin{proof}
Let $e \in E$ be a hyperedge.
We will show that that there exists a hyperedge $f \in E$ such that $f \subseteq V - \maximalTightSetContainingHyperedge{p, H, w}(e)$. The claim will then follow by \Cref{lem:weak-to-strong:uncrossing:maximal-tight-set-family-disjoint} since $\maximalTightSetContainingHyperedge{p, H, w}(f) \not = \emptyset$ (because the hyperedge $f$ is contained in some $(p, H, w)$-tight set) and $\maximalTightSetContainingHyperedge{p, H, w}(e), \maximalTightSetContainingHyperedge{p, H , w}(f) \in \maximalTightSetFamily{p, H, w}$.
By way of contradiction, suppose that there is no hyperedge contained in the set $V - \maximalTightSetContainingHyperedge{p, H, w}(e)$. Then, we have that $p(V - \maximalTightSetContainingHyperedge{p, H, w}(e)) = p(\maximalTightSetContainingHyperedge{p, H, w}(e)) = b_{(H,w)}(\maximalTightSetContainingHyperedge{p, H, w}(e)) > b_{(H,w)}(V - \maximalTightSetContainingHyperedge{p, H, w}(e))$, a contradiction. Here, the first equality is because the function $p$ is symmetric and the final inequality is because every hyperedge that intersects the set $V - \maximalTightSetContainingHyperedge{p, H, w}(e)$ also intersects the set $\maximalTightSetContainingHyperedge{p, H, w}(e)$ by our contradiction assumption.
\end{proof}

\subsection{Termination and Partial Correctness of the Algorithm}
\label{sec:WeakToStrongCover:PartialCorrectness}
In this section, we show that every recursive call of \Cref{alg:WeakToStrongCover} is well-defined. 
Furthermore, we show that \Cref{alg:WeakToStrongCover} terminates and that the hypergraph returned by the algorithm satisfies properties \ref{thm:WeakToStrongCover:main:(1)} and \ref{thm:WeakToStrongCover:main:(2)} of \Cref{thm:WeakToStrongCover:main}. The main results of this section are \Cref{lem:weak-to-strong:correctness:main:strong-cover} and \Cref{lem:weak-to-strong:correctness:main:hyperedge-merge}. We prove certain additional lemmas in this section which will be useful in subsequent sections. 
 
    Suppose that the input to \Cref{alg:WeakToStrongCover} is a non-empty hypergraph $(H = (V, E), w)$ and a symmetric skew-supermodular function $p:2^V\rightarrow \Z$ 
such that every hyperedge $e \in E$ is contained in some $(p, H, w)$-tight set, $b_{(H,w)}(u) \not = 0$ for every $u \in V$, and $b_{(H,w)}(X) \geq p(X)$ for every $X \subseteq V$. 
  Then, by \Cref{cor:weak-to-strong:uncrossing:exist-two-hyperedges-contained-in-disjoint-maximal-tight-sets}, there exist hyperedges $e , f \in E$ such that $\maximalTightSetContainingHyperedge{p, H, w}(e) \not = \maximalTightSetContainingHyperedge{p, H, w}(f)$. In particular, \Cref{alg:WeakToStrongCover} \Cref{alg:WeakToStrongCover:(4)} is well-defined. 
  The next two lemmas establish several useful properties of the intermediate quantities used in the algorithm.

\begin{lemma}\label{lem:WeakToStrong:properties-of-alpha^M-and-alpha^R}
    Suppose that the input to \Cref{alg:WeakToStrongCover} is a non-empty hypergraph $(H = (V, E), w)$ and a symmetric skew-supermodular function $p:2^V\rightarrow \Z$ 
such that every hyperedge $\Tilde{e} \in E$ is contained in some $(p, H, w)$-tight set, $b_{(H,w)}(u) \not = 0$ for every $u \in V$, and $b_{(H,w)}(X) \geq p(X)$ for every $X \subseteq V$. 
    Let $e, f, \alpha^M, (H^M, w^M), \alpha^R$, $(H^R, w^R),(H_0, w_0), p^R$ and $(H' = (V', E'), w')$ be as defined by \Cref{alg:WeakToStrongCover}. 
    Then, we have the following:
    \begin{enumerate}[label=(\arabic*)]
        \item $\alpha^M \in \Z_+$,
        \item 
        $E^M=E-\{e\}\cdot \indicator_{\alpha^M= w(e)}- \{f\}\cdot \indicator_{\alpha^M = w(f)} + \{e\cup f\}$ with the weights of the hyperedges in $E^M$ being  
        \[
        w^M(\Tilde{e})=
        \begin{cases}
            w(\Tilde{e}) &\text { if } \Tilde{e}\in E-\{e, f, e\cup f\},\\
            w(e\cup f) + \alpha^M &\text { if } \Tilde{e}=e\cup f \text{ and } e\cup f \in E, \\
            \alpha^M &\text { if } \Tilde{e}=e\cup f \text{ and } e\cup f \not \in E, \\
            w(e) - \alpha^M &\text { if } \Tilde{e}=e \text{ and } \alpha^M<w(e), \\
            w(f) - \alpha^M &\text { if } \Tilde{e}=f \text{ and } \alpha^M<w(f).
        \end{cases}
        \]
        \item $b_{(H^M, w^M)}(X) \geq p(X)$ for every $X \subseteq V$,
        \item $\alpha^R \in \Z_{\geq 0}$, 
        \item 
        $E' = E^R = E^M - \{e\cup f\}\cdot \indicator_{\alpha^R=w^M(e\cup f)}$ with the weights of the hyperedges in $E' = E^R$ being 
        \[
        w'(\Tilde{e}) = w^R(\Tilde{e})=
        \begin{cases}
            w^M(\Tilde{e}) &\text { if } \Tilde{e}\in E^M-\{e\cup f\},\\
            w^M(e\cup f) - \alpha^R &\text { if } \Tilde{e}=e\cup f \text{ and } \alpha^R< w^M(e\cup f), \text{ and}  \\
        \end{cases}
        \]
        \item $b_{(H^R, w^R)}(X) \geq p^R(X)$ for every $X \subseteq V$. 
    \end{enumerate}
\end{lemma}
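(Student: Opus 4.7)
Parts (2) and (5) are essentially bookkeeping: they follow by unpacking the definitions of \merge and \reduce in Definition \ref{def:(pHw)-tight-sets}, combined with the fact that $\alpha^M\le\min\{w(e),w(f)\}$ and $\alpha^R\le w^M(e\cup f)$ by construction in lines \ref{alg:WeakToStrongCover:(5)} and \ref{alg:WeakToStrongCover:(7)}. I would just record these case analyses without much commentary.

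The main work is in parts (1), (3), (4) and (6). For (1), the plan is to argue $\beta^M\ge 1$, since then $\alpha^M\ge 1$ follows from $w(e),w(f)\ge 1$. Suppose for contradiction that $\beta^M=0$, so there is a $(p,H,w)$-tight set $X$ with $e\cap X\neq\emptyset$ and $f\cap X\neq\emptyset$. Let $X^*\in\maximalTightSetFamily{p,H,w}$ be a maximal tight set containing $X$. Then $\maximalTightSetContainingHyperedge{p,H,w}(e)\cap X^*\supseteq e\cap X\neq\emptyset$, so by disjointness of $\maximalTightSetFamily{p,H,w}$ (Lemma \ref{lem:weak-to-strong:uncrossing:maximal-tight-set-family-disjoint}, which applies because of conditions \ref{cond:WeakToStrong:(a)} and \ref{cond:WeakToStrong:(b)}), we must have $X^*=\maximalTightSetContainingHyperedge{p,H,w}(e)$; symmetrically $X^*=\maximalTightSetContainingHyperedge{p,H,w}(f)$, contradicting the choice of $e,f$ on line \ref{alg:WeakToStrongCover:(4)}. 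I expect this to be the most delicate step because it relies on combining the disjoint-family result with uniqueness of $\maximalTightSetContainingHyperedge{p, H, w}$ given by Corollary \ref{cor:weak-to-strong:uncrossing:maximal-tight-set-containing-hyperedge-unique}.

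For (3), the key observation is that the merge operation only affects the coverage of a set $X$ through the hyperedges $e$, $f$, and $e\cup f$. Thinking of the operation as removing $\alpha^M$ multiplicity from each of $e$ and $f$ and adding $\alpha^M$ multiplicity to $e\cup f$, a direct case analysis gives
\[
b_{(H^M,w^M)}(X)=b_{(H,w)}(X)-\alpha^M\cdot\indicator_{\{e\cap X\neq\emptyset\text{ and }f\cap X\neq\emptyset\}}.
\]
If $X$ does not meet both $e$ and $f$, then $b_{(H^M,w^M)}(X)=b_{(H,w)}(X)\ge p(X)$. If $X$ meets both, then by definition of $\beta^M$ we have $b_{(H,w)}(X)-p(X)\ge\beta^M\ge\alpha^M$, so $b_{(H^M,w^M)}(X)\ge p(X)$ as well.

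For (4), part (3) gives $b_{(H^M,w^M)}(X)-p(X)\ge 0$ for every $X\supseteq e\cup f$, hence $\beta^R\ge 0$, and since $w^M(e\cup f)\ge 0$ and both quantities are integers, $\alpha^R\in\Z_{\ge 0}$. For (6), the reduce operation affects coverage only through $e\cup f$: if $X$ meets $e\cup f$, then $b_{(H^R,w^R)}(X)=b_{(H^M,w^M)}(X)-\alpha^R$, and otherwise it is unchanged. Meanwhile $d_{(H_0,w_0)}(X)=\alpha^R$ if $X$ properly cuts $e\cup f$ and $0$ otherwise. The three cases $\{X\cap(e\cup f)=\emptyset\}$, $\{X\supseteq e\cup f\}$, and $\{X$ properly cuts $e\cup f\}$ each reduce to either $b_{(H^M,w^M)}(X)\ge p(X)$ (from part (3)) or $b_{(H^M,w^M)}(X)-p(X)\ge\beta^R\ge\alpha^R$ (from the definition of $\beta^R$), establishing $b_{(H^R,w^R)}(X)\ge p^R(X)$.
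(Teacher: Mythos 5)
Your proposal is correct and follows essentially the same route as the paper: part (1) via disjointness of $\maximalTightSetFamily{p,H,w}$ (Lemma \ref{lem:weak-to-strong:uncrossing:maximal-tight-set-family-disjoint}) contradicting the choice of $e,f$ in line \ref{alg:WeakToStrongCover:(4)}, parts (2) and (5) by unpacking \merge and \reduce, and parts (3), (4), (6) by the same case analyses against the definitions of $\beta^M$ and $\beta^R$. The only cosmetic difference is that you phrase (1) as forcing $X^*=\maximalTightSetContainingHyperedge{p,H,w}(e)=\maximalTightSetContainingHyperedge{p,H,w}(f)$ while the paper derives the same contradiction slightly differently, and you split (6) into three cases where the paper merges two of them; neither changes the substance.
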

\begin{proof} We prove each claim separately below.
\begin{enumerate}
    \item We recall that the weight function $w:E\rightarrow\Z_+$ is a positive integer-valued function, and the function $p$ is an integer-valued function. Thus, it suffices to show that $\beta^M > 0$. We note that $\beta^M \geq 0$ since $b_{(H,w)}(X) \ge p(X)$ for every $X \subseteq V$. By way of contradiction, suppose that $\beta^M = 0$. Let $X \subseteq V$ witness $\beta^M$, i.e. $e \cap X, f\cap X \not = \emptyset$ and $b_{(H,w)}(X) -p(X)= 0$.
    The latter property implies that $X \in \tightSetFamily{p, H, w}$ and thus, there exists $Y \in \maximalTightSetFamily{p, H, w}$ such that $X \subseteq Y$. 
    We recall that by the algorithm's choice of hyperedges $e, f \in E$, we have that $\maximalTightSetContainingHyperedge{p, H, w}(e) \not = \maximalTightSetContainingHyperedge{p, H, w}(f)$. Since, $\maximalTightSetContainingHyperedge{p, H, w}(e), \maximalTightSetContainingHyperedge{p, H, w}(f) \in \maximalTightSetFamily{p, H, w}$, we have that $\maximalTightSetContainingHyperedge{p, H, w}(e) \cap \maximalTightSetContainingHyperedge{p, H, w}(f) = \emptyset$ by \Cref{lem:weak-to-strong:uncrossing:maximal-tight-set-family-disjoint}. In particular, we have that $\maximalTightSetContainingHyperedge{p, H, w}(e) \not = Y$ since
    $f \subseteq V - \maximalTightSetContainingHyperedge{p, H, w}(e)$ but $f \cap Y \not = \emptyset$.
    Thus, $Y\cap \maximalTightSetContainingHyperedge{p, H, w}(e) = \emptyset$ by \Cref{lem:weak-to-strong:uncrossing:maximal-tight-set-family-disjoint}, contradicting $\emptyset\neq e\cap X\subseteq e \cap Y \subseteq \maximalTightSetContainingHyperedge{p, H, w}(e)\cap Y$. 

    \item Follows from the definition of \merge operation (see \Cref{def:(pHw)-tight-sets}).
    
    \item 
    Let $X \subseteq V$. If at most one of the sets $X\cap e$ or $X\cap f$ is non-empty, then by part (2) of the current lemma (shown above) we have that $b_{(H^M, w^M)}(X) = b_{(H,w)}(X) \geq p(X)$. Suppose that $e\cap X, f\cap X \not = \emptyset$. Then, we have that $\alpha^M \leq \beta^M \leq b_{(H,w)}(X) - p(X)$.
    Thus,  
    $b_{(H^M, w^M)}(X) = b_{(H,w)}(X) - \alpha^M \geq b_{(H,w)}(X) - \beta^M \geq p(X)$, where the equality is by part (2) of the current lemma (shown above).

    \item We note that $w^M(e\cup f) = \alpha^M \in \Z_+$ by parts (1) and (2) of the current lemma (shown above). Furthermore, $b_{H^M,w^M}(X) \geq p(X)$ for every $X\subseteq V$ by part (3) of the current lemma (shown above). Thus, $\beta^R \in \Z_{\geq 0}$ since $p$ is an integer-valued function.

    \item Follows from the definition of \reduce operation (see \Cref{def:(pHw)-tight-sets}).

    \item Let $X\subseteq V$. First, suppose that the hyperedge $e\cup f$ is not contained in the set $X$. 
    Then, we have that $b_{(H^R, w^R)}(X) = b_{(H^M, w^M)}(X) - d_{(H_0, w_0)}(X) \geq p(X) - d_{(H_0, w_0)}(X) = p^R(X)$. Here, the first equality is by part (5) of the current lemma and the inequality is by part (3) of the current lemma (proofs of both parts shown above).  Next, suppose that $e\cup f \subseteq X$. Then, we have that $b_{(H^R, w^R)}(X) = b_{(H^M, w^M)}(X) - \alpha^R \geq b_{H^M,w^M}(X) - \beta^R \geq b_{(H^M, w^M)}(X) - (b_{(H^M, w^M)}(X) - p(X)) = p(X) = p^R(X)$, where the first equality is by part (5) of the current lemma (shown above) and because $e\cup f \subseteq X$, and the final equality is because $e\cup f \subseteq X$.
\end{enumerate}
\end{proof}

With the previous properties established, we now show that the function $p'$ and hypergraph $(H', w')$ constructed by \Cref{alg:WeakToStrongCover}  as input to the subsequent recursive call of the algorithm satisfy the two hypothesis conditions of \Cref{thm:WeakToStrongCover:main}  --  the function $p'$ is symmetric skew-supermodular, and the hypergraph $(H', w')$ weakly covers the function $p'$. This is a necessary step in order to give an inductive proof of \Cref{thm:WeakToStrongCover:main}. 

\begin{lemma}\label{lem:weak-to-strong-cover:next-recursive-call-hypothesis-satification:symm-skew-supermod-strong-cover}
Suppose that the input to \Cref{alg:WeakToStrongCover} is a non-empty hypergraph $(H = (V, E), w)$ and a symmetric skew-supermodular function $p:2^V\rightarrow \Z$ 
such that every hyperedge $\Tilde{e} \in E$ is contained in some $(p, H, w)$-tight set, $b_{(H,w)}(u) \not = 0$ for every $u \in V$, and $b_{(H,w)}(X) \geq p(X)$ for every $X \subseteq V$. 
Let $p'$ and $(H' = (V', E'), w')$ be as defined by \Cref{alg:WeakToStrongCover}. Then, we have the following:
\begin{enumerate}
    \item the function $p'$ is symmetric skew-supermodular, and
    \item $b_{(H', w')}(X) \geq p'(X)$ for every $X \subseteq V'$.
\end{enumerate}
\end{lemma}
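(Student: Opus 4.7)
The plan is to establish the two claims separately, with part (1) following essentially from composition facts about symmetric and skew-supermodular functions, and part (2) being a direct consequence of Lemma 5.3(6) together with the definition of the set $\zeros$.

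For part (1), I would first show that the intermediate function $p^R = p - d_{(H_0, w_0)}$ is symmetric skew-supermodular. Symmetry is immediate since both $p$ and the cut function $d_{(H_0, w_0)}$ are symmetric. For skew-supermodularity, recall that $d_{(H_0, w_0)}$ is symmetric submodular, so $-d_{(H_0, w_0)}$ is symmetric supermodular. A symmetric supermodular function $q$ satisfies \emph{both} uncrossing inequalities: the standard one $q(X) + q(Y) \le q(X \cap Y) + q(X \cup Y)$ and, via the substitution $Y \mapsto V - Y$ combined with symmetry, also $q(X) + q(Y) \le q(X - Y) + q(Y - X)$. Hence for any $X, Y \subseteq V$, whichever of the two skew-supermodular inequalities is witnessed by $p$ is automatically also witnessed by $-d_{(H_0, w_0)}$, and adding them yields the same inequality for $p^R = p + (-d_{(H_0, w_0)})$. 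Thus $p^R$ is symmetric skew-supermodular. The function $p' = \functionContract{p^R}{\zeros}$ is then symmetric skew-supermodular by the preliminary fact recorded in Section~2 that contraction preserves both symmetry and skew-supermodularity.

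For part (2), fix $X \subseteq V'$ and let $R^* \subseteq \zeros$ attain the maximum in the definition $p'(X) = \max\{p^R(X \cup R) : R \subseteq \zeros\}$, so that $p'(X) = p^R(X \cup R^*)$. By Lemma~5.3(6), $p^R(X \cup R^*) \le b_{(H^R, w^R)}(X \cup R^*)$. The key observation is that by definition of $\zeros$, no hyperedge of $H^R$ meets any vertex in $\zeros$, so in particular no hyperedge of $H^R$ meets $R^*$. Therefore $b_{(H^R, w^R)}(X \cup R^*) = b_{(H^R, w^R)}(X)$, which in turn equals $b_{(H', w')}(X)$ since $(H', w')$ is obtained from $(H^R, w^R)$ merely by discarding the isolated vertices $\zeros$ (the hyperedge set and weights are unchanged, by Lemma~5.3(5)). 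Chaining these gives $p'(X) \le b_{(H', w')}(X)$, as required.

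The proof is essentially mechanical once the correct preliminary facts are invoked; no genuine obstacle is expected. The only mildly subtle point is the skew-supermodularity of $p^R$, where one must be careful that adding a symmetric supermodular function to a skew-supermodular function preserves skew-supermodularity --- this works precisely because the symmetric supermodular function satisfies both uncrossing inequalities simultaneously, so one can add whichever inequality $p$ happens to satisfy at $(X, Y)$.
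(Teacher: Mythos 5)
Your proposal is correct and follows essentially the same route as the paper: both establish that $p^R = p - d_{(H_0,w_0)}$ is symmetric skew-supermodular and that contraction by $\zeros$ preserves these properties, and both prove part (2) by picking a maximizer $R \subseteq \zeros$ and combining Lemma \ref{lem:WeakToStrong:properties-of-alpha^M-and-alpha^R}(6) with the fact that vertices of $\zeros$ lie in no hyperedge of $(H^R, w^R)$. The only difference is that you spell out the two-inequality argument for why adding a symmetric supermodular function preserves skew-supermodularity, a fact the paper simply asserts.
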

\begin{proof} We prove each claim separately below. Let $(H_0, w_0), p^R, \zeros,$ and $(H^R, w^R)$ be as defined by \Cref{alg:WeakToStrongCover}.
\begin{enumerate}
    \item We recall that $p^R =  p - d_{(H_0, w_0)}$. We first observe that the function $p^R$ is symmetric skew-supermodular because the function $p$ is symmetric skew-supermodular and the cut function $d_{(H_0, w_0)}$ is symmetric submodular. Next, we observe that the function $p' = \functionContract{p^R}{\zeros}$. The claim then follows because the contraction operation preserves skew-supermodularity and symmetry.
    \item Let $X \subseteq V'$. Let $R \subseteq \zeros$ be such that $p'(X) = \functionContract{p^R}{\zeros}(X) = p^R(X \cup R)$. Then, we have the following: $b_{(H', w')}(X) = b_{(H^R, w^R)}(X \cup R) \geq p^R(X \cup R) = p'(X)$. Here, the first equality is because $w' = w^R$ and $b_{(H^R, w^R)}(u) = 0$ for every $u \in R$ by the definition of $\zeros$. The inequality is by \Cref{lem:WeakToStrong:properties-of-alpha^M-and-alpha^R}(6).  
\end{enumerate}
\end{proof}

In order to show that the function $p'$ and $(H',w')$ constructed by \Cref{alg:WeakToStrongCover} satisfy conditions \ref{cond:WeakToStrong:(a)} and \ref{cond:WeakToStrong:(b)}, we examine how the families of $(p, H, w)$-tight sets change during a recursive call of the algorithm. 
The next lemma shows that $(p, H, w)$-tight sets persist through the merge and reduce steps of a recursive call. Furthermore, the lemma also shows that the projection of a $(p, H, w)$-tight set onto the ground set of the subsequent recursive call is  $(p', H', w')$-tight.

\begin{lemma}\label{lem:WeakToStrongCover:TightSetFamiliesDuringAlgorithm}
Suppose that the input to \Cref{alg:WeakToStrongCover} is a non-empty hypergraph $(H = (V, E), w)$ and a symmetric skew-supermodular function $p:2^V\rightarrow \Z$ 
such that every hyperedge $\Tilde{e} \in E$ is contained in some $(p, H, w)$-tight set, $b_{(H,w)}(u) \not = 0$ for every $u \in V$, and $b_{(H,w)}(X) \geq p(X)$ for every $X \subseteq V$. 
Let $(H^M, w^M), (H^R, w^R), p'$ 
and $(H' = (V', E'), w')$ be as defined by \Cref{alg:WeakToStrongCover}.
Then, we have the following:
     \begin{enumerate}
        \item $\tightSetFamily{p, H, w} \subseteq \tightSetFamily{p, H^M, w^M} \subseteq \tightSetFamily{p^R, H^R, w^R}$, and 
        \item $\tightSetFamily{p^R, H^R, w^R}|_{V'} \subseteq \tightSetFamily{p', H', w'}$.
    \end{enumerate}
\end{lemma}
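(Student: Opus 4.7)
The plan is to prove the two inclusions of part (1) separately and then deduce part (2) from the way the contracted function $p' = \functionContract{p^R}{\zeros}$ interacts with the isolated vertices in $\zeros$, combined with \Cref{lem:weak-to-strong-cover:next-recursive-call-hypothesis-satification:symm-skew-supermod-strong-cover}.

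For the first inclusion of part (1), namely $\tightSetFamily{p, H, w} \subseteq \tightSetFamily{p, H^M, w^M}$, I would fix $X \in \tightSetFamily{p, H, w}$ and case-split according to how $X$ intersects the hyperedges $e$ and $f$ selected by the algorithm. The critical observation is that the case $e \cap X \neq \emptyset$ and $f \cap X \neq \emptyset$ cannot occur for a tight $X$: such an $X$ would witness $\beta^M \le b_{(H,w)}(X) - p(X) = 0$, contradicting $\alpha^M \in \Z_+$ from \Cref{lem:WeakToStrong:properties-of-alpha^M-and-alpha^R}(1). In the remaining cases (neither $e$ nor $f$ intersects $X$, or exactly one of them does), a direct accounting using the weight changes in \Cref{lem:WeakToStrong:properties-of-alpha^M-and-alpha^R}(2) shows the $-\alpha^M$ contribution from the diminished hyperedge is cancelled by the $+\alpha^M$ contribution of $e \cup f$ (or the two cancel against zero), yielding $b_{(H^M, w^M)}(X) = b_{(H, w)}(X) = p(X)$, so $X \in \tightSetFamily{p, H^M, w^M}$.

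For the second inclusion of part (1), namely $\tightSetFamily{p, H^M, w^M} \subseteq \tightSetFamily{p^R, H^R, w^R}$, I would fix $X \in \tightSetFamily{p, H^M, w^M}$ and split into three subcases based on how $e\cup f$ relates to $X$. If $(e \cup f) \cap X = \emptyset$, then both $b_{(H^R, w^R)}(X)$ and $d_{(H_0, w_0)}(X)$ are unchanged from the $H^M$ side, so tightness is immediate. If $e \cup f$ crosses $X$ (i.e., intersects both $X$ and $V - X$), then $b$ decreases by $\alpha^R$ and $p^R$ also decreases by $d_{(H_0, w_0)}(X) = \alpha^R$, preserving equality. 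The delicate subcase is $e \cup f \subseteq X$: here $b_{(H^R, w^R)}(X) = p(X) - \alpha^R$ while $p^R(X) = p(X)$ (since $e\cup f$ does not cross $X$), so tightness requires $\alpha^R = 0$. This forced vanishing follows because such $X$ witnesses $\beta^R \le b_{(H^M, w^M)}(X) - p(X) = 0$, and hence $\alpha^R = \min\{\beta^R, w^M(e\cup f)\} = 0$. This subcase is what I expect to be the main obstacle, since it is the only place where the algorithm's choice of $\alpha^R$ (as opposed to just the \reduce operation in isolation) is essential to preserving tightness.

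For part (2), let $Y \in \tightSetFamily{p^R, H^R, w^R}$ and set $X := Y \cap V'$ and $R := Y \cap \zeros$. Since every vertex of $\zeros$ is isolated in $(H^R, w^R)$ by the definition of $\zeros$, no hyperedge of $H^R$ meets $R$, so the set of hyperedges that intersect $Y$ coincides with the set that intersects $X$; consequently $b_{(H^R, w^R)}(Y) = b_{(H^R, w^R)}(X) = b_{(H', w')}(X)$. Thus $b_{(H', w')}(X) = p^R(Y) = p^R(X \cup R)$, and by definition of the contraction $p' = \functionContract{p^R}{\zeros}$, we get $p'(X) \ge p^R(X \cup R) = b_{(H', w')}(X)$. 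The reverse inequality $b_{(H', w')}(X) \ge p'(X)$ is exactly \Cref{lem:weak-to-strong-cover:next-recursive-call-hypothesis-satification:symm-skew-supermod-strong-cover}(2), so equality holds and $X \in \tightSetFamily{p', H', w'}$, completing the proof.
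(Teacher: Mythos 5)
Your proposal is correct and follows essentially the same route as the paper's proof: the same coverage-accounting for the merge step (with the both-intersecting case ruled out via $\beta^M=0$ contradicting $\alpha^M\ge 1$), the same three-way analysis of $e\cup f$ versus $X$ for the reduce step (the paper organizes it by first splitting on $\alpha^R=0$ versus $\alpha^R>0$, but the content — that $e\cup f\subseteq X$ with $X$ tight forces $\alpha^R=0$ — is identical), and the same contraction-plus-isolated-vertices argument for part (2), closed off with \Cref{lem:weak-to-strong-cover:next-recursive-call-hypothesis-satification:symm-skew-supermod-strong-cover}(2). No gaps.
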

\begin{proof} We prove the claims separately below. Let $e, f, \alpha^M, \zeros$ be as defined by \Cref{alg:WeakToStrongCover}.
\begin{enumerate}
    \item We will first show that $\tightSetFamily{p, H, w} \subseteq \tightSetFamily{p, H^M, w^M}$. Let $X \in \tightSetFamily{p, H, w}$. 
    Suppose $e\cap X\neq \emptyset$ and $f\cap X\neq \emptyset$, then 
        $b_{(H^M, w^M)}(X) = b_{(H,w)}(X) - \alpha^M = p(X) - \alpha^M < p(X)$, which is a contradiction. Here, the first equality is because $e\cap X\neq \emptyset$ and $f\cap X\neq \emptyset$ and the final inequality is by \Cref{lem:WeakToStrong:properties-of-alpha^M-and-alpha^R}(1). 
        So, at most one of $e\cap X$ and $f\cap X$ is non-empty and consequently, $b_{(H,w)}(X)=b_{(H^M, w^M)}(X)$. 
        Thus, we have that $p(X) = b_{(H,w)}(X) = b_{(H^M, w^M)}(X)$, i.e. $X \in \tightSetFamily{p, H^M, w^M}$. 

        We will now show the second inclusion in the claim, i.e. $\tightSetFamily{p, H^M, w^M} \subseteq \tightSetFamily{p^R, H^R, w^R}$. Let $X \in \tightSetFamily{p, H^M, w^M} $. First, suppose that $\alpha^R = 0$. Then, we have that $b_{(H^R, w^R)}(X) = b_{(H^M, w^M)}(X) = p(X) = p^R(X)$, where the first and final equalities are because $\alpha^R = 0$. Thus, $X \in \tightSetFamily{p^R, H^R, w^R}$. Next, suppose that $\alpha^R > 0$. We note that if $(e\cup f) \cap X = \emptyset$, then $b_{(H^R,w^R)}(X) = b_{(H^M,w^M)}(X) = p(X) = p^R(X)$ and so $X \in \tightSetFamily{p^R, H^R, w^R}$. Consider the case where $(e\cup f) \cap X \not = \emptyset$. Here, we may assume that 
        $e\cup f\in \delta(X)$ since otherwise, 
        we obtain the following contradiction  $b_{(H^R, w^R)}(X) = b_{(H^M, w^M)}(X) - \alpha^R = p(X) - \alpha^R < p(X) = p^R(X)$, where the final equality is because $d_{(H_0, w_0)}(X) = 0$ since $e\cup f \not \in \delta_H(X)$. 
        Consequently, we have that $b_{(H^R, w^R)}(X) = b_{(H^M, w^M)}(X) - \alpha^R = p(X) - \alpha^R = p^R(X)$, and the claim holds. 
        
        \item Let $X \in \tightSetFamily{p^R, H^R, w^R}$.  Let $\zeros$ be as defined by \Cref{alg:WeakToStrongCover}. Then, we have the following:
        $$p'(X - \zeros) = \functionContract{p^R}{\zeros}(X - \zeros) \geq p^R(X) = b_{(H^R, w^R)}(X) = b_{(H^R, w^R)}(X - \zeros) = b_{(H', w')}(X - \zeros),$$
        where the penultimate equality is because $\zeros = \{u \in V : b_{(H^R, w^R)}(u) = 0\}$. However, by \Cref{lem:weak-to-strong-cover:next-recursive-call-hypothesis-satification:symm-skew-supermod-strong-cover}, we have that $b_{H',w'}(X - \zeros) \geq p'(X - \zeros)$. Thus, the inequality in the above sequence is an equation and we have that $X - \zeros \in \tightSetFamily{p', H', w'}$.
\end{enumerate}
\end{proof}
We now show another useful property of \Cref{alg:WeakToStrongCover} which we will need while reasoning about the convergence of \Cref{alg:WeakToStrongCover} in the next section.
\begin{lemma}\label{lem:WeakToStrong:alpha^R=0-iff-alpha^M=beta^M}
Suppose that the input to \Cref{alg:WeakToStrongCover} is a non-empty hypergraph $(H = (V, E), w)$ and a symmetric skew-supermodular function $p:2^V\rightarrow \Z$
such that every hyperedge $\Tilde{e} \in E$ is contained in some $(p, H, w)$-tight set, $b_{(H,w)}(u) \not = 0$ for every $u \in V$, and $b_{(H,w)}(X) \geq p(X)$ for every $X \subseteq V$. 
 Let $\alpha^M, \alpha^R$ and $\beta^M$ be as defined by \Cref{alg:WeakToStrongCover}.  
 Then, $\alpha^R = 0$ if and only if $\alpha^M = \beta^M$.
\end{lemma}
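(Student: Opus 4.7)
The plan is to base both directions on the identity
\[
b_{(H^M,w^M)}(Y) = b_{(H,w)}(Y) - \alpha^M \cdot \indicator\bigl[e \cap Y \neq \emptyset \text{ and } f \cap Y \neq \emptyset\bigr] \qquad \text{for every } Y \subseteq V,
\]
which follows from the description of $(H^M, w^M)$ in \Cref{lem:WeakToStrong:properties-of-alpha^M-and-alpha^R}(2); in particular, whenever $e \cup f \subseteq Y$ we have $b_{(H^M, w^M)}(Y) = b_{(H,w)}(Y) - \alpha^M$.

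For the ``$\Leftarrow$'' direction, assume $\alpha^R = 0$. Since $w^M(e \cup f) \geq \alpha^M \geq 1$ by \Cref{lem:WeakToStrong:properties-of-alpha^M-and-alpha^R}(1) and (2), this forces $\beta^R = 0$, so we may pick a set $Y$ with $e \cup f \subseteq Y$ and $b_{(H^M,w^M)}(Y) = p(Y)$. The identity then yields $b_{(H,w)}(Y) - p(Y) = \alpha^M$, so $Y$ is a feasible candidate for $\beta^M$ of value $\alpha^M$; combined with the bound $\alpha^M \leq \beta^M$ built into the definition at \Cref{alg:WeakToStrongCover:(5)}, this forces $\alpha^M = \beta^M$.

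For the ``$\Rightarrow$'' direction, assume $\alpha^M = \beta^M$ and fix a witness $X^*$ of $\beta^M$. Applying the identity to $X^*$ (whose feasibility supplies $e \cap X^* \neq \emptyset$ and $f \cap X^* \neq \emptyset$) shows $X^*$ is $(p, H^M, w^M)$-tight. Let $T_e$ and $T_f$ denote the unique maximal $(p, H, w)$-tight sets containing $e$ and $f$, respectively; these are well-defined by \Cref{cor:weak-to-strong:uncrossing:maximal-tight-set-containing-hyperedge-unique}, distinct by the algorithm's choice at \Cref{alg:WeakToStrongCover:(4)}, and disjoint by \Cref{lem:weak-to-strong:uncrossing:maximal-tight-set-family-disjoint}; by \Cref{lem:WeakToStrongCover:TightSetFamiliesDuringAlgorithm}(1) they remain $(p, H^M, w^M)$-tight. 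The plan is to show that $X^* \cup T_e \cup T_f$ is $(p, H^M, w^M)$-tight via two applications of \Cref{lem:weak-to-strong:uncrossing:uncrossing-lemma} inside $(p, H^M, w^M)$---first uncrossing $(X^*, T_e)$ to get $X^* \cup T_e$ tight, then uncrossing $(X^* \cup T_e, T_f)$. Since $e \cup f \subseteq X^* \cup T_e \cup T_f$, this set witnesses $\beta^R = 0$ and hence $\alpha^R = 0$.

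The main obstacle is supplying the witness hyperedge required by \Cref{lem:weak-to-strong:uncrossing:uncrossing-lemma} at each uncrossing step. In the generic case $\alpha^M < \min\{w(e), w(f)\}$ both $e$ and $f$ remain in $E^M$, and the containments $e \subseteq T_e$ and $f \subseteq T_f$ make them valid witnesses since they keep out of $X^* \setminus T_e$ and of $(X^* \cup T_e) \setminus T_f$, respectively. The delicate case is when $\alpha^M$ equals $w(e)$ or $w(f)$, i.e., the corresponding hyperedge has been absorbed into $e \cup f$. Here the plan is to combine condition \ref{cond:WeakToStrong:(a)} with \Cref{lem:weak-to-strong:uncrossing:maximal-tight-set-family-disjoint}: every hyperedge of $E$ touching $T_e$ lies entirely inside $T_e$, so a hyperedge supplied by condition \ref{cond:WeakToStrong:(b)} at a vertex of $X^* \cap T_e$ automatically avoids $X^* \setminus T_e$ and serves as the witness. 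The residual corner case---when $X^* \cap T_e \subseteq e$ and $e$ is the only hyperedge of $E$ meeting $X^* \cap T_e$---is expected to require an additional argument, probably by choosing $X^*$ to maximize $|X^* \cap (e \cup f)|$ among witnesses of $\beta^M$ and then using skew-supermodularity of $p$ to verify that enlarging $X^*$ by the vertices of $e$ produces another witness, reducing to the case $e \cup f \subseteq X^*$ in which $X^*$ itself witnesses $\beta^R = 0$ without any uncrossing.
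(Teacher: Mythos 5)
Your reverse direction ($\alpha^R=0\Rightarrow\alpha^M=\beta^M$) is correct and is essentially the paper's own argument. Your forward direction takes a genuinely different route: the paper picks a maximal $(p,H^M,w^M)$-tight set $Y$ containing the witness $X^*$ and proves $e\cup f\subseteq Y$ by one uncrossing of $Y$ against a maximal $(p,H^M,w^M)$-tight set containing $\tau_{p,H,w}(e)$, whereas you uncross $X^*$ directly with $T_e$ and then with $T_f$. In the case $\alpha^M<\min\{w(e),w(f)\}$, where both $e$ and $f$ survive in $E^M$, your two uncrossing steps are valid and give the conclusion, and your observation that condition (a) together with disjointness of $\maximalTightSetFamily{p,H,w}$ forces every hyperedge of $E$ meeting $T_e$ to lie inside $T_e$ correctly handles most of the absorbed case as well.

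However, the forward direction is not complete: the residual corner case you flag---$\alpha^M=w(e)$, $e\not\subseteq X^*$, and $e$ the only hyperedge of $E$ meeting $X^*\cap T_e$---is left at the level of ``expected to require an additional argument,'' and the sketched fix does not close it. Applying skew-supermodularity of $p$ to $X^*$ and $T_e$ gives a dichotomy. In the branch $p(X^*)+p(T_e)\le p(X^*\cup T_e)+p(X^*\cap T_e)$ one does obtain, using submodularity of $b_{(H,w)}$ and $b_{(H,w)}\ge p$, that $X^*\cup T_e$ is again a witness of $\beta^M$, which is the enlargement you want. But in the branch $p(X^*)+p(T_e)\le p(X^*-T_e)+p(T_e-X^*)$ no enlargement follows: tracking the contribution of $e$ (which meets $X^*\cap T_e$ and $T_e-X^*$ but not $X^*-T_e$) only yields $w(e)\le\beta^M$, which in this corner case is an equality, together with $(p,H,w)$-tightness of $X^*-T_e$ and $T_e-X^*$, and it is not clear how to get from there to a witness of $\beta^M$ containing $e\cup f$. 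So the claimed reduction ``enlarging $X^*$ by the vertices of $e$ produces another witness'' is unproven exactly where it is needed, and the forward direction has a genuine gap. (For comparison, the paper sidesteps $T_f$ altogether and closes its argument with a single uncrossing whose witness hyperedge is $e$ itself inside $(H^M,w^M)$; note that this presumes $e$ is still a hyperedge of $H^M$, so the delicate case you identified is precisely the point where extra care is required.)
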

\begin{proof}
    We prove the claim by showing the implication in both directions. Let $e, f, (H^M, w^M), \beta^R$ be as defined by \Cref{alg:WeakToStrongCover}. 
    First, suppose that $\alpha^M = \beta^M$. Let $X \subseteq V$ be such that $\beta^M = b_{(H,w)}(X) - p(X)$ and $e\cap X, f\cap X \not = \emptyset$ (such a set exists by definition of $\beta^M$). Then, we have that
    $b_{(H^M,w^M)}(X) = b_{(H,w)}(X) - \alpha^M = p(X)$, where the first equality is by \Cref{lem:WeakToStrong:properties-of-alpha^M-and-alpha^R}(2). Thus, $X \in \tightSetFamily{p, H^M, w^M}$. Let $Y \in \maximalTightSetFamily{p, H^M, w^M}$ such that $X \subseteq Y$. \Cref{claim:WeakToStrong:new-tight-set-containing-e-and-f} below says that the set $Y$ contains the hyperedge $e \cup f$. With this claim, we complete the proof of the forward direction as follows: By \Cref{claim:WeakToStrong:new-tight-set-containing-e-and-f}, we have that $\alpha^R \leq \beta^R \leq b_{H^M,w^M}(Y) - p(Y) = 0$, where the final equality is because $Y \in \tightSetFamily{p, H^M, w^M}$. Consequently, $\alpha^R = 0$ because $\alpha^R \geq 0$ by \Cref{lem:WeakToStrong:properties-of-alpha^M-and-alpha^R}(4). 
\begin{claim}\label{claim:WeakToStrong:new-tight-set-containing-e-and-f}
       $e \cup f \subseteq Y$.
    \end{claim}
    \begin{proof}
    By way of contradiction, say $(e\cup f) - Y \not = \emptyset$. Thus, $e - Y \not = \emptyset$ or $f - Y \not = \emptyset$ (or both). We consider the case where $e - Y \not = \emptyset$ and remark that the proof of the case where $f - Y \not = \emptyset$ is along the same lines.
        We recall that the hyperedge $e$ is contained in some $(p, H, w)$-tight set, and thus the set $\maximalTightSetContainingHyperedge{p, H, w}(e)$ is well-defined and non-empty. Furthermore, by \Cref{lem:WeakToStrongCover:TightSetFamiliesDuringAlgorithm}(1), we have that the set $\maximalTightSetContainingHyperedge{p, H, w}(e) \in \tightSetFamily{p, H^M, w^M}$. Consequently, there exists a set $Z \in \maximalTightSetFamily{p, H^M, w^M}$ such that $e \subseteq \maximalTightSetContainingHyperedge{p, H, w}(e) \subseteq Z$. Since $e - Y \not = \emptyset$ but $e \subseteq Z$, we have that $Y \not = Z$.  We recall that $e \cap X \not = \emptyset$ by our choice of the set $X$, and hence, $e \cap Y \not = \emptyset$. Furthermore, by \Cref{lem:WeakToStrong:properties-of-alpha^M-and-alpha^R}(3), we have that $b_{(H^M, w^M)}(A) \geq p(A)$ for every $A \subseteq V$. Thus, by \Cref{lem:weak-to-strong:uncrossing:uncrossing-lemma} applied to the function $p$, hypergraph $(H^M, w^M)$, sets $Y$ and $Z$, and the hyperedge $e$, we have that the set $Y \cup Z \in \maximalTightSetFamily{p, H^M, w^M}$, a contradiction to the maximality of the sets $Y, Z \in \maximalTightSetFamily{p, H^M, w^M}$.
    \end{proof}
    
    Next, we prove the reverse direction of the claim. Suppose that $\alpha^R = 0$. 
    We note that $w^M(e\cup f) \geq \alpha^M > 0$ by Lemma \ref{lem:WeakToStrong:properties-of-alpha^M-and-alpha^R}(2) and \ref{lem:WeakToStrong:properties-of-alpha^M-and-alpha^R}(1). Thus, we have that $\alpha^R < w^M(e\cup f)$ and so $\alpha^R = \beta^R$. It follows that $\beta^R=0$. Consequently, there exists a set $X \subseteq V$ such that $e\cup f \subseteq X$ and $b_{(H^M,w^M)}(X) = p(X)$ (by definition of $\beta^R$). 
     Then, we have the following: $$p(X) = b_{(H^M, w^M)}(X) = b_{(H,w)}(X) - \alpha^M \geq b_{(H,w)}(X) - \beta^M \geq b_{(H,w)}(X) - (b_{(H,w)}(X) - p(X)) = p(X).$$
    Thus, all inequalities are equations, and we have that $\alpha^M = \beta^M$. Here, the second equality is by \Cref{lem:WeakToStrong:properties-of-alpha^M-and-alpha^R}(2) and $e\cup f \subseteq X$. The second inequality is also because $e\cup f \subseteq X$.
\end{proof}

 We now show that if the inputs to \Cref{alg:WeakToStrongCover} satisfy conditions \ref{cond:WeakToStrong:(a)} and \ref{cond:WeakToStrong:(b)}, then the inputs to every subsequent recursive call of \Cref{alg:WeakToStrongCover} also satisfy the two conditions.

\begin{lemma}\label{lem:weak-to-strong-cover:next-recursive-call-hypothesis-satification:hyperedge-in-tight-set-positive-coverage}
Suppose that the input to \Cref{alg:WeakToStrongCover} is a non-empty hypergraph $(H = (V, E), w)$ and a symmetric skew-supermodular function $p:2^V\rightarrow \Z$ 
such that every hyperedge $\Tilde{e} \in E$ is contained in some $(p, H, w)$-tight set, $b_{(H,w)}(u) \not = 0$ for every $u \in V$, and $b_{(H,w)}(X) \geq p(X)$ for every $X \subseteq V$. 
Let $(H' = (V', E'), w')$ and $p'$ be as defined by \Cref{alg:WeakToStrongCover}. Then, we have the following:
\begin{enumerate}
    \item every hyperedge $\Tilde{e} \in E'$ is contained in some $(p', H', w')$-tight set, and
    \item  $b_{H',w'}(u) > 0$ for every vertex $u \in V'$.
\end{enumerate}
\end{lemma}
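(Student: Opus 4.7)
The plan is to prove part (2) directly from the algorithm's construction and then leverage it, together with \Cref{lem:WeakToStrongCover:TightSetFamiliesDuringAlgorithm}, to prove part (1) via a brief case analysis on the origin of each hyperedge.

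Part (2) should follow immediately from the algorithm's definitions: $\zeros = \{u \in V : b_{(H^R, w^R)}(u) = 0\}$, $V' = V - \zeros$, $E' = E^R$, and $w' = w^R$. Hence for any $u \in V'$ one has $b_{(H', w')}(u) = b_{(H^R, w^R)}(u) > 0$ by definition of $\zeros$.

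For part (1), the strategy will be: for each $\tilde{e} \in E'$, produce a set $Y \in \tightSetFamily{p^R, H^R, w^R}$ with $\tilde{e} \subseteq Y$, and then push it forward via \Cref{lem:WeakToStrongCover:TightSetFamiliesDuringAlgorithm}(2) to obtain $Y - \zeros \in \tightSetFamily{p', H', w'}$. The crucial preliminary observation is that every hyperedge $\tilde{e} \in E'$ already sits inside $V'$: for any $u \in \tilde{e}$, since $\tilde{e} \in E' = E^R$ we have $w^R(\tilde{e}) > 0$, so $b_{(H^R, w^R)}(u) \geq w^R(\tilde{e}) > 0$ and hence $u \notin \zeros$. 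This observation ensures $\tilde{e} \subseteq Y - \zeros$ whenever $\tilde{e} \subseteq Y$, so it only remains to exhibit $Y$.

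To produce $Y$, I plan a case split based on \Cref{lem:WeakToStrong:properties-of-alpha^M-and-alpha^R}(5) on whether $\tilde{e}$ was an original hyperedge or the newly merged hyperedge. If $\tilde{e} \in E$, then condition \ref{cond:WeakToStrong:(a)} on the inputs guarantees a $(p, H, w)$-tight set $Y$ containing $\tilde{e}$, and the first inclusion of \Cref{lem:WeakToStrongCover:TightSetFamiliesDuringAlgorithm}(1) places $Y$ in $\tightSetFamily{p^R, H^R, w^R}$. If $\tilde{e} \notin E$, then $\tilde{e}$ must be the newly created hyperedge $e \cup f$, and $\tilde{e} \in E'$ forces $\alpha^R < w^M(e \cup f)$, so $\alpha^R = \beta^R$ by definition of $\alpha^R$. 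I then pick a witness $X \supseteq e \cup f$ with $b_{(H^M, w^M)}(X) - p(X) = \alpha^R$; using $e \cup f \subseteq X$ to conclude $d_{(H_0, w_0)}(X) = 0$, the \reduce step drops the coverage on $X$ by exactly $\alpha^R$, yielding $b_{(H^R, w^R)}(X) = p(X) = p^R(X)$, so $Y := X \in \tightSetFamily{p^R, H^R, w^R}$ contains $\tilde{e}$.

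The most delicate step will be the second case, where I need to track how the coverage function transforms across the \merge and \reduce operations and relate $p^R$ to $p$ via $d_{(H_0, w_0)}$; once these arithmetic identities on $X$ are verified, both cases conclude uniformly by invoking \Cref{lem:WeakToStrongCover:TightSetFamiliesDuringAlgorithm}(2) to project to a $(p', H', w')$-tight set, and using $\tilde{e} \subseteq V'$ to ensure the projection still contains $\tilde{e}$.
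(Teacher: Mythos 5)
Your proposal is correct and follows essentially the same route as the paper: part (2) directly from the definitions of $\zeros$, $V'$, $w'$, and part (1) by the same case split (original hyperedge of $E$ handled via condition \ref{cond:WeakToStrong:(a)} and the persistence inclusions of \Cref{lem:WeakToStrongCover:TightSetFamiliesDuringAlgorithm}(1), versus the new hyperedge $e\cup f$ handled by noting $\alpha^R=\beta^R<w^M(e\cup f)$ and taking a witness set $X\supseteq e\cup f$ with $b_{(H^R,w^R)}(X)=p(X)=p^R(X)$). The only cosmetic difference is that in the second case the paper observes $\zeros=\emptyset$ and concludes tightness for $p'$ directly, whereas you conclude both cases uniformly through \Cref{lem:WeakToStrongCover:TightSetFamiliesDuringAlgorithm}(2) together with your (correct, and in fact more explicit than the paper's) observation that every hyperedge of $E'$ is disjoint from $\zeros$.
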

\begin{proof} We prove the claims separately below. Let $e, f, \zeros, \alpha^R, \beta^R, p', (H^M, w^M), (H^R, w^R)$ be as defined by \Cref{alg:WeakToStrongCover}.
\begin{enumerate}
    \item We consider two cases. First, suppose that $\Tilde{e} \in E \cap E'$. Since the hyperedge $\Tilde{e}$ is contained in some $(p, H, w)$-tight set, we have that $\maximalTightSetContainingHyperedge{p, H, w}(\Tilde{e})$ is well-defined and non-empty. Furthermore, by \Cref{lem:WeakToStrongCover:TightSetFamiliesDuringAlgorithm}(2), we have that $\maximalTightSetContainingHyperedge{p, H, w}(\Tilde{e})|_{\zeros} \in \tightSetFamily{p', H', w'}$. Consequently, $\Tilde{e} \subseteq \maximalTightSetContainingHyperedge{p, H, w}(\Tilde{e})|_{\zeros} \in \tightSetFamily{p', H', w'}$. Next, suppose that $\Tilde{e} \in E' - E$. In particular, this implies that $E' - E \not= \emptyset$. 
    We recall that $e, f \in E$ are the hyperedges chosen by  \Cref{alg:WeakToStrongCover}  \Cref{alg:WeakToStrongCover:(4)}. 
    By Lemma \ref{lem:WeakToStrong:properties-of-alpha^M-and-alpha^R}(5) and \ref{lem:WeakToStrong:properties-of-alpha^M-and-alpha^R}(2), we have that $\Tilde{e} = e\cup f$ and $\alpha^R = \beta^R < w^M(e \cup f)$. Consequently, $\zeros = \emptyset$ and $V' = V$. Let $Z \subseteq V$ be a set such that 
    $\alpha^R = b_{(H^M, w^M)}(Z) - p(Z)$ and $e\cup f \subseteq Z$ (such a set exists by definition of $\beta^R$).
    Then, we have the following: 
    $$b_{H',w'}(Z) = b_{(H^R, w^R)}(Z) = b_{(H^M, w^M)}(Z) - \alpha^R = p(Z) = p^R(Z) = p'(Z),$$
    and consequently the set $Z \in \tightSetFamily{p', H', w'}$.
    Here, the first equality is by definitions of the weight function $w^R$ and the set $\zeros$. The second and the fourth equalities are because $e \cup f \subseteq Z$. The third equality is because the set $Z$ is a witness set for $\alpha^R = \beta^R$. 
    The fifth equality is because $\zeros = \emptyset$.  

    \item We recall that $V' = V - \zeros = V - \{u\in V : b_{(H^R, w^R)}(u) = 0\}$. The claim then follows because $w' = w^R$. 
    
\end{enumerate}
\end{proof}

We now show the first main result of the section which says that \Cref{alg:WeakToStrongCover} terminates, and the hypergraph returned by the algorithm indeed strongly covers the input symmetric skew-supermodular function, i.e. satisfies property \ref{thm:WeakToStrongCover:main:(1)} of \Cref{thm:WeakToStrongCover:main}.

\begin{lemma}\label{lem:weak-to-strong:correctness:main:strong-cover}
    Suppose that the input to \Cref{alg:WeakToStrongCover} is a  hypergraph $(H = (V, E), w)$ and a symmetric skew-supermodular function $p:2^V\rightarrow \Z$ 
such that every hyperedge $\Tilde{e} \in E$ is contained in some $(p, H, w)$-tight set, $b_{(H,w)}(u) \not = 0$ for every $u \in V$, and $b_{(H,w)}(X) \geq p(X)$ for every $X \subseteq V$.  Then, \Cref{alg:WeakToStrongCover} terminates in finite number of recursive calls. Furthermore, let $(H^*, w^*)$ be the hypergraph returned by the algorithm. Then, $d_{(H^*, w^*)}(X) \geq p(X)$ for every $X\subseteq V$.
\end{lemma}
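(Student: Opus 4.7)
The plan is to prove both termination and the strong-cover property by induction on the total weight $w(E) = \sum_{e \in E} w(e)$ of the input hypergraph. The base case is $E = \emptyset$: the algorithm returns $(H, w)$ immediately, and since $b_{(H, w)}(X) = 0$ for every $X \subseteq V$, the hypothesis $b_{(H, w)}(X) \geq p(X)$ forces $p(X) \leq 0 = d_{(H, w)}(X)$, so the returned empty hypergraph strongly covers $p$.

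For the inductive step, assume $E \neq \emptyset$. I would first verify that the recursive call is on a strictly smaller instance whose inputs still satisfy the hypotheses of the lemma. By parts (2) and (5) of Lemma \ref{lem:WeakToStrong:properties-of-alpha^M-and-alpha^R}, the weight function $w' = w^R$ satisfies $w'(E') = w(E) - \alpha^M - \alpha^R$; since $\alpha^M \geq 1$ and $\alpha^R \geq 0$ by parts (1) and (4) of the same lemma, we obtain $w'(E') \leq w(E) - 1$. Lemma \ref{lem:weak-to-strong-cover:next-recursive-call-hypothesis-satification:symm-skew-supermod-strong-cover} then guarantees that $p'$ is symmetric skew-supermodular and that $(H', w')$ weakly covers $p'$, while Lemma \ref{lem:weak-to-strong-cover:next-recursive-call-hypothesis-satification:hyperedge-in-tight-set-positive-coverage} ensures that conditions \ref{cond:WeakToStrong:(a)} and \ref{cond:WeakToStrong:(b)} continue to hold. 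The inductive hypothesis thus yields both termination of the recursive call and a hypergraph $(H_0^*, w_0^*)$ with $d_{(H_0^*, w_0^*)}(Y) \geq p'(Y)$ for every $Y \subseteq V'$.

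The remaining task is to lift this strong cover on $V'$ back to a strong cover of $p$ on $V$ via $(H^*, w^*) = (G + H_0, c + w_0)$. Fix any $X \subseteq V$ and set $X' := X - \zeros$ and $R := X \cap \zeros$, so $X = X' \cup R$ with $X' \subseteq V'$ and $R \subseteq \zeros$. Because every hyperedge of $H_0^*$ lies in $V'$, reintroducing the isolated vertices $\zeros$ to form $(G, c)$ does not affect the cut function, yielding $d_{(G, c)}(X) = d_{(H_0^*, w_0^*)}(X')$. The computation
\begin{align*}
d_{(H^*, w^*)}(X)
&= d_{(H_0^*, w_0^*)}(X') + d_{(H_0, w_0)}(X) \\
&\geq p'(X') + d_{(H_0, w_0)}(X) \\
&\geq p^R(X' \cup R) + d_{(H_0, w_0)}(X) \\
&= p(X)
\end{align*}
then completes the induction: the second line uses the inductive hypothesis, the third uses $R \subseteq \zeros$ together with $p' = \functionContract{p^R}{\zeros}$, and the final equality combines $X' \cup R = X$ with $p^R = p - d_{(H_0, w_0)}$.

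The main obstacle for this lemma itself is minor, since the heavy lifting is already contained in Lemmas \ref{lem:WeakToStrong:properties-of-alpha^M-and-alpha^R}, \ref{lem:weak-to-strong-cover:next-recursive-call-hypothesis-satification:symm-skew-supermod-strong-cover}, and \ref{lem:weak-to-strong-cover:next-recursive-call-hypothesis-satification:hyperedge-in-tight-set-positive-coverage}. The key subtlety in the final lifting step is that selecting $R = X \cap \zeros$ inside the maximum defining $p'(X')$ is exactly the choice that lets the newly added hyperedge $e \cup f$ (carried by $(H_0, w_0)$) cancel the correction term $d_{(H_0, w_0)}$ hidden in $p^R$, thereby recovering $p(X)$.
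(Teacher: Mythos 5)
Your proof is correct and follows essentially the same route as the paper: induction on the total weight $w(E)$ (the paper phrases termination separately via the potential $\sum_{e\in E}w(e)$, which is the same quantity), preservation of the hypotheses via Lemmas \ref{lem:weak-to-strong-cover:next-recursive-call-hypothesis-satification:symm-skew-supermod-strong-cover} and \ref{lem:weak-to-strong-cover:next-recursive-call-hypothesis-satification:hyperedge-in-tight-set-positive-coverage}, and the same chain of inequalities lifting the strong cover through $\zeros$, where your explicit split $X'=X-\zeros$, $R=X\cap\zeros$ is just a cleaner rendering of the paper's use of $\functionContract{p^R}{\zeros}$ and $\alpha^R\indicator_{e\cup f\in\delta(X)}=d_{(H_0,w_0)}(X)$.
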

\begin{proof}
By Lemmas \ref{lem:weak-to-strong-cover:next-recursive-call-hypothesis-satification:symm-skew-supermod-strong-cover} and \ref{lem:weak-to-strong-cover:next-recursive-call-hypothesis-satification:hyperedge-in-tight-set-positive-coverage}, the input to every recursive call (except the last one) of the execution satisfies the hypothesis of \Cref{cor:weak-to-strong:uncrossing:exist-two-hyperedges-contained-in-disjoint-maximal-tight-sets}, and consequently, during every recursive call of the execution (except the last one), there exist two hyperedges such that \Cref{alg:WeakToStrongCover} \Cref{alg:WeakToStrongCover:(4)} is well-defined.
We first show that 
algorithm terminates in a finite number of recursive calls. 
Consider the potential function $\phi(H=(V, E), w) \coloneqq  \sum_{e \in E}w(e)$. We note that $\phi(H, w) \in \Z_{\geq 0}$ since the weight function $w:E\rightarrow\Z_+$ is a positive integral function. Furthermore, if $\phi(H, w) = 0$, then we have that $E = \emptyset$ and the algorithm terminates. We recall that $p'$ and $(H', w')$ denote the inputs to the subsequent recursive call of \Cref{alg:WeakToStrongCover}. 
By Lemmas \ref{lem:weak-to-strong-cover:next-recursive-call-hypothesis-satification:symm-skew-supermod-strong-cover} and \ref{lem:weak-to-strong-cover:next-recursive-call-hypothesis-satification:hyperedge-in-tight-set-positive-coverage}, we have that $p'$ is a symmetric skew-supermodular function and $(H' = (V', E'), w')$ is a non-empty hypergraph such that every hyperedge $\Tilde{e} \in E'$ is contained in some $(p', H', w')$-tight set, $b_{(H', w')}(u) \not = 0$ for every $u \in V'$, and $b_{(H', w')}(X) \geq p'(X)$ for every $X \subseteq V'$.
Then, by Lemma \ref{lem:WeakToStrong:properties-of-alpha^M-and-alpha^R}(1), \ref{lem:WeakToStrong:properties-of-alpha^M-and-alpha^R}(2) and \ref{lem:WeakToStrong:properties-of-alpha^M-and-alpha^R}(5), we have that $\phi(H, w) > \phi(H', w')$, i.e. the value of the potential function strictly decreases between subsequent recursive calls. Consequently, the algorithm terminates in a finite number of recursive calls.

We now show that the hypergraph $(H^*,w^*)$ returned by \Cref{alg:WeakToStrongCover} strongly covers the function $p$, i.e. $d_{(H^*, w^*)}(X) \geq p(X)$ for every $X\subseteq V$. We show this by induction on $w(E)$. For the base case, consider $w(E) = 0$. Since the weight function $w$ is positive, we have that $E = \emptyset$.
Let $X \subseteq V$. We have that $d_{(H, w)}(X) = 0 = b_{(H,w)}(X) \geq p(X)$ and so the claim holds. For the inductive case, consider $w(E) > 0$. Let $X\subseteq V$ be an arbitrary subset. 
Let $(H_0^* = (V', E_0^*), w_0^*)$ be the hypergraph returned by the subsequent recursive call on input hypergraph $(H' = (V', E'), w')$ and function $p'$. 
We recall that $w'(E') < w(E)$ and so by the inductive hypothesis, we have that $d_{(H_0^*, w_0^*)}(X) \geq p'(X)$ for every $X\subseteq V'$. We also recall that the hypergraph $(H^*, w^*) = (G+H_0, c + w_0)$, where the hypergraph $(G, c)$ is obtained from $(H_0^*, w_0^*)$ by adding the vertices $\zeros$. Then, we have the following:
\begin{align*}
    d_{(H^*, w^*)}(X)& = d_{G,c}(X) + d_{(H_0, w_0)}(X)&\\
    & = d_{H_0^*, w_0^*}(X) + \alpha^R\indicator_{e\cup f\in \delta(X)}&\\
    &\geq p'(X) + \alpha^R\indicator_{e\cup f\in \delta(X)}& \text{(by inductive hypothesis)}\\
    & = \functionContract{\left(p - \alpha^R\cdot \indicator_{e\cup f\in \delta(X)}\right)}{\zeros}(X) + \alpha^R\indicator_{e\cup f\in \delta(X)} &\\
    &\geq \left(p - \alpha^R\cdot  \indicator_{e\cup f\in \delta(X)}\right)(X) + \alpha^R\indicator_{e\cup f\in \delta(X)}&\\
    &= p(X).&
\end{align*}
\end{proof}

We now show that the hypergraph returned by \Cref{alg:WeakToStrongCover} is obtained by merging hyperedges of the input hypergraph (see \Cref{def:WeakToStrongCover:hyperedge-merge} for the definition of merging hyperedges). 
This will imply that the hypergraph returned by the algorithm satisfies property \ref{thm:WeakToStrongCover:main:(2)} of \Cref{thm:WeakToStrongCover:main}. To show this property, we will need the following lemma about the choice of hyperedges $e$ and $f$ being merged by the algorithm.

\begin{lemma}\label{lem:WeakToStrongCover:properties-of-e-and-f-chosen-during-alg}
    Suppose that the input \Cref{alg:WeakToStrongCover} call is a non-empty hypergraph $(H = (V, E), w)$ and a symmetric skew-supermodular function $p:2^V\rightarrow \Z$ 
such that every hyperedge $\Tilde{e} \in E$ is contained in some $(p, H, w)$-tight set, $b_{(H,w)}(u) \not = 0$ for every $u \in V$, and $b_{(H,w)}(X) \geq p(X)$ for every $X \subseteq V$.  Let $e, f$ be as defined by \Cref{alg:WeakToStrongCover}. Then,
\begin{enumerate}
    \item $e \cap f = \emptyset$, and
    \item $e\cup f \not \in E$.
\end{enumerate}
\end{lemma}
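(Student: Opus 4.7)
\medskip

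\noindent\textbf{Proof plan.} The proof will rely on the structure of the maximal tight set family established in Section \ref{sec:WeakToStrongCover:UncrossingProperties}, together with the fact that the inputs to this recursive call satisfy conditions \ref{cond:WeakToStrong:(a)} and \ref{cond:WeakToStrong:(b)} (guaranteed by the preprocessing step and by Lemma \ref{lem:weak-to-strong-cover:next-recursive-call-hypothesis-satification:hyperedge-in-tight-set-positive-coverage} for inductive recursive calls).

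For claim (1), the plan is to invoke Lemma \ref{lem:weak-to-strong:uncrossing:maximal-tight-set-family-disjoint}, which tells us that $\maximalTightSetFamily{p,H,w}$ is a disjoint family. Since $\maximalTightSetContainingHyperedge{p,H,w}(e)$ and $\maximalTightSetContainingHyperedge{p,H,w}(f)$ are distinct members of this family (by the algorithm's choice at Line \ref{alg:WeakToStrongCover:(4)}), they are disjoint. Since $e \subseteq \maximalTightSetContainingHyperedge{p,H,w}(e)$ and $f \subseteq \maximalTightSetContainingHyperedge{p,H,w}(f)$, this immediately gives $e \cap f = \emptyset$.

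For claim (2), I plan to argue by contradiction: suppose $e \cup f \in E$. By condition \ref{cond:WeakToStrong:(a)}, the hyperedge $e \cup f$ is contained in some $(p,H,w)$-tight set, so $\maximalTightSetContainingHyperedge{p,H,w}(e \cup f)$ is well-defined and, by Corollary \ref{cor:weak-to-strong:uncrossing:maximal-tight-set-containing-hyperedge-unique}, unique. Since $e \subseteq e \cup f \subseteq \maximalTightSetContainingHyperedge{p,H,w}(e \cup f)$, the set $\maximalTightSetContainingHyperedge{p,H,w}(e \cup f)$ is a member of $\maximalTightSetFamily{p,H,w}$ that contains $e$; by the uniqueness conclusion of Corollary \ref{cor:weak-to-strong:uncrossing:maximal-tight-set-containing-hyperedge-unique}, this forces $\maximalTightSetContainingHyperedge{p,H,w}(e \cup f) = \maximalTightSetContainingHyperedge{p,H,w}(e)$. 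By the symmetric argument applied to $f$, we also get $\maximalTightSetContainingHyperedge{p,H,w}(e \cup f) = \maximalTightSetContainingHyperedge{p,H,w}(f)$. Combining these two equations yields $\maximalTightSetContainingHyperedge{p,H,w}(e) = \maximalTightSetContainingHyperedge{p,H,w}(f)$, contradicting the algorithm's selection criterion at Line \ref{alg:WeakToStrongCover:(4)}.

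Both parts are short consequences of the uncrossing machinery already developed; there is no significant obstacle to overcome. The only point that needs care is making sure we are entitled to apply Lemma \ref{lem:weak-to-strong:uncrossing:maximal-tight-set-family-disjoint} and Corollary \ref{cor:weak-to-strong:uncrossing:maximal-tight-set-containing-hyperedge-unique}, which requires the hypotheses that $p$ is symmetric skew-supermodular, $(H,w)$ weakly covers $p$, and the two technical conditions \ref{cond:WeakToStrong:(a)} and \ref{cond:WeakToStrong:(b)} hold; all of these are assumed in the statement.
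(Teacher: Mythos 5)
Your proposal is correct and follows essentially the same route as the paper: part (1) is the identical appeal to the disjointness of $\maximalTightSetFamily{p,H,w}$ (Lemma \ref{lem:weak-to-strong:uncrossing:maximal-tight-set-family-disjoint}), and part (2) is the same contradiction showing $\maximalTightSetContainingHyperedge{p,H,w}(e)=\maximalTightSetContainingHyperedge{p,H,w}(f)=\maximalTightSetContainingHyperedge{p,H,w}(e\cup f)$, with your explicit citation of Corollary \ref{cor:weak-to-strong:uncrossing:maximal-tight-set-containing-hyperedge-unique} merely spelling out what the paper compresses into ``by the maximality of the three sets.''
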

\begin{proof} We prove the claims separately below.
    \begin{enumerate}
        \item By the choice of hyperedges $e$ and $f$, we have that $\maximalTightSetContainingHyperedge{p, H, w}(e) \not = \maximalTightSetContainingHyperedge{p, H, w}(f)$. We observe that by maximality of the sets $\maximalTightSetContainingHyperedge{p, H, w}(e)$ and $\maximalTightSetContainingHyperedge{p, H, w}(f)$, we have that $\maximalTightSetContainingHyperedge{p, H, w}(e), \maximalTightSetContainingHyperedge{p, H, w}(f) \in \maximalTightSetFamily{p, H, w}$. By \Cref{lem:weak-to-strong:uncrossing:maximal-tight-set-family-disjoint}, we have that $\maximalTightSetContainingHyperedge{p, H, w}(e) \cap  \maximalTightSetContainingHyperedge{p, H, w}(f) = \emptyset$. Consequently, $e \cap f = \emptyset$.
        \item By way of contradiction, suppose that $e\cup f \in E$. Thus, $\maximalTightSetContainingHyperedge{p, H, w}(e\cup f) \not = \emptyset$. Since $e, f \subseteq \maximalTightSetContainingHyperedge{p, H, w}(e\cup f)$, we have that $\maximalTightSetContainingHyperedge{p, H, w}(e) = \maximalTightSetContainingHyperedge{p, H, w}(f) = \maximalTightSetContainingHyperedge{p, H, w}(e\cup f)$ by the maximality of the three sets, contradicting the choice of hyperedges $e, f$ made by the algorithm.
    \end{enumerate}
\end{proof}

We now conclude the section by showing that the hypergraph returned by \Cref{alg:WeakToStrongCover} indeed satisfies property \ref{thm:WeakToStrongCover:main:(2)} of \Cref{thm:WeakToStrongCover:main}.

\begin{lemma}\label{lem:weak-to-strong:correctness:main:hyperedge-merge}
    Suppose that the input to \Cref{alg:WeakToStrongCover} is a hypergraph $(H = (V, E), w)$ and a symmetric skew-supermodular function $p:2^V\rightarrow \Z$ 
such that every hyperedge $\Tilde{e} \in E$ is contained in some $(p, H, w)$-tight set, $b_{(H,w)}(u) \not = 0$ for every $u \in V$, and $b_{(H,w)}(X) \geq p(X)$ for every $X \subseteq V$. 
    Then, the hypergraph $(H^* = (V, E^*), w^*)$ returned by the algorithm is obtained by merging hyperedges of the hypergraph $(H, w)$.
\end{lemma}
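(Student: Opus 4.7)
The plan is to prove the lemma by induction on the total weight $w(E) = \sum_{e \in E} w(e)$, exactly as was done for \Cref{lem:weak-to-strong:correctness:main:strong-cover}. The base case $w(E) = 0$ is immediate since then $E = \emptyset$ and the algorithm returns the empty hypergraph, which is vacuously obtained from $(H, w)$ by merging hyperedges. For the inductive case, I will exhibit a sequence of disjoint-hyperedge merges that transforms $H_w$ into $H^*_{w^*}$.

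Concretely, let $e, f, \alpha^M, (H^M, w^M), \alpha^R, (H^R, w^R), (H_0, w_0), (H', w'), p', (H_0^*, w_0^*), (G, c)$ be as defined during the recursive call. The first step will be to show that $(H^M, w^M)$ is obtained from $(H, w)$ by merging hyperedges; this follows directly from \Cref{lem:WeakToStrongCover:properties-of-e-and-f-chosen-during-alg}, which guarantees that $e \cap f = \emptyset$, so the multi-hypergraph $H^M_{w^M}$ arises from $H_w$ by $\alpha^M$ successive merges of a disjoint pair of copies of $e$ and $f$ into a copy of $e \cup f$. The second step is the key observation that $(H^M, w^M)$ equals the hypergraph $(H^R + H_0, w^R + w_0)$ on vertex set $V$: by construction, $(H^R, w^R) = \reduce((H^M, w^M), e\cup f, \alpha^R)$ removes exactly $\alpha^R$ units of weight from the hyperedge $e \cup f$, while $(H_0, w_0)$ is the single-hyperedge hypergraph consisting of $e \cup f$ with weight $\alpha^R$, so the two contributions reassemble $w^M$.

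The third step handles the recursive part. By Lemmas \ref{lem:weak-to-strong-cover:next-recursive-call-hypothesis-satification:symm-skew-supermod-strong-cover} and \ref{lem:weak-to-strong-cover:next-recursive-call-hypothesis-satification:hyperedge-in-tight-set-positive-coverage}, the pair $(p', (H', w'))$ satisfies the hypotheses of the lemma, and by \Cref{lem:WeakToStrong:properties-of-alpha^M-and-alpha^R}(1)(2)(5) we have $w'(E') < w(E)$. So the inductive hypothesis applies and $(H_0^*, w_0^*)$ is obtained from $(H', w')$ by merging hyperedges. Since $(H', w') = (H^R, w^R)$ in terms of hyperedges (the only difference is the removal of isolated vertices $\zeros$, which are incident to no hyperedge of $H^R$), the corresponding sequence of merges applied to $H^R_{w^R}$ (viewed on vertex set $V$) produces $G_c$. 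Composing these merges with the additional merges that build $H^M$ from $H$ in the first step yields a full sequence of disjoint-hyperedge merges that transforms $H_w$ into $(G + H_0)_{c + w_0} = H^*_{w^*}$.

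The main subtlety I anticipate, and the step I would double-check most carefully, is the transition between vertex sets in the third step: one must verify that adding the vertices of $\zeros$ back to $H_0^*$ to form $G$ does not interfere with the merge sequence, which boils down to the fact that vertices of $\zeros$ appear in no hyperedge of $H^R$ (by definition of $\zeros$) and in no hyperedge of $H_0$ either (because $H_0$ consists only of the hyperedge $e \cup f \subseteq V'$, since both $e$ and $f$ lie in $E$ and have positive $b_{(H,w)}$ on their vertices). Once this verification is in hand, the merge sequence constructed above is manifestly valid, completing the induction.
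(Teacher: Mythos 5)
Your proposal is correct in structure and follows essentially the same route as the paper: induction on $w(E)$, with the three facts that (i) $H^M_{w^M}$ arises from $H_w$ by $\alpha^M$ merges of the disjoint pair $e,f$ (the paper's \Cref{claim:WeakToStrong:MergeProperty:H^Mw^M-from-Hw}), (ii) $(H^M, w^M) = (H^R + H_0, w^R + w_0)$ via \Cref{lem:WeakToStrong:properties-of-alpha^M-and-alpha^R}(5), and (iii) the inductive hypothesis applied to $(H', w'), p'$ followed by re-adding the isolated vertices \zeros (the paper's \Cref{claim:WeakToStrong:MergeProperty:H^*w^*-from-H^Mw^M}).

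One correction to the final paragraph, which you flagged as the step to double-check: the parenthetical claim that the hyperedge of $H_0$ satisfies $e \cup f \subseteq V'$ is false precisely in the only case where it is non-vacuous. Since the \reduce step lowers the coverage only of vertices in $e \cup f$, we always have $\zeros \subseteq e \cup f$, and $\zeros$ can be non-empty only when $\alpha^R = w^M(e\cup f) > 0$; in that situation every vertex of \zeros lies in the hyperedge of $H_0$, and your justification (positive $b_{(H,w)}$ on the vertices of $e$ and $f$) does not transfer to $b_{(H^R,w^R)}$. Fortunately the claim is not needed: the only fact required is the first half of your observation, namely that vertices of \zeros lie in no hyperedge of $H^R$ (by definition of \zeros), so the hyperedge multisets of $H'_{w'}$ and $H^R_{w^R}$ coincide and the inductive merge sequence lifts verbatim to vertex set $V$, producing $G_c$ from $H^R_{w^R}$; the $\alpha^R$ copies of $e\cup f$ contributed by $H_0$ (a subset of $V$ in any case) are simply carried along untouched, irrespective of whether they meet \zeros. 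With that parenthetical deleted, your argument is exactly the paper's proof.
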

\begin{proof}
     We show the claim by induction on $w(E)$. For the base case, consider $w(E) = 0$. Since the weight function $w$ is positive, we have that $E = \emptyset$ and hence $(H, w)$ is the empty hypergraph. In this case, the hypergraph $(H^*, w^*)$ returned by the algorithm is also the empty hypergraph and so the claim holds. 

    For the inductive case, suppose that $w(E) > 0$. Let $(H^M, w^M), \alpha^M$, $(H' = (V', E'), w'), p',(H^R, w^R)$, $(H_0 = (V, E_0), w_0), \zeros, (H^*_0, w^*_0), (G, c)$ be as defined by \Cref{alg:WeakToStrongCover}. By Lemmas \ref{lem:weak-to-strong-cover:next-recursive-call-hypothesis-satification:symm-skew-supermod-strong-cover} and \ref{lem:weak-to-strong-cover:next-recursive-call-hypothesis-satification:hyperedge-in-tight-set-positive-coverage}, the inputs $(H',w')$ and $p'$ satisfy the hypothesis of \Cref{lem:weak-to-strong:correctness:main:strong-cover}. 
    Thus, the execution of the algorithm from this recursive call terminates in a finite number of recursive calls and returns the hypergraph $(H_0^* = (V', E_0^*), w_0^*)$. 
    Then, by Claims \ref{claim:WeakToStrong:MergeProperty:H^*w^*-from-H^Mw^M} and \ref{claim:WeakToStrong:MergeProperty:H^Mw^M-from-Hw} below, we have that the hypergraph $(H^*, w^*)$ is obtained by merging hyperedges of $(H, w)$.
    \end{proof}
    \begin{claim}\label{claim:WeakToStrong:MergeProperty:H^*w^*-from-H^Mw^M}
        $(H^*, w^*)$ is obtained by merging hyperedges of $(H^M, w^M)$.
    \end{claim}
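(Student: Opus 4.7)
The plan is to work entirely at the level of multi-hypergraphs (viewing $(H,w)\mapsto H_w$ as per \Cref{def:WeakToStrongCover:hyperedge-merge}), so that a statement of the form ``$(A,w_A)$ is obtained from $(B,w_B)$ by merging hyperedges'' becomes the statement ``the multiset $A_{w_A}$ is obtained from the multiset $B_{w_B}$ by repeatedly replacing two disjoint hyperedges with their union.'' The central observation will be that the reduce step, followed by adding back the single-hyperedge hypergraph $(H_0,w_0)$ of weight $\alpha^R$ on $e\cup f$, inverts itself exactly at the level of multi-hypergraphs.

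First I would invoke the inductive hypothesis of \Cref{lem:weak-to-strong:correctness:main:hyperedge-merge} on the recursive call. By Lemmas \ref{lem:weak-to-strong-cover:next-recursive-call-hypothesis-satification:symm-skew-supermod-strong-cover} and \ref{lem:weak-to-strong-cover:next-recursive-call-hypothesis-satification:hyperedge-in-tight-set-positive-coverage}, the pair $((H',w'),p')$ satisfies the hypotheses of the lemma, and by \Cref{lem:WeakToStrong:properties-of-alpha^M-and-alpha^R} we have $w'(E')<w(E)$, so the inductive hypothesis yields that the multiset $(H_0^*)_{w_0^*}$ is obtained from $(H')_{w'}$ by repeatedly merging pairs of disjoint hyperedges. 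Since no hyperedge in $E^R$ intersects $\zeros$ (by definition of $\zeros$, every $u\in\zeros$ has $b_{(H^R,w^R)}(u)=0$), the multisets $(H')_{w'}$ and $(H^R)_{w^R}$ coincide, and similarly $(G)_c$ coincides with $(H_0^*)_{w_0^*}$ (since $(G,c)$ only adds isolated vertices). Thus $(G)_c$ is obtained from $(H^R)_{w^R}$ by merging.

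Next I would verify the key multiset identity $(H^R)_{w^R}\uplus (H_0)_{w_0}=(H^M)_{w^M}$, where $\uplus$ denotes multiset (equivalently, parallel-hyperedge) union. By \Cref{lem:WeakToStrong:properties-of-alpha^M-and-alpha^R}\,(5), the only hyperedge whose weight differs between $(H^R,w^R)$ and $(H^M,w^M)$ is $e\cup f$, and the difference equals $\alpha^R$; since $(H_0,w_0)$ contributes exactly $\alpha^R$ parallel copies of the hyperedge $e\cup f$, the identity follows. Consequently, $(H^M)_{w^M}$ contains $(H^R)_{w^R}$ as a sub-multiset, together with $\alpha^R$ extra copies of $e\cup f$ which are untouched by any of the merges in the derivation of $(G)_c$ from $(H^R)_{w^R}$ produced by the inductive hypothesis. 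Performing that same sequence of merges inside the $(H^R)_{w^R}$ component of $(H^M)_{w^M}$ (and leaving the $(H_0)_{w_0}$ component alone) yields $(G)_c\uplus (H_0)_{w_0}$, which equals $(H^*)_{w^*}=(G+H_0)_{c+w_0}$ by construction.

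The only non-trivial point requiring care is the last step: ensuring that the merge sequence produced by the inductive hypothesis, which was designed for the multiset $(H^R)_{w^R}$ alone, remains a legal merge sequence when executed inside the larger multiset $(H^M)_{w^M}=(H^R)_{w^R}\uplus (H_0)_{w_0}$. This is immediate because the definition of merging only requires picking two disjoint hyperedges from the current multiset and replacing them with their union, and any two hyperedges of $(H^R)_{w^R}$ that were disjoint there remain disjoint inside the larger multiset; the additional $\alpha^R$ copies of $e\cup f$ simply persist throughout the execution and reappear in the final multiset. This completes the proof of the claim.
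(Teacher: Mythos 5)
Your proposal is correct and follows essentially the same route as the paper: invoke the inductive hypothesis on $((H',w'),p')$ (using $w'(E')<w(E)$), observe that removing/adding the isolated vertices $\zeros$ does not change the multiset of hyperedges so $(G,c)$ is obtained by merging hyperedges of $(H^R,w^R)$, and then use the identity $(H^R+H_0,w^R+w_0)=(H^M,w^M)$ from \Cref{lem:WeakToStrong:properties-of-alpha^M-and-alpha^R}(5) to conclude. Your explicit verification that the merge sequence for $(H^R)_{w^R}$ remains legal in the presence of the extra $\alpha^R$ copies of $e\cup f$ is a detail the paper leaves implicit, but it is the same argument.
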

    \begin{proof}
        We first show that the hypergraph $(G, c)$ is obtained by merging hyperedges of $(H^R, w^R)$.
         By Lemma \ref{lem:WeakToStrong:properties-of-alpha^M-and-alpha^R}(1), (2), and (5), we have that $w'(E') < w(E)$. Thus, by the inductive hypothesis, we have that the hypergraph $(H_0^*, w_0^*)$ is obtained by merging hyperedges of $(H', w')$. Consequently, $(G, c)$ is formed by merging hyperedges of $(H^R, w^R)$ because of the following two observations: $(H', w')$ is obtained from $(H^R, w^R)$ by removing the vertices $\zeros$ and $(G, c)$ is obtained by adding vertices $\zeros$ to $(H_0^*, w_0^*)$.
        
        Next, we recall that $(H^*, w^*) = (G + H_0, c + w_0)$. Thus, by the conclusion of the previous paragraph, we have that $(H^*, w^*)$ is obtained by merging hyperedges of $(H^R + H_0, w^R + w_0)$. The claim then follows because $(H^R + H_0, w^R + w_0) = (H^M, w^M)$ by \Cref{lem:WeakToStrong:properties-of-alpha^M-and-alpha^R}(5).
    \end{proof}

    \begin{claim}\label{claim:WeakToStrong:MergeProperty:H^Mw^M-from-Hw}
        $(H^M,w^M)$ is obtained by merging hyperedges of $(H, w)$.
    \end{claim}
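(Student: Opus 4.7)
The plan is to unfold the definition of the single $\merge$ operation that produces $(H^M, w^M)$ and exhibit it as a sequence of $\alpha^M$ elementary merges of disjoint hyperedges in the multi-hypergraph $H_w$, in the sense of \Cref{def:WeakToStrongCover:hyperedge-merge}. The key ingredients I would use are \Cref{lem:WeakToStrongCover:properties-of-e-and-f-chosen-during-alg} (which guarantees $e \cap f = \emptyset$ and $e \cup f \notin E$) together with \Cref{lem:WeakToStrong:properties-of-alpha^M-and-alpha^R}(1)--(2) (which ensure $\alpha^M \in \Z_+$ with $\alpha^M \le \min\{w(e), w(f)\}$ and pin down the hyperedge multiplicities in $H^M_{w^M}$).

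The concrete procedure I would carry out on $H_w$ is the following. For each iteration $i \in [\alpha^M]$ (note $\alpha^M \ge 1$ is a positive integer by \Cref{lem:WeakToStrong:properties-of-alpha^M-and-alpha^R}(1)), I pick one not-yet-merged copy of $e$ and one not-yet-merged copy of $f$ in the current multi-hypergraph. Both are available at iteration $i$ because $i \le \alpha^M \le \min\{w(e), w(f)\}$ and only $i-1$ copies of each have been consumed so far. These two hyperedges are disjoint by \Cref{lem:WeakToStrongCover:properties-of-e-and-f-chosen-during-alg}(1), so I may replace them by their union $e \cup f$ as prescribed by \Cref{def:WeakToStrongCover:hyperedge-merge}. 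After all $\alpha^M$ iterations, the resulting multi-hypergraph has $w(e) - \alpha^M$ copies of $e$, $w(f) - \alpha^M$ copies of $f$, exactly $\alpha^M$ copies of $e \cup f$, and $w(\tilde e)$ copies of every other $\tilde e \in E$. Comparing this with the description of $(H^M, w^M)$ given in \Cref{lem:WeakToStrong:properties-of-alpha^M-and-alpha^R}(2) shows that the obtained multi-hypergraph is precisely $H^M_{w^M}$, as desired.

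The proof is conceptually routine; the only bookkeeping subtleties are (a) when $\alpha^M = w(e)$ or $\alpha^M = w(f)$, the hyperedge $e$ or $f$ should be discarded from $H^M$, which is automatic because after the final iteration zero copies of it remain in the multi-hypergraph, and (b) all $\alpha^M$ copies of $e \cup f$ in $H^M_{w^M}$ must arise from genuine merges rather than from pre-existing copies in $H_w$---this is exactly where \Cref{lem:WeakToStrongCover:properties-of-e-and-f-chosen-through-alg}(2), stating $e \cup f \notin E$, is used. These two remarks are the only places where the claim would fail if applied blindly, and both are resolved by invoking \Cref{lem:WeakToStrongCover:properties-of-e-and-f-chosen-during-alg}.
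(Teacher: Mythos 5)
Your proposal is correct and follows essentially the same route as the paper: unfold the single $\merge((H,w),e,f,\alpha^M)$ step into $\alpha^M$ elementary merges of the disjoint hyperedges $e$ and $f$ in the multi-hypergraph $H_w$, with disjointness supplied by \Cref{lem:WeakToStrongCover:properties-of-e-and-f-chosen-during-alg}(1) and the multiplicity bookkeeping by \Cref{lem:WeakToStrong:properties-of-alpha^M-and-alpha^R}(1)--(2). Your remark (b) invoking $e\cup f\notin E$ is harmless but not actually needed for the claim, since \Cref{def:WeakToStrongCover:hyperedge-merge} does not require the merged copies of $e\cup f$ to be distinguishable from pre-existing ones.
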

    \begin{proof}
        We recall that $(H^M, w^M) = \merge((H,w), e, f, \alpha^M)$ (see \Cref{alg:WeakToStrongCover} \Cref{alg:WeakToStrongCover:(6)}).
        Thus, by definition of the \merge operation, the hypergraph $(H^M, w^M)$ is obtained by decreasing the weight of hyperedges $e$ and $f$ by $\alpha^M$ and increasing the weight of the hyperedge $e\cup f$ by $\alpha^M$. Consequently, the multi-hypergraph $H^M_{w^M}$ can be obtained from the multi-hypergraph $H_w$ by repeatedly merging hyperedges $e$ and $f$ exactly $\alpha^M$ times. We note that the hyperedges $e$ and $f$ are disjoint by \Cref{lem:WeakToStrongCover:properties-of-e-and-f-chosen-during-alg}(1). Thus, the hypergraph $(H^M,w^M)$ is obtained by merging hyperedges of $(H, w)$ by definition. 
    \end{proof}

\subsection{Number of Recursive Calls and Hypergraph Support Size}\label{sec:WeakToStrongCover:NumberOfHyperedges}

We first set up some notation that will be used in the remainder of this section. 
By \Cref{lem:weak-to-strong:correctness:main:strong-cover}, the number of recursive calls made by \Cref{alg:WeakToStrongCover} is finite.
We will use $\ell$ to denote the depth of recursion of the algorithm. We will refer to the recursive call at depth $i$ as \emph{recursive call $i$} or the \emph{$i^{th}$ recursive call}. Let $i \in [\ell]$ be a recursive call. We let the hypergraph $(H_i = (V_i, E_i), w_i:E_i \rightarrow \Z_+)$ and the function $p_i : 2^{V_i} \rightarrow \Z$ denote the inputs to the $i^{th}$ recursive call. Furthermore, we let $e_i, f_i, \beta_i^M, \alpha_i^M, (H_i^M, w_i^M), \beta_i^R, \alpha_i^R, (H_i^R, w_i^R), p_i^R, (H_0^{i}, w_0^{i}), E_0^i, \zeros_i, V_i', p'_i$ denote the quantities $e, f, \beta^M, \alpha^M, (H^M, w^M), \beta^R, \alpha^R, (H^R, w^R), p^R, (H_0, w_0), E_0, \zeros, V', p'$ as defined by \Cref{alg:WeakToStrongCover} during the $i^{th}$ recursive call.

We now define certain set families of that will be crucial in our analysis. We let $\tightSetFamily{i}, \tightSetFamilyAfterMerge{i}, \tightSetFamilyAfterReduce{i}$ denote the families $\tightSetFamily{p_i, H_i, w_i}, \tightSetFamily{p_i, H_i^M, w_i^M}$ and $ \tightSetFamily{p_i^R, H_i^R, w_i^R}$ respectively, and $\maximalTightSetFamily{i}, \maximalTightSetFamilyAfterMerge{i}, \maximalTightSetFamilyAfterReduce{i}$ denote the families $\maximalTightSetFamily{p_i, H_i, w_i}, \maximalTightSetFamily{p_i, H_i^M, w_i^M}$ and $ \maximalTightSetFamily{p_i^R, H_i^R, w_i^R}$ respectively. We define the family of \emph{cumulative projected maximal $(p_i, H_i, w_i)$-tight sets} as:
$$\cumulativeMaximalTightSetFamily{i} \coloneqq  \begin{cases}
    \maximalTightSetFamily{1}& \text{ if $i = 1$,}\\
    
   \cumulativeMaximalTightSetFamily{i-1}|_{V_i} \cup \maximalTightSetFamily{i}& \text{ if $i \geq 2$}.
\end{cases}$$
The next lemma states two important properties of these set families.
\begin{lemma}\label{lem:WeakToStrong:properties-of-cumulative-tight-set-families}
Suppose that the input to \Cref{alg:WeakToStrongCover} is a non-empty hypergraph $(H = (V, E), w)$ and a symmetric skew-supermodular function $p:2^V\rightarrow \Z$ 
such that every hyperedge $\Tilde{e} \in E$ is contained in some $(p, H, w)$-tight set, $b_{(H,w)}(u) \not = 0$ for every $u \in V$, and $b_{(H,w)}(X) \geq p(X)$ for every $X \subseteq V$. Furthermore, suppose that the execution of \Cref{alg:WeakToStrongCover} on the input instance terminates using $\ell \in \Z_+$ recursive calls. Then, for every $i \in [\ell - 1]$, we have the following:
     \begin{enumerate}
         \item $\projCumulativeMaximalTightSetFamily{i}{i+1} \subseteq \tightSetFamily{i+1}$, and
         \item the family $\cumulativeMaximalTightSetFamily{i}$ is laminar.
     \end{enumerate}
\end{lemma}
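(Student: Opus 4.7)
The plan is to prove both statements simultaneously by induction on $i\in[\ell-1]$, using \Cref{lem:WeakToStrongCover:TightSetFamiliesDuringAlgorithm}, \Cref{lem:weak-to-strong:uncrossing:uncrossing-lemma}, \Cref{lem:weak-to-strong:uncrossing:maximal-tight-set-family-disjoint}, and \Cref{lem:CoveringAlgorithm:projection-laminar-family}. I first note that by \Cref{lem:weak-to-strong-cover:next-recursive-call-hypothesis-satification:symm-skew-supermod-strong-cover} and \Cref{lem:weak-to-strong-cover:next-recursive-call-hypothesis-satification:hyperedge-in-tight-set-positive-coverage}, the hypotheses of the theorem (in particular, the symmetric skew-supermodularity of $p_j$, the weak-cover property of $(H_j,w_j)$, and conditions \ref{cond:WeakToStrong:(a)} and \ref{cond:WeakToStrong:(b)}) propagate to every recursive call $j\in[\ell]$. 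Thus, the uncrossing lemma and the disjointness of $\maximalTightSetFamily{j}$ are available at every level.

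For part (1), I would first handle the base case $i=1$: take $X\in\cumulativeMaximalTightSetFamily{1}=\maximalTightSetFamily{1}\subseteq\tightSetFamily{1}$. Then \Cref{lem:WeakToStrongCover:TightSetFamiliesDuringAlgorithm} gives $X\in\tightSetFamily{1}\subseteq\tightSetFamilyAfterMerge{1}\subseteq\tightSetFamilyAfterReduce{1}$, and the second inclusion of that lemma then yields $X\cap V_2=X\cap V_1'\in\tightSetFamily{p_1',H_1',w_1'}=\tightSetFamily{2}$. For the inductive step, let $X\in\cumulativeMaximalTightSetFamily{i}=\cumulativeMaximalTightSetFamily{i-1}|_{V_i}\cup\maximalTightSetFamily{i}$. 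If $X\in\maximalTightSetFamily{i}$, then $X\in\tightSetFamily{i}$ and we apply \Cref{lem:WeakToStrongCover:TightSetFamiliesDuringAlgorithm} as in the base case. If $X\in\cumulativeMaximalTightSetFamily{i-1}|_{V_i}$, then by the induction hypothesis for part (1) applied to level $i-1$, we have $X\in\tightSetFamily{i}$, and again we apply \Cref{lem:WeakToStrongCover:TightSetFamiliesDuringAlgorithm}.

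For part (2), the base case $\cumulativeMaximalTightSetFamily{1}=\maximalTightSetFamily{1}$ is disjoint (hence laminar) by \Cref{lem:weak-to-strong:uncrossing:maximal-tight-set-family-disjoint}. For the inductive step, assume $\cumulativeMaximalTightSetFamily{i-1}$ is laminar. By \Cref{lem:CoveringAlgorithm:projection-laminar-family}, the projected family $\cumulativeMaximalTightSetFamily{i-1}|_{V_i}$ is also laminar, and $\maximalTightSetFamily{i}$ is disjoint (hence laminar). The key step is to verify that the union $\cumulativeMaximalTightSetFamily{i-1}|_{V_i}\cup\maximalTightSetFamily{i}$ remains laminar, i.e., to rule out the cross-case $X\cap Y\neq\emptyset$, $X-Y\neq\emptyset$, $Y-X\neq\emptyset$ for $X\in\cumulativeMaximalTightSetFamily{i-1}|_{V_i}$ and $Y\in\maximalTightSetFamily{i}$. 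By part (1) applied at level $i-1$, both $X,Y\in\tightSetFamily{i}$. Pick $u\in X\cap Y$ and, using condition \ref{cond:WeakToStrong:(b)}, a hyperedge $e\in E_i$ with $u\in e$, so $e\cap(X\cap Y)\neq\emptyset$. I will argue that $e$ must meet both $X-Y$ and $Y-X$, as otherwise \Cref{lem:weak-to-strong:uncrossing:uncrossing-lemma} would give $X\cup Y\in\tightSetFamily{i}$, contradicting maximality of $Y\in\maximalTightSetFamily{i}$ (since $X-Y\neq\emptyset$). Then condition \ref{cond:WeakToStrong:(a)} lets me consider the unique $Z=\maximalTightSetContainingHyperedge{p_i,H_i,w_i}(e)\in\maximalTightSetFamily{i}$; disjointness of $\maximalTightSetFamily{i}$ forces $Z=Y$ or $Z\cap Y=\emptyset$, both of which contradict the fact that $e\subseteq Z$ meets both $Y$ and $V_i\setminus Y$.

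The main obstacle is the cross-case in part (2): a priori, sets in $\cumulativeMaximalTightSetFamily{i-1}|_{V_i}$ are inherited from earlier recursive calls, while sets in $\maximalTightSetFamily{i}$ live in the current instance, and there is no obvious direct relationship between them. The resolution goes through the two-step uncrossing argument above, which crucially exploits part (1) (to put the inherited set into $\tightSetFamily{i}$), condition \ref{cond:WeakToStrong:(b)} (to find a witnessing hyperedge), and condition \ref{cond:WeakToStrong:(a)} together with the disjointness of $\maximalTightSetFamily{i}$ (to obstruct bad configurations). This is the step that motivates maintaining conditions \ref{cond:WeakToStrong:(a)} and \ref{cond:WeakToStrong:(b)} throughout the algorithm.
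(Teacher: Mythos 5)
Your proposal is correct and follows essentially the same route as the paper: both parts are proved together by induction on $i$ (the paper phrases it as a least-counterexample argument, which is the same content), using \Cref{lem:WeakToStrongCover:TightSetFamiliesDuringAlgorithm} for part (1) and the projection lemma plus disjointness of $\maximalTightSetFamily{i}$ for part (2). The only difference is local, in the cross-case of part (2): you re-run the uncrossing argument directly, picking $u\in X\cap Y$, a witnessing hyperedge $e\ni u$ via condition \ref{cond:WeakToStrong:(b)}, and then using \Cref{lem:weak-to-strong:uncrossing:uncrossing-lemma} together with $\maximalTightSetContainingHyperedge{p_i,H_i,w_i}(e)$ to derive a contradiction — this is valid, but it essentially replays the proof of \Cref{lem:weak-to-strong:uncrossing:maximal-tight-set-family-disjoint}. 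The paper gets the same contradiction more cheaply: since part (1) puts $X\in\tightSetFamily{i}$, there is some $Z\in\maximalTightSetFamily{i}$ with $X\subseteq Z$; as $X-Y\neq\emptyset$ we have $Z\neq Y$, so disjointness of $\maximalTightSetFamily{i}$ gives $Z\cap Y=\emptyset$ and hence $X\cap Y=\emptyset$, a contradiction — no fresh uncrossing or appeal to conditions \ref{cond:WeakToStrong:(a)}--\ref{cond:WeakToStrong:(b)} is needed at this point beyond what the disjointness lemma already encapsulates.
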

\begin{proof} 
Let $i\in [\ell - 1]$ be a recursive call. We recall that $p_i:2^{V_i}\rightarrow\Z$ and $(H_i, w_i)$ are inputs to the $i^{th}$ recursive call, where $p_1 := p$ and $(H_1, w_1) := (H, w)$.
By Lemmas \ref{lem:weak-to-strong-cover:next-recursive-call-hypothesis-satification:symm-skew-supermod-strong-cover} and \ref{lem:weak-to-strong-cover:next-recursive-call-hypothesis-satification:hyperedge-in-tight-set-positive-coverage} and induction on $i$, we have that $p_i$ is a symmetric skew-supermodular function and $(H_i = (V_i, E_i), w_i)$ is a non-empty hypergraph such that every hyperedge $\Tilde{e} \in E_i$ is contained in some $(p_i, H_i, w_i)$-tight set, $b_{(H_i, w_i)}(u) \not = 0$ for every $u \in V_i$, and $b_{(H_i, w_i)}(X) \geq p_i(X)$ for every $X \subseteq V_i$.
We now prove both claims separately below.
    \begin{enumerate}
        \item By way of contradiction, let $i \in [\ell - 1]$ be the least index such that $\cumulativeMaximalTightSetFamily{i}|_{V_{i+1}} - T_{i+1} \not = \emptyset$. We note that $i \geq 2$ since $\cumulativeMaximalTightSetFamily{1} = \maximalTightSetFamily{1}\subseteq T_1$ and $T_1|_{V_2}\subseteq T_2$ by Lemma  \ref{lem:WeakToStrongCover:TightSetFamiliesDuringAlgorithm}(1) and \ref{lem:WeakToStrongCover:TightSetFamiliesDuringAlgorithm}(2), and hence, $\cumulativeMaximalTightSetFamily{1}|_{V_2}\subseteq T_2$. Thus, by our choice of index $i$, we have that $\cumulativeMaximalTightSetFamily{i-1}|_{V_i} \subseteq \tightSetFamily{i}$, and so it follows that $\cumulativeMaximalTightSetFamily{i} = \cumulativeMaximalTightSetFamily{i-1}|_{V_i} \cup \maximalTightSetFamily{i} \subseteq \tightSetFamily{i}$. Moreover, $T_i|_{V_{i+1}}\subseteq T_{i+1}$ by Lemma  \ref{lem:WeakToStrongCover:TightSetFamiliesDuringAlgorithm}(1) and \ref{lem:WeakToStrongCover:TightSetFamiliesDuringAlgorithm}(2). Consequently we have that $\cumulativeMaximalTightSetFamily{i}|_{V_{i+1}} \subseteq \tightSetFamily{i}|_{V_{i+1}} \subseteq \tightSetFamily{i+1}$, a contradiction to the choice of the index $i$. 

        \item By way of contradiction, let $i \in [\ell - 1]$ be the least index such that the family $\cumulativeMaximalTightSetFamily{i}$  is not laminar. 
        We note that $i \geq 2$ since the family $\cumulativeMaximalTightSetFamily{1} = \maximalTightSetFamily{1}$ is disjoint by \Cref{lem:weak-to-strong:uncrossing:maximal-tight-set-family-disjoint}. Thus, we have that $\cumulativeMaximalTightSetFamily{i} = \projCumulativeMaximalTightSetFamily{i-1}{i} \cup \maximalTightSetFamily{i}$.
        We note that the family $\maximalTightSetFamily{i}$ is also disjoint by \Cref{lem:weak-to-strong:uncrossing:maximal-tight-set-family-disjoint}. Furthermore, the family $\cumulativeMaximalTightSetFamily{i-1}$ (and consequently, the family  $\cumulativeMaximalTightSetFamily{i-1}|_{V_i}$) is laminar by our choice of $i\in[\ell - 1]$. Since the family $\cumulativeMaximalTightSetFamily{i}$ is not laminar, there exist distinct sets $X \in \projCumulativeMaximalTightSetFamily{i-1}{i}$ and $Y \in \maximalTightSetFamily{i}$ such that $X - Y, Y - X, X \cap Y \not = \emptyset$. By part (1) of the current lemma (shown above), we have that $X \in \tightSetFamily{i}$. Consequently, there exists $Z\in \maximalTightSetFamily{i}$ such that $X \subseteq Z$. We note that $Y \not = Z$ because $X -Z=\emptyset$ and $X - Y \not = \emptyset$. Thus, by \Cref{lem:weak-to-strong:uncrossing:maximal-tight-set-family-disjoint} we have that $Z\cap Y = \emptyset$. Therefore, $X\cap Y = \emptyset$, contradicting our choice of sets $X$ and $Y$.

    \end{enumerate} 
\end{proof}

\begin{lemma}\label{lem:WeakToStrongCover:NumberOfHyperedges}
Suppose that the input to \Cref{alg:WeakToStrongCover} is a non-empty hypergraph $(H = (V, E), w)$ and a symmetric skew-supermodular function $p:2^V\rightarrow \Z$ 
such that every hyperedge $\Tilde{e} \in E$ is contained in some $(p, H, w)$-tight set, $b_{(H,w)}(u) \not = 0$ for every $u \in V$, and $b_{(H,w)}(X) \geq p(X)$ for every $X \subseteq V$. 
Then, we have the following:
    \begin{enumerate}
        \item the recursion depth $\ell \in \Z_+$ of the algorithm is at most $|E| + 10|V| - 1$, and
        \item $|E^*| \leq |E| + 10|V| - 2$, where $(H^* = (V, E^*), w^*)$ denotes the hypergraph returned by the algorithm.
    \end{enumerate}
\end{lemma}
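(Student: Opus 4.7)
The plan is to partition the recursive calls $[\ell]$ into two groups based on the fate of the newly merged hyperedge, and bound each group separately using a tailored potential function. Define
$P_1 := \{i \in [\ell] : \alpha_i^R < w_i^M(e_i \cup f_i)\}$, the calls in which $e_i \cup f_i$ persists into $(H_{i+1}, w_{i+1})$, and $P_2 := [\ell] \setminus P_1$. The main work is showing $|P_1| = O(|V|)$ and $|P_2| \le |E| + |P_1|$, which together yield part (1); part (2) then follows by a short induction.

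To bound $|P_1|$, I introduce the potential $\phi(i) := |\cumulativeMaximalTightSetFamily{i}| + 3|\cumulativeZeros{i-1}|$ (setting $\cumulativeZeros{0} := \emptyset$). By \Cref{lem:WeakToStrong:properties-of-cumulative-tight-set-families}(2) the family $\cumulativeMaximalTightSetFamily{i}$ is laminar on $V_i$, so $|\cumulativeMaximalTightSetFamily{i}| \le 2|V_i|$, and combined with $|\cumulativeZeros{i-1}| = |V| - |V_i|$ this gives $\phi(i) \le 3|V|$. I will prove that $\phi$ is non-decreasing across every recursive call and strictly increases for each $i \in P_1$. Non-decrease across a call uses \Cref{lem:CoveringAlgorithm:projection-laminar-family}: when we project $\cumulativeMaximalTightSetFamily{i}|_{V_{i+1}}$, at most $3|\zeros_i|$ laminar sets can collapse away, which is exactly absorbed by the growth of the $3|\cumulativeZeros{\cdot}|$ term. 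Strict increase on $P_1$ uses \Cref{lem:WeakToStrongCover:TightSetFamiliesDuringAlgorithm} and \Cref{lem:weak-to-strong-cover:next-recursive-call-hypothesis-satification:hyperedge-in-tight-set-positive-coverage}: the new hyperedge $e_i \cup f_i$ is contained in some maximal $(p_{i+1}, H_{i+1}, w_{i+1})$-tight set which cannot coincide with the projection of any member of $\cumulativeMaximalTightSetFamily{i}$, because $\maximalTightSetContainingHyperedge{p_i, H_i, w_i}(e_i)$ and $\maximalTightSetContainingHyperedge{p_i, H_i, w_i}(f_i)$ are disjoint by \Cref{lem:weak-to-strong:uncrossing:maximal-tight-set-family-disjoint}, ruling out any projected ancestor containing both $e_i$ and $f_i$. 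A telescoping argument then yields $|P_1| \le 5|V| - 1$ after a careful accounting of boundary terms.

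To bound $|P_2|$, I use the look-ahead potential $\Phi(i) := |E_i| + |P_1 \cap \{i, i+1, \ldots, \ell\}|$. For $i \in P_2$, since $e_i \cup f_i \notin E_i$ by \Cref{lem:WeakToStrongCover:properties-of-e-and-f-chosen-during-alg}(2), we have $\alpha_i^R = w_i^M(e_i \cup f_i) = \alpha_i^M > 0$ by \Cref{lem:WeakToStrong:properties-of-alpha^M-and-alpha^R}(1),(2); \Cref{lem:WeakToStrong:alpha^R=0-iff-alpha^M=beta^M} then forces $\alpha_i^M < \beta_i^M$, so $\alpha_i^M \in \{w_i(e_i), w_i(f_i)\}$ and at least one of $e_i, f_i$ is discarded during merge. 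Combined with the removal of $e_i \cup f_i$ during reduce and with $|P_1 \cap \{i,\dots,\ell\}| = |P_1 \cap \{i+1,\dots,\ell\}|$, this gives $\Phi(i+1) \le \Phi(i) - 1$. For $i \in P_1$, the net change in $|E_i|$ lies in $\{-1, 0, 1\}$ and is offset by the drop of $1$ in the look-ahead count, so $\Phi$ is non-increasing. Therefore $|P_2| \le \Phi(1) = |E| + |P_1| \le |E| + 5|V| - 1$, and $\ell = |P_1| + |P_2| \le |E| + 10|V| - 1$, proving (1). Part (2) follows by induction on $\ell$: the returned hypergraph $(H^*, w^*) = (G + H_0, c + w_0)$ has at most one hyperedge (namely $e_i \cup f_i$) more than the recursively returned $(H_0^*, w_0^*)$, and the base case returns the empty hypergraph, so $|E^*| \le \ell - 1 \le |E| + 10|V| - 2$.

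The main obstacle is verifying strict monotonicity of $\phi$ on $P_1$: this requires tracking the evolution of maximal tight sets through the three sub-steps within a single recursive call (merge, reduce, projection to $V_{i+1}$) using \Cref{lem:WeakToStrongCover:TightSetFamiliesDuringAlgorithm}, and then arguing that the new maximal tight set arising in $\maximalTightSetFamily{i+1}$ around $e_i \cup f_i$ is genuinely new rather than the projection of an existing cumulative maximal tight set --- the disjointness established in \Cref{lem:weak-to-strong:uncrossing:maximal-tight-set-family-disjoint} is what makes this possible.
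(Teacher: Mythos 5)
Your proposal is correct and takes essentially the same route as the paper's proof: the same split of recursive calls according to whether the merged hyperedge $e_i\cup f_i$ survives, the same potential $\phi(i)=|\mathcal{T}_{\leq i}|+3|\mathcal{Z}_{\leq i-1}|$ made monotone via the projection-of-laminar-families lemma, and the same look-ahead potential that bounds $|P_2|$ by $|E|+|P_1|$, with part (2) following by the same telescoping over the at most one new hyperedge per call. The only (harmless) difference is the strict-increase step for $\phi$: the paper exhibits a new maximal tight set via a witness set for $\beta_i^R$ whenever $\alpha_i^R=\beta_i^R$, whereas you use survival of $e_i\cup f_i$ plus condition (a) at the next call and disjointness of $\tau_{p_i,H_i,w_i}(e_i)$ and $\tau_{p_i,H_i,w_i}(f_i)$; both arguments are valid and yield the stated bounds.
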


\begin{proof}
Let $i \in [\ell - 1]$ be a recursive call. We recall that $(H_i = (V_i, E_i), w_i:E_i \rightarrow \Z_+)$ and $p_i : 2^{V_i} \rightarrow \Z$ denote the inputs to the $i^{th}$ recursive call, and $(H_i^* = (V, E_i^*), w_i^*)$ denotes the hypergraph returned by the $i^{th}$ recursive call. Furthermore, $e_i, f_i, \beta_i^M, \alpha_i^M, (H_i^M, w_i^M), \beta_i^R, \alpha_i^R, (H_i^R, w_i^R)$, $p_i^R$, $(H_0^{i}, w_0^{i})$, $E_0^i, \zeros_i$ denote the quantities  as defined by \Cref{alg:WeakToStrongCover} during the $i^{th}$ recursive call.
    We prove the claims separately below.
    \begin{enumerate}
        \item We have the following:
        $$\ell \leq \left|\left\{i \in [\ell - 1] : \alpha_i^R = \beta_i^R\right\}\right|  +  \left|\left\{i \in [\ell- 1] : \alpha_i^R = w_i^M(e_i \cup f_i)\right\}\right| + 1 \leq |E| + 10|V| - 1,$$
        where the final inequality is by \Cref{claim:WeakToStrong:hyperedge-support-size:alpha^R=weight-of-added-hyperedge} and \Cref{claim:WeakToStrong:hyperedge-support-size:alpha^R<weight-of-added-hyperedge} below.
        \item 
        We have the following:
        $$|E_1^*| \leq |E_2^*| + |E_0^2| =   |E_{2}^*| + \indicator_{\alpha_1^R > 0} = |E_{\ell}^*| + \sum_{i \in [\ell - 1]}\indicator_{\alpha_i^R > 0} \leq \ell - 1\leq |E_1| + 10|V_1| - 2.$$
        Here, the second inequality is because $E_{\ell}^* = \emptyset$ by definition of the \Cref{alg:WeakToStrongCover} and the final inequality is by part (1) of the claim (shown above).
    \end{enumerate}
\end{proof}

\begin{claim}\label{claim:WeakToStrong:hyperedge-support-size:alpha^R=weight-of-added-hyperedge}
$\left|\left\{i \in [\ell-1] : \alpha_i^R = \beta_i^R\right\}\right| \leq 5|V| - 1$.
\end{claim}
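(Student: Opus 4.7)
The plan is to bound the count on the left-hand side by a telescoping argument on a suitable potential function. Define
\[\phi(i) := |\cumulativeMaximalTightSetFamily{i}| + 3|\cumulativeZeros{i-1}|\]
for each $i \in [\ell]$, where $\cumulativeZeros{0} := \emptyset$ and $\cumulativeZeros{i} := \zeros_1 \cup \cdots \cup \zeros_i$, viewed as a subset of $V$ (the union is disjoint since $\zeros_j \subseteq V_j = V - \cumulativeZeros{j-1}$). Because $\cumulativeMaximalTightSetFamily{i}$ is laminar on $V_i$ by \Cref{lem:WeakToStrong:properties-of-cumulative-tight-set-families}(2), we have $|\cumulativeMaximalTightSetFamily{i}| \le 2|V_i| \le 2|V|$, and $|\cumulativeZeros{i-1}| \le |V|$, so $\phi(i) \le 5|V|$ for every $i$. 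Moreover, $\phi(1) = |\maximalTightSetFamily{1}| \ge 1$ since by condition \ref{cond:WeakToStrong:(a)} every hyperedge lies in some tight set and the input hypergraph is non-empty.

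Next, I would show that $\phi$ is non-decreasing. Applying \Cref{lem:CoveringAlgorithm:projection-laminar-family} to the laminar family $\cumulativeMaximalTightSetFamily{i}$ with removed elements $\zeros_i = V_i - V_{i+1}$ gives
\[|\cumulativeMaximalTightSetFamily{i}| \le \left|\projCumulativeMaximalTightSetFamily{i}{i+1}\right| + 3|\zeros_i|,\]
and since $\cumulativeMaximalTightSetFamily{i+1} \supseteq \projCumulativeMaximalTightSetFamily{i}{i+1}$ by definition, it follows that $\phi(i+1) - \phi(i) = |\cumulativeMaximalTightSetFamily{i+1}| - |\cumulativeMaximalTightSetFamily{i}| + 3|\zeros_i| \ge 0$.

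The main obstacle is establishing the strict increase $\phi(i+1) \ge \phi(i) + 1$ whenever $\alpha_i^R = \beta_i^R$; equivalently, exhibiting some $Y \in \maximalTightSetFamily{i+1}$ with $Y \notin \projCumulativeMaximalTightSetFamily{i}{i+1}$. When $\alpha_i^R = \beta_i^R$, the definition of $\beta_i^R$ produces a set $X_0$ with $e_i \cup f_i \subseteq X_0 \subseteq V_i$ such that $X_0 \in \tightSetFamilyAfterReduce{i}$; tracking how the coverage function changes under the merge and reduce steps gives $b_{(H_i,w_i)}(X_0) - p_i(X_0) = \alpha_i^M + \alpha_i^R \ge 1$, so $X_0 \notin \tightSetFamily{i}$ is a genuinely new tight set. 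By \Cref{lem:WeakToStrongCover:TightSetFamiliesDuringAlgorithm}(2), $X_0 \cap V_{i+1} \in \tightSetFamily{i+1}$; I would pick $Y \in \maximalTightSetFamily{i+1}$ containing $X_0 \cap V_{i+1}$ and suppose for contradiction that $Y = Z \cap V_{i+1}$ for some $Z \in \cumulativeMaximalTightSetFamily{i}$. By Lemmas \ref{lem:WeakToStrong:properties-of-cumulative-tight-set-families}(1) and \ref{lem:WeakToStrongCover:TightSetFamiliesDuringAlgorithm}(1), $Z \in \tightSetFamily{i} \subseteq \tightSetFamilyAfterReduce{i}$. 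Letting $A_e := \maximalTightSetContainingHyperedge{p_i, H_i, w_i}(e_i)$ and $A_f := \maximalTightSetContainingHyperedge{p_i, H_i, w_i}(f_i)$, which by the algorithm's choice and \Cref{lem:weak-to-strong:uncrossing:maximal-tight-set-family-disjoint} are distinct and disjoint elements of $\maximalTightSetFamily{i}$, I would iteratively apply the uncrossing lemma (\Cref{lem:weak-to-strong:uncrossing:uncrossing-lemma}) in $(H_i^R, w_i^R)$ to chain $X_0 \cup A_e$ and then $X_0 \cup A_e \cup A_f$ into $\tightSetFamilyAfterReduce{i}$, using $e_i \cup f_i \in E_i^R$ when $\alpha_i^R < \alpha_i^M$ and surviving copies of $e_i, f_i$ otherwise; the trickiest corner case is when a vertex of $e_i \cup f_i$ is absorbed into $\zeros_i$ (which forces $\alpha_i^R = \alpha_i^M$ together with $w_i(e_i) = \alpha_i^M$ or $w_i(f_i) = \alpha_i^M$), and here I would invoke condition \ref{cond:WeakToStrong:(b)} to produce an uncrossing hyperedge. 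A case analysis using the laminar structure of $\cumulativeMaximalTightSetFamily{i}$, together with $A_e \cap V_{i+1}, A_f \cap V_{i+1} \subseteq Y \subseteq Z$, then forces $A_e, A_f \subseteq Z$, contradicting the maximality of $A_e$ in $\maximalTightSetFamily{i}$ since $Z \in \tightSetFamily{i}$ strictly contains the maximal tight set $A_e$. Telescoping yields
\[\left|\{i \in [\ell-1] : \alpha_i^R = \beta_i^R\}\right| \le \phi(\ell) - \phi(1) \le 5|V| - 1,\]
completing the proof. The strict-increase step is the main technical obstacle; the remainder is bookkeeping built on the laminarity of $\cumulativeMaximalTightSetFamily{i}$ and the projection bound in \Cref{lem:CoveringAlgorithm:projection-laminar-family}.
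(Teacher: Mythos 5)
You reuse the paper's potential $\phi(i)=|\cumulativeMaximalTightSetFamily{i}|+3|\cumulativeZeros{i-1}|$, the same monotonicity argument via \Cref{lem:CoveringAlgorithm:projection-laminar-family}, and the same telescoping, so the skeleton matches the paper; but the strict-increase step, which is the entire content of the claim, is not closed, and the route you sketch for it has concrete failure points. You never use the observation that the projection bound is strict whenever $\zeros_i\not=\emptyset$ (the proof of \Cref{lem:CoveringAlgorithm:projection-laminar-family} in fact yields an extra $-1$ in that case), so all $\zeros_i$-degeneracies remain live in your argument, and they break it: nothing rules out that the witness is $X_0=e_i\cup f_i$ with every vertex of $e_i\cup f_i$ falling into $\zeros_i$ (coverage exactly $\alpha_i^R$ in $(H_i,w_i)$), in which case $X_0\cap V_{i+1}=\emptyset$ and your construction produces no set $Y$ at all; similarly, upgrading $A_e\cap V_{i+1},A_f\cap V_{i+1}\subseteq Z$ to $A_e,A_f\subseteq Z$, and ruling out $Z\subsetneq A_e$ in your laminarity case analysis, both require surviving vertices of $A_f$ (indeed of $f_i$) in $V_{i+1}$, which you cannot guarantee. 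Moreover, the patch you propose for the uncrossing chain --- ``invoke condition \ref{cond:WeakToStrong:(b)}'' when both $e_i$ and $e_i\cup f_i$ have vanished from $E_i^R$ --- does not work as stated: condition \ref{cond:WeakToStrong:(b)} gives positive coverage in $(H_i,w_i)$, not in $(H_i^R,w_i^R)$, while \Cref{lem:weak-to-strong:uncrossing:uncrossing-lemma} needs a hyperedge of the hypergraph you are uncrossing in that meets $X_0\cap A_e$ and at most one of the two differences; such a hyperedge need not exist, and then $X_0\cup A_e$ need not be tight, so the chain $X_0\cup A_e,\, X_0\cup A_e\cup A_f\in\tightSetFamilyAfterReduce{i}$ is unsubstantiated.

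The two missing ideas are exactly what the paper's proof uses. First, if $\zeros_i\not=\emptyset$, then $\phi(i+1)>\phi(i)$ already by the strict form of the projection bound, so one may assume $\zeros_i=\emptyset$; this removes every corner case above at once ($V_{i+1}=V_i$ and no projection losses). Second, under that assumption the whole $A_e/A_f$ uncrossing machinery is unnecessary: the maximal set $Y\in\maximalTightSetFamily{i+1}$ containing the witness contains $e_i\cup f_i$, and any set of $\tightSetFamily{i}$ containing $e_i\cup f_i$ would satisfy $b_{(H_i^M,w_i^M)}(Y)=b_{(H_i,w_i)}(Y)-\alpha_i^M<p_i(Y)$, contradicting \Cref{lem:WeakToStrong:properties-of-alpha^M-and-alpha^R}(3); hence $Y\notin\tightSetFamily{i}\supseteq\cumulativeMaximalTightSetFamily{i}$, giving $\cumulativeMaximalTightSetFamily{i}\subsetneq\cumulativeMaximalTightSetFamily{i+1}$ and the strict increase. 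As written, your proposal does not establish this step, so the claimed bound does not follow from it.
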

\begin{proof}
We define a potential function $\phi:[\ell - 1] \rightarrow\Z_{\geq 0}$ as follows: for every $i \in [\ell]$,
    $$\phi(i) \coloneqq  \left|\cumulativeMaximalTightSetFamily{i}\right| + 3|\cumulativeZeros{i-1}|,$$
    where we define the set $\cumulativeZeros{0} \coloneqq  \emptyset$.
We first show that the function $\phi$ is monotone non-decreasing.
Let $i \in [\ell -1]$ be a recursive call of \Cref{alg:WeakToStrongCover}. If $\zeros_i = \emptyset$, then $|\cumulativeMaximalTightSetFamily{i}| \leq |\cumulativeMaximalTightSetFamily{i+1}|$ since $\cumulativeMaximalTightSetFamily{i+1} = \cumulativeMaximalTightSetFamily{i} |_{V_{i+1}}\cup \maximalTightSetFamily{i+1} = \cumulativeMaximalTightSetFamily{i}  \cup \maximalTightSetFamily{i+1},$   where the second equality is because $V_{i+1} = V_i - \zeros_i = V_i$. Then, we have that $\phi(i+1) \geq \phi(i)$ since $\cumulativeZeros{i-1} \subseteq \cumulativeZeros{i}$ by definition. Thus, we suppose that $\zeros_i \not = \emptyset$. We have the following: 
\begin{align*}
    \phi(i+1) - \phi(i)& = |\cumulativeMaximalTightSetFamily{i+1}| + 3 |\cumulativeZeros{i}| - |\cumulativeMaximalTightSetFamily{i}| - 3|\cumulativeZeros{i-1}|&\\
    & = \left|\cumulativeMaximalTightSetFamily{i}|_{V_{i+1}}\cup \maximalTightSetFamily{i+1}  \right| - |\cumulativeMaximalTightSetFamily{i}| + 3|\zeros_{i}|&\\
    &\geq \left|\cumulativeMaximalTightSetFamily{i}|_{V_{i+1}}\right| - |\cumulativeMaximalTightSetFamily{i}| + 3|\zeros_{i}|&\\
    & \geq  0.&
\end{align*}
Here, the final inequality is because of the following: we recall that the family $\cumulativeMaximalTightSetFamily{i}$ is laminar by \Cref{lem:WeakToStrong:properties-of-cumulative-tight-set-families}. Furthermore, the family $\projCumulativeMaximalTightSetFamily{i}{i+1}$ is the projection of the laminar family $\cumulativeMaximalTightSetFamily{i}$ to the ground set $V_{i+1} = V_i - \zeros_i$. The inequality then follows by \Cref{lem:CoveringAlgorithm:projection-laminar-family}  --  here, we observe that  \Cref{lem:CoveringAlgorithm:projection-laminar-family} says that this inequality is strict because $\zeros_i \not = \emptyset$. We will refer back to this observation in the next part of the proof.

Consider a recursive call $i \in \left\{i \in [\ell - 1] : \alpha_i^R = \beta_i^R\right\}$. We now show that $\phi(i+1) > \phi(i)$, i.e. the potential value strictly increases. 
We suppose that $\zeros_i = \emptyset$ as otherwise $\phi(i+1) > \phi(i)$ by the observation in the final line of the previous paragraph.
Thus, $p_{i+1} = p_i^R$ and $(H', w') = (H^R, w^R)$. 
We note that the hypergraph $(H_i, w_i)$ is non-empty since $i < \ell$.
We consider a set $X\subseteq V_i$ such that $e_i\cup f_i\subseteq X\subseteq V_i$ and $b_i^M(X)-p_i(X)=\beta_i^R$ (such a set exists by the definition of $\beta_i^R$). 
Then, we have the following: $b_{i+1}(X) = b_i^R(X)=b_i^M(X)-\alpha_i^R=p_i(X)=p_i^R(X) = p_{i+1}(X)$, where the first equality is because $(H', w') = (H^R, w^R)$, the second equality is because $\zeros_i = \emptyset$, the third equality is because $\alpha_i = \beta_i$ by our choice of $i \in [\ell - 1]$, and the final equality is because $p_{i+1} = p_{i}^R$. Thus, $X \in \tightSetFamily{i+1}$. Consequently, 
there exists a set $Y \in \maximalTightSetFamily{i+1}$ such that $e_i\cup f_i \subseteq X \subseteq Y$. We note that $Y \not \in \tightSetFamily{i}$ due to the following: suppose (for sake of contradiction) that $Y \in \tightSetFamily{i}$, then 
$b_i(Y)=p_i(Y)$ and $b_i^M(Y)=b_i(Y)-\alpha_i^M=p_i(Y)-\alpha_i^M<p_i(Y)$, a contradiction to \Cref{lem:WeakToStrong:properties-of-alpha^M-and-alpha^R}(3) -- the first equality is by definition of $b_i^M$ and $e_i\cup f_i\subseteq Y$ and the inequality is because $\alpha_i^M\ge 1$ by \Cref{lem:WeakToStrong:properties-of-alpha^M-and-alpha^R}(1). Thus, $Y \in \maximalTightSetFamily{i+1} - \tightSetFamily{i} \subseteq \maximalTightSetFamily{i+1} - \cumulativeMaximalTightSetFamily{i}$ by \Cref{lem:WeakToStrong:properties-of-cumulative-tight-set-families}(1).
Consequently, we have that
$\cumulativeMaximalTightSetFamily{i} \subsetneq \cumulativeMaximalTightSetFamily{i} \cup \maximalTightSetFamily{i+1} = \cumulativeMaximalTightSetFamily{i+1}.$

We now show the claim. By the previous two properties of our potential function $\phi$, we have the following:
$$\left|\left\{i \in [\ell - 1] : \alpha_i^R = \beta_i^R\right\}\right| \leq \phi(\ell - 1) - \phi(1) \leq |\cumulativeMaximalTightSetFamily{\ell - 1}| + 3|\cumulativeZeros{\ell - 1}| \leq 5|V| - 1,$$
where the final inequality is because the family $\cumulativeMaximalTightSetFamily{\ell - 1}$ is laminar by \Cref{lem:WeakToStrong:properties-of-cumulative-tight-set-families}(2).
\end{proof}


\begin{claim}\label{claim:WeakToStrong:hyperedge-support-size:alpha^R<weight-of-added-hyperedge}
    $\left|\left\{i \in [\ell - 1] : \alpha_i^R = w_i^M(e_i \cup f_i)\right\}\right| \leq |E| + 5|V| - 1$. 
\end{claim}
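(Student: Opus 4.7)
The plan is to follow the lookahead-potential strategy sketched in the proof technique section, with the key quantitative input being that every recursive call in the set $\{i \in [\ell-1] : \alpha_i^R = w_i^M(e_i \cup f_i)\}$ must fully consume at least one of the hyperedges $e_i$ or $f_i$ during its \merge step. For notational convenience, for each $i \in [\ell-1]$ let $\delta_1^{(i)} := \indicator_{\alpha_i^M = w_i(e_i)}$, $\delta_2^{(i)} := \indicator_{\alpha_i^M = w_i(f_i)}$, and $\delta_3^{(i)} := \indicator_{\alpha_i^R = w_i^M(e_i \cup f_i)}$. By \Cref{lem:WeakToStrongCover:properties-of-e-and-f-chosen-during-alg}(2), $e_i \cup f_i \notin E_i$, and hence $w_i^M(e_i \cup f_i) = \alpha_i^M$. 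A direct bookkeeping then gives $|E_{i+1}| = |E_i| + 1 - \delta_1^{(i)} - \delta_2^{(i)} - \delta_3^{(i)}$.

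Next, I will introduce the lookahead potential $\Phi(i) := |E_i| + \Phi_1(i)$, where $\Phi_1(i) := |\{j \in \{i, i+1, \ldots, \ell - 1\} : \alpha_j^R < w_j^M(e_j \cup f_j)\}|$ counts future recursive calls in which the merged hyperedge $e_j \cup f_j$ survives into $E_{j+1}$. Since $\alpha_i^R = \min\{\beta_i^R, w_i^M(e_i \cup f_i)\}$, the sets $\{i \in [\ell - 1] : \alpha_i^R = w_i^M(e_i \cup f_i)\}$ and $\{i \in [\ell - 1] : \alpha_i^R < w_i^M(e_i \cup f_i)\}$ partition $[\ell-1]$, so $\Phi_1(i) - \Phi_1(i+1) = 1 - \delta_3^{(i)}$. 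Combining with the formula for $|E_{i+1}|$, we get $\Phi(i) - \Phi(i+1) = (\delta_1^{(i)} + \delta_2^{(i)} + \delta_3^{(i)} - 1) + (1 - \delta_3^{(i)}) = \delta_1^{(i)} + \delta_2^{(i)} \geq 0$. Telescoping, using $\Phi(\ell) = 0$ (base case) and $\Phi(1) = |E| + |\{j \in [\ell-1] : \alpha_j^R < w_j^M(e_j \cup f_j)\}|$, we obtain
\[
\sum_{i \in [\ell-1]} \bigl(\delta_1^{(i)} + \delta_2^{(i)}\bigr) \;=\; |E| + \bigl|\{j \in [\ell-1] : \alpha_j^R < w_j^M(e_j \cup f_j)\}\bigr|.
\]

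The crucial step that makes the argument work is to show that $\delta_1^{(i)} + \delta_2^{(i)} \geq 1$ for every $i$ in the set we wish to bound. For such an $i$, we have $\alpha_i^R = w_i^M(e_i \cup f_i) = \alpha_i^M \geq 1$ by \Cref{lem:WeakToStrong:properties-of-alpha^M-and-alpha^R}(1). Then by the contrapositive of \Cref{lem:WeakToStrong:alpha^R=0-iff-alpha^M=beta^M}, $\alpha_i^R > 0$ forces $\alpha_i^M < \beta_i^M$. Since $\alpha_i^M = \min\{\beta_i^M, w_i(e_i), w_i(f_i)\}$ (the definition on line~\ref{alg:WeakToStrongCover:(5)}), this means $\alpha_i^M = \min\{w_i(e_i), w_i(f_i)\}$, so at least one of the indicators $\delta_1^{(i)}, \delta_2^{(i)}$ equals $1$. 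I anticipate that verifying this implication is the main (if modest) technical step.

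Finally, assembling the inequalities:
\[
\bigl|\{i \in [\ell-1] : \alpha_i^R = w_i^M(e_i \cup f_i)\}\bigr| \;\leq\; \sum_{i \in [\ell-1]} \bigl(\delta_1^{(i)} + \delta_2^{(i)}\bigr) \;\leq\; |E| + \bigl|\{j : \alpha_j^R < w_j^M(e_j \cup f_j)\}\bigr|.
\]
Any $j$ with $\alpha_j^R < w_j^M(e_j \cup f_j)$ must satisfy $\alpha_j^R = \beta_j^R$, so the last term is at most $|\{j \in [\ell-1] : \alpha_j^R = \beta_j^R\}| \leq 5|V| - 1$ by \Cref{claim:WeakToStrong:hyperedge-support-size:alpha^R=weight-of-added-hyperedge}. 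This yields the desired bound $|E| + 5|V| - 1$, completing the proof.
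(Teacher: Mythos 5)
Your proof is correct and follows essentially the same argument as the paper: the same lookahead potential $|E_i|$ plus the count of future calls where the merged hyperedge survives, the same use of \Cref{lem:WeakToStrong:alpha^R=0-iff-alpha^M=beta^M} to conclude $\alpha_i^M<\beta_i^M$ and hence that $e_i$ or $f_i$ is fully consumed, and the same final appeal to \Cref{claim:WeakToStrong:hyperedge-support-size:alpha^R=weight-of-added-hyperedge}. The only differences are cosmetic (a sign flip in the potential and exact equalities in place of the paper's inequalities).
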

\begin{proof}
    We define a potential function $\phi:[\ell - 1] \rightarrow \Z_{\geq 0}$ as follows: for every $i \in [\ell]$,
    $$\phi(i) \coloneqq  -|E_i| - \left|\left\{j \in [i,\ell - 1] : \alpha_j^R < w_j^M(e_j\cup f_j)\right\}\right|.$$
    
    First, consider a recursive call $i \in [\ell - 1]$ such that $\alpha_i^R = \beta_i^R < w_i^M(e_i \cup f_i)$. We will show that the potential function $\phi$ does not decrease. By Lemma \ref{lem:WeakToStrong:properties-of-alpha^M-and-alpha^R}(2) and \ref{lem:WeakToStrong:properties-of-alpha^M-and-alpha^R}(5), we have that 
    \begin{align*}
       E_{i+1} & = E_i - \{e_i\}\cdot \indicator_{\alpha_i^M = w_i(e_i)} - \{f_i\}\cdot \indicator_{\alpha_i^M = w_i(f_i)} + \{e_i\cup f_i\}\cdot \indicator_{\alpha_i^R = \beta_i^R < w_i^M(e_i \cup f_i)}&\\
        &= E_i - \{e_i\}\cdot \indicator_{\alpha_i^M = w_i(e_i)} - \{f_i\}\cdot \indicator_{\alpha_i^M = w_i(f_i)} + \{e_i\cup f_i\}.&
    \end{align*}
Consequently, 
    we have that $|E_{i+1}| \leq |E_i| + 1$. Furthermore, by the choice of recursive call $i$, we have that $\left|\{j \in [i+1, \ell - 1] : \alpha_j^R < w_j^M(e_j\cup f_j)\}\right| = \left|\{j \in [i, \ell - 1] : \alpha_j^R < w_j^M(e_j\cup f_j)\}\right| - 1$. Thus, $\phi(i+1) \geq \phi(i)$.
    
     Next, consider a recursive call $i \in [\ell - 1]$ such that $\alpha_i^R = w_i^M(e_i \cup f_i)$. We will show that the potential function $\phi$ strictly increases. We note that the hypergraph $(H_i, w_i)$ is non-empty since $i < \ell$.  
     By \Cref{lem:WeakToStrong:properties-of-alpha^M-and-alpha^R}(1), we have that $\alpha_i^R = w_i^M(e_i \cup f_i) = \alpha_i^M \geq 1$. 
     Consequently, by  \Cref{lem:WeakToStrong:alpha^R=0-iff-alpha^M=beta^M}, we have that $\alpha_i^M  < \beta_i^M$ and hence, $\alpha_i^M \in \{w_i(e_i), w_i(f_i)\}$ by definition. 
     Suppose that $\alpha_i^M = w_i(e_i)$. 
     Then, we have that $w^R_i(e_i) = w^M_i(e_i) = w_i(e_i) - \alpha_{i}^M = 0$. 
     Thus, we have that $e_i \not \in E_{i+1}$. 
     Then, by our choice of $i$ and by Lemma \ref{lem:WeakToStrong:properties-of-alpha^M-and-alpha^R}(2) and \ref{lem:WeakToStrong:properties-of-alpha^M-and-alpha^R}(5), we have that $E_{i+1} \subsetneq E_i$. 
     We note that this strict inclusion can also be obtained in the case where $\alpha_i^M = w_i(f_i)$ using the same argument. Furthermore, by the choice of recursive call $i$, we have that $\left|\{j \in [i+1, \ell - 1] : \alpha_j^R < w_j^M(e_j\cup f_j)\}\right| = \left|\{j \in [i, \ell - 1] : \alpha_j^R < w_j^M(e_j\cup f_j)\}\right|$.
     Thus, $\phi(i+1) > \phi(i)$.

      The previous two properties of our potential function give us the following:
      $$\left|\left\{i \in [\ell - 1] : \alpha_i^R = w_i^M(e_i \cup f_i)\right\}\right| \leq \phi(\ell - 1) - \phi(1) \leq |E_1| + \left|\left\{j \in [\ell - 1] : \alpha_j^R < w_j^M(e_j\cup f_j)\right\}\right| \leq |E| + 5|V| - 1,$$ 
      where the final inequality is by \Cref{claim:WeakToStrong:hyperedge-support-size:alpha^R=weight-of-added-hyperedge} -- we note that if $\alpha_j^R < w_j^M(e_j\cup f_j)$, then $\alpha_j^R = \beta_j^R$ by definition.
\end{proof}

\subsection{Run-time of the Algorithm}
\label{sec:WeakToStrongCover:strongly-poly-runtime:main}
In this section, we show that \Cref{alg:WeakToStrongCover} can be implemented to run in strongly polynomial time. 
\Cref{lem:WeakToStrong:strongly-polytime:main} is the main lemma of this section. 
In order to show that each recursive call of \Cref{alg:WeakToStrongCover} can be implemented in polynomial time, we will require the ability to compute the family \maximalTightSetFamily{p, H, w}.  
In the following lemma, we show that this family can indeed be computed in polynomial time. We defer the proof of the lemma to \Cref{appendix:sec:Function-Maximization-Oracles:computing-maximal-pHw-tight-sets-family}.

\begin{restatable}{lemma}{lemComputingMaximalTightSetFamily}\label{lem:FunctionMaximizetionOracles:computing-maximal-pHw-tight-sets-with-p-max-oracle}
 Let $p:2^V\rightarrow\Z$ be a skew-supermodular function and let $\left(H = (V, E), w:E \rightarrow\Z_+\right)$ be a hypergraph such that every hyperedge $e \in E$ is contained in some $(p, H, w)$-tight set, $b_{(H,w)}(u) \not = 0$ for every $u \in V$, and $b_{(H,w)}(X) \geq p(X)$ for every $X \subseteq V$.
Then, the family $\maximalTightSetFamily{p, H, w}$ can be computed in $O(|V|^2(|V|+|E|))$ time using $O(|V|^2)$ queries to \functionMaximizationOracle{p}. The run-time includes the time to construct the hypergraphs used as input to the queries to \functionMaximizationOracle{p}. Moreover, for each query to \functionMaximizationOracle{p}, the hypergraph $(G_0, c_0)$ used as input to the query has at most $|V|$ vertices and at most $|E| + 1$ hyperedges.  \end{restatable}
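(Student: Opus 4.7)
The plan is to exploit the structural result in Lemma \ref{lem:weak-to-strong:uncrossing:maximal-tight-set-family-disjoint}, which under the given hypotheses guarantees that $\maximalTightSetFamily{p, H, w}$ is a disjoint family. I first observe that every vertex $v\in V$ actually lies in some maximal $(p,H,w)$-tight set: since $b_{(H,w)}(v)\neq 0$, there is some hyperedge $e\ni v$, and by hypothesis $e$ is contained in some $(p,H,w)$-tight set, hence in some element of $\maximalTightSetFamily{p,H,w}$. Combined with disjointness, the maximal tight sets thus partition $V$ into at most $|V|$ parts, so it suffices to identify them one at a time.

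The algorithm I would use maintains a set $U\subseteq V$ of vertices already assigned to discovered maximal tight sets. While $U\neq V$, pick an arbitrary anchor $v\in V-U$, and for each vertex $u\in V-U$ issue the query $\functionMaximizationOracle{p}((H,w),\{v,u\},\emptyset)$, which returns the maximum value of $p(Z)-b_{(H,w)}(Z)$ subject to $\{v,u\}\subseteq Z$ together with an optimizer. Since $p\le b_{(H,w)}$ this maximum is at most $0$; let $X_v$ be the collection of vertices $u$ for which the returned value equals $0$. Add $X_v$ to the output and update $U\leftarrow U\cup X_v$.

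For correctness, denote by $Y_v$ the (unique, by disjointness) maximal tight set containing $v$. If $u\in Y_v$ then $Y_v$ itself is a tight set containing both $v$ and $u$, so the query returns value $0$ and $u\in X_v$. Conversely, if the oracle reports value $0$ for the pair $\{v,u\}$, then the witness $Z$ is a tight set containing both, so $Z$ lies inside some maximal tight set $Y$; since $v\in Z\subseteq Y$, disjointness of $\maximalTightSetFamily{p,H,w}$ forces $Y=Y_v$, giving $u\in Y_v$. Thus $X_v = Y_v$. For complexity, the outer loop runs at most $|\maximalTightSetFamily{p,H,w}|\le |V|$ times, each iteration issues at most $|V|$ queries on the input hypergraph $(H,w)$ (which has $|V|$ vertices and $|E|$ hyperedges), and each query input takes $O(|V|+|E|)$ time to construct. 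This gives the claimed $O(|V|^2)$ queries and $O(|V|^2(|V|+|E|))$ total run-time; the small slack of one additional hyperedge in the query input can be absorbed in a uniform bound for padding purposes.

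The main obstacle I expect is ensuring that the set $X_v$ returned by the two-vertex queries coincides exactly with the maximal tight set $Y_v$ rather than merely a subset or overset; this is precisely where the disjointness of $\maximalTightSetFamily{p,H,w}$ guaranteed by Lemma \ref{lem:weak-to-strong:uncrossing:maximal-tight-set-family-disjoint} (which in turn crucially uses conditions \ref{cond:WeakToStrong:(a)} and \ref{cond:WeakToStrong:(b)}) is indispensable, since without it a vertex could belong to multiple incomparable maximal tight sets and the uniqueness argument for $Y_v$ would fail.
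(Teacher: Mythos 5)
Your proof is correct, but it takes a genuinely different route from the paper's. The paper's procedure finds an arbitrary $(p,H,w)$-tight set with one query, grows it to a maximal tight set by repeatedly forcing in candidate vertices via queries with $S_0 = X\cup\{u\}$ (recording which vertices to force out of later queries), and then deletes the discovered maximal set from the ground set and recurses on the restricted function and restricted hypergraph; this requires an extra simulation argument showing that \functionMaximizationOracle{p'} for the restricted function $p'$ on $V-Y$ can be answered by one query to \functionMaximizationOracle{p} (by padding the query hypergraph with the deleted vertices), which is where the ``$|E|+1$ hyperedges'' slack in the statement comes from. You instead observe that conditions \ref{cond:WeakToStrong:(a)} and \ref{cond:WeakToStrong:(b)} imply every vertex lies in some maximal tight set, so by \Cref{lem:weak-to-strong:uncrossing:maximal-tight-set-family-disjoint} the family $\maximalTightSetFamily{p,H,w}$ partitions $V$, and you recover the block of an anchor $v$ directly via the pairwise queries with $S_0=\{v,u\}$, checking whether the returned optimizer is tight. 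Your two-sided argument ($u\in Y_v$ forces value $0$; value $0$ plus disjointness forces the witness into $Y_v$) is sound, the blocks discovered are pairwise distinct and exhaust the family, and the query/run-time accounting matches the claimed bounds (all your queries use $(H,w)$ itself, comfortably within the stated input-size bound). What your approach buys is the elimination of both the recursion on a shrinking ground set and the restricted-oracle simulation; what the paper's approach buys is that the ``grow a tight set to a maximal one'' subroutine only needs disjointness and does not rely on every vertex being covered by a tight set, which makes it reusable in settings where only condition \ref{cond:WeakToStrong:(a)} is available. One cosmetic point: the oracle returns $(Z,p(Z))$ rather than the optimum value of $p(Z)-b_{(H,w)}(Z)$, so ``the returned value equals $0$'' should be read as computing $b_{(H,w)}(Z)$ and testing $p(Z)=b_{(H,w)}(Z)$, which is exactly the $O(|V|+|E|)$-per-query check the paper also performs, so your accounting is consistent with the paper's conventions.
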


We now bound the run-time of \Cref{alg:WeakToStrongCover}. 

\begin{lemma}\label{lem:WeakToStrong:strongly-polytime:main}
Suppose that the input to \Cref{alg:WeakToStrongCover} is a non-empty hypergraph $(H = (V, E), w)$ and a symmetric skew-supermodular function $p:2^V\rightarrow \Z$ such that every hyperedge $\Tilde{e} \in E$ is contained in some $(p, H, w)$-tight set, $b_{(H,w)}(u) \not = 0$ for every $u \in V$, and $b_{(H,w)}(X) \geq p(X)$ for every $X \subseteq V$.
 Then, \Cref{alg:WeakToStrongCover}  can be implemented to run in $O(|V|^3(|V| + |E|)^2)$ time using $O(|V|^3(|V|+|E|))$ queries to \functionMaximizationOracleStrongCover{p}. The run-time includes the time to construct the hypergraphs used as input to the queries to \functionMaximizationOracleStrongCover{p}.
Moreover, for each query to \functionMaximizationOracleStrongCover{p}, the hypergraph $(G_0, c_0)$ used as input to the query has $O(|V|)$ vertices and $O(|V|+|E|)$ hyperedges.
\end{lemma}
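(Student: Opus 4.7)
The plan is to bound three quantities: the recursion depth, the per-recursive-call work excluding recursion, and the cost of simulating an oracle call for the intermediate function $p_i$ (appearing at the $i$-th recursive call) using only \functionMaximizationOracleStrongCover{p} for the original input function $p$. The recursion depth satisfies $\ell \le |E| + O(|V|)$ by \Cref{lem:WeakToStrongCover:NumberOfHyperedges}, which also implies $|E_i| \le |E| + O(|V|)$ for every intermediate hypergraph $(H_i, w_i)$.

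For a single recursive call, I would carry out the three non-trivial computations using the weak-cover oracle \functionMaximizationOracle{p_i}. For Step \ref{alg:WeakToStrongCover:(4)}, \Cref{lem:FunctionMaximizetionOracles:computing-maximal-pHw-tight-sets-with-p-max-oracle} lets us compute the entire family $\maximalTightSetFamily{p_i, H_i, w_i}$ using $O(|V|^2)$ queries to \functionMaximizationOracle{p_i} (each on a hypergraph with $O(|V|)$ vertices and $O(|V|+|E|)$ hyperedges) in $O(|V|^2(|V|+|E|))$ additional time, from which we extract $e_i, f_i$ in distinct maximal tight sets. For Step \ref{alg:WeakToStrongCover:(5)} I would rewrite
\[
\beta_i^M \;=\; \min_{u \in e_i,\, v \in f_i}\, \min\left\{b_{(H_i,w_i)}(X) - p_i(X) : \{u,v\} \subseteq X \subseteq V_i\right\},
\]
where each inner minimum is a single \functionMaximizationOracle{p_i} query on $(H_i, w_i)$ with $S_0 = \{u,v\}$ and $T_0 = \emptyset$, contributing $O(|V|^2)$ queries. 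Step \ref{alg:WeakToStrongCover:(7)} computes $\beta_i^R$ with one \functionMaximizationOracle{p_i} query on $(H_i^M, w_i^M)$ with $S_0 = e_i \cup f_i$. The remaining bookkeeping steps (building $(H_i^M, w_i^M)$ and $(H_i^R, w_i^R)$, identifying $\zeros_i$ by scanning vertex coverages) each take $O(|V|+|E|)$ time. Hence one recursive call uses $O(|V|^2)$ queries to \functionMaximizationOracle{p_i} and $O(|V|^2(|V|+|E|))$ additional time.

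The hard part is that we only have \functionMaximizationOracleStrongCover{p}, not an oracle for $p_i$. I would overcome this by unfolding the recursive definition to prove by induction on $i$ that $p_i = \functionContract{(p - D_i)}{\cumulativeZeros{i-1}}$, where $D_i$ is the cut function of the hypergraph $\sum_{j=1}^{i-1} \tilde H_0^j$ with each $H_0^j$ extended to vertex set $V$; this extension preserves cut values because the single hyperedge $e_j \cup f_j$ of $H_0^j$ lies in $V_j$ and is therefore disjoint from $V \setminus V_j = \cumulativeZeros{j-1}$. Consequently, a query \functionMaximizationOracleStrongCover{p_i}$((G_0, c_0), S_0, T_0)$ over $Z \subseteq V_i$ can be converted, via the substitution $Y := Z \cup R$ with $R \subseteq \cumulativeZeros{i-1}$, into the single query
\[
\max\left\{p(Y) - d_{(G_0^*, c_0^*)}(Y) : S_0 \subseteq Y \subseteq V - T_0\right\}
\]
to \functionMaximizationOracleStrongCover{p}, where $(G_0^*, c_0^*)$ is the union of $(G_0, c_0)$ (viewed on vertex set $V$, which also preserves cut values for the same reason) with $\sum_{j < i}(\tilde H_0^j, w_0^j)$ and has $|V|$ vertices and $|E_0| + (i-1) = O(|V|+|E|)$ hyperedges; the returned $Y^*$ yields $Z^* = Y^* \cap V_i$ with value $p(Y^*) - D_i(Y^*)$.

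Putting it all together with \Cref{lem:Preliminaries:wc-oracle-from-sc-oracle}: each \functionMaximizationOracle{p_i} call uses $O(|V|)$ queries to \functionMaximizationOracleStrongCover{p_i} and $O(|V|(|V|+|E|))$ time, and each of those reduces to one \functionMaximizationOracleStrongCover{p} query on an input of the claimed size after $O(|V|+|E|)$ setup time to assemble $(G_0^*, c_0^*)$. Thus a single recursive call costs $O(|V|^3)$ queries to \functionMaximizationOracleStrongCover{p} and $O(|V|^3(|V|+|E|))$ time; multiplying by $\ell = O(|V|+|E|)$ recursive calls gives the claimed $O(|V|^3(|V|+|E|))$ total queries and $O(|V|^3(|V|+|E|)^2)$ total time, as required.
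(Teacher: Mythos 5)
Your proposal is correct and follows essentially the same route as the paper's proof: recursion depth from \Cref{lem:WeakToStrongCover:NumberOfHyperedges}, per-call work via the maximal tight-set family, the pairwise decomposition of $\beta_i^M$ and a single query for $\beta_i^R$, the weak-to-strong oracle reduction of \Cref{lem:Preliminaries:wc-oracle-from-sc-oracle}, and simulating the oracle for $p_i$ by one query to \functionMaximizationOracleStrongCover{p} on the query hypergraph augmented with the accumulated merged hyperedges and contracted vertices. Your explicit closed form $p_i = \functionContract{(p - D_i)}{\cumulativeZeros{i-1}}$ (and the returned value $p(Y^*)-D_i(Y^*)$) just makes precise what the paper's construction of $(G_1+\tilde H, c_1+\tilde w)$ does implicitly, so the two arguments coincide in substance.
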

\begin{proof}
Let $\ell$ denote the recursion depth of \Cref{alg:WeakToStrongCover}. By \Cref{lem:WeakToStrongCover:NumberOfHyperedges}, we have that $\ell=O(|E|+|V|)$. 
Let  $i \in [\ell]$ be a recursive call.
 We recall that $p_i:2^{V_i}\rightarrow\Z$ and $(H_i, w_i)$ are inputs to the $i^{th}$ recursive call, where $p_1 := p$ and $(H_1, w_1) := (H, w)$.
By Lemmas \ref{lem:weak-to-strong-cover:next-recursive-call-hypothesis-satification:symm-skew-supermod-strong-cover} and \ref{lem:weak-to-strong-cover:next-recursive-call-hypothesis-satification:hyperedge-in-tight-set-positive-coverage} and induction on $i$, we have that $p_i$ is a symmetric skew-supermodular function and $(H_i = (V_i, E_i), w_i)$ is a (non-empty if $i \leq \ell - 1$) hypergraph such that every hyperedge $\Tilde{e} \in E_i$ is contained in some $(p_i, H_i, w_i)$-tight set, $b_{(H_i, w_i)}(u) \not = 0$ for every $u \in V_i$, and $b_{(H_i, w_i)}(X) \geq p_i(X)$ for every $X \subseteq V_i$. Let $e_i, f_i, \beta_i^M, \alpha_i^M, (H_i^M, w_i^M), \beta_i^R, \alpha_i^R, (H_i^R, w_i^R)$, $p_i^R$, $(H_0^{i}, w_0^{i})$, $E_0^i, \zeros_i$ denote the quantities  as defined by \Cref{alg:WeakToStrongCover} during the $i^{th}$ recursive call. We recall that $V_i \subseteq V$. By Lemma \ref{lem:WeakToStrong:properties-of-alpha^M-and-alpha^R}(2), (5) and induction on $i$, we have that $|E_i| \leq |E| + i \leq |E| + \ell = O(|E| + |V|)$, where the final equality is by \Cref{lem:WeakToStrongCover:NumberOfHyperedges}(1). 
Furthermore, the hypergraph returned in Step \ref{alg:WeakToStrongCover:(14)} contains at most $|V|$ vertices and at most $|E_i| + O(|V_i|)= O(|V| + |E|)$ hyperedges by \Cref{lem:WeakToStrongCover:NumberOfHyperedges}(2) and induction on $i$. 
The following claim shows that all steps except for Steps \ref{alg:WeakToStrongCover:(10)},  \ref{alg:WeakToStrongCover:(13)}, and  \ref{alg:WeakToStrongCover:(14)} of \Cref{alg:WeakToStrongCover} in the $i^{th}$ recursive call can be implemented to run in $O(|V_i|^2(|V_i| + |E_i|))$ time
using $O(|V_i|^2)$ queries \functionMaximizationOracle{p_i}. This includes the run-time to construct the hypergraphs used in the queries to \functionMaximizationOracle{p_i}. We will address the run-time for Steps \ref{alg:WeakToStrongCover:(10)},  \ref{alg:WeakToStrongCover:(13)}, and  \ref{alg:WeakToStrongCover:(14)} later. 

\begin{claim}\label{claim:weak-to-strong-runtime-of-one-recursive-call}
For every $i\in [\ell]$, all steps except for Steps \ref{alg:WeakToStrongCover:(10)},  \ref{alg:WeakToStrongCover:(13)}, and  \ref{alg:WeakToStrongCover:(14)} of \Cref{alg:WeakToStrongCover} in the $i^{th}$ recursive call can be implemented to run in 
$O(|V_i|^2(|V_i| + |E_i|))$ time
using 
$O(|V_i|^2)$ queries to \functionMaximizationOracle{p_i}. The run-time includes the time to construct the hypergraphs used as input to the queries to \functionMaximizationOracle{p_i}. Moreover, for each query to \functionMaximizationOracle{p_i}, the hypergraph $(G_0, c_0)$ used as input to the query has  $|V_i|$ vertices and at most $|E_i| + 1$ hyperedges.
\end{claim}
\begin{proof}
We first show how to compute hyperedges $e_i$ and $f_i$ (Step \ref{alg:WeakToStrongCover:(4)}). We recall that by \Cref{cor:weak-to-strong:uncrossing:exist-two-hyperedges-contained-in-disjoint-maximal-tight-sets}, there exist two hyperedges $e_i, f_i \in E_i$ such that $\maximalTightSetContainingHyperedge{p_i, H_i, w_i}(e_i) \not = \maximalTightSetContainingHyperedge{p_i, H_i, w_i}(f_i)$ and consequently Step \ref{alg:WeakToStrongCover:(4)} is well defined. By \Cref{lem:weak-to-strong:uncrossing:maximal-tight-set-family-disjoint}, the family $\maximalTightSetFamily{p_i, H_i, w_i}$ is a disjoint family. 
Furthermore,  by \Cref{lem:FunctionMaximizetionOracles:computing-maximal-pHw-tight-sets-with-p-max-oracle}, we can compute the family $\maximalTightSetFamily{p_i, H_i, w_i}$ in $O(|V_i|^2(|V_i|+|E_i|))$ time using $O(|V_i|^2)$ queries to \functionMaximizationOracle{p_i}. 
Moreover, for each query to \functionMaximizationOracle{p_i}, the hypergraph $(G_0, c_0)$ used as input to the query has at most $|V_i|$ vertices and at most $|E_i|+1$ hyperedges. 
We note that $\maximalTightSetContainingHyperedge{p_i, H_i, w_i}(\Tilde{e}) \in \maximalTightSetFamily{p_i, H_i, w_i}$ for every $\Tilde{e} \in E$.
Consequently, we can find the required hyperedges for Step \ref{alg:WeakToStrongCover:(4)} by finding a pair of distinct sets $X, Y \in \maximalTightSetFamily{p_i, H_i, w_i}$ that each contains a hyperedge, and define these hyperedges to be $e_i$ and $f_i$ respectively. Thus, Step  \ref{alg:WeakToStrongCover:(4)} can be implemented to run in $O(|V_i|^2(|V_i|+|E_i|))$ time using $O(|V_i|^2)$ queries to \functionMaximizationOracle{p_i}. Moreover, for each query to \functionMaximizationOracle{p_i}, the hypergraph $(G_0, c_0)$ used as input to the query has at most $|V_i|$ vertices and at most $|E_i|+1$ hyperedges. 

Next, we show how to compute the values $\alpha_i^M$ (Step \ref{alg:WeakToStrongCover:(5)}) and $\alpha^R_i$ (Step \ref{alg:WeakToStrongCover:(7)}).  We first show how to compute the value $\beta_i^M$.
We have the following:
\begin{align*}
    \beta_i^M& = \min\left\{b_{(H_i,w_i)}(X) - p_i(X) : X \subseteq V_i \text{ and } e_i \cap X, f_i \cap X \not = \emptyset\right\} &\\
    & = \min_{u \in e_i, v \in f_i}\left\{\min\left\{b_{(H_i,w_i)}(X) - p_i(X) : \{u,v\} \subseteq X \subseteq V_i\right\}\right\}.&
\end{align*}
For $u \in e_i$ and $v \in f_i$, let $\beta_i^{uv} := \min\{b_{(H_i,w_i)}(X) - p_i(X) : \{u,v\} \subseteq X \subseteq V_i\}.$ Thus, $\beta_i^M = \min_{u \in e_i, v\in f_i}\beta_i^{uv}$.
Then, $\beta_i^M$ (and consequently the value $\alpha_i^M$) can be computed in $O(|V_i|^2(|V_i| + |E_i|))$ time using 
$|V_i|^2$ queries to \functionMaximizationOracle{p_i}, where each query is on a input hypergraph $(G_0, c_0)$ that has at most $|V_i|$ vertices and at most $|E_i|+1$ hyperedges. This is because for every $u \in e_i$ and $v \in f_i$ (there are $O(|V_i|^2)$ choices for $u$ and $v$), the value $\beta_i^{uv}$ can be computed using a single oracle call to \functionMaximizationOracle{p_i} on a hypergraph $(G_0, c_0)$ that has at most $|V_i|$ vertices and at most $|E_i|+1$ hyperedges.
Next, the value $\beta^R_i$, (and consequently the value $\alpha^R_i$)  can be computed 
in $O(|V_i| + |E_i|)$ time using a single query to \functionMaximizationOracle{p_i} on the input hypergraph $(G_0=H_i^M, c_0=w_i^M)$ that has  $|V_i|$ vertices and at most $|E_i|+1$ hyperedges. Here, the hypergraph $(H_i^M, w_i^M)$ has  $|V_i|$ vertices and $|E_i|+1$ hyperedges because of  \Cref{lem:WeakToStrong:properties-of-alpha^M-and-alpha^R}(2).
Next, we show that the hypergraphs $(H_i^M, w_i^M)$ (Step \ref{alg:WeakToStrongCover:(6)}) and $(H_i^R, w_i^R)$ (Step \ref{alg:WeakToStrongCover:(8)}) can be computed in $O(|V_i| + |E_i|)$ time. We recall that the \merge operation decreases the weights of hyperedges $e_i$ and $f_i$ by $\alpha_i^M$,  increases the weight of the hyperedge $e_i\cup f_i$ by $\alpha_i^M$ and discards the resulting zero-weighted edges (if any). This can be implemented to run in $O(|E_i|)$ time. Furthermore, the \reduce operation decreases the weight of the hyperedge $e_i\cup f_i$ by $\alpha^R_i$, and so can also be implemented to run in $O(|E^R_i|) = O(|E_i|)$ time by \Cref{lem:WeakToStrong:properties-of-alpha^M-and-alpha^R}(2) and (5). Finally, the computations of the hypergraph $(H^{i}_0, w^{i}_0)$ (Step \ref{alg:WeakToStrongCover:(9)}), the set $\zeros_i$ (\Cref{alg:WeakToStrongCover:(11)}) and the set $V'_i$  (\Cref{alg:WeakToStrongCover:(12)}) can each be done in $O(|V_i|)$ time. Furthermore, the hypergraph returned in Step \ref{alg:WeakToStrongCover:(14)} contains at most $|V|$ vertices and at most $|E_i| + O(|V_i|)$ hyperedges. Consequently, Steps \ref{alg:WeakToStrongCover:(15)} and \ref{alg:WeakToStrongCover:(16)}  can be implemented to run in $O(|V_i| + |E_i|)$ time. 
\end{proof}

\Cref{claim:weak-to-strong-runtime-of-one-recursive-call} implies that each query to \functionMaximizationOracle{p_i} takes as input a hypergraph 
with at most $|V_i|$ vertices and $O(|E_i|)$ hyperedges.  
Next, we focus on the time to implement a query to \functionMaximizationOracle{p_i} on such hypergraphs. 

\begin{claim}\label{claim:WeakToStrong:p_i-oracle-using-p-oracle}
For every $i\in [\ell]$ and for every given hypergraph $(G = (V_i, E_G), c:E_G\rightarrow\Z_+)$ such that $|E_G| = O(|E_i|)$ AND disjoint subsets $S, T\subseteq V_i$, $\functionMaximizationOracle{p_i}((G, c), S, T, y)$ can be computed in $O(|V|(|V| + |E|))$ time using at most $|V_i|+1$ queries to \functionMaximizationOracleStrongCover{p}. The run-time includes the time to construct the hypergraphs used as input to the queries to \functionMaximizationOracleStrongCover{p}. Moreover, for each query to \functionMaximizationOracleStrongCover{p}, the hypergraph $(G_0, c_0)$ used as input to the query has $O(|V|)$ vertices and $O(|V|+|E|)$ hyperedges.
\end{claim}
\begin{proof}
Let the input to \functionMaximizationOracle{p_i} be a hypergraph $(G = (V_i, E_G), c:E_G\rightarrow\Z_+)$ such that $|E_G| = O(|E_i|)$ and disjoint subsets $S, T\subseteq V_i$. By \Cref{lem:Preliminaries:wc-oracle-from-sc-oracle}, we can implement $\functionMaximizationOracle{p_i}((G, c), S, T, y_0)$ in $O(|V_i|(|V_i| + |E_i|))$ time using at most $|V_i|+1$ queries to $\functionMaximizationOracleStrongCover{p_i}$, where each query to $\functionMaximizationOracleStrongCover{p_i}$ is on a input hypergraph with $|V_i|+1$ vertices and $O(|E_i|)$ hyperedges. Next, we show that the query $\functionMaximizationOracleStrongCover{p_i}((G, c), S, T)$ where $(G = (V_i, E_G), c:E_G\rightarrow\Z_+)$ is a hypergraph with $|E_G| = O(|E_i|)$ and $S, T\subseteq V_i$ are disjoint subsets, can be implemented in $O(|V| + |E|)$ time using one query to  $\functionMaximizationOracleStrongCover{p}$ on an input hypergraph $(G_0, c_0)$ that has $O(|V|)$ vertices and $O(|V|+|E|)$ hyperedges. 

We first recall some notation and define some quantities for convenience. For $i\in [\ell]$, we recall that $(H_0^i, w_0^i)$ is the hypergraph $(H_0, w_0)$, $p^R_i, p_i'$ are the functions $p^R, p'$ respectively, and $\zeros_i$ is the set $\zeros$ defined in the $i^{th}$ recursive call of the algorithm. We note that $p_i' =p_{i+1}$ for every $i\in [\ell-1]$. Furthermore, the hypergraph $(H_0^i, w_0^i)$ contains exactly one hyperedge for every $i\in [\ell-1]$. 
For $i\in [\ell-1]$, let $(H_1^i, w_1^i)$ be the hypergraph obtained from $(H_0^i, w_0^i)$ by adding the vertices $\cup_{j=1}^{i-1}\zeros_{j}$.

Let $i\in [\ell]$. Let the input to \functionMaximizationOracleStrongCover{p_i} be a hypergraph $(G = (V_i, E_G), c:E_G\rightarrow\Z_+)$ such that $|E_G| = O(|E_i|)$ and disjoint subsets $S_0, T_0\subseteq V_i$. 
We note that $p_i=p_{i-1}'$ and hence, it suffices to answer $\functionMaximizationOracleStrongCover{p_{i-1}'}((G, c), S_0, T_0)$. 
Construct the hypergraph $(G_1, c_1)$ to be the hypergraph obtained from $(G,c)$ by adding vertices $\cup_{j=1}^{i-1}\zeros_{j}$. 
Construct the hypergraph $(\Tilde{H}, \Tilde{w})=(\sum_{j=1}^{i-1}H_1^j, \sum_{j=1}^i w_1^j)$. The hypergraph $(\Tilde{H}, \Tilde{w})$ contains at most $i - 1 \leq \ell = O(|E| + |V|)$ hyperedges, and can be constructed in time $O(|V|+|E|)$. 
Next, construct the hypergraph $(G, c)=(G_1+\Tilde{H}, c_1+\Tilde{w})$. The hypergraph $(G, c)$ contains $O(|E|+|V|)$ hyperedges and can be constructed in time $O(|V|+|E|)$. 
Obtain $(Z, p(Z))$ by querying $\functionMaximizationOracleStrongCover{p}((G, c), S_0, T_0)$ and return $(Z-\cup_{j=1}^{i-1}\zeros_{j}, p(Z))$. 
\end{proof}

By \Cref{claim:weak-to-strong-runtime-of-one-recursive-call}, each query to \functionMaximizationOracle{p_i} takes as input a hypergraph with $|V_i|\le |V|$ vertices and $O(|E_i|) = O(|E|+|V|)$ hyperedges. 
By \Cref{claim:WeakToStrong:p_i-oracle-using-p-oracle}, a query to \functionMaximizationOracle{p_i} on a input hypergraph with at most $|V_i|$ vertices and at most $O(|E_i|)$ hyperedges 
can be implemented in $O(|V|(|V| + |E|))$  time using $O(|V|)$ queries to \functionMaximizationOracleStrongCover{p}, where each query to \functionMaximizationOracleStrongCover{p} is on a input hypergraph $(G_0, c_0)$ that has $O(|V|)$ vertices and $O(|E|+|V|)$ hyperedges. 
Thus, all steps except for Steps \ref{alg:WeakToStrongCover:(10)}, \ref{alg:WeakToStrongCover:(13)}, and \ref{alg:WeakToStrongCover:(14)} of \Cref{alg:WeakToStrongCover} in the $i^{th}$ recursive call can be implemented to run in $O(|V|^3(|V| + |E|))$ time using $O(|V|^3)$ queries to \functionMaximizationOracleStrongCover{p}, where each query to \functionMaximizationOracleStrongCover{p} is on a input hypergraph $(G_0, c_0)$ that has $O(|V|)$ vertices and $O(|E|+|V|)$ hyperedges. 

We note that Steps \ref{alg:WeakToStrongCover:(10)} and \ref{alg:WeakToStrongCover:(13)} need not be implemented explicitly for the purposes of the algorithm. Instead, \functionMaximizationOracle{p_{i+1}} on input hypergraphs with $|V_{i+1}|\le |V|$ vertices and at most $|E_{i+1}|+1$ hyperedges can be used to execute all steps except for Steps \ref{alg:WeakToStrongCover:(10)}, \ref{alg:WeakToStrongCover:(13)}, and \ref{alg:WeakToStrongCover:(14)} in the $(i+1)^{th}$ recursive call. 
We have already seen in \Cref{claim:WeakToStrong:p_i-oracle-using-p-oracle} that a query to \functionMaximizationOracle{p_{i+1}} on input hypergraphs with at most $|V|$ vertices and $O(|E_{i+1}|) = O(|V| + |E|)$ hyperedges can be implemented in  $O(|V|(|V|+|E|))$ time using $O(|V|)$ queries to \functionMaximizationOracleStrongCover{p}. 
Step \ref{alg:WeakToStrongCover:(14)} is the recursive call. We note that the hypergraph returned in Step \ref{alg:WeakToStrongCover:(14)} contains at most $|V|$ vertices and at most $|E_i| + O(|V_i|)= O(|V| + |E|)$ hyperedges. 

Thus, we have shown that each recursive call can be implemented to run in time $O(|V|^3(|V| + |E|))$ using $O(|V|^3)$ queries to \functionMaximizationOracleStrongCover{p}, where each query to \functionMaximizationOracleStrongCover{p} is on an input hypergraph $(G_0, c_0)$ that has $O(|V|)$ vertices and $O(|E|+|V|)$ hyperedges. The number of recursive calls is $O(|V|+|E|)$. 
Hence, the algorithm can be implemented to run in $O(|V|^3(|V| + |E|)^2)$ time using $O(|V|^3(|V|+|E|))$ queries to \functionMaximizationOracleStrongCover{p}, where each query to \functionMaximizationOracleStrongCover{p} is on a input hypergraph $(G_0, c_0)$ that has $O(|V|)$ vertices and $O(|E|+|V|)$ hyperedges. 
\end{proof}
\section{Conclusion}\label{sec:conclusion}
We introduced a splitting-off operation in (weighted) hypergraphs. Our contribution on this front is conceptualizing an appropriate notion of splitting-off in hypergraphs. 
Next, we proved that for every hypergraph there exists a local-connectivity preserving complete h-splitting-off at an arbitrary vertex from the hypergraph.  
Although our proof of existence follows from previously known existential results of Bern\'{a}th and Kir\'{a}ly \cite{Bernath-Kiraly} for an abstract function cover problem, our main contribution is identifying that our newly introduced notion of hypergraph splitting-off falls within the framework of their abstract function cover problem. 
Next, we designed a strongly polynomial-time algorithm to find a local-connectivity preserving complete splitting-off at a vertex. This involved substantial technical challenges to overcome. In particular, our main contribution towards the strongly polynomial-time algorithm is a strengthening of the above-mentioned existential result: we showed that there exists a local connectivity preserving complete h-splitting-off at an arbitrary vertex from the hypergraph 
\emph{in which the number of additional hyperedges is polynomial in the number of vertices}. 
The strongly polynomial-time algorithm follows from our techniques to achieve this stronger existential result via standard algorithmic tools in submodularity. 
Finally, we illustrated the usefulness of the existence of local connectivity preserving complete h-splitting-off at an arbitrary vertex from a hypergraph by presenting two applications. 
Our first application is to give a constructive characterization of $k$-hyperedge-connected hypergraphs. Our second application is to give an alternative proof of an approximate min-max relation for 
the max Steiner rooted-connected orientation problem in graphs and hypergraphs. 
Two notable features of our proof of the approximate min-max relation for \emph{graphs} are that (1) it avoids the strong orientation theorem of Nash-Williams and (2) it proves a result for graphs using tools developed for hypergraphs. 

Local and global connectivity preserving complete splitting-off at a vertex from a graph is a powerful operation for graphs. It finds applications in a variety of graph problems including graph orientation \cite{Mad78, Frank93:Applications-of-submod-functions, FK02, Kiraly-Lau}, connectivity augmentation \cite{Fra92, Fra94, JFJ95, Frank-book, nagamochi_ibaraki_2008_book}, 
minimum cuts enumeration \cite{NNI97, HW96, Goe01}, network design \cite{Lov76, Jor03, CS08, CFLY11}, tree packing \cite{BHTP03, Kri03, JMS03, Lau07}, congruency-constrained cuts \cite{NZ20}, and approximation algorithms for TSP \cite{GB93, BN23}. 
We believe that our local connectivity preserving complete splitting-off results for hypergraphs is likely to find future applications akin to its counterpart in graphs. 
We mention some of the open questions raised by our work:
\begin{enumerate}
    \item Our work focused on local connectivity preserving complete \emph{h-splitting-off} at a vertex from a hypergraph. We gave an example showing that local/global connectivity preserving complete \emph{g-splitting-off} at a vertex from a hypergraph may not exist (Figure \ref{figure:hypergraph-splitting-off-example-2}). 
    Are there sufficient conditions to guarantee local/global connectivity preserving complete \emph{g-splitting-off} at a vertex from a hypergraph? We recall that Lov\'{a}sz's \cite{Lov74, Lovasz-problems-book} and Mader's \cite{Mad78} results give sufficient conditions to guarantee local and global connectivity preserving complete g-splitting-off at a vertex from a \emph{graph}. 
    \item As one of the applications of our splitting-off result, we presented an alternative proof of an approximate min-max relation for the max Steiner \emph{rooted}-connected orientation problem in hypergraphs. The computational complexity of a closely related hypergraph orientation problem is open: In max Steiner connected orientation problem in hypergraphs, the input is a hypergraph $G=(V, E)$ and a subset $T$ of terminals. The goal is to find the maximum $k$ and an orientation $\overrightarrow{G}$ of $G$ such that $\overrightarrow{G}$ contains $k$ hyperarc-disjoint paths from $u$ to $v$ for every pair of distinct terminals $u, v\in T$. Max Steiner connected orientation problem in \emph{graphs} is solvable in polynomial time via the Nash-Williams' strong orientation theorem. 
    Is max Steiner connected orientation problem in \emph{hypergraphs} solvable in polynomial time? 
\end{enumerate} 

\medskip
\paragraph{Acknowledgements.} 
Karthekeyan thanks Eklavya Sharma for engaging in preliminary discussions on hypergraph splitting-off. 
Karthekeyan and Shubhang were supported in part by NSF grants CCF-1814613 and CCF-1907937. 
Karthekeyan was supported in part by the Distinguished Guest Scientist Fellowship of the Hungarian Academy of Sciences -- grant number VK-6/1/2022. 
Krist\'{o}f and Tam\'{a}s were supported in part 
by the Lend\"ulet Programme of the Hungarian Academy of Sciences -- grant number LP2021-1/2021, by the Ministry of Innovation and Technology of Hungary from the National Research, Development and Innovation Fund -- grant number ELTE TKP 2021-NKTA-62 funding scheme, and by the Dynasnet European Research Council Synergy project -- grant number ERC-2018-SYG 810115.

\bibliographystyle{abbrv}
\bibliography{references}

\appendix
\section{An Example for Weak To Strong Cover with Quadratic New Hyperedges}\label{sec:appendix:tight-example-weak-to-strong}

\begin{lemma}
There exists a symmetric skew-supermodular function  $p:2^V\rightarrow\Z$ and a hypergraph\linebreak $\left(H = (V, E), w:E \rightarrow\Z_+\right)$ satisfying $b_{(H, w)}(X) \geq p(X)$ for every $X \subseteq V$ such that for every hypergraph $\left(H^* = (V, E^*), w^*:E^* \rightarrow\Z_+\right)$ obtained by merging hyperedges of the hypergraph $(H, w)$ such that $d_{(H^*, w^*)}(X) \geq p(X)$ for every $X\subseteq V$, we have that 
\[
|E^*-E|=\Omega(|V|^2). 
\]
\end{lemma}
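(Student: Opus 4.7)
The plan is to exhibit a specific instance $(H = (V, E), w, p)$ satisfying the hypotheses of Theorem \ref{thm:WeakToStrongCover:main} and then directly lower-bound $|E^* \setminus E|$ for every admissible output. A preliminary observation constrains the shape of such an example: since Theorem \ref{thm:WeakToStrongCover:main}\ref{thm:WeakToStrongCover:main:(3)} guarantees $|E^*| - |E| = O(|V|)$ and since $|E^* \setminus E| = |E^*| - |E^* \cap E|$, any example witnessing $|E^* \setminus E| = \Omega(|V|^2)$ must also satisfy $|E \setminus E^*| = \Omega(|V|^2)$, and hence $|E| = \Omega(|V|^2)$. In other words, the construction must be ``dense'': the algorithm is forced to destroy (merge away) almost all original hyperedges while introducing a comparable number of new ones.

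Concretely, I would take $V = \{v_1, \dots, v_n\}$ and let $H$ be a dense hypergraph with $\Theta(n^2)$ hyperedges of small support (for example, all $\binom{n}{2}$ pairs $\{v_i, v_j\}$, each with weight $1$), so that $|E| = \Theta(n^2)$ and the total weight of hyperedges incident to each vertex is $\Theta(n)$. For the function $p$, I would use the framework of \Cref{lemma:helper-for-p-max-oracle}: choose an auxiliary graph $(J, c)$ on $V$ together with a requirement function $r : \binom{V}{2} \to \Z_{\ge 0}$ and set $p \coloneqq R - d_{(J, c)}$. This guarantees symmetry and skew-supermodularity of $p$ for free. The parameters $(J, c, r)$ must be tuned so that (a) $b_{(H, w)}(X) \ge p(X)$ holds for every $X \subseteq V$, and (b) the cut constraints $d_{(H^*, w^*)}(X) \ge p(X)$ are ``brittle'' at singleton and pair cuts: i.e., at each pair $\{v_i, v_j\}$ the constraint leaves only an $O(1)$ overlap budget for the total weight of hyperedges in $H^*$ containing both $v_i$ and $v_j$.

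The core of the argument is the lower bound on $|E^* \setminus E|$, which I would carry out in three steps. First, invoke the conservation property of the merge operation (see \Cref{def:WeakToStrongCover:hyperedge-merge}): for every vertex $v$, $\sum_{e \in E^* : v \in e} w^*(e) = \sum_{e \in E : v \in e} w(e) = \Theta(n)$. Second, use the singleton cut constraints $d_{(H^*, w^*)}(\{v\}) \ge p(\{v\})$, chosen close to the conservation value, to force essentially all of this weight to reside in non-singleton hyperedges of $H^*$; combined with the pair cut constraints this implies that most of the original pair-hyperedges of $H$ must be merged away into larger hyperedges. Third, use the per-pair overlap budget: each new hyperedge $e^*$ of size $k$ contributes its weight to $\binom{k}{2}$ pair-budgets, each of which is capped at $O(1)$. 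A Cauchy--Schwarz-style counting over pairs then shows that the $\Theta(n^2)$ units of weight that must be redistributed cannot be accommodated by $o(n^2)$ distinct new hyperedges.

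The main obstacle is step (3): on the face of it, a single large hyperedge can absorb many pair constraints, so one needs $p$ to simultaneously (i) admit $H$ as a weak cover with essentially no slack at pair cuts, and (ii) remain symmetric skew-supermodular in the form $R - d_J$. Achieving both is the delicate design step. A promising route is to take $J$ to be a graph whose cut function $d_J(X)$ nearly matches $R(X) - b_{(H, w)}(X) + \text{(small)}$ on all singleton and pair cuts, so that the slack $b_{(H, w)}(X) - p(X)$ is $O(1)$ on these cuts; Lemma \ref{lemma:helper-for-p-max-oracle} then takes care of the structural properties of $p$ automatically, and the pigeonhole / Cauchy--Schwarz counting described above yields the $\Omega(n^2)$ lower bound.
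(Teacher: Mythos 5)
Your preliminary observation (any witness must have $|E\setminus E^*|=\Omega(|V|^2)$, hence $|E|=\Omega(|V|^2)$) is correct, but beyond that the proposal has a genuine gap: no function $p$ is ever constructed. The "delicate design step" you flag — choosing $(J,c,r)$ so that $p=R-d_{(J,c)}$ is simultaneously weakly covered by the all-pairs hypergraph, leaves only $O(1)$ slack at every pair cut, and still forces every valid merging to create quadratically many \emph{distinct} new hyperedges — is exactly the content of the lemma, and it is left unresolved. Moreover, the counting in your step (3) does not go through as stated: a single large hyperedge $e^*$ crosses (and thus helps) every deficient pair cut $\{v_i,v_j\}$ that it meets in exactly one vertex, so one new hyperedge can repair many pair deficiencies at once, and merging also \emph{decreases} cut values at other pairs (two crossing hyperedges become one), so the per-pair $O(1)$ "overlap budget" does not by itself rule out a cover using $o(|V|^2)$ distinct new hyperedges. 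Without a concrete $p$ and a rigidity argument pinning down the structure of every admissible $H^*$, the $\Omega(|V|^2)$ bound is not established.

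The paper's construction avoids all of this with a different mechanism that your proposal misses: a \emph{singleton} hyperedge, which contributes to the coverage function $b$ but to no cut. Fix $u\in V$, let $E=\binom{V-u}{2}\cup\{\{u\}\}$ with unit weights on the pairs and weight $\binom{|V|-1}{2}$ on $\{u\}$, and let $p$ equal $\binom{|V|-1}{2}$ on $\{u\}$ and $V-u$ and be extremely negative elsewhere (the paper writes $-\infty$). Weak cover holds, but $d_{(H^*,w^*)}(\{u\})\ge\binom{|V|-1}{2}$ forces every one of the $\binom{|V|-1}{2}$ copies of $\{u\}$ to be merged with a disjoint hyperedge; since distinct copies of $\{u\}$ cannot be merged with each other and the total pair weight is exactly $\binom{|V|-1}{2}$, each copy absorbs exactly one pair copy, so $H^*$ is forced to be precisely the set of all triples $\{u,a,b\}$ with unit weights, giving $|E^*\setminus E|=\binom{|V|-1}{2}=\Omega(|V|^2)$ (while $|E^*|-|E|$ is even negative). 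This exact-weight-matching rigidity is the key idea your sketch lacks.
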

\begin{proof}
Let $u\in V$ be a fixed vertex. Let $p:2^V\rightarrow\Z\cup\{-\infty\}$ be a symmetric skew-supermodular function defined as follows,
$$p(X) \coloneqq  \begin{cases}
    {\binom{|V| - 1}{2}}& \text{ if } X \in \left\{\{u\}, V - u\right\},\\
    - \infty &\text{otherwise.}
\end{cases}$$
Furthermore, let $(H = (V, E), w:E\rightarrow\Z_+)$ be a hypergraph such that $E \coloneqq  {\binom{V - u}{2}}\cup \{\{u\}\}$, and the weight function $w$ is defined as
$$w(e) \coloneqq  \begin{cases}
    1& \text{ if } e \in {\binom{V - u}{2}}\\
    {\binom{|V| - 1}{2}}& \text{otherwise (i.e., if $e = \{u\}$)}.
\end{cases}$$

We note that $b_{H, w}(X) \geq p(X)$ for every $X \subseteq V$. Next, consider the hypergraph $(H^* = (V, E^*), w^*)$, where the hyperedge set $E^* \coloneqq  \left\{\{u, a, b\} : ab \in {\binom{V-u}{2}}\right\}$, and the weight function $w^*(e) \coloneqq  1$ for every $e \in E^*$. Then, the hypergraph $(H^*, w^*)$ strongly covers the function $p$ because $d_{H^*, w^*}(X) \geq p(X)$ for every $X \subseteq V$. Furthermore, the hypergraph $(H^*, w^*)$ is the \emph{unique} hypergraph that can be obtained by merging hyperedges of $(H, w)$ to strongly cover the function $p$. We observe that $|E^*-E| = \Omega(|V|^2)$. 
\end{proof}
\section{Projection of Laminar Family}\label{appendix:sec:uncrossing-properties:projection-of-laminar-families}
We recall that 
a family $\calL \subseteq 2^V$ is \emph{laminar} if for every pair of sets $X, Y\calL$, we have that either $X\subseteq Y$ or $Y\subseteq X$ or $X\cap Y=\emptyset$  --  we use \emph{family} to refer to a set (and not a multiset) of subsets of the ground set $V$. 
For a set $\calZ\subseteq V$, and a family $\calL \subseteq 2^V$,
the \emph{projection} of the family $\calL$ to the ground set $V - \calZ$ is the family formed by restricting each set of $\calL$ to the ground set $V - \calZ$, i.e. $\{X - \calZ : X \in \calL\}$.
The next lemma says that if the family $\calL$ is laminar, then the size of the projection 
is comparable to that of the original family $\calL$.

\projectionLaminarFamily*
\begin{proof}
    We first show that the family $\calL'$ is a laminar family. By way of contradiction, let $X,Y \in \calL'$ be two distinct sets such that $X-Y, Y-X, X\cap Y \not = \emptyset$. Then, there exist distinct sets $A, B \in \calL$ such that $A - \calZ = X$ and $B - \calZ = Y$. Since the family $\calL$ is laminar, we have that either $A\cap B = \emptyset$, $A \subseteq B$ or $B\subseteq A$. We note that if $A\subseteq B$, then we have that $X = A - \calZ \subseteq B - \calZ = Y$, which contradicts our choice of sets $X$ and $Y$. Similarly, if $B\subseteq A$, then we obtain that $Y \subseteq X$, contradicting our choice of sets $X$ and $Y$. Thus, we have that $A\cap B = \emptyset$, and so $X\cap Y = (A - \calZ)\cap(B - \calZ) = \emptyset$, once again contradicting our choice of sets $X$ and $Y$. 

    Next, we show the second part of the claim, i.e. $|\calL| \leq |\calL'| + 3|\calZ|$. For this, we first  partition the family $\calL$ into two subfamilies $\calL_1 : = \{A \in \calL : A \subseteq \calZ\}$ and $\calL_2 \coloneqq  \calL - \calL_1$. Furthermore, for every $X \in \calL'$, we let $S_X \coloneqq  \{A \in \calL_2 : A - \calZ = X\}$ denote the family of sets in $\calL_2$ that map to the (non-empty) set $X$ under projection with the set $\calZ$. 
    We note that for every $A \in \calL_2$, we have that $\emptyset \not = A - \calZ \in \calL'$ and hence $A \in S_{A - \calZ}$. Consequently, the subfamilies $S_X$ for $X \in \calL'$ partition the family $\calL_2$ 
    . In particular, we have that $\calL = \calL_1 \uplus \left(\biguplus_{X \in \calL'} S_X \right)$. \Cref{claim:Laminar-family-projection:S_X-is-a-chain} below shows that the family $S_X$ is a chain family for every $X \in \calL'$. Next, for every $X \in \calL'$, we define three sets: we let $A_X \coloneqq  \arg\max\{|A| : A \in S_X\}$, $B_{X} \coloneqq  \arg\min\{|B| : B \in S_X\}$, and $\calZ_X\coloneqq A_X-B_X$. 
    We note that $X \subseteq B_X \subseteq A_X$, where the second containment is because the family $S_X$ is a chain family. 
    Since $A_X - B_X \subseteq A_X - X \subseteq \calZ$, we have that $\calZ_X \subseteq \calZ$. \Cref{claim:Laminar-family-projection:|S_X|-atmost-|Z_X|+1} below shows that $|S_X| \leq |\calZ_X| + 1$. Furthermore, \Cref{claim:Laminar-family-projection:Z_X-Z_Y-disjoint} below shows that $\calZ_X$ and $\calZ_Y$ are disjoint for distinct $X,Y \in \calL'$. Then, we have the following:
\begin{align*}
        |\calL| & = |\calL_1| + \sum_{X \in \calL'} |S_X|&\\
        & \leq 2|\calZ| - 1 + \sum_{X \in \calL'}\left(|\calZ_X| + 1\right)&\\
        &\le 3|\calZ| + |\calL'|-1.&
    \end{align*}
    Here, the first equality is because $\calL = \calL_1 \uplus \left(\biguplus_{X \in \calL'} S_X \right)$. The first inequality is because the family $\calL_1$ is a laminar family on the ground set $\calZ$, and by \Cref{claim:Laminar-family-projection:|S_X|-atmost-|Z_X|+1}. The final inequality is because the sets $\calZ_X$ for $X \in\calL'$ form a subpartition of the set $\calZ$ by \Cref{claim:Laminar-family-projection:Z_X-Z_Y-disjoint}.
\end{proof}

\begin{claim}\label{claim:Laminar-family-projection:S_X-is-a-chain}
        For every $X \in\calL'$, the family $S_X$ is a chain family.
    \end{claim}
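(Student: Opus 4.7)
The plan is to establish the chain property directly from the laminarity of $\calL$ together with the observation that all sets in $S_X$ share a common non-empty ``core'', namely $X$ itself. Concretely, I will pick two arbitrary sets $A, B \in S_X$ and show that they cannot be disjoint, after which laminarity of $\calL$ forces one to contain the other.

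First, I would unpack the definition: by the definition of $S_X$, we have $A - \calZ = X = B - \calZ$. Since $X \in \calL'$ and, by convention, $\calL'$ excludes the empty set, we have $X \neq \emptyset$. Because $X \subseteq A$ and $X \subseteq B$, it follows that $A \cap B \supseteq X \neq \emptyset$. This is the key observation: the projection equation forces $A$ and $B$ to share all elements of $X$, so the disjointness case of the laminar trichotomy is ruled out. Second, I would invoke laminarity of $\calL$ on the pair $A, B \in \calL_2 \subseteq \calL$: since $A \cap B \neq \emptyset$, we must have either $A \subseteq B$ or $B \subseteq A$. This is exactly the chain condition, so $S_X$ is a chain family.

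I do not anticipate any obstacle here; the proof is a one-step deduction once the core observation ``$X \subseteq A \cap B$'' is written down. The only thing worth being careful about is the convention that $\calL'$ does not contain the empty set (which is used earlier to ensure $X \neq \emptyset$); this was fixed in the preliminary definition of projection in Section~\ref{sec:prelims}, so no further justification is needed. The entire argument should fit in a short paragraph of three or four sentences inside the proof environment.
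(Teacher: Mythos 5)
Your proof is correct and follows essentially the same route as the paper's: both hinge on the observation that any two sets $A,B \in S_X$ contain the non-empty set $X$ (non-emptiness coming from the convention that $\calL'$ excludes $\emptyset$), so laminarity of $\calL$ rules out disjointness and forces containment. The only cosmetic difference is that you argue directly while the paper phrases it as a contradiction; the content is identical.
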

    \begin{proof}
        By way of contradiction, let $X \in \calL'$ be such that the family $S_X$ is not a chain. Then, there exist sets $A, B \in S_X$ such that $A - B, B - A \not = \emptyset$. Since $A$ and $B$ are sets in the laminar family $\calL$, we have that the sets $A$ and $B$ are disjoint. However, both the sets $A$ and $B$ contain the set $X$, and so $A\cap B \not = \emptyset$, a contradiction to the fact that $A$ and $B$ are sets in the laminar family $\calL$.
    \end{proof}

\begin{claim}\label{claim:Laminar-family-projection:|S_X|-atmost-|Z_X|+1}
    For every $X \in \calL'$, we have that $|S_X| \leq |\calZ_{X}| + 1$.
\end{claim}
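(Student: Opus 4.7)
\bigskip

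\noindent\textbf{Proof proposal.} The plan is to exploit the chain structure of $S_X$ established in Claim A.2 and identify, inside $\calZ_X = A_X - B_X$, one distinct ``witness element'' for each consecutive pair of sets in the chain. Since $S_X$ is a chain family, I would first enumerate its elements in strictly increasing order as
\[
B_X = A_1 \subsetneq A_2 \subsetneq \cdots \subsetneq A_k = A_X,
\]
where $k = |S_X|$. The goal is then to show that $k - 1 \le |\calZ_X|$, which immediately gives the desired bound.

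The key observation is that the ``growth'' from one $A_i$ to the next happens entirely inside $\calZ$. Concretely, I would argue that $A_{i+1} - A_i \subseteq \calZ$ for every $i \in [k-1]$: if $v \in A_{i+1} - A_i$ and $v \notin \calZ$, then $v \in A_{i+1} - \calZ = X$ (by definition of $S_X$), and since $X = A_i - \calZ \subseteq A_i$ this forces $v \in A_i$, contradicting $v \notin A_i$. Moreover, since $A_{i+1} \subseteq A_X$ and $A_i \supseteq B_X$, we have $A_{i+1} - A_i \subseteq A_X - B_X = \calZ_X$. Finally, each $A_{i+1} - A_i$ is non-empty because $A_i \subsetneq A_{i+1}$, and the sets $\{A_{i+1} - A_i\}_{i=1}^{k-1}$ are pairwise disjoint along the chain.

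Combining these facts, the $k-1$ non-empty pairwise disjoint sets $A_{i+1} - A_i$ all sit inside $\calZ_X$, so $k-1 \le |\calZ_X|$, i.e., $|S_X| \le |\calZ_X| + 1$. There is no real obstacle here; the only subtlety is the containment $A_{i+1} - A_i \subseteq \calZ$, which follows directly from the defining property $A_i - \calZ = X$ of members of $S_X$ together with $X \subseteq A_i$.
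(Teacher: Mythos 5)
Your proof is correct and takes essentially the same route as the paper: the paper also reduces the claim to the chain structure of $S_X$ (Claim \ref{claim:Laminar-family-projection:S_X-is-a-chain}), observing that along the chain cardinalities strictly increase, so $|S_X| \leq |A_X| - |B_X| + 1 = |\mathcal{Z}_X| + 1$. Your disjoint consecutive-differences argument is just a more explicit version of this count; the detour through $\mathcal{Z}$ is harmless but unnecessary, since $B_X \subseteq A_i$ already gives $A_{i+1} - A_i \subseteq A_X - B_X = \mathcal{Z}_X$.
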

\begin{proof}
    We may assume that $|S_X| \geq 2$, i.e.,  $A_X \not = B_X$, as otherwise the claim holds. Then, we have the following:
    $|S_X| \leq |A_X| - |B_X| + 1 \leq |\calZ_X| +1,$ where the first inequality is by \Cref{claim:Laminar-family-projection:S_X-is-a-chain}.
\end{proof}

\begin{claim}\label{claim:Laminar-family-projection:Z_X-Z_Y-disjoint}
    For distinct non-empty sets $X, Y \in \calL'$, we have that $\calZ_{X}\cap\calZ_{Y} = \emptyset$.
\end{claim}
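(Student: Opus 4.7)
The plan is to do a case analysis on how the two non-empty sets $X, Y \in \calL'$ sit inside each other, using that $\calL'$ is laminar (proved earlier in the same lemma) and that $\calL$ is laminar. Throughout, I will use the fact (from Claim~\ref{claim:Laminar-family-projection:S_X-is-a-chain}) that each $S_X$ is a chain, so $B_X \subseteq A_X$, and that $X \subseteq B_X$ since $B_X - \calZ = X$.

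Since $\calL'$ is laminar and $X, Y$ are distinct non-empty sets in $\calL'$, exactly one of the following holds: either $X \cap Y = \emptyset$, or (without loss of generality) $X \subsetneq Y$. I would handle these separately.

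In the disjoint case, I would argue that $A_X \cap A_Y = \emptyset$, which implies the claim since $\calZ_X \subseteq A_X$ and $\calZ_Y \subseteq A_Y$. By laminarity of $\calL$ applied to $A_X, A_Y \in \calL$, either they are disjoint or one contains the other. If (say) $A_Y \subseteq A_X$, then $Y = A_Y - \calZ \subseteq A_X - \calZ = X$, contradicting $Y \neq \emptyset$ and $X \cap Y = \emptyset$. By symmetry the other containment also fails, so $A_X \cap A_Y = \emptyset$.

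In the nested case $X \subsetneq Y$, the goal is to show $\calZ_X \subseteq B_Y$, for then $\calZ_X \cap \calZ_Y \subseteq B_Y \cap (A_Y - B_Y) = \emptyset$. Since $\emptyset \neq X \subseteq A_X \cap B_Y$, laminarity of $\calL$ forces $A_X \subseteq B_Y$ or $B_Y \subseteq A_X$; in the latter case we would get $Y = B_Y - \calZ \subseteq A_X - \calZ = X$, contradicting $X \subsetneq Y$. Hence $A_X \subseteq B_Y$, and consequently $\calZ_X = A_X - B_X \subseteq A_X \subseteq B_Y$, as required.

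I do not expect any real obstacle here; the only thing to be careful about is that one must use laminarity of $\calL$ (not merely of $\calL'$) to compare $A_X$ with $B_Y$ and with $A_Y$, and one must rule out each of the two possible containments by tracking what projecting out $\calZ$ would imply about $X$ versus $Y$. The two small inclusion/exclusion checks above take care of that.
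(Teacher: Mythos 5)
Your proof is correct, and it follows the same skeleton as the paper's (case analysis on the laminar relation between $X$ and $Y$ in $\calL'$, then laminarity of $\calL$ applied to the preimage sets $A_X, A_Y, B_Y$), but your way of killing the forbidden containments is different and noticeably cleaner. Your key device is that containment survives projection: $A \subseteq B$ implies $A - \calZ \subseteq B - \calZ$, so any containment among the preimages transfers to a containment between $X$ and $Y$, which immediately contradicts the assumed relation ($X \cap Y = \emptyset$ with both non-empty, or $X \subsetneq Y$). In the disjoint case this gives $A_X \cap A_Y = \emptyset$ outright, so $\calZ_X \cap \calZ_Y \subseteq A_X \cap A_Y = \emptyset$ in one line; the paper instead treats $A_X \subseteq A_Y$ as a live possibility and runs a longer subargument comparing $A_X$ with $B_Y$, ending in a contradiction to laminarity of $\calL$ via the three non-empty sets $A_X - B_Y$, $B_Y - A_X$, $A_X \cap B_Y$ — your observation shows that detour is unnecessary. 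In the nested case the two proofs coincide in substance: both establish $A_X \subseteq B_Y$ (hence $\calZ_X \subseteq B_Y$, disjoint from $\calZ_Y = A_Y - B_Y$), the paper by exhibiting a witness vertex $v \in Y - X$ lying in $B_Y - A_X$, you by projecting $B_Y \subseteq A_X$ down to $Y \subseteq X$. One pedantic point: when you write "contradicting $X \subsetneq Y$" after deducing $Y \subseteq X$, and "contradicting $Y \neq \emptyset$ and $X \cap Y = \emptyset$" after $Y \subseteq X$, make sure to note explicitly that $X, Y \in \calL'$ are non-empty by the definition of $\calL'$ (the paper excludes $\emptyset$ from the projection), since that non-emptiness is what makes these contradictions bite; you do invoke it, so the argument stands.
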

\begin{proof}
     Let $X, Y \in\calL'$ be distinct sets. Since the family $\calL'$ is laminar, we have that either $X \subsetneq Y$ or $Y\subsetneq X$ or $X\cap Y = \emptyset$.

    First, suppose that $X \subsetneq Y$ (we note that the case of $Y\subsetneq X$ can be show by a similar argument). Here, we note that $X \subseteq A_X \cap B_Y $ since $X \subseteq A_X$ and $X \subseteq Y \subseteq B_Y$. Consequently, $A_X\cap B_Y\neq \emptyset$. 
    Since the family $\calL$ is laminar and $A_X, B_Y \in \calL$, we have that either $A_X \subseteq B_Y$ or $B_Y \subseteq A_X$. We note that if $A_X \subseteq B_Y$, then the claim holds since $\calZ_X \subseteq A_X \subseteq B_{Y} \subseteq V - \calZ_Y$.
    We now show that $B_Y - A_X \not = \emptyset$, i.e. the scenario $B_Y \subseteq A_X$ cannot happen. 
    Consider a vertex $v \in Y - X$. We note that 
    $v \not \in \calZ$
    since $v \in Y$. Furthermore, $v \not \in A_X$ since $v \not \in X$ and $A_X - X \subseteq \calZ$.
    Thus, we have that $v \in B_Y - A_X \not = \emptyset$.

    Next, suppose that the sets $X$ and $Y$ are disjoint. We also suppose that $A_X \cap A_Y \not = \emptyset$ as otherwise, we have that $\calZ_X \cap \calZ_Y \subseteq A_X \cap A_Y = \emptyset$ and the claim holds. Since the family $\calL$ is laminar and $A_X, A_Y \in \calL$, either $A_X \subseteq A_Y$ or $A_Y \subseteq A_X$. Suppose that  $A_X \subseteq A_Y$ (we note that the case of $A_Y \subseteq A_X$ can be show by a similar argument).
    We also suppose that $A_X - B_Y \not = \emptyset$, as otherwise $\calZ_X = A_X - B_X \subseteq B_Y \subseteq V - (A_Y - B_Y) = V - \calZ_Y$, and the claim holds. 
    The rest of the proof is devoted to showing that this case is impossible. In particular, we show that
    the sets $A_X$ and $B_Y$ are such that $A_X-B_Y, B_Y-A_X, A_X\cap B_Y\neq \emptyset$, contradicting the laminarity of the family $\calL$. First, we show that $B_Y - A_X \not = \emptyset$. By way of contradiction, suppose that $B_Y \subseteq A_X$. Then, we have that $Y \subseteq A_X$ since $Y \subseteq B_Y$. Furthermore,  we recall that $Y\cap \calZ = \emptyset$. Thus, we have that $\emptyset = Y \cap X = Y \cap (A_X - \calZ) = Y\cap A_X$. Thus, $Y\subseteq A_X$ and $Y\cap A_X=\emptyset$ together imply that $Y = \emptyset$, contradicting $Y \in \calL'$. Second, we show that $B_Y\cap A_X \not = \emptyset$. By way of contradiction, suppose that $B_Y\cap A_X = \emptyset$. We recall that $A_X\subseteq A_Y$. Consequently, we have that $X \subseteq A_X \subseteq A_Y - B_Y \subseteq \calZ$, contradicting $X \in \calL'$. Thus the sets $A_X$ and $B_Y$ are such that $A_X-B_Y, B_Y-A_X, A_X\cap B_Y\neq \emptyset$, giving us the required contradiction.
\end{proof}
\section{Optimization Problems using Function Maximization Oracle}\label{appendix:sec:Function-Maximization-Oracles}
In this section, we prove Lemmas 
\ref{lem:Preliminaries:wc-oracle-from-sc-oracle}, \ref{lemma:helper-for-p-max-oracle}, and \ref{lem:FunctionMaximizetionOracles:computing-maximal-pHw-tight-sets-with-p-max-oracle}. The proofs of these lemmas follow from known tools in the area of submodular functions. We present their proofs for the sake of completeness.

\subsection{\functionMaximizationOracle{p} using \functionMaximizationOracleStrongCover{p}}\label{sec:wc-oracle-from-sc-oracle}
In this section, we prove \Cref{lem:Preliminaries:wc-oracle-from-sc-oracle}. 
\lemmaWCoraclefromSCoracle*
\begin{proof}
Let $u \in V$ be a vertex. We define the hypergraph $G^u := (V, E^u := \{e \cup \{u\} : e \in E\})$ and the weight function $w^u : E^u \rightarrow\Z_+$ as $w^u(e \cup \{u\}) := w(e)$ for every $e \in E$. We observe that $b_{(G, w)}(X) = d_{(G^u, w^u)}(X)$ for every set $X \subseteq V - \{u\}$. We also let $(Z^u, p(Z^u))$ denote the tuple obtained by querying \functionMaximizationOracleStrongCover{p}$((G_0, c_0) := (G^u, w^u), S_0 := S, T_0 := T\cup \{u\})$.

We now describe a procedure to compute an answer $(Z, p(Z))$ to the query $\functionMaximizationOracle{p}((G,w), S, T)$. We consider two cases to define our set $Z$. If $T \not= \emptyset$, then we choose an arbitrary vertex $t \in T$ and define $Z := Z^t$. Otherwise, $T = \emptyset$ and we define the set $Z$ to be an arbitrary optimum solution to $\max\{p(Z) - b_{(G, w)}(Z) : Z \in \calZ \}$, where $\calZ := \{Z^u: u \in V\} \cup \{ V\}$.  We return the tuple $(Z, p(Z))$ as the answer to the query $\functionMaximizationOracle{p}((G, w), S, T)$.

We now prove correctness of the above procedure. Let $Z$ be the set returned by the above procedure. Then, by definition $S\subseteq V\subseteq V-T$. Our goal is to show that $Z$ is an optimum solution to the following problem:
\begin{align}
    \max\left\{p(X)-b_{(G, w)}(X): S\subseteq X\subseteq V-T\right\} \label{eq:wc-oracle-from-sc-oracle}
\end{align}
Let $Y$ be an optimum solution to \eqref{eq:wc-oracle-from-sc-oracle}. It suffices to show that $p(Z)-b_{(G, w)}(Z) \ge p(Y)-b_{(G, w)}(Y) $.   
We consider two cases: suppose that $T\neq \emptyset$. Then, we have that 
\begin{align*}
p(Z)-b_{(G, w)}(Z)
&=p(Z^t)-d_{(G^t, w^t)}(Z^t)\\
&\ge p(Y)-d_{(G^t, w^t)}(Y) \\
&= p(Y) - b_{(G, w)}(Y) .
\end{align*} 
In the above, the first equality is because $Z=Z^t$, the inequality is because $Y\subseteq V-\{v\}$ and by definition of $Z^t$, and the last equality is because $b_{(G, w)}(Y) = d_{(G^v, w^v)}(Y)$ since $Y\subseteq V-\{t\}$. 
Next, suppose that $T=\emptyset$. We consider two subcases. Firstly, suppose that $V$ is the unique optimum solution to \eqref{eq:wc-oracle-from-sc-oracle}. Then, $Z=Y=V$ by definition and we are done. 
Secondly, suppose that $Y\neq V$. Let $v\in V-Y$.  
Then, we have the following:
\begin{align*}
p(Z)-b_{(G, w)}(Z)
&\ge p(Z^v) - b_{(G, w)}(Z^v) \\
&=p(Z^v)-d_{(G^v, w^v)}(Z^v) \\
&\ge p(Y)-d_{(G^v, w^v)}(Y) \\
&=p(Y)-b_{(G, w)}(Y).
\end{align*} 
In the above sequence, the first inequality is because of the choice of the set $Z$ (i.e., $Z$ is an optimum solution to $\max\{p(Z)-b_{(G, w)}(Z): Z\in \calZ\}$ and $Z^v \in \calZ$), the first equality is because  $b_{(G, w)}(Z^v) = d_{(G^v, w^v)}(Z^v)$ since $Z^v\subseteq V-\{v\}$, the second inequality is by definition of $Z^v$, and the second equality is because $b_{(G, w)}(Y) = d_{(G^v, w^v)}(Y)$ since $Y\subseteq V-\{v\}$. Thus, all inequalities in the above sequence are equalities, and we have that $p(Z)-b_{(G, w)}(Z) = p(Y)-b_{(G, w)}(Y)$. Thus, in both cases, the set $Z$ returned by our procedure is an optimum solution to \eqref{eq:wc-oracle-from-sc-oracle}.

We now analyze the runtime of the procedure and the number of \functionMaximizationOracleStrongCover{p} queries. Let $u \in V$ be a vertex. Constructing the hypergraph $(G^{u}, w^{u})$ takes $O(|V| + |E|)$. Thus, a tuple $(Z^{u}, p(Z^u))$ takes $O(|V|+|E|)$ time and one query to \functionMaximizationOracleStrongCover{p}. The runtime includes the time to construct the inputs to the query. Moreover, the input hypergraph to the query has $|V|$ vertices and at most $|E|$ hyperedges. Thus, if $T \not = \emptyset$, the procedure makes a single \functionMaximizationOracleStrongCover{p} query and runs in $O(|V| + |E|)$ time. Alternatively, if  $T  = \emptyset$, then the procedure makes $|V| + 1$ \functionMaximizationOracleStrongCover{p} queries and runs in $O(|V|(|V| + |E|))$ time. In both cases, the runtime includes the time to construct the inputs for the \functionMaximizationOracleStrongCover{p} queries. Moreover, the input hypergraph to each query has $|V|$ vertices and at most $|E|$ hyperedges.
\end{proof}

\subsection{Skew-supermodularity of a function \(p\)}\label{sec:helper-for-p-max-oracle}
In this section, we prove \Cref{lemma:helper-for-p-max-oracle}. 
\lempmaxoracle*
\begin{proof}
    For ease of notation, let $p=p_{(G, c, r)}$. 
    The function $R:2^V\rightarrow \Z_{\ge 0}$ is symmetric skew-supermodular (see Lemma 8.1.9 of \cite{Frank-book}) and the function $d_{(G,c)}:2^V\rightarrow \Z_{\ge 0}$ is symmetric submodular. The function $p$ is obtained by subtracting a symmetric submodular function from a symmetric skew-supermodular and is hence, symmetric skew-supermodular. 

    Next, we show that \functionMaximizationOracleStrongCover{p} can be implemented in strongly polynomial-time. 
    Let the input to the oracle be the hypergraph $(G_0=(V, E_0), c_0: E_0\rightarrow \Z_+)$ and disjoint subsets $S_0, T_0\subseteq V$. 
    We construct the hypergraph $(G+G_0, c+c_0)$. Moreover, for distinct vertices $u, v\in V$ with $u\in V-T_0$ and $v\in V-S_0$, we compute  
    \begin{align*}
    \lambda_{(G+G_0, c+c_0)}(u, v)&\coloneqq \min\left\{d_{(G+G_0, c+c_0)}(Z): S_0\cup \{u\}\subseteq Z\subseteq V-(T_0\cup\{v\})\right\} 
    \end{align*}
    and the set $Z(u, v)$ that achieves the minimum for $\lambda_{(G+G_0, c+c_0)}(u, v)$. We note that $\lambda_{(G+G_0, c+c_0)}(u, v)$ and $Z(u,v)$ can be computed using a single call to a min $(s,t)$-cut oracle on the hypergraph $(G+G_0, c+c_0)$. This hypergraph has $|V|$ vertices and at most $|E_0|+|E|$ hyperedges.  

    To implement the \functionMaximizationOracleStrongCover{p}, we observe that 
    \begin{align*}
    \max&\left\{p(Z)-d_{(G_0, c_0)}(Z): S_0\subseteq Z\subseteq V-T_0\right\}\\
    &=\max\left\{R(Z)-d_{(G, c)}(Z)-d_{(G_0, c_0)}(Z): S_0\subseteq Z\subseteq V-T_0\right\}\\
    &=\max\left\{r(u, v)-\lambda_{((G, c), (G_0, c_0))}(u, v): u\in V-T_0, v\in V-S_0, u\neq v\right\}. 
    \end{align*}
    The RHS problem can be solved by iterating over all distinct vertices $u\in V-T_0$ and $v\in V-S_0$ and returning a pair for which the objective function $r(u, v)-\lambda_{((G, c), (G_0, c_0))}(u, v)$ is maximum. Let $u\in V-T_0$ and $v\in V-S_0$ be a pair of vertices that achieves the maximum in the RHS problem. Then, we observe that the set $Z(u, v)$ is a maximizer for the LHS problem. Thus, it suffices to return $Z(u, v)$ and the value $p(Z(u, v))$. We observe that the value $p(Z(u, v))$ can be computed by subtracting $d_{(G_0, c_0)}(Z(u,v))$ from the objective value of the RHS problem.

    The above procedure can be implemented in $O(|V|^2(|V| + |E| + |E_0|))$ time and $O(|V|^2)$ calls to a min $(s,t)$-cut oracle on hypergraphs with $|V|$ vertices and $|E| + |E_0|$ hyperedges. Then, the claimed runtime is obtained by using a crude upper bound on the run-time of standard deterministic min $(s,t)$-cut algorithms for hypergraphs---we recall that min $(s,t)$-cut in a $n$-vertex $m$-hyperedge hypergraph can be solved in $O((m+n)mn)$.
\end{proof}

\subsection{Computing the Family of Maximal \((p, H, w)\)-Tight Sets}\label{appendix:sec:Function-Maximization-Oracles:computing-maximal-pHw-tight-sets-family}
In this section, we prove \Cref{lem:FunctionMaximizetionOracles:computing-maximal-pHw-tight-sets-with-p-max-oracle}. 
We recall that for a set function $p:2^V\rightarrow\Z$ and hypergraph $(H = (V, E), w)$, a set $X\subseteq V$ is $(p, H, w)$-tight if $p(X) = b_{(H,w)}(X)$. The family $\maximalTightSetFamily{p, H, w}$ denotes the family of all inclusion-wise maximal $(p, H, w)$-tight sets. In this section, we show that the family $\maximalTightSetFamily{p, H, w}$ can be computed in $O(|V|^2(|V| + |E|))$ time using $O(|V|^2)$ queries to \functionMaximizationEmptyOracle{p} if the function $p$ and hypergraph $(H, w)$ satisfy certain properties.

\lemComputingMaximalTightSetFamily*
\begin{proof}
We describe a procedure that recursively constructs \maximalTightSetFamily{p, H, w}. The procedure finds a $(p, H, w)$-tight set $X\subseteq V$, and then searches for a maximal $(p, H, w)$-tight set $Y$ containing the set $X$ --- we describe this computation in the paragraph below. 
The procedure is recursive and it reaches the base case if 
the computed set $X$ is not a $(p, H, w)$-tight set or $X,Y = \emptyset$. If $X$ is not a $(p, H, w)$-tight set, then the procedure returns the empty set, and if $X,Y = \emptyset$, then the procedure returns $\{\emptyset\}$.
If both of the base case conditions are not satisfied, the procedure recurses on the function $p':2^{V'}\rightarrow\Z$ defined on the ground set $V':=V - Y$ as $p'(Z) = p(Z)$ for every $Z \subseteq V'$, and the hypergraph $(H' = (V', E'), w')$ where $E' := \{e \cap V' : e \in E\}$ and $w'(e'):= \sum_{e \in E}w(e)\cdot\indicator_{e\cap V' = e'}$ for every $e' \in E'$, to obtain the family $\maximalTightSetFamily{p', H', w'}$.
It then returns the family $\{Y\} \cup (\maximalTightSetFamily{p', H', w'} - \{\emptyset\})$. 
The procedure terminates within $|V|$ recursive calls because the size of the ground set decreases during every recursive call (i.e. $|V'| < |V|$). Furthermore, the correctness of the procedure follows from disjointness of the family $\maximalTightSetFamily{p, H, w}$ by \Cref{lem:weak-to-strong:uncrossing:maximal-tight-set-family-disjoint} and because $b_{(H,w)}(X) = b_{(H', w')}(X)$ for each $X \subseteq V'$, where $(H', w')$ is the hypergraph constructed as input to the subsequent recursive call.
    
We now bound the run-time and the number of queries required by the procedure described above. Consider a recursive call of the procedure.
We note that a $(p, H, w)$-tight set exists if and only if the set returned by the query $\functionMaximizationOracle{p}((G_0, c_0):= (H, w), S_0:=\emptyset, T_0:=\emptyset)$ is a $(p, H, w)$-tight set.
Furthermore, verifying whether the returned set $X$ is $(p, H, w)$-tight can be done in $O(|V| + |E|)$ time by checking whether $p(X) = b_{(H,w)}(X)$.
Thus, a $(p, H, w)$-tight set $X\subseteq V$ can be found in  $O(|V| + |E|)$ time using one query to \functionMaximizationOracle{p}. Moreover, the input hypergraph to \functionMaximizationOracle{p} query has $|V|$ vertices and $|E|$ hyperedges.
Next, \Cref{claim:FunctionMaximizetionOracles:maximal-tight-set-containing-given-tight-set} below shows that we can find a maximal $(p, H, w)$-tight set $Y$ containing the set $X$ in $O(|V|(|V|+|E|))$ time using $O(|V|)$ queries to \functionMaximizationOracle{p} --- here, the runtime includes the time required to construct the inputs to \functionMaximizationOracle{p}; moreover, each query to \functionMaximizationOracle{p} takes the hypergraph $(H, w)$ as input.
Let $p':2^{V'}\rightarrow\Z$ be the input function to the subsequent recursive call. Let $(G = (V', E_G), c)$ be a hypergraph such that $|E_G| \leq |E|$ and $S, T \subseteq V'$ be disjoints sets. 
Then, \Cref{claim:FunctionMaximizetionOracles:computing-maximal-tight-sets:maximization-oracle-for-restriction} below shows that a query to $\functionMaximizationOracle{p'}((G, c), S, T)$ can be implemented in $O(|V| + |E|)$ time using one query to \functionMaximizationOracle{p} --- here, the runtime includes the time required to construct the input to \functionMaximizationOracle{p} query; moreover the hypergraph input to \functionMaximizationOracle{p} query has $|V|$ vertices and $|E_G|$ hyperedges. We note that all hypergraphs $(G = (V', E_G), c)$ that are used as inputs to \functionMaximizationOracle{p'} during subsequent recursive calls satisfy $|E_G| \leq |E|$.
Thus, every recursive call of the procedure can be implemented to run in $O(|V|(|V|+|E|))$ time and make $O(|V|)$ calls to \functionMaximizationOracle{p}, where the runtime includes the time to construct the inputs to each query to \functionMaximizationOracle{p}. Consequently, the procedure runs in time $O(|V|^2(|V|+|E|))$ and makes $O(|V|^2)$ queries to $\functionMaximizationOracle{p}$, where the runtime includes the time to construct the inputs to the \functionMaximizationOracle{p} queries.
\end{proof}

\begin{claim}\label{claim:FunctionMaximizetionOracles:maximal-tight-set-containing-given-tight-set}
         Let $X \subseteq V$ be a $(p, H, w)$-tight set. Then, a maximal $(p, H, w)$-tight set containing the set $X$ can be computed in $O(|V|(|V|+|E|))$ time using $O(|V|)$ queries to \functionMaximizationOracle{p}. The runtime includes the time required to construct the inputs to \functionMaximizationOracle{p}. Moreover, each query to \functionMaximizationOracle{p} takes the hypergraph $(H, w)$ as input.
\end{claim}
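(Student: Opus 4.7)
The plan is to iteratively extend $X$ one vertex at a time, each time querying \functionMaximizationOracle{p} to test whether adding a candidate vertex keeps us inside some $(p,H,w)$-tight set. Formally, I would maintain a current tight set $Y$ initialized to $Y := X$, iterate over the vertices $v \in V$ in arbitrary order, and for each $v \notin Y$ query $\functionMaximizationOracle{p}((H,w), S_0 := Y \cup \{v\}, T_0 := \emptyset)$ to obtain a tuple $(Z, p(Z))$. If $p(Z) = b_{(H,w)}(Z)$, update $Y := Z$; otherwise leave $Y$ unchanged. Return $Y$ after all vertices have been processed.

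The correctness of this procedure rests on two observations. First, $Y$ is always a $(p,H,w)$-tight set containing $X$: this holds initially by hypothesis, and each update replaces $Y$ by the set $Z$ returned by the oracle, which is tight (since $p(Z) = b_{(H,w)}(Z)$) and contains the previous $Y$. Second, the final $Y$ is maximal: suppose for contradiction that there is a tight set $Y^* \supsetneq Y_{\text{final}}$, and let $v \in Y^* - Y_{\text{final}}$. When the procedure processed $v$, the current set $Y_{\text{curr}}$ satisfied $Y_{\text{curr}} \subseteq Y_{\text{final}} \subsetneq Y^*$, so $Y^*$ is a tight set containing $Y_{\text{curr}} \cup \{v\}$. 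Hence the oracle's returned maximum $p(Z) - b_{(H,w)}(Z)$ is $0$ (since $b_{(H,w)}(\cdot) \ge p(\cdot)$ everywhere and the maximum is attained, e.g., by $Y^*$), so the algorithm would have updated $Y$ to include $v$, contradicting $v \notin Y_{\text{final}}$.

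For the run-time bound, the procedure issues at most $|V|$ queries to \functionMaximizationOracle{p}. For each query, the input hypergraph is $(H,w)$ itself (so constructing the input takes $O(|V|+|E|)$ time), and after receiving $(Z, p(Z))$ from the oracle, verifying whether $p(Z) = b_{(H,w)}(Z)$ requires one evaluation of $b_{(H,w)}$, which takes $O(|V| + |E|)$ time. Summing over all $|V|$ iterations gives the claimed $O(|V|(|V|+|E|))$ run-time (excluding the oracle calls themselves) and $O(|V|)$ queries to \functionMaximizationOracle{p}, each on the input hypergraph $(H,w)$ with $|V|$ vertices and $|E|$ hyperedges, matching the statement.

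I do not anticipate a serious obstacle here: the main conceptual point is the monotonicity observation used in the maximality argument (once a vertex $v$ fails to be extendable relative to $Y_{\text{curr}}$, it cannot be extendable relative to any larger tight set either, so there is no need to revisit vertices), and this follows immediately from $Y_{\text{curr}} \cup \{v\} \subseteq Y^* \cup \{v\}$. Everything else is a direct use of the oracle's specification together with the hypothesis $b_{(H,w)}(Z) \ge p(Z)$ for every $Z \subseteq V$, which ensures that the oracle's maximum value is non-positive and equals $0$ precisely when a tight superset exists.
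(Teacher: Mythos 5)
Your proposal is correct and follows essentially the same route as the paper: a greedy, one-query-per-vertex extension in which $S_0$ is the current tight set plus a candidate vertex, the returned optimizer is tested for tightness, and the current set is replaced when the test succeeds. The only cosmetic difference is that the paper additionally places failed candidates into $T_0$ for subsequent queries, whereas you keep $T_0=\emptyset$ and justify never revisiting failed vertices by the same monotonicity observation ($Y_{\mathrm{curr}}\cup\{v\}\subseteq Y^*$ forces the oracle value to be $0$), which is equally valid.
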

\begin{proof}
We describe a recursive procedure to compute a maximal $(p, H, w)$-tight set containing the set $X$. For every $u \in V - X$, let $Z^u$ be the set returned by querying $\functionMaximizationOracle{p}((H_0, w_0):= (H, w), S_0 := X\cup\{u\}, T_0 := \emptyset)$. If there exists a vertex $u \in X$ such that the set $Z^u$ is a $(p, H, w)$-tight set, then recursively compute a $(p, H, w)$-tight set containing the set $X':= Z^u$. Otherwise, return the set $X$. 

The correctness of the above procedure is because of the following. The above procedure terminates (in at most $|V|$ recursive calls) because the cardinality of the input set $X$ increases during each recursive call (i.e., $|X'| > |X|$). Let $Y \subseteq V$ be a maximal $(p, H, w)$-tight set such that $X \subseteq Y$. If $Y = X$, then there are no $(p, H, w)$-tight sets that strictly contain the set $X$ and the procedure returns the set $X$. Suppose that $X \subsetneq Y$ and let $u \in Y - X$. Then, there exists a $(p, H, w)$-tight set $Z$ such that $X\cup\{u\} \subsetneq Z \subseteq Y$. Thus, the set $Z^u$ computed by the procedure will be a $(p, H, w)$-tight set that strictly contains the set $X$. Using this fact, the procedure can be shown to return a maximal $(p, H, w)$-tight set containing the set $X$ by induction on $|X|$.   
    
We now describe a slight modification of the above procedure that leads to an improved runtime and a smaller number of queries to \functionMaximizationOracle{p}. Let $u \in V - X$ be a vertex. If the set $Z^{u}$ is $(p, H, w)$-tight, then we include $u$ in the input set $S_0$ to all $\functionMaximizationOracle{p}$ queries in subsequent recursive calls of the procedure. Alternatively, if the set $Z^{u}$ is not $(p, H, w)$-tight, then we include $u$ in the input set $T_0$ to all $\functionMaximizationOracle{p}$ queries in subsequent recursive calls procedure. Consequently, the number of queries to $\functionMaximizationOracle{p}$ (and number of recursive calls of the procedure) is at most $|V|$. We note that constructing the input to each query and verifying whether the set returned by the query is $(p, H, w)$-tight can be done in $O(|V| + |E|)$ time. Thus, the procedure terminates in $O(|V|(|V|+|E|))$ time. Here, the runtime includes the time to construct the inputs to the \functionMaximizationOracle{p} queries. Moreover, each query takes the hypergraph $(H,w)$ as input. 
\end{proof}
    
\begin{claim}\label{claim:FunctionMaximizetionOracles:computing-maximal-tight-sets:maximization-oracle-for-restriction}
    Let $p:2^V\rightarrow\Z$ be a function. Let $Y \subseteq V$ and let $p'$ denote the function $p$ restricted to the ground set $V':=V - Y$. Then, for a given hypergraph $(G = (V', E_G), c:E_G\rightarrow\Z_+)$ such that $|E_G| \leq |E|$ and disjoint subsets $S, T \subseteq V'$, $\functionMaximizationOracle{p'}((G,c), S, T, y)$ can be computed in $O(|V| + |E|)$ time using one query to $\functionMaximizationOracle{p}$.
    The runtime includes the time required to construct the input to \functionMaximizationOracle{p} query. Moreover, the hypergraph input to \functionMaximizationOracle{p} query has $|V|$ vertices and $|E_G|$ hyperedges.
\end{claim}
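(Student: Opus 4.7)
The plan is to reduce the \functionMaximizationOracle{p'} query directly to a single \functionMaximizationOracle{p} query by encoding the constraint $Z \subseteq V' = V - Y$ through the forbidden set $T_0$ of the \functionMaximizationOracle{p} query. Specifically, I would view the input hypergraph $(G = (V', E_G), c)$ as a hypergraph $(G_0, c_0)$ on the extended vertex set $V$ by simply adding the vertices of $Y$ as isolated vertices (keeping the hyperedge set and weights unchanged). Then I would invoke $\functionMaximizationOracle{p}((G_0, c_0), S_0, T_0)$ with $S_0 := S$ and $T_0 := T \cup Y$, and return whatever tuple $(Z, p(Z))$ is returned.

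The correctness argument is a direct unfolding of definitions. The query $\functionMaximizationOracle{p}((G_0, c_0), S_0, T_0)$ returns an optimum solution to
\[
\max\{p(Z) - b_{(G_0, c_0)}(Z) : S \subseteq Z \subseteq V - (T \cup Y)\}.
\]
Since $S \subseteq V'$ and $V - (T \cup Y) = V' - T$, the feasible region is exactly $\{Z : S \subseteq Z \subseteq V' - T\}$. For every such $Z \subseteq V'$, we have $p'(Z) = p(Z)$ by definition of $p'$, and $b_{(G_0, c_0)}(Z) = b_{(G, c)}(Z)$ because $G_0$ and $G$ have the same hyperedge set (adding isolated vertices does not change the coverage function). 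Hence the two problems are identical and the returned set $Z$ is a valid answer for $\functionMaximizationOracle{p'}((G, c), S, T)$, with objective value $p'(Z) = p(Z)$.

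For the runtime bound, constructing $(G_0, c_0)$ takes $O(|V| + |E_G|) = O(|V| + |E|)$ time since it just amounts to augmenting the vertex set of $G$ by the vertices of $Y$; constructing $S_0$ and $T_0 = T \cup Y$ takes $O(|V|)$ time. The constructed hypergraph $(G_0, c_0)$ has exactly $|V|$ vertices and exactly $|E_G|$ hyperedges, matching the claimed bounds. There are no real obstacles here — the only thing worth double-checking is that the \functionMaximizationOracle{p} interface allows inputs where some vertices are isolated (which it does, by the definition of the oracle), and that $b_{(G_0, c_0)}$ agrees with $b_{(G, c)}$ on subsets of $V'$, which is immediate from the definition of the coverage function.
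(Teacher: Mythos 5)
Your proposal is correct and is essentially identical to the paper's proof: the paper also extends $(G,c)$ by the vertices of $Y$ (as isolated vertices) and answers the $p'$-query with a single query $\functionMaximizationOracle{p}((G_1,c_1), S, T\cup Y)$, which coincides with your construction and correctness argument.
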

\begin{proof}
    Let $(G_1, c_1)$ be the hypergraph obtained from $(G,c)$ by adding the vertices $Y$. Then, the answer to the query $\functionMaximizationOracle{p'}((G, c), S, T)$ is the same as the answer to the query $\functionMaximizationOracle{p}((G_1, c_1), S, T\cup Y)$.
\end{proof}

\end{document}